\documentclass[a4paper]{report}
\usepackage[left=3.00cm, right=2.50cm, top=2.30cm, bottom=3.50cm]{geometry}
\usepackage{amsmath}
\usepackage{amsfonts}
\usepackage{mathrsfs}  
\usepackage{amssymb}
\usepackage{amsthm}
\usepackage{verbatim}
\usepackage[square,sort&compress,comma,numbers]{natbib}
\usepackage[colorlinks=true,linkcolor=blue,citecolor=red]{hyperref}
\usepackage{graphicx}
\usepackage{epstopdf}
\usepackage[title]{appendix}
\usepackage{subcaption}

\graphicspath{ {./graph/} }

\setlength{\parskip}{1ex plus0.4ex minus0.3ex}
\newcommand{\bra}[1]{\langle #1\rvert}
\newcommand{\ket}[1]{\lvert #1\rangle}
\newcommand{\inner}[2]{\langle #1|#2\rangle}
\newcommand{\bracket}[3]{\left\langle #1\right\lvert#2\left\rvert#3\right\rangle}

\newcommand{\partiald}[2]{\frac{\partial #1}{\partial #2}}
\newcommand{\tr}{^{\mathrm{\tiny T}}}
\newcommand{\trace}{\mathrm{tr}}
\newcommand{\weyl}{\overset{\mathrm{Weyl}}{\longleftrightarrow}}
\newcommand{\mean}[1]{\langle #1 \rangle}

\newtheorem{lem}{Lemma}
\newtheorem{cor}{Corollary}
\newtheorem{pro}{Proposition}
\newtheorem{thm}{Theorem}
\newtheorem{defi}{Definition}
\newtheorem{axi}{Axiom}

\setcounter{secnumdepth}{3}

\begin{document}
\title{Supremum of Entanglement Measure for Symmetric Gaussian States, and Entangling Capacity}
\author{PhD Thesis by Kao, Jhih-Yuan\\Supervisor: Chung-Hsien Chou\\National Cheng Kung University, Tainan, Taiwan
	}
\date{Thesis and Defense Completed in January, 2020\\ This edition uploaded to arXiv has been slightly edited}
\maketitle
\section*{Abstract}
In this thesis there are two topics: On the entangling capacity, in terms of negativity, of quantum operations, and on the supremum of negativity for symmetric Gaussian states.

Positive partial transposition (PPT) states are an important class of states in quantum information. We show a method to calculate bounds for entangling capacity, the amount of entanglement that can be produced by a quantum operation, in terms of negativity, a measure of entanglement. The bounds of entangling capacity are found to be associated with how non-PPT (PPT preserving) an operation is. A length that quantifies both entangling capacity/entanglement and PPT-ness of an operation or state can be defined, establishing a geometry characterized by PPT-ness. The distance derived from the length bounds the relative entangling capability, endowing the geometry with more physical significance.

For a system composed of permutationally symmetric Gaussian modes, by identifying the boundary of valid states and making necessary change of variables, the existence and exact value of the supremum of logarithmic negativity (and negativity likewise) between any two blocks can be shown analytically. Involving only the total number of interchangeable modes and the sizes of respective blocks, this result is general and easy to be applied for such a class of states.

Keywords: Entanglement, quantum operation, entangling capacity, Gaussian state, PPT

\chapter*{Acknowledgment}

First of all, I need to thank my family, who have been supportive of my decision. Next, I'd like to show my gratitude to my supervisor, Chung-Hsien Chou, whose advice and assistance have been very helpful, and who has given me lots of freedom. I'm also grateful to Ming-Yen Huang, and my not-so-many friends.

I'm thankful for the teaching or help from some teachers in our department, particularly Yong-Fan Chen, Yeong-Cheng Liang, Chopin Soo and Yueh-Nan Chen. Thanks to the all the committee members of my defense: (in in alphabetical order) Che-Ming Li, Chung-Hsien Chou, Feng-Li Lin, Hsi-Sheng Goan, Yueh-Nan Chen, Zheng-Yao Su, for willing to spend their precious time reading my thesis, attending my defense and giving me useful advice.

Special thanks to the people behind Collins dictionary, Wiktionary and other online dictionaries, so the reader doesn't have to suffer (as much) from my lacking pool of vocabulary.

Here I want to show my appreciation for the following musicians: (in alphabetical order) Aephanemer, Amorphis, Ayreon, Be'Lakor, Belzebubs, Borknagar, Disillusion, Enshine, Enslaved,\footnote{It's interesting to see how they have evolved since their founding around three decades ago---And in a good direction.} Finsterforst,\footnote{They are not very well-known, but I really love them. Their skill in songwriting is top-notch.} First Fragment, Fleshgod Apocalypse, Hail Spirit Noir, Hands of Despair, Hyperion, Insomnium,\footnote{The band the got me hooked to melancholic metal. Winter's Gate is one of my favorite albums.} Kanuis Kuolematon,\footnote{I think they find a perfect spot between doom and melodic death, and their vocals are great. Finnish melancholy at its best.} Moonsorrow,\footnote{V: H\"{a}vitetty is one of the best albums in my opinion.} Ne Obliviscaris,\footnote{They're one of the reasons why I became interested in prog metal. Portal of I is great, and Citadel is pure awesomeness.} Obscura, Opeth,\footnote{Before Watershed. Sorry, I just can't enjoy NewPeth. Still Life, Blackwater Park and Ghost Reveries always have an important place in my mind.} Periphery, Persefone,\footnote{Another reason why I started to dig prog metal. Spiritual Migration is mind-blowing.} Shade Empire, Shadow of Intent, Tribulation, Vorna,\footnote{Great Finnish doomy melodic death/black.} Wilderun, Wintersun, Wormwood, Xanthochroid, and some others I don't mention here. Their beautiful creation has been an integral part of my life. Seriously, if you're reading this right now, be sure to check their work, especially those I add footnotes to. It may take time to get used to, but their songs are masterfully crafted art.

BTW, I think this thesis will only be read by folks inside the physics community and perhaps some mathematician. If you are not one of them, why are you taking any interest in my humble work? Have I become a politician or celebrity?

\hfill{} January, 2020

\thispagestyle{empty}
\clearpage
\setcounter{page}{1}
\tableofcontents
\listoffigures
\chapter*{List of Symbols and Notations}
\begin{table}[hbtp!]
	\centering
	\begin{tabular}{cccc}
		$\mathbb{F}$ &  Section~\ref{sec:vec} & $\mathbb{Q}$ & Section~\ref{sec:set}\\
		CP, TP, HP    & Section~\ref{sec:liqo}  & $\mathcal{B}$, $\mathcal{L}$ & Section~\ref{sec:opnorm}\\   
		$\mathbb{R}$& Section~\ref{sec:orib} & $\mathbb{S}$ & Section~\ref{sec:set} \\
		$\rightarrow$, $\mapsto$ & Section~\ref{sec:map} & dom, ran & Section~\ref{sec:map}\\
		$\mathbb{C}$ & Section~\ref{sec:vec} & dim & Section~\ref{sec:span}\\
		$d(\cdot,\cdot)$ & Section~\ref{sec:normmet} & $||\cdot||$ & Section~\ref{sec:normmet}, \ref{sec:opnorm}, \ref{sec:inn}\\
		In calligraphy, e.g. $\mathcal{V}$ & Section~\ref{sec:vec} & ker & Section~\ref{sec:hom}\\
		$I$ & Section~\ref{sec:id} & $(\cdot|\cdot)$ & Section~\ref{sec:inn}, \ref{sec:HSinner}\\
		$\hat{\otimes}$ & Section~\ref{sec:HSTP} & $\perp$ & Section~\ref{sec:inn}\\
		$Q_W$, $Q_W^{-1}$ & Section~\ref{sec:weqde} & $W$ & Section~\ref{sec:wtr}\\
		$S$ & Section~\ref{sec:sym}, \ref{sec:qop} & $\mathrm{T}$, $\Gamma$ & Section~\ref{sec:TPT}\\
		$\weyl$ & Section~\ref{sec:weqde} & $L$ & Section~\ref{sec:liqo} \\
		$D_{1,\Gamma}$, $||\cdot||$ & Section~\ref{sec:d1g} & sup, inf, max, min & Section~\ref{sec:orib}\\
		$E$, $E_L$, $E_N$ & Section~\ref{sec:em}, \ref{sec:neg} & EC & Section~\ref{sec:EC}\\
		$\rho$ & Section~\ref{sec:deop} & $\mathbb{R}^n$, $\mathbb{C}^n$ & Section~\ref{sec:inn}\\
		$\mathbb{R}^{2n}$ & Section~\ref{sec:cm} & $\oplus$ & Section~\ref{sec:dip}\\
		$\otimes$ & Section~\ref{sec:tensor} & $\mathcal{H}^*$ & Section~\ref{sec:lfun}\\
		$\Pi$ & Section~\ref{sec:pro}, \ref{sec:id} & $\sigma$ & Section~\ref{sec:cov}\\
		$\Omega$ & Section~\ref{sec:syms} & $\dagger$ & Section~\ref{sec:ad}, \ref{sec:adl}\\
	    $||\cdot||_p$ & Section~\ref{sec:normop} & $U$ & Section~\ref{sec:isometry}\\
	    PPT & Section~\ref{sec:neg} & $\mathscr{T}$ & Section~\ref{sec:choi} \\
	    $\subset$, $\subseteq$ & Section~\ref{sec:set} & $L^p$, $L^2(\mathbb{R}^n)$ & Section~\ref{sec:l2s}\\
	    $H^\pm$, $\widetilde{H}^\pm$ & Section~\ref{sec:eig} & $\mathcal{U}$ & Section~\ref{sec:ur}
	\end{tabular}
\end{table}

\chapter{Introduction}
\label{ch:int}

\section{Overview}
First, let's have a quick overview of the two main topics of this thesis.
\subsection{Quantum Operations and Entangling Capability}

Entanglement has been found to be a useful resource for various tasks in quantum information \cite{Horodecki09}, so a problem arises: How to create entanglement? As an aspect of quantum states, this is the same as discerning what quantum processes, or quantum operations \cite{Cirac01,Gideke02,Bengtsson} can effectively produce this valuable resource, because operations govern how a state evolves or changes. There have been many studies on this problem, from various perspectives, such as how much entanglement an operation is able to produce/erase at most, on average, or per unit time \cite{Zanardi00,Dur01,Kraus01,Wolf03,Yukalov03,Leifer03,Bennett03,Chefles05,Linden09,Lari09,Campbell10} and what operations can produce the most entanglement (perfect entangler) \cite{Makhlin02,Zhang03,Rezakhani04,Cohen11}. Unitary operations are usually considered \cite{Zanardi00,Dur01,Kraus01,Leifer03,Bennett03,Chefles05,Linden09,Lari09,Rezakhani04,Cohen11}, while sometimes general quantum or Gaussian operations are investigated \cite{Wolf03,Campbell10}, with respect to various measures.

PPT states and operations have a profound importance in entanglement theory; for example, it was found no entanglement can be distilled from PPT states \cite{Horodecki98}. This class of states/operations is the subject of numerous studies, e.g. how to utilize them, and whether they are Bell non-local \cite{Peres96,Horodecki98,Peres99,Eggeling01,Vollbrecht02,Audenaert03,Pusey13,Vertesi14,Huber18}.

In this thesis, we quantify the capability of a quantum operation to produce entanglement by entangling capacity, defined as the maximal entanglement with respect to a given entanglement measure that can be created by a quantum operation \cite{Kraus01,Leifer03,Chefles05,Linden09,Campbell10}. It was found the existence of an ancilla, a system on which the operation isn't directly applied, may help boost entangling capacity \cite{Kraus01,Leifer03,Campbell10,Cohen11}.

We will obtain bounds (Proposition~\ref{pro}) for entangling capacities in terms of negativities. Since negativities bound teleportation capacity and distillable entanglement \cite{Bennett96,Horodecki98,Vidal02,Ishizaka04,Horodecki09}, our results give bounds for teleportation capacity and distillable entanglement that can be created by quantum operations.

Qualitatively, it is known that a PPT operation can't create negativity out of a PPT state \cite{Horodecki98,Cirac01}, and in this thesis, we would like to investigate the quantitative importance of PPT-ness of operations---A length, or norm associated with the bounds and PPT-ness, can be defined, by which, along with the distance or metric induced from it, we can provide entangling capacities of operations a geometric meaning. A strongly non-PPT operation, i.e. an operation that is ``longer'' in this norm, has the potential to create more negativity. In addition, the distance between operations can bound their relative entangling capability (Proposition~\ref{pro:2}). Therefore, this geometry of operations has physical importance. 

A method to find bounds of entangling capacity in terms of negativities was proposed in Ref.~\cite{Campbell10}. We will compare his approach with ours, and show that, albeit quite dissimilar in form, our bounds can lead to his. We are able to show the relation between separability and PPT-ness for unitary operations and pure states (Proposition~\ref{pro:sppt}).

Whenever there are bounds, it is natural to ask whether or when they can be saturated. Proposition~\ref{pro:ecs} will answer this question, and we will lay out a procedure to find the states with which to reach the bound.

\subsection{Symmetric Gaussian States and Suprema of Entanglement Measure}
Among various quantum systems of interest are continuous variable (CV) systems,\footnote{This is a term that is often used to refer to these systems. Later in this thesis we will give it a somewhat more rigorous definition, as the Hilbert space $L^2(\mathbb{R}^n)$.} of which Gaussian states have been a focus of researches, so has the entanglement and measures of information thereof \cite{Adesso04,Serafini05,Braunstein05,Adesso06,Serafini06,Fiurasek07,Adesso07,Giorda10,Adesso10,Serafini10,Weedbrook12,Adesso14}. Gaussian states are an important subset of CV states, not only because they have relatively less complex structures, but also because some important states in quantum optics, like coherent states and squeezed states, are Gaussian \cite{Adesso14}. 

The search for bounds of entanglement have been conducted for different kinds of systems in terms of various measures, e.g. entanglement of formation \cite{Bennett96} for two-mode Gaussian states \cite{Rigolin04,Nicacio14} and geometric measure of entanglement \cite{Wei03,Barnum01,Shimony95} for symmetric qubits \cite{Martin10,Aulbach10,Aulbach12}. 

The subject of interest in this thesis is symmetric Gaussian states: They're Gaussian states that are invariant under interchange of modes, which has garnered interests over the years \cite{Adesso04_2,Adesso05,Adesso08,Xu17,Kao16}. We will discuss the bipartite entanglement between two blocks of modes, both of which are part of the aforementioned symmetric modes. We're going to first talk about some basic properties of Gaussian states, and then of symmetric Gaussian states in Chapter~\ref{ch:gaf}, the primary subject of this work. Building upon this basis, we will show how to characterize a symmetric Gaussian state with proper variables so that a simple constraint can be established in Section~\ref{sec:par}. With this parameterization, we can then proceed to ascertain the suprema (least upper bounds) of entanglement (Proposition~\ref{pro:sup}), with respect to negativity and logarithmic negativity.

\section{Entanglement}
As this thesis focuses on quantum entanglement, here I will offer a brief outline of entanglement theory that is relevant to this thesis. For a comprehensive review the reader may refer to Ref.~\cite{Horodecki09}.
\subsection{Entanglement}
A bipartite state $\rho$ is said to be separable with respect to party A and B if it can be expressed as
\begin{equation}
\rho=\sum_i p_i \rho_i^A\otimes \rho_i^B,\;p_i\geq 0 \text{ and }\sum_i p_i=1,
\end{equation}
where $\rho_i^A$ and $\rho_i^B$ are both density operators. A bipartite state is entangled if it's not separable.

From the definition, we can see that if a local operation is performed on one party, then the state or density operator of the other won't be affected; on the contrary, if the state is entangled, we can expect otherwise. To experimentally verify this, since quantum measurements are probabilistic in nature, entanglement will be embodied in the correlation between measurement outcomes, e.g. Bell's inequality and CHSH inequality \cite{Bell64,Clauser69}. 

More generally, we anticipate entangled states to behave differently than separable states under local operations with classical communication, commonly shortened as LOCC \cite{Chitambar14}. The premise of LOCC is, they are operations easier and more practical (and cheaper) to perform in real life when the two parties are distant, so entangled states are expected to do certain things better under LOCC \cite{Plenio07,Horodecki09}.

\subsection{Entanglement Measure}\label{sec:em}
Since entanglement in itself is quite abstract, entanglement measures are one of the tools to help us quantify entanglement.
It is a non-negative function that ``measures'' the entanglement between two, or more parties, that is, its domain is the state, or density operator and the codomain is $[0,\infty)$. Some entanglement measures are more ``task-oriented,'' in that they quantify the amount of certain things that is needed for, or produced after a task. For example, the entanglement cost of a state $\rho$ reflects the best rate at which maximally entangled states can be converted to $\rho$ under LOCC. On the other hand, the entanglement of distillation of $\rho$ shows the best rate at which $\rho$ can be converted to the maximally entangled states \cite{Plenio07}.

There are some properties an entanglement measure $E$ is expected to satisfy:
\begin{itemize}
\item As mentioned above, it maps a density operator to a non-negative number.
\item $E(\rho)=0$ for all separable states $\rho$.
\item Suppose $S_i$ are sub-operations of an LOCC operation, and let $\overline{S_i(\rho)}$ denote the normalized state (i.e. having trace 1, so a valid density operator) after $S_i$, and $p_i$ the probability that $S_i$ is carried out. Then 
\begin{equation}
\sum_i p_i E(\overline{S_i(\rho)})\leq E(\rho),
\end{equation}
namely, an LOCC on average shouldn't create entanglement.
\end{itemize}
Some other properties may be satisfied by or required for certain measures, but we won't pursue them here \cite{Plenio07}. Those properties prompt an axiomatic approach to entanglement measures, in which a function that obeys them are deemed an entanglement measure. 

Such a concept can be generalized, for example, to Gaussian LOCC (GLOCC), which is an LOCC that maps Gaussian states (to be defined in Chapter~\ref{ch:gaf}) to Gaussian states: We may define a entanglement measure specifically for Gaussian states which don't increase under a GLOCC instead of any LOCC \cite{Gideke02,Wolf04}. 

The entanglement measures we're going to use are negativity and logarithmic negativity, denoted by $E_N$ and $E_L$ respectively. Even though they're (comparatively) easy to calculate, they still have operational meanings, as bounds for teleportation capacity and distillable entanglement \cite{Vidal02}. We will introduce them formally in Section~\ref{sec:neg}.

\section{How This Thesis Is Organized}
\begin{itemize}
\item Chapter~\ref{ch:math} will focus on some mathematics behind quantum mechanics. It will contain material that is missed out or not explained rigorously in standard physics lectures, and the reader may consider glancing through or skipping over some parts. If the reader already has a basic understanding of mathematics for quantum mechanics/information or some mathematical analysis, then Chapter~\ref{ch:math} isn't a must-read, but some symbols are introduced there, so he/she is encouraged to give it a quick look.
\item Chapter~\ref{ch:op} includes several topics on operators. They're essential for developing the tools required in later chapters. 
\item Chapter~\ref{ch:li} is centered around linear mappings on operators, in particular quantum operations. The concepts discussed in this chapter will be utilized heavily in Chapter~\ref{ch:EC}.
\item In Chapter~\ref{ch:EC} I will present one of the main topics of this thesis, concerning the entangling capacity of quantum operations.
\item Chapter~\ref{ch:gaf} is the preparation for Chapter~\ref{ch:SGS}, where theories for treating so-called CV states, Gaussian states and (multi)symmetric Gaussian states are explained.
\item In Chapter~\ref{ch:SGS} the other primary subject of this thesis is presented: How to find the suprema of block entanglement for symmetric Gaussian states, and their implications.
\item Chapter~\ref{ch:con} concludes this thesis.
\end{itemize}
Since this thesis has two quite distinct topics, not all material in Chapter~\ref{ch:op} and \ref{ch:li} are mandatory to understand either one of the main subjects; for example, Weyl quantization isn't needed for Chapter~\ref{ch:EC}, and the reader can make his/her own judgment to pass over certain pieces therein. 

A well-established and documented mathematical/physical important result will be presented as ``Theorem;'' those derived by me\footnote{Or those that have been already discovered by others but I'm not aware of. Undesirable, but I can't completely deny such a possibility.} will be presented as ``Proposition,'' ``Lemma'' or ``Corollary.''

Lots of materials in this thesis are based on Ref.~\cite{Kao18} and a paper that hasn't been published as yet, and Ref.~\cite{Kao16} to a much lesser degree, but I added a substantial amount of content (like Chapter~\ref{ch:math}) and made some changes and corrections, so it should be more rigorous and self-contained. 

\chapter{Mathematical Premier}\label{ch:math}
In this chapter, I'll give a brief introduction to some mathematical fundamentals of quantum mechanics. The reader may already be familiar with some of them, but here I'll adopt a somewhat more rigorous, or mathematical language. For some readers this may seem obscure and unconventional, but I think this can provide valuable perspectives on quantum mechanics, as it's built upon these interlocking cogs of mathematics. Of course it's impossible (and more importantly, far beyond my capacity) to give a comprehensive look at the math behind quantum mechanics, and a large part of it is nothing but a quick overview or outline. A great part of the material in this chapter is extracted and adapted from Ref.~\cite{Roman,Loomis,PapaRudin,GrandpaRudin,Einsiedler,HallQ,Woit,deGosson}.
\section{Sets and Mappings}
\subsection{Set}\label{sec:set}
Informally, a \textbf{set} is a collection of arbitrary things, and those things are called members or elements of the set. In this thesis, I use curly brackets to contain members of a set. For example, the set of my family members $=\{$my parents, my sibling, me$\}$. If $s$ is a member of a set $\mathbb{S}$, we write $s\in\mathbb{S}.$ Sometimes to avoid repetition we also call a set a ``collection'' or ``family,'' e.g. a collection (family) of sets instead of a set of sets. 

Usually some conditions are put on a set and its members, where a colon : (or a vertical bar $|$) is used to indicate the conditions. For example, the set of all rational numbers $\mathbb{Q}=\{\frac{i_1}{i_2}:i_1\in\mathbb{Z},i_2\in\mathbb{Z}\backslash \{0\} \}$\label{sym:Q}, where $\mathbb{Z}$ is the set of integers.  

A set $\mathbb{S}_2$ is called a \textbf{subset} of a $\mathbb{S}_1$, denoted by $S_2\subseteq S_1$ if all the members of $\mathbb{S}_2$ are in $\mathbb{S}_1$, i.e. $s_i\in \mathbb{S}_1$ $\forall s_i\in\mathbb{S}_2$; $\forall$ means ``for all.'' $\mathbb{S}_2$ is a proper subset of $\mathbb{S}_1$ if $\mathbb{S}_2\subseteq \mathbb{S}_1$ but $\mathbb{S}_2\neq \mathbb{S}_1$, denoted by $\mathbb{S}_2\subset\mathbb{S}_1.$ For instance, $\mathbb{Z}\subseteq \mathbb{Q}$ and $\mathbb{Z}\subset \mathbb{Q}$. Note if $s\in\mathbb{S}$, then $\{s\}\subseteq \mathbb{S}.$

Very often the statement ``A is B'' or ``A are B'' implies set inclusion: For example, ``sky is blue'' implies sky is in the set of blue things and ``integers are rational numbers'' means integers are a subset of (the set of) all ration numbers \cite{Loomis}.

\subsection{Mapping}\label{sec:map}
For a function/map/mapping $f$ from a set $\mathbb{S}_1$ to $\mathbb{S}_2$, denoted by $f:\mathbb{S}_1\rightarrow \mathbb{S}_2$, $\mathbb{S}_1$ is called its \textbf{domain} and $\mathbb{S}_2$ its \textbf{codomain}, where the domain of $f$ is denoted by $\mathrm{dom}f$. The set $f(\mathbb{S}_1):=\{f(x):x\in \mathbb{S}_2\}$, called the \textbf{range} or \textbf{image} of $f$, is a subset of the codomain, denoted by $\text{ran}f$. Note $f:\mathbb{S}_1\rightarrow \mathbb{S}_2$ is read as a mapping/function $f$ \emph{from} $\mathbb{S}_1$ \emph{to} $\mathbb{S}_2$ or $f$ \emph{on} $\mathbb{S}_1$ \emph{into} $\mathbb{S}_2$ \cite{Loomis}. 

For $f:\mathbb{S}_1\rightarrow \mathbb{S}_2$, if $\mathbb{S}_1'\subseteq \mathbb{S}_1$, $f\upharpoonright_{\mathbb{S}_1'}$, the \textbf{restriction} of $f$ to $\mathbb{S}_1'$ or $f$ restricted to $\mathbb{S}_1'$, is $f$ with its domain reduced to $\mathbb{S}_1'$, or simply written as $f_{\mathbb{S}_1'}$ \cite{Loomis}. The \textbf{composition} of two mappings $f$ and $g$ is the mapping $(f\circ g)(x):=f(g(x))$ if the image of $g$ is a subset of the domain of $f$.

A mapping $f:\mathbb{S}_1\rightarrow \mathbb{S}_2$ is \cite{Loomis}
\begin{itemize}
\item \textbf{injective} if it's one-to-one.
\item \textbf{surjective} if the image of $f$ is the same as the codomain;
\item \textbf{bijective} if it's injective and surjective. It's therefore a one-to-one correspondence between its domain and codomain.
\end{itemize}

We can clarify these concepts via the following theorem:
\begin{thm}\label{thm:inj}
$f:\mathbb{S}_1\rightarrow \mathbb{S}_2$ is injective if and only if there exists a mapping $f^{-1}:\mathbb{S}_2\rightarrow \mathbb{S}_1$ such that $f^{-1}\circ f$ is the identity mapping on $\mathbb{S}_1$.\footnote{A mapping $g:\mathbb{S}\rightarrow \mathbb{S}$ is said to be an identity mapping if $g(x)=x$ for all $x\in \mathbb{S}$. We will talk about the identity mapping on a vector space in Section~\ref{sec:id}.}
\end{thm}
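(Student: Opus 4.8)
The plan is to prove both directions of the biconditional. For the forward direction, assume $f:\mathbb{S}_1\rightarrow \mathbb{S}_2$ is injective, and construct an explicit left inverse $f^{-1}:\mathbb{S}_2\rightarrow \mathbb{S}_1$. First I would fix an arbitrary ``base point'' $a\in\mathbb{S}_1$, which requires $\mathbb{S}_1$ to be nonempty; the degenerate case $\mathbb{S}_1=\emptyset$ should be handled separately (the empty mapping is vacuously injective and has a trivial left inverse). Then for each $y\in\mathbb{S}_2$ I would define $f^{-1}(y)$ to be the unique $x\in\mathbb{S}_1$ with $f(x)=y$ whenever $y\in\mathrm{ran}f$, and set $f^{-1}(y)=a$ otherwise. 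Injectivity is exactly what guarantees this assignment is well-defined (single-valued) on the range, so $f^{-1}$ really is a mapping. A short check then shows $(f^{-1}\circ f)(x)=f^{-1}(f(x))=x$ for every $x\in\mathbb{S}_1$, since $f(x)\in\mathrm{ran}f$ and $x$ is its unique preimage.

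For the converse direction, suppose such a left inverse $f^{-1}$ exists with $f^{-1}\circ f=I$ on $\mathbb{S}_1$. To show $f$ is injective, I would take $x_1,x_2\in\mathbb{S}_1$ with $f(x_1)=f(x_2)$ and apply $f^{-1}$ to both sides, obtaining
\begin{equation}
x_1=(f^{-1}\circ f)(x_1)=f^{-1}(f(x_1))=f^{-1}(f(x_2))=(f^{-1}\circ f)(x_2)=x_2.
\end{equation}
Hence $f(x_1)=f(x_2)$ forces $x_1=x_2$, which is precisely the one-to-one condition from the definition of injectivity. This direction is purely formal and uses nothing beyond associativity of composition and the defining property of the identity mapping.

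I expect the main obstacle to be conceptual rather than computational: making sure the construction of $f^{-1}$ in the forward direction is genuinely well-defined. The subtlety is that $f^{-1}$ must be a total function on all of $\mathbb{S}_2$, not merely on $\mathrm{ran}f$, so the ``otherwise'' branch sending leftover elements to the base point $a$ is essential — without it one only has a partial inverse. The injectivity hypothesis enters exactly at the point where we claim the preimage of $y\in\mathrm{ran}f$ is unique, so I would make that dependence explicit. The one edge case worth flagging is $\mathbb{S}_1=\emptyset$, where no base point exists; there the statement still holds vacuously, and I would note it briefly rather than belabor it.
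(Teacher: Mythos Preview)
Your proposal is correct and follows essentially the same approach as the paper: construct the left inverse on $\mathrm{ran}f$ using uniqueness of preimages and extend arbitrarily elsewhere, and for the converse show that a left inverse forces equal images to come from equal inputs. The only cosmetic differences are that the paper argues the converse by contrapositive rather than directly, and does not single out the $\mathbb{S}_1=\emptyset$ edge case.
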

\begin{proof}\hfill{}\\
``If'': If $f$ were not injective, then there would exist $x,y\in\mathbb{S}_1$ such that $x\neq y$ and $f(x)=f(y):=z$. Whether we define $f^{-1}(z)$ as $x$ or $y$, $f^{-1}\circ f$ can't be an identity mapping.   

\noindent ``Only if'': For every $y\in\text{ran}f\subseteq \mathbb{S}_2$, there's one and only one $x\in\mathbb{S}_1$ that satisfies $f(x)=y$, and we can define $f^{-1}(y)=x$ for such $y$. For other elements in $\mathbb{S}_2$, we can assign any values for $f^{-1}$, as long as the values are in $\mathbb{S}_1$. Hence such an $f^{-1}$ exists.
\end{proof}

Consequently, a mapping $f:\mathbb{S}_1\rightarrow \mathbb{S}_2$ is bijective if and only if there exists a mapping $f^{-1}:\mathbb{S}_2\rightarrow \mathbb{S}_1$ such that $f^{-1}\circ f$ and $f\circ f^{-1}$ are the identity mappings on $\mathbb{S}_1$ and $\mathbb{S}_2$ respectively.

We also use the notation $f:x\mapsto f(x)$ (read $f$ maps $x$ to $f(x)$)\label{sym:mapsto} to denote a mapping, especially when there's no need to give a specific name to the mapping. For example, $x\mapsto x^2$ is identical to the function $f(x)=x^2$. It's understood $x$ is any element in the domain of such a mapping. If the domain (and/or codomain) should be clarified, we write $\mathbb{S}_1\ni x\mapsto f(x)\in\mathbb{S}_2$.

\subsection{Index, Countability, Sequences and Series}\label{sec:index}
To indicate the members of a set $\mathbb{S}$ we often use an index set $\mathbb{K}$. An indexing function is a surjective mapping $\mathbb{K}\rightarrow\mathbb{S}$, so for any member $s\in\mathbb{S}$ there is always a corresponding $k\in\mathbb{K}$. 

We call two sets $\mathbb{S}_1$ and $\mathbb{S}_2$ equivalent if there exist a bijective mapping between them, denoted by $\mathbb{S}_1\sim \mathbb{S}_2$. Define
\begin{equation}
\mathbb{N}_n:=\{i:i\in\mathbb{N}\text{ and }i\leq n\} \text{ for } n\in \mathbb{N}\cup\{0\}.
\end{equation}
In other words, $\mathbb{N}_n$ is the subset of $\mathbb{N}$ whose elements are no larger than $n$, and $\mathbb{N}_0$ is the empty set. Then a set $\mathbb{S}$ is said to be\footnote{There are more than one ways of describing countability. Here I follow references like Ref.~\cite{Roman,Munkres}: A set is countable if it's finite or countably infinite.}
\begin{itemize}
\item \textbf{finite} if $\mathbb{S}\sim \mathbb{N}_n$ for some $n$, and $n$ is called the \textbf{cardinality} of $\mathbb{S}$;
\item \textbf{countably infinite} if $\mathbb{S}\sim \mathbb{N}$;
\item \textbf{countable} if it's finite or countably infinite;
\item \textbf{uncountable} if it's not countable.
\end{itemize}
\noindent For example, $\mathbb{Q}$ is countable, whereas the set of real numbers is uncountable, so the set of all irrational numbers is also uncountable \cite{BabyRudin,Munkres}.

A sequence $\{s_i\}$ can be thought of as a mapping $s:\mathbb{N}\rightarrow \mathbb{S}$, with $s_i=s(i)$ and $i\in\mathbb{N}$; to put it another way, $\mathbb{N}$ is an index set and $s$ is the indexing function, and a sequence is essentially $s$ together with $\mathbb{N}$ and $\mathbb{S}$. Note the set $\mathbb{S}$ doesn't need to be $\mathbb{R}$ or $\mathbb{C}$---we can have a sequence of, for example functions or vectors. In this thesis a sequence always refers to an infinite sequence, i.e. the index set is $\mathbb{N}$ rather than $\mathbb{N}_n$.

A series can be thought of as a sequence: Suppose $\mathcal{V}$ is a vector space (to be defined in Section~\ref{sec:vec}) and $t_i\in\mathcal{V}$ for all $i\in\mathbb{N}$. If $s=\sum_{i=1}^\infty t_i$, define the partial sum
\begin{equation}
s_n:=\sum_{i=1}^n t_i.
\end{equation}
Then the series is essentially the sequence $\{s_n\}$. It converges if the sequence $\{s_n\}$ converges (to be discussed in Section~\ref{sec:conban}) \cite{PapaRudin}.
\subsection{Ordered Sets, Intervals and Bounds}\label{sec:orib}
\begin{defi}[\textbf{Ordered set}]
A set $\mathbb{S}$ is said to be ordered if there exists a relation $<$, called \textbf{order}, such that the following two properties are satisfied \cite{BabyRudin}:
\begin{itemize}
\item For any $a,b\in\mathbb{S}$, one and only one of the three relations must be true: $a<b$, $a=b$, $b<a$.
\item Transitivity: For $a,b,c\in\mathbb{S}$, if $a<b$ and $b<c$ then $a<c$.
\end{itemize}
\end{defi}

$a>b$ is also used instead of $b<a$. The notation $a\leq b$ means $a<b$ or $a=b$, same for $a\geq b$ \cite{BabyRudin}. The real line $\mathbb{R}$ is a prime example of ordered sets. 

Before we proceed to the next important concept, let's take a look at intervals in $\mathbb{R}$\label{sym:R}. For any $b\geq a$ the following intervals are defined \cite{BabyRudin}:
\begin{enumerate}
	\item \textbf{(Open interval)} $(a,b):=\{x: a<x<b\}$.
	\item \textbf{(Closed interval)} $[a,b]:=\{x: a\leq x\leq b\}$.
	\item $[a,b):=\{x: a\leq x<b\}$.
	\item $(a,b]:=\{x:a< x\leq b\}$.
\end{enumerate}

\begin{defi}[\textbf{Upper and lower bounds}]
Given an ordered set $\mathbb{S}$, a subset $\mathbb{E}\subseteq \mathbb{S}$ is said to be \textbf{bounded (from) above} if there exists an $x\in\mathbb{S}$ such that $a\leq x$ for all $a\in\mathbb{E}$, and $a$ is called an \textbf{upper bound} of $\mathbb{E}$; similarly, it's said to be \textbf{bounded (from) below} if there exists an $y\in\mathbb{S}$ such that $y \leq a$ for all $a\in\mathbb{E}$, and $y$ is a \textbf{lower bound} of $\mathbb{E}$.
\end{defi}

As per the definition, the upper or lower bound of a subset (if it exists) isn't unique: For example both $1$ and $2$ are upper bounds of $(0,1)$. But what is to be defined is indeed unique:
\begin{defi}[\textbf{Supremum and infimum}]\label{def:sup}
A subset $\mathbb{E}$ of an ordered set $\mathbb{S}$ is said to have a \textbf{supremum} or \textbf{lowest upper bound} if there exists an $x\in\mathbb{S}$ such that $x$ is an upper bound of $\mathbb{E}$ and there's no upper bound of $\mathbb{E}$ smaller than $x$, i.e. any upper bound $y$ of $\mathbb{E}$ obeys $x\leq y$. If the supremum of a subset $\mathbb{E}$ exists, it's denoted by $\sup \mathbb{E}$.\footnote{Or ``lub'' in place of $\sup$, for lowest upper bound. Likewise the acronym of the greatest lower bound is "glb."} The \textbf{infimum} or \textbf{greatest lower bound} is defined in a similar way, denoted by $\inf$.
\end{defi}

The supremum or infimum of a subset isn't necessarily in the subset: For example, choosing $\mathbb{R}$ as the ordered set, $1$ is the supremum of both $(0,1)$ and $(0,1]$, but it's not in the former. The \textbf{maximum} of a subset $\mathbb{E}$ is simultaneously a supremum and an element of $\mathbb{E}$, denoted by $\max\mathbb{E}$, likewise for the \textbf{minimum}, denoted by $\min\mathbb{E}$ \cite{Zorich}. Therefore, $(0,1)$ doesn't have a maximum and minimum, and $(0,1]$ has the maximum $1$ but has no minimum.

An ordered set $\mathbb{S}$ has the \textbf{least-upper-bound property} if any non-empty subset $\mathbb{E}$ that is bounded from above always has a supremum. The real line $\mathbb{R}$ is equipped with such a property \cite{BabyRudin}. Note this doesn't imply any non-empty subset has a supremum---For example $(0,\infty):=\{x>0\}$ doesn't have a supremum.\footnote{Unless we're considering the extended real line \cite{BabyRudin,PapaRudin}.} 

\section{Vector Spaces and Linear Mappings}
\subsection{Vector Spaces and Norms}
\subsubsection{Vector space}
\label{sec:vec}
\begin{defi}[\textbf{Vector space}]
Let $\mathcal{V}$ be a set and $\mathbb{F}$ be a field, with a mapping $\mathcal{V}\times \mathcal{V}\ni (v_1,v_2) \mapsto v_1+v_2\in\mathcal{V}$ called \textbf{addition}, and another mapping
$\mathbb{F}\times\mathcal{V}\ni(a,v)\mapsto a v\in\mathcal{V}$, called \textbf{multiplication by a scalar}, where $a$ is often called a \textbf{scalar} in this context. Then $\mathcal{V}$ is said to be a vector space $\mathcal{V}$ over field $\mathbb{F}$ if \cite{Loomis,PapaRudin,Halmos,Roman}
\begin{description}
\item[A1.] Associativity of addition: $(v_1+v_2)+v_3=v_1+(v_2+v_3)$ for all $v_i\in\mathcal{V}$.
\item[A2.] Commutativity of addition: $v_1+v_2=v_2+v_1$ for all $v_i\in\mathcal{V}$.
\item[A3.] Existence of an identity, or zero under addition: There exists an element $0\in\mathcal{V}$ such that $0+v=v$ for all $v\in\mathcal{V}.$
\item[A4.] Existence of inverses under addition: For every $v_1\in\mathcal{V}$ there is exists a $v_2\in\mathcal{V}$ such that $v_1+v_2=0.$
\item[S1.] $(a_1 a_2) v=a_1 (a_2 v)$ for all $a_i\in\mathbb{F}$ and $v\in \mathcal{V}.$
\item[S2.] $(a_1+a_2)v=a_1v+a_2v$ for all $a_i\in\mathbb{F}$ and $v\in \mathcal{V}.$
\item[S3.] $a(v_1+v_2)=av_1+av_2$ for all $a\in\mathbb{F}$ and $v_i\in \mathcal{V}.$
\item[S4.] $1v=v$ for all $v\in \mathcal{V}.$
\end{description}
\end{defi}
Other properties can be deduced from these postulates \cite{Loomis}:
\begin{enumerate}
\item The zero element postulated in A3 is unique: If there's another $0_1$ that satisfies A3. Then by A1 and A4, if $v_1+v_2=0,$ then $(0_1+v_1)+v_2=0_1+(v_1+v_2)=0$, so $0_1=0.$
\item For each $v_1$ the $v_2$ of A4 is unique, and is called $-v_1$: If $v_1+v_2=v_3+v_2=0$, by A1 $v_1+v_2+v_3=v_1=v_3.$
\item $0v=0$ and $a0=0$: From S2 $av=(0+a)v=0v+av$, so $0v=0$; from S3 $av=a(0+v)=a0+av$ so $a0=0.$
\item $(-1)v=-v$: From S2, S4 and $0v=0$, $(1-1)v=v+(-1)v=0$, so $(-1)v=-v.$
\end{enumerate}
A1 to A4 basically say a vector space is an abelian group under addition, so point 1 and 2 above come naturally. We use the same symbol ``0'' for the zero elements of both $\mathcal{V}$ and $\mathbb{F}$, and the context dictates what the zero stands for. In this thesis, the field is usually $\mathbb{C},$ the complex field.\label{sym:C}

\subsubsection{Operations on mappings}
\label{sec:omap}
For functions $f$ and $f_i$ from a set $\mathbb{S}$ to a vector space $\mathcal{V}$, the following operations or mappings can be defined on them to create a new function:
\begin{itemize}
\item \textbf{Addition:}  $(f_1+f_2)(x):=f_1(x)+f_2(x)$ $\forall x\in\mathbb{S}$.
\item \textbf{Multiplication by a scalar:} $(cf)(x):=cf(x)$ for any scalar $c$ and $\forall x\in\mathbb{S}$.
\end{itemize}
Basically, they're defined \emph{pointwise}. We can also define a zero mapping $0: \mathbb{S}\rightarrow\mathcal{V}$ which satisfies
\begin{equation}
0(x)=0 \;\forall x\in\mathbb{S}.
\end{equation}
Note $0$ on the left side of the equation above is the zero mapping and $0$ on the right side is the zero vector in $\mathcal{V}.$ Since $f(x)$ is a vector, all the requirements for vectors (see Section~\ref{sec:vec}) are met by the two operations above, so mappings from a set to a vector space $\mathcal{V}$ can form a vector space, with the zero mapping being the zero vector.

\subsubsection{Span, Bases and Dimensions}\label{sec:span}
Let $\mathbb{S}$ be a nonempty subset of a vector space $\mathcal{V}$. A \textbf{linear combination} of  vectors in $\mathbb{S}$ is a \textbf{finite} sum $\sum_i c_i v_i$, where $c_i$ are arbitrary scalars and $v_i\in\mathbb{S}$. Note $\mathbb{S}$ doesn't need to be finite---it can even be uncountable.
\begin{defi}[\textbf{Span}]
The (linear) span of a nonempty subset $\mathbb{S}$ of a vector space is composed of all linear combinations of $\mathbb{S}$, denoted by $\text{span}(\mathbb{S})$. In other words, let $\mathbb{K}$ be an index set for $\mathbb{S}$, i.e. $\mathbb{S}=\{v_i:i\in\mathbb{K}\}$:
\begin{equation}
\mathrm{span}(\mathbb{S})=\left\{\sum_{i\in\mathbb{K}'}c_i v_i:c_i\in\mathbb{R}\text{ and } \mathbb{K}'\text{ are finite subsets of }\mathbb{K}\right\}.
\end{equation}
\end{defi}

$\text{span}(\mathbb{S})$ is a subspace of $\mathcal{V}$, and the smallest subspace of $\mathcal{V}$ that contains $\mathbb{S}.$ If $\text{span}(\mathbb{S})=\mathcal{V}$, we say $\mathbb{S}$ spans $V$. A vector space is \textbf{finite-dimensional} if it has a finite spanning set, else it's \textbf{infinite-dimensional}, and the \textbf{dimension} of a (finite-dimensional) vector space is the cardinality of the smallest spanning set; $\text{dim}\mathcal{V}$ refers to the dimension of a vector space $\mathcal{V}$. Some properties of finite-dimensional spaces we may take for granted don't necessarily exist for infinite-dimensional spaces, so they should be treated with extra care. 

\begin{defi}[\textbf{Linear independence}]\label{def:li}
Let $\mathcal{V}$ be a vector space. A non-empty subset $\mathbb{S}\subset \mathcal{V}$ is said to be \textbf{linearly independent} if $\mathbb{S}$ is the smallest spanning set of $\mathrm{span}(\mathbb{S})$; in other words, for any $v\in \mathbb{S}$, $\mathrm{span}(\mathbb{S})\neq \mathrm{span}(\mathbb{S}/\{v\})$. Equivalently, $\mathbb{S}\subset \mathcal{V}$ is linearly independent if and only if for any linear combination of vectors from $\mathbb{S}$: $\sum_i c_i v_i$ to be zero, the coefficients $c_i$ must all be zero.
\end{defi}
From the definition a linearly independent set must not contain $0$ \cite{Roman}. We can now define:
\begin{defi}[\textbf{Basis}]\label{def:basis}
Consider a vector space $\mathcal{V}$ and a set $\mathbb{B}=\{\alpha_i\}\subset \mathcal{V}$. If $\mathbb{B}$ spans $\mathcal{V}$ and is linearly independent, then $\mathbb{B}$ is called a basis of $\mathcal{V}$. As $\mathbb{B}$ spans $\mathcal{V},$ given $v\in\mathcal{V}$ we can always find scalars $c_i$ such that
\begin{equation}
v=\sum_i c_i \alpha_i,
\end{equation}
and $c_i$ are unique by linear independence. The scalars $c_i$ are called the \textbf{coordinates} of $v$ (with respect to $\mathbb{B}$).
\end{defi}

A basis $\mathbb{B}$ is the smallest set that spans $\mathcal{V}$, and the dimension of a vector space is the number of vectors in a basis. The basis of a vector space is not unique, but the dimension is a fixed number \cite{Loomis}. 
\subsubsection{Metrics and Norms}\label{sec:normmet}
\begin{defi}[\textbf{Metric}]
A metric or distance $d$ on a set $\mathbb{S}$ is a non-negative function such that for all $x,y,z\in\mathbb{S}$
\begin{enumerate}
	\item $d(x,y)>0$ for $x\neq y$ and $d(x,x)=0$.
	\item (Symmetry) $d(x,y)=d(y,x)$.
	\item \textbf{The triangle inequality}: $d(x,z)\leq d(x,y)+d(y,z)$.
\end{enumerate}
A set with a metric is called a \textbf{metric space} \cite{Loomis,PapaRudin}.
\end{defi}

The topology of a metric space can be constructed by open balls, defined as:
\begin{equation}
B_r(x)=\{y\in\mathbb{S}:d(y,x)< r\},\;r\geq 0,
\end{equation}
which is an open ball of radius $r$ about $x$ \cite{Loomis,PapaRudin}. A set is open if and only if it's a union of open balls. For example, in $\mathbb{R}$ an open interval is both an open set and an open ball, and a closed interval is a closed set and a closed ball; on the other hand $[a,b)$ is neither a closed nor open set, unless $a=b$, in which case it's an empty set and an open (and closed) set. Any subset $\mathbb{S}'$ of a metric space is a metric space with the metric given by $d\upharpoonright_{\mathbb{S}'\times \mathbb{S}'}.$ 

\begin{defi}[\textbf{Norm}]\label{def:norm}
If $\mathcal{V}$ is a complex (or real) vector space, a function $||\cdot||:\mathcal{V}\rightarrow \mathbb{R}$ that satisfies the following conditions
\begin{enumerate}
\item $||v||> 0$ for all nonzero $v\in\mathcal{V}$.
\item $||av||=|a|\; ||v||$ for all scalar $a$ and $v\in\mathcal{V}.$
\item \textbf{The triangle inequality}: $||v_1+v_2||\leq ||v_1||+||v_2||$.
\end{enumerate}
is called a \textbf{norm} or length \cite{Roman,Loomis,PapaRudin}. From 2. it's clear that $||0||=0$ and is thus the only vector with norm 0. A vector space with a norm is called a \textbf{normed vector space}. 
\end{defi}

A normed space $\mathcal{V}$ has a metric given by \cite{Loomis}
\begin{equation}
d(v_1,v_2):=||v_1-v_2||.
\end{equation}
Therefore a normed space is always a metric space, and the metric on a normed space is always assumed to that induced by norm. The associated topology on the space is called norm topology \cite{PapaRudin,Einsiedler,Kadison}.

\subsubsection{Convergence and Banach Spaces}
\label{sec:conban}
In a metric space,\footnote{More generally, convergence as a concept can be defined for topological spaces. The following definition corresponds to convergence in terms of metric topology. Please refer to Ref.~\cite{Munkres,Tu} for details.} a sequence $\{x_n\}$ is said to converge to a point $x$ if for any $\delta>0$ there exists an $N_\delta$ such that
\begin{equation}
d(x,x_n)<\delta \text{ for all }n>N. 
\end{equation}
For example, consider the metric space $[0,1]\subset \mathbb{R}$ with the metric $d(x_1,x_2)=|x_1-x_2|.$ The sequence $\{1/n:n\in\mathbb{N}\}$ converges to $0$.

In a metric space, a sequence $\{x_n\}$ is called a \textbf{Cauchy sequence} if for any $\delta>0$, there's always an $N_\delta$ such that
\begin{equation}
d(x_m,x_n)<\delta \text{ for all }m,n>N_\delta.
\end{equation}
Basically, the distance between each two elements in the sequence gets increasingly small as $N\rightarrow \infty$. 

By definition, a convergent sequence is a Cauchy sequence, but the reverse isn't necessarily true: For instance, now consider the metric space $(0,1]$. The sequence $\{1/n\}$ is clearly a Cauchy sequence, but it doesn't converge to any point in $(0,1]$, even though $1/n\in (0,1]$ for all $n\in\mathbb{N}$. This leads to the following definition:

\begin{defi}[\textbf{Completeness}]
A metric space where any Cauchy sequences converge is called a \textbf{complete space}. A complete normed space is called a \textbf{Banach space}.
\end{defi}

All finite-dimensional normed spaces are complete \cite{Loomis}; on the other hand, not all infinite-dimensional spaces are complete. Basically, completeness tells us what seems to be there is actually there---a sequence that seems to converge indeed converges to something in the space. On the contrary, in a incomplete space something that seems to converge may approach nothing, or to something outside the space (after completion \cite{Roman,Einsiedler}).

Let's use the previous example: $(0,1]$ isn't a compete space, as there are Cauchy sequences that don't converge, like $\{1/n\}$, but it can be completed by adding $\{0\}$ to it, becoming $[0,1]$. As limits are a frequent occurrence, completeness is a very desirable and useful property.

\subsection{Linear Mappings, Operators and the Dual Space}
\subsubsection{Linear Mappings and Bounded Linear Mappings}\label{sec:opnorm}

\begin{defi}[\textbf{Linear mappings}]
Suppose $\mathcal{V}$ and $\mathcal{W}$ are vector spaces. A mapping $T:\mathcal{V}\rightarrow \mathcal{W}$ is said to be a linear mapping (or map or transformation) if \cite{Loomis,BabyRudin}
\begin{equation*}
T(v_1+v_2)=T(v_1)+T(v_2) \text{ and } T(cv)=cT(v) \,\forall v,v_i\in \mathcal{V}_1 \text{ and scalar }c.
\end{equation*}
\end{defi}
With a linear mapping $T$, the parentheses around the argument are sometimes ignored, i.e. $Tv=T(v).$ Linear mappings from $\mathcal{V}$ to $\mathcal{W}$ form a vector space, because if $T_i$ are linear mappings, $T_1+T_2$ and $cT$ are as well \cite{Loomis,BabyRudin}. The zero mapping $0$ is the zero vector.

\begin{defi}[\textbf{Operator norm}]
If $\mathcal{V}$ and $\mathcal{W}$ are normed vector spaces, the operator norm of  a linear mapping $T:\mathcal{V}\rightarrow\mathcal{W}$ is defined as \cite{Bhatia,Loomis}
\begin{equation}
||T||:=\sup_{v\in\mathcal{V}}\frac{||Tv||}{||v||}.
\end{equation}
\end{defi}

The operator norm is indeed a norm---for example, by definition the operator norm of the zero mapping is 0, as a norm should. A linear mapping is \textbf{bounded} if $||T||<\infty.$ Bounded linear mappings from $\mathcal{V}$ to $\mathcal{W}$ form a normed space themselves, denoted by $\mathcal{B}(\mathcal{V},\mathcal{W})$; on the other hand, the space of \emph{all} linear mappings from $\mathcal{V}$ to $\mathcal{W}$ is denoted by $\mathcal{L}(\mathcal{V},\mathcal{W})$. It can be shown that if $T_1\in \mathcal{B}(\mathcal{V}_1,\mathcal{V}_2)$ and $T_2\in \mathcal{B}(\mathcal{V}_2,\mathcal{V}_3)$, $T_2\circ T_1\in \mathcal{B}(\mathcal{V}_1,\mathcal{V}_3).$ Other notations are also used for the space of bounded linear mappings in the literature, for example $\text{Hom}(\mathcal{V},\mathcal{W})$ (\text{Hom} for homomorphism) and $\mathcal{L}(\mathcal{V},\mathcal{W})$ \cite{Loomis,Roman}. If only finite-dimensional vector spaces are concerned, all linear mappings are bounded \cite{Loomis,Blackadar}. Also,
\begin{thm}\label{thm:dim}
Let $\mathcal{V}$ and $\mathcal{W}$ be finite-dimensional vector spaces. Then $\mathrm{dim}(\mathcal{B}(\mathcal{V},\mathcal{W}))=\mathrm{dim}(\mathcal{V})\mathrm{dim}(\mathcal{W})$.
\end{thm}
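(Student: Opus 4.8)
The plan is to exhibit an explicit basis for the space of linear mappings and count it. Since $\mathcal{V}$ and $\mathcal{W}$ are finite-dimensional, every linear mapping between them is bounded (as noted just above the statement), so $\mathcal{B}(\mathcal{V},\mathcal{W})$ coincides with $\mathcal{L}(\mathcal{V},\mathcal{W})$, the space of \emph{all} linear mappings; it therefore suffices to compute the dimension of the latter. Let me write $m=\mathrm{dim}(\mathcal{V})$ and $n=\mathrm{dim}(\mathcal{W})$, fix a basis $\{\alpha_1,\dots,\alpha_m\}$ of $\mathcal{V}$ and a basis $\{\beta_1,\dots,\beta_n\}$ of $\mathcal{W}$, and produce $mn$ mappings that I claim form a basis of $\mathcal{B}(\mathcal{V},\mathcal{W})$.

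The underlying fact I would lean on is that, by Definition~\ref{def:basis}, every $v\in\mathcal{V}$ has unique coordinates $v=\sum_j c_j\alpha_j$; consequently a linear mapping is determined entirely by its values on the basis vectors, and conversely those values may be prescribed arbitrarily (extending by linearity yields a well-defined linear map precisely because the coordinates are unique). Granting this, for each pair $(i,j)$ with $1\le i\le n$ and $1\le j\le m$ I would define $E_{ij}:\mathcal{V}\rightarrow\mathcal{W}$ to be the linear mapping specified on the basis by sending $\alpha_j$ to $\beta_i$ and every other basis vector to $0$. Each $E_{ij}$ is linear by construction, so it is a genuine element of $\mathcal{B}(\mathcal{V},\mathcal{W})$.

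It then remains to check the two defining properties of a basis. For spanning, take an arbitrary $T\in\mathcal{B}(\mathcal{V},\mathcal{W})$ and expand each image in the basis of $\mathcal{W}$, writing $T(\alpha_j)=\sum_{i} a_{ij}\beta_i$; I would verify that $T$ and $\sum_{i,j}a_{ij}E_{ij}$ agree on every $\alpha_j$, hence agree as linear maps, so the $E_{ij}$ span. For linear independence, suppose $\sum_{i,j}c_{ij}E_{ij}=0$; evaluating this zero mapping at a fixed $\alpha_k$ gives $\sum_i c_{ik}\beta_i=0$, and linear independence of $\{\beta_i\}$ forces $c_{ik}=0$ for all $i$, for each $k$. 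Since the $mn$ mappings $E_{ij}$ are linearly independent and span, they form a basis, whence $\mathrm{dim}(\mathcal{B}(\mathcal{V},\mathcal{W}))=mn=\mathrm{dim}(\mathcal{V})\,\mathrm{dim}(\mathcal{W})$.

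The only genuinely delicate point—and the step I would expect to need the most care—is the ``free extension'' principle invoked at the start of the second paragraph: that prescribing arbitrary images for the basis vectors yields one and only one linear map. The paper states uniqueness of coordinates but does not isolate this extension property as a lemma, so I would either prove it inline (well-definedness and linearity both follow from uniqueness of the coordinate expansion) or insert it as a short preliminary lemma. Everything else is routine bookkeeping with finite sums, and finiteness of $m$ and $n$ guarantees these sums are legitimate, so no convergence or completeness subtleties from the earlier Banach-space material enter here.
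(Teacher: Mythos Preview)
Your argument is correct and is the standard basis-counting proof. Note, however, that the paper does not actually supply a proof of this theorem: it is stated as a well-known result (in keeping with the author's convention that established facts are labeled ``Theorem'' and cited rather than proved), so there is no paper proof to compare against. Your write-up would serve perfectly well as the missing justification.
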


It can be shown continuity and boundedness are identical for linear mappings \cite{Loomis,PapaRudin}:
\begin{thm}\label{thm:boco}
For a linear mapping between normed spaces, the following statements are equivalent:
\begin{enumerate}
\item It's continuous at one point.
\item It's continuous (everywhere).\footnote{A mapping is said to be continuous if it's continuous everywhere.}
\item It's bounded.
\end{enumerate}
\end{thm}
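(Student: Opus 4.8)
The plan is to establish the three equivalences by proving a cycle of implications, namely (2) $\Rightarrow$ (1) $\Rightarrow$ (3) $\Rightarrow$ (2), so that each statement implies the next and hence all three are mutually equivalent. Two of the three links are essentially immediate; the only substantive step is extracting a global bound from continuity at a single point, and this is precisely where linearity does the real work.

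First I would dispatch the trivial direction (2) $\Rightarrow$ (1): continuity everywhere includes continuity at any one chosen point, so nothing needs to be done beyond noting this.

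The heart of the argument is (1) $\Rightarrow$ (3). Suppose $T$ is continuous at some point $x_0$. Applying the definition of continuity with the choice $\epsilon = 1$ yields a $\delta > 0$ such that $||x - x_0|| < \delta$ forces $||Tx - Tx_0|| < 1$. Writing $x = x_0 + v$ and using linearity, $||Tx - Tx_0|| = ||Tv||$, so every $v$ with $||v|| < \delta$ satisfies $||Tv|| < 1$. The decisive move is then to rescale: for an arbitrary nonzero $u$, the vector $v = \tfrac{\delta}{2||u||}\,u$ has norm $\delta/2 < \delta$, whence $||Tv|| < 1$; expanding by homogeneity of the norm and linearity of $T$ gives $||Tu|| < \tfrac{2}{\delta}||u||$. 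Taking the supremum over $u$ in the definition of the operator norm shows $||T|| \le 2/\delta < \infty$, i.e. $T$ is bounded.

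Finally (3) $\Rightarrow$ (2) follows directly from the defining inequality of the operator norm: if $||T|| = M < \infty$, then $||Tx - Ty|| = ||T(x-y)|| \le M\,||x-y||$, so $T$ is in fact Lipschitz and therefore continuous everywhere---given $\epsilon$, choose $\delta = \epsilon/M$ when $M > 0$, and observe that $T$ is the zero mapping, hence trivially continuous, when $M = 0$. The main obstacle is really just the rescaling trick in (1) $\Rightarrow$ (3); once one recognizes that continuity at a single point already controls $T$ on an entire ball and that linearity propagates this control to the whole space by scaling, the remaining two implications are routine.
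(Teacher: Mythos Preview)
Your proof is correct and is the standard argument; the paper itself does not supply a proof of this theorem, merely stating it with citations to Loomis--Sternberg and Rudin, so there is no in-paper argument to compare against. Your cycle (2) $\Rightarrow$ (1) $\Rightarrow$ (3) $\Rightarrow$ (2) and the rescaling trick in the substantive step are exactly what one finds in those references.
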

\noindent Hence all linear mappings between finite-dimensional spaces are continuous.

\subsubsection{Null Space, Homomorphism and Isomorphism}\label{sec:hom}
\begin{defi}[\textbf{Null space}]
The null space or \textbf{kernel}\footnote{As the term ``kernel'' will be used in a different sense later, we'll avoid using this term to refer to the null space.} of a linear mapping $T:\mathcal{V}\rightarrow \mathcal{W}$ is the set \cite{Loomis,Roman,GrandpaRudin}
\begin{equation}
\ker T:=\{T(v)=0:v\in\mathcal{V}\}.
\end{equation} 
\end{defi}
The null space can be easily shown to be a (sub)space by linearity. A linear mapping is one-to-one (injective) if and only if its null space is $\{0\}$ \cite{Loomis}. 

A linear mapping from a vector space to another is also called a \textbf{homomorphism}, because the structure of, or operations on vector spaces are retained. If a linear mapping is bijective, it's called an \textbf{isomorphism}. Two vector spaces are said to be \textbf{isomorphic} if there's an isomorphism between them.

In finite-dimensional spaces, there's a very useful relation  \cite{Roman,Loomis}: 
\begin{thm}[\textbf{The rank plus nullity theorem}]\label{thm:rpn}
Suppose $\mathcal{V}$ and $\mathcal{W}$ are both finite dimensional spaces, and $T$ is a linear mapping from $\mathcal{V}$ to $\mathcal{W}$. Then
\begin{equation}
\mathrm{dim}(\ker T)+\mathrm{dim}(\mathrm{ran}T)=\mathrm{dim}(\mathcal{V}).
\end{equation}
Namely, the dimension of the kernel (\textbf{nullity}) plus that of the range (\textbf{rank}) is equal to the dimension of the domain.
\end{thm}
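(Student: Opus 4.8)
The plan is to prove this by constructing an explicit basis of $\mathrm{ran}T$ out of a basis of $\mathcal{V}$ that has been adapted to the kernel. First I would pick a basis $\{u_1,\dots,u_k\}$ of $\ker T$, which exists and is finite because $\ker T$ is a subspace of the finite-dimensional space $\mathcal{V}$; this sets $\mathrm{dim}(\ker T)=k$. The crucial preliminary step is to invoke the basis-extension property: any linearly independent set in a finite-dimensional space can be enlarged to a basis, so I can extend the kernel basis to a basis $\{u_1,\dots,u_k,w_1,\dots,w_m\}$ of all of $\mathcal{V}$, whence $\mathrm{dim}(\mathcal{V})=k+m$. The whole argument then reduces to showing that $\mathrm{dim}(\mathrm{ran}T)=m$.

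To that end I would prove that $\{T(w_1),\dots,T(w_m)\}$ is a basis of $\mathrm{ran}T$. Spanning is the easy half: an arbitrary element of the range has the form $T(v)$ for some $v\in\mathcal{V}$; expanding $v=\sum_i a_i u_i+\sum_j b_j w_j$ in the chosen basis and using linearity together with $T(u_i)=0$ collapses $T(v)$ to $\sum_j b_j T(w_j)$, so the vectors $T(w_j)$ indeed span $\mathrm{ran}T$.

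The part that requires the most care is linear independence of the $T(w_j)$. Suppose $\sum_j c_j T(w_j)=0$; by linearity this says $T\!\left(\sum_j c_j w_j\right)=0$, so $\sum_j c_j w_j\in\ker T$ and can be rewritten as a combination $\sum_i d_i u_i$ of the kernel basis. Rearranging yields a linear relation $\sum_j c_j w_j-\sum_i d_i u_i=0$ among the vectors of the full basis of $\mathcal{V}$; since that basis is linearly independent, every coefficient must vanish, in particular each $c_j=0$. Hence the $T(w_j)$ are independent, so $\mathrm{dim}(\mathrm{ran}T)=m$, giving $\mathrm{dim}(\ker T)+\mathrm{dim}(\mathrm{ran}T)=k+m=\mathrm{dim}(\mathcal{V})$. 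The main obstacle is really just this bookkeeping with the kernel-adapted basis; the only nontrivial ingredient imported from outside the statement is the basis-extension property, which I would either cite from the finite-dimensional theory already referenced or establish by repeatedly adjoining vectors lying outside the current span until the span exhausts $\mathcal{V}$.
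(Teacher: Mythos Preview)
Your argument is correct and is precisely the standard proof of the rank--nullity theorem. Note, however, that the paper does not actually supply its own proof of this theorem: it is stated as a well-known result with citations to Refs.~\cite{Roman,Loomis}, and the proof you give is essentially the one found in those references.
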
  
Some immediate observations can be made of this theorem:
\begin{enumerate}
\item If $\mathrm{dim}(\mathcal{V})=\mathrm{dim}(\mathcal{W})$, a linear mapping is injective if and only if it's surjective (and therefore bijective).
\item If $\text{dim}(\mathcal{V})>\text{dim}(\mathcal{W})$, $T$ can't be one-to-one; if $\text{dim}(\mathcal{W})>\text{dim}(\mathcal{V})$, $T$ can't be surjective.
\item \emph{Two finite-dimensional vector spaces are isomorphic if and only if they have the same dimension.}
\end{enumerate}

When the dimension is infinite, even for linear mappings between the same space the remarks above break down. For example, a linear mapping can be injective without being surjective. Two infinite-dimensional spaces in general aren't isomorphic, so how a infinite-dimensional space is configured is crucial.

\subsubsection{Operators}
A linear mapping with the same domain and codomain is often called (linear) operators \cite{BabyRudin,Roman},\footnote{In this thesis, I'll only adhere to this convention strictly when dealing with bounded operators. In some literature generic linear mappings are also called operators \cite{Einsiedler,Halmos}. Actually, some linear mappings that are often called operators, like differential operators and integral operators, may have different domains and codomains \cite{Einsiedler}.} and the vector space of bounded linear operators on $\mathcal{V}$, $\mathcal{B}(\mathcal{V},\mathcal{V})$, will be denoted by $\mathcal{B}(\mathcal{V})$. 

The composition of bounded operators can be regarded as a ``multiplication'' between operators:
\begin{equation}
\mathcal{B}(\mathcal{V})\times \mathcal{B}(\mathcal{V})\ni (O_1,O_2):=O_1\circ O_2:=O_1O_2\in \mathcal{B}(\mathcal{V}),
\end{equation}
so operators form an algebra \cite{Blackadar,Loomis,Roman}:
\begin{subequations}
\begin{align}
(O_1O_2)O_3&=O_1(O_2O_3)\label{eq:a1}\\
O_1(O_2+O_3)&=O_1O_2+O_1O_3\label{eq:a2}\\
(O_1+O_2)O_3&=O_1O_3+O_2O_3\label{eq:a3}\\
c O_1 O_2:= c(O_1O_2)&=(cO_1)O_2=O_1(cO_2),\label{eq:a4}
\end{align}
\end{subequations}
where $c$ is any scalar. \eqref{eq:a1} comes from $(f\circ g)\circ h=f\circ (g\circ h)$ for all functions $f,g,h$ whose domains and codomains coincide. Linearity of operators leads to \eqref{eq:a2}, \eqref{eq:a3} and \eqref{eq:a4}.

\subsubsection{Linear Functionals and the Dual Space}
\label{sec:lf}
A \textbf{linear functional} on a vector space is a linear mapping from vectors to scalars. Note a field, like $\mathbb{R}$ or $\mathbb{C}$ is a one-dimensional vector space over itself. Like other linear mappings, linear functionals on a vector space $\mathcal{V}$ form a vector space, called the \textbf{dual space} of $\mathcal{V},$ denoted by $\mathcal{V}^*$. 

Now let's show for a finite-dimensional space $\mathcal{V}$, $\mathcal{V}^*$ is isomorphic to it \cite{Loomis}: Given a basis $\mathbb{B}=\{\alpha_i\}$ of a finite-dimensional space $\mathcal{V}$, a basis of $\mathcal{V}^*$ can be easily constructed by choosing linear functionals $l_i$ such that
\begin{equation}
l_i(\alpha_j)=\delta_{ij},
\end{equation}
and an \emph{isomorphism} $L$ from $\mathcal{V}$ to $\mathcal{V}^*$ can be established: For $v=\sum c_i\alpha_i \in\mathcal{V}$, we require
\begin{equation}
L(v)=\sum_i c_i l_i.
\end{equation}
$l_i$ are exactly the coordinate function(al)s with respect to $\mathbb{B}$, in that if $v=\sum c_i \alpha_i$ then $l_i(v)=c_i$, which are the coordinates.

\section{Hilbert Space}
\label{sec:inner}
\subsection{Inner Product, Inner Product Spaces and Hilbert Spaces}\label{sec:inn}
\begin{defi}[\textbf{Inner product}]\label{def:inn}
A complex vector space is called an \textbf{inner product space} $\mathcal{V}$ if for any pair of vectors $v_1$ and $v_2$ there's an associated complex number $(v_1|v_2)$, called an inner product, that satisfies the following properties \cite{Loomis,PapaRudin}:
\begin{enumerate}
\item $(v_1|v_2)^*=(v_2|v_1)$.
\item $(v_1|v_2+v_3)=(v_1|v_2)+(v_1|v_3)$ for any $v_i\in\mathcal{V}$.
\item $(v_1|cv_2)=c(v_1|v_2)$, where $c$ is any complex number and often called a scalar.\footnote{In mathematical texts, linearity is usually presumed on the first argument, rather than the second as most physicists do.}
\item $(v|v)\geq 0$ $\forall v\in\mathcal{V}.$
\item $(v|v)=0$ only if $v=0,$ the zero vector.
\end{enumerate}
\end{defi}

A real inner product space replaces property 1 with symmetry: $(v_1|v_2)=(v_2|v_1)$. An inner product space is naturally endowed with a norm:
\begin{equation}
||v||:=(v|v)^{1/2} \text{ for any }v\in\mathcal{V}.
\end{equation}
It can be shown this is indeed a norm \cite{Loomis,PapaRudin,Roman}. Unless otherwise stated, it's always assumed that the norm on an inner product space is that induced by inner product. A functional that is linear in the second argument and conjugate-linear in the first one is called a \textbf{sesquilinear form} \cite{HallQ}.\footnote{Conjugate-linearity is also called semi-linearity, $1/2$. Sesqui- refers to one and a half \cite{Einsiedler}.} From properties 1$\sim$3 an inner product is therefore a sesequilinear form. 
\begin{thm}\label{thm:in}
For all $v,w$ in an inner product space, they obey the following relations \cite{PapaRudin,Loomis,Roman,Blackadar}: 
\begin{enumerate}
\item \textbf{The polarization identity}: $4(w|v)=\left(||v+w||^2-||v-w||^2\right)+i\left(||v+iw||^2-||v-iw||^2\right)$.
\item \textbf{The parallelgoram law}: $||v+w||^2+||v-w||^2=2||v||^2+2||w||^2$.
\item \textbf{Cauchy-Schwarz inequality}: $|(v|w)|\leq \sqrt{(v|v)(w|w)}=||v||\,||w||$.
\end{enumerate}
\end{thm}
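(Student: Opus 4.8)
All three identities follow from expanding inner products via the sesquilinear structure of Definition~\ref{def:inn} together with $||v||^2=(v|v)$. The only subtlety is the convention adopted here: the inner product is linear in its second slot and conjugate-linear in its first, so that from properties 1 and 3 one gets $(cv_1|v_2)=\bar{c}(v_1|v_2)$. I would keep this sign bookkeeping front and center throughout, since it is where a careless expansion would go wrong.

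The parallelogram law is the most direct. The plan is to expand $||v+w||^2=(v+w|v+w)=(v|v)+(v|w)+(w|v)+(w|w)$ and likewise $||v-w||^2=(v|v)-(v|w)-(w|v)+(w|w)$; adding the two lines cancels the cross terms $(v|w)+(w|v)$ and leaves $2(v|v)+2(w|w)$, which is the claim.

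For the polarization identity I would expand all four squared norms. The first difference gives $||v+w||^2-||v-w||^2=2(v|w)+2(w|v)$. For the imaginary part I must use conjugate-linearity carefully: $(v|iw)=i(v|w)$ while $(iw|v)=\bar{i}(w|v)=-i(w|v)$ and $(iw|iw)=(w|w)$, so that $||v+iw||^2-||v-iw||^2=2i(v|w)-2i(w|v)$; multiplying by $i$ turns this into $-2(v|w)+2(w|v)$. Summing the real and imaginary contributions, the $(v|w)$ terms cancel and the $(w|v)$ terms add to $4(w|v)$, exactly as stated. The potential pitfall here is mishandling the factor $i$ sitting inside the first argument.

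The Cauchy-Schwarz inequality is the one genuinely nontrivial step. The plan is the standard positivity argument. If $w=0$ both sides vanish, so assume $(w|w)>0$ and set $\lambda=(w|v)/(w|w)$. Using property 4, $0\leq(v-\lambda w|v-\lambda w)$, and expanding the right side while inserting this choice of $\lambda$ collapses three of the four terms into a single $|(v|w)|^2/(w|w)$, yielding $0\leq(v|v)-|(v|w)|^2/(w|w)$. Rearranging gives $|(v|w)|^2\leq(v|v)(w|w)$, hence $|(v|w)|\leq||v||\,||w||$, and the equality $\sqrt{(v|v)(w|w)}=||v||\,||w||$ is immediate from the definition of the norm. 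The main obstacle, and the only place real insight is needed, is choosing $\lambda$ so that the cross terms combine cleanly; everything else is routine algebra under the convention fixed at the outset.
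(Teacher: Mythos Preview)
Your proof is correct and follows the standard textbook approach. The paper itself does not prove this theorem; it merely states it with citations to Rudin, Loomis, Roman, and Blackadar, so there is no proof in the paper to compare against. Your handling of the conjugate-linearity convention in the polarization identity and your choice of $\lambda=(w|v)/(w|w)$ in the Cauchy--Schwarz argument are exactly the standard moves one finds in those references.
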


Two vectors $v$ and $w$ are said to be \textbf{orthogonal} if $(v|w)=0$, denoted by $v\perp w$\label{sym:ortho}; note the relation is symmetric: $v \perp w$ implies $w \perp v$ \cite{PapaRudin}. Given a subset $\mathbb{S}$ in an inner product space, $\mathbb{S}^\perp$ is the set
\begin{equation}
\mathbb{S}^\perp=\{v\perp w:w\in\mathbb{S}\}.
\end{equation}
It can be easily shown $\mathbb{S}^\perp$ is a subspace; furthermore, it's a closed subspace \cite{Loomis,PapaRudin}.

\begin{defi}[\textbf{Hilbert space}]
An inner product space is called a Hilbert space if it's complete with respect to the norm induced by inner product \cite{Loomis,PapaRudin,Einsiedler}, i.e. if it's also a Banach space \cite{Loomis,PapaRudin}. The symbol $\mathcal{H}$ denotes a Hilbert space.
\end{defi}

From now on an inner product space is always assumed to be a Hilbert space. As all finite-dimensional normed spaces are complete \cite{Loomis}, a finite-dimensional inner product space is always a Hilbert space. Finally let's give formal definitions for two inner product spaces that see frequent uses:
\begin{defi}[$\mathbb{R}^n$ and $\mathbb{C}^n$]\label{def:rncn}
$\mathbb{R}^n$ is a real inner product space of $n$-tuples of real numbers, $(a_1,\cdots,a_n)$, $a_i\in\mathbb{R}$. The basic vector operation is 
\begin{equation}
c_1(a_1,\cdots,a_n)+c_2(b_1,\cdots,b_n)=(c_1 a_1+c_2 a_2,\cdots,c_1 a_n+c_2 b_n),\;c_1,c_2\in\mathbb{R}.
\end{equation} 
Its inner product is:
\begin{equation}
(v_1|v_2):=\sum_{i} {a_1^i} a_2^i \text{ for } v_i=(a_1^i,\cdots,a_n^i).
\end{equation}	
Similarly, $\mathbb{C}^n$ is an inner product space of $n$-tuples of complex numbers, with 
\begin{equation}
c_1(a_1,\cdots,a_n)+c_2(b_1,\cdots,b_n)=(c_1 a_1+c_2 a_2,\cdots,c_1 a_n+c_2 b_n),\;c_1,c_2\in\mathbb{C}.
\end{equation} 
The inner product on $\mathbb{C}^n$ is defined as \cite{PapaRudin}:
\begin{equation}
(v_1|v_2):=\sum_{i} {a_1^i}^* a_2^i \text{ for } v_i=(a_1^i,\cdots,a_n^i)
\end{equation}
The inner product on $\mathbb{R}^n$ or $\mathbb{C}^n$ is often called a ``dot product'' $v_1\cdot v_2:=(v_1|v_2)$. A vector in $\mathbb{R}^n$ is often marked in boldface, e.g. $\mathbf{v}$.
\end{defi}

Clearly $\mathbb{R}^n$ and $\mathbb{C}^n$ are $n$-dimensional inner product spaces over $\mathbb{R}$ and $\mathbb{C}$ respectively. The inner products on $\mathbb{R}^n$ and $\mathbb{C}^n$ can be treated as matrix multiplications, to be discussed in Section~\ref{sec:ismat}. Later on in Section~\ref{sec:psv} we will consider the phase space, labeled as $\mathbb{R}^{2n}$, on which we don't assume an inner product like $\mathbb{R}^n$. 

\subsection{Fourier Expansion and Orthonormal Bases}
\label{sec:dual}
A set $\{\beta_i:i\in\mathbb{K}\}\subseteq \mathcal{H}$, where $\mathbb{K}$ is an index set equivalent to $\{\beta_i\}$ and isn't necessarily finite or even countable, is called an \textbf{orthonormal set} if $(\beta_i|\beta_j)=\delta_{ij}$ for all $i\neq j$. An orthonormal (more generally, orthogonal) set is linearly independent (Definition~\ref{def:li}) \cite{Loomis,Roman}. A maximal orthonormal set of a Hilbert space is called a \textbf{Hilbert basis/orthonormal basis/complete orthonormal set} \cite{Roman,Blackadar,PapaRudin}.\footnote{Bases (Definition~\ref{def:basis}) and orthonormal bases are distinct concepts: In a finite dimensional space, an orthonormal basis is also a basis, in that it spans the space, and any basis can be orthonormalized by Gram-Schmidt procedure---but this isn't necessarily true in infinite-dimensional spaces---An orthonormal basis may not be a basis \cite{Roman,Einsiedler}.}

With an orthonormal set $\{\beta_i\}$, the \textbf{Fourier expansion} of a vector $v\in\mathcal{H}$ with respect to it is\footnote{For the meaning of a summation over an uncountable set, please refer to Ref.~\cite{PapaRudin,Roman}.} \cite{Roman,Blackadar,PapaRudin}
\begin{equation}
\hat{v}=\sum_{i\in\mathbb{K}}(\beta_i|v)\beta_i.
\end{equation}
This series always converges, and only countable summands are nonzero \cite{Roman,PapaRudin}, and $(\beta_i|v)$ are known as \textbf{Fourier coefficients}. By \textbf{Bessel's inequality} \cite{Roman,Blackadar,PapaRudin},
\begin{equation}
||\hat{v}||\leq ||v||.
\end{equation}
Relative to an orthonormal set, the Fourier expansion is the \emph{unique best approximation}, that is to say
\begin{equation}
||v-\hat{v}||=\inf\left(||s-v||:s\in \text{cspan}(\{\beta_i\})\right),
\end{equation}
where cspan stands for the closure of the span; note $\hat{v}$ is in $\text{cspan}(\{\beta_i\})$ \cite{Roman,PapaRudin}.

The condition that Bessel's inequality becomes an equality is given by the following theorem \cite{Roman,PapaRudin}:
\begin{thm}[\textbf{Riesz-Fischer}]\label{thm:rf}
Given an orthonormal set $\mathcal{O}=\{\beta_i:i\in\mathbb{K}\}$ in a Hilbert space $\mathcal{H}$, the following are equivalent:
\begin{enumerate}
\item $\mathcal{O}$ is an orthonormal basis.
\item $\mathcal{O}^\perp =\{0\}$.
\item $\mathrm{cspan}(\mathcal{O})=\mathcal{H}$; in other words $\mathrm{span}(\mathcal{O})$ is dense in $\mathcal{H}.$
\item $v=\hat{v}$ for all $v\in\mathcal{H}.$ 
\item Equality holds in Bessel's inequality for all $v\in\mathcal{H}.$
\item \textbf{Parseval's equality} holds for all $v_1,v_2\in\mathcal{H}$, i.e.
\begin{equation}
(v_1|v_2)=(\hat{v}_1|\hat{v}_2)=\sum_i (v_1|\beta_i)(\beta_i|v_2).
\end{equation}
\end{enumerate}
\end{thm}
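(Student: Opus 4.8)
The plan is to prove the Riesz–Fischer theorem by establishing a cycle of implications connecting the six equivalent statements, which is the standard and cleanest route when asked to show that several conditions are equivalent. Rather than proving each pair separately (which would require fifteen implications), I would arrange the statements into a single closed chain, say $(1)\Rightarrow(2)\Rightarrow(3)\Rightarrow(4)\Rightarrow(5)\Rightarrow(6)\Rightarrow(1)$, so that each statement implies the next and the last closes the loop. Since $\mathbb{K}$ may be uncountable, I would lean on the fact (already recorded in the excerpt) that the Fourier expansion $\hat{v}=\sum_{i\in\mathbb{K}}(\beta_i|v)\beta_i$ always converges with only countably many nonzero Fourier coefficients, so all the sums appearing below are genuinely well-defined.

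First I would treat $(1)\Rightarrow(2)$: if $\mathcal{O}$ is \emph{maximal} as an orthonormal set, then no nonzero vector can be orthogonal to every $\beta_i$, for otherwise normalizing it and adjoining it would yield a strictly larger orthonormal set, contradicting maximality; hence $\mathcal{O}^\perp=\{0\}$. For $(2)\Rightarrow(3)$, I would use the decomposition $v=\hat{v}+(v-\hat{v})$ and check that $v-\hat{v}\in\mathcal{O}^\perp$: indeed $(\beta_j|v-\hat{v})=(\beta_j|v)-(\beta_j|v)=0$ for each $j$ by orthonormality, so $(2)$ forces $v=\hat{v}$, placing every $v$ in $\mathrm{cspan}(\mathcal{O})$ and giving density. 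This computation also essentially hands me $(3)\Rightarrow(4)$: the excerpt already states $\hat{v}$ is the unique best approximation to $v$ from $\mathrm{cspan}(\{\beta_i\})$, so if that closed span is all of $\mathcal{H}$, the best approximation to $v$ is $v$ itself, whence $\|v-\hat{v}\|=0$ and $v=\hat{v}$.

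The passage $(4)\Rightarrow(5)$ is where the analytic content sits, and I expect it to be the main technical step, though it is short. Bessel's inequality in the form $\|\hat{v}\|\le\|v\|$ becomes an equality exactly when $\|v-\hat{v}\|=0$, which follows from the Pythagorean relation $\|v\|^2=\|\hat{v}\|^2+\|v-\hat{v}\|^2$ (valid because $v-\hat{v}\perp\hat{v}$, as $v-\hat{v}\in\mathcal{O}^\perp$ and $\hat{v}\in\mathrm{cspan}(\mathcal{O})$). Given $(4)$, $v=\hat{v}$ makes $\|v-\hat{v}\|=0$ and hence equality in Bessel. For $(5)\Rightarrow(6)$, I would upgrade the norm identity $\|v\|^2=\sum_i|(\beta_i|v)|^2$ to the full Parseval relation by polarization: applying the norm equality to $v_1+v_2$, $v_1-v_2$, $v_1+iv_2$, $v_1-iv_2$ and combining via the polarization identity of Theorem~\ref{thm:in} recovers $(v_1|v_2)=\sum_i(v_1|\beta_i)(\beta_i|v_2)$.

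Finally, to close the cycle with $(6)\Rightarrow(1)$, I would argue by contraposition: if $\mathcal{O}$ were not maximal, there would exist a unit vector $w$ orthogonal to every $\beta_i$; then Parseval applied to $v_1=v_2=w$ gives $\|w\|^2=\sum_i|(\beta_i|w)|^2=0$, contradicting $\|w\|=1$. Hence $(6)$ forces maximality, completing the loop. The only genuine obstacles are bookkeeping ones — ensuring the uncountable sums are interpreted correctly (deferred to the convergence facts already cited) and verifying the orthogonality $v-\hat{v}\perp\mathrm{cspan}(\mathcal{O})$ that underlies both the best-approximation and Pythagorean arguments — rather than any deep new estimate, since Bessel's inequality and the best-approximation property are supplied upstream.
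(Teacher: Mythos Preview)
Your cycle $(1)\Rightarrow(2)\Rightarrow(3)\Rightarrow(4)\Rightarrow(5)\Rightarrow(6)\Rightarrow(1)$ is correct and is the standard route; each implication is sound as you have sketched it, including the orthogonality $v-\hat{v}\perp\mathcal{O}$ that drives both the Pythagorean step and the best-approximation argument. Note, however, that the paper itself does not supply a proof of this theorem at all: it is stated with citations to Roman and Rudin and then used as a black box, so there is no ``paper's own proof'' to compare against. Your argument is essentially what one finds in those references, so there is nothing to add.
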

\noindent Hence, given an orthonormal basis, the Fourier expansion of any vector converges to the vector itself. Note whether a sequence converges depends on the topology, or norm in this case. Riesz-Fischer theorem states only that the vector and inner product in point 4 and 6 above converge with respect to inner-product induced norm, but whether they converge in terms of other topologies is another matter.

\subsection{Adjoint, Self-adjoint Operators and Unitary Mappings}
\subsubsection{Adjoint and self-adjoint operators}\label{sec:ad}
\begin{thm}[\textbf{Adjoint}]
For any bounded linear mapping $T\in\mathcal{B}(\mathcal{H}_1,\mathcal{H}_2)$, there always exists a unique bounded linear mapping $T^\dagger\in\mathcal{B}(\mathcal{H}_2,\mathcal{H}_1)$, called the \textbf{adjoint} of $T$,\footnote{Or Hermitian conjugate as is commonly known by physicists. Instead of $\dagger$, mathematicians use star (like $T^*$) for adjoint and a horizontal bar over a number (like $\bar{z}$) for complex conjugation \cite{HallQ}.} such that \cite{Roman,Blackadar,GrandpaRudin}
\begin{equation}
(v|T(w))=(L^\dagger(v)|w)\;\forall w\in\mathcal{H}_1\text{ and }\forall v\in\mathcal{H}_2.
\end{equation}
In addition, $||T||=||T^\dagger||.$
\end{thm}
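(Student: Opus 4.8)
The plan is to build $T^\dagger$ one vector at a time by representing a suitable bounded linear functional, and then to verify that the resulting assignment is linear, unique, and norm-preserving. First I would fix an arbitrary $v\in\mathcal{H}_2$ and consider the functional $\phi_v:\mathcal{H}_1\to\mathbb{C}$ defined by $\phi_v(w):=(v|T(w))$. Since the inner product is linear in its second argument and $T$ is linear, $\phi_v$ is a linear functional; and since $|\phi_v(w)|=|(v|T(w))|\le \|v\|\,\|T(w)\|\le \|v\|\,\|T\|\,\|w\|$ by the Cauchy--Schwarz inequality (Theorem~\ref{thm:in}) together with boundedness of $T$, the functional $\phi_v$ is bounded.

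The central step is to produce, for each such $\phi_v$, a unique vector $u_v\in\mathcal{H}_1$ with $\phi_v(w)=(u_v|w)$ for all $w$; I would then set $T^\dagger(v):=u_v$, which is precisely the defining relation $(v|T(w))=(T^\dagger(v)|w)$. This representation step is where the real work lies. If a Riesz representation theorem is on hand I would invoke it directly; otherwise I would construct $u_v$ from the Fourier-expansion machinery of Section~\ref{sec:dual}: fixing an orthonormal basis $\{\beta_i\}$ of $\mathcal{H}_1$, the vector $u_v=\sum_i \overline{\phi_v(\beta_i)}\,\beta_i=\sum_i (T(\beta_i)|v)\,\beta_i$ reproduces $\phi_v$ on each $\beta_i$, and hence, since both $w\mapsto(u_v|w)$ and $\phi_v$ are continuous and agree on the dense set $\mathrm{span}(\{\beta_i\})$, they agree on all of $\mathcal{H}_1$ by Riesz--Fischer (Theorem~\ref{thm:rf}). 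The convergence of this series --- equivalently the estimate $\sum_i|\phi_v(\beta_i)|^2<\infty$ --- is guaranteed by boundedness of $\phi_v$, and this convergence is the genuine obstacle of the argument; everything after it is formal.

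Granting existence, uniqueness of $u_v$ is immediate: if $(u_1|w)=(u_2|w)$ for every $w$, then taking $w=u_1-u_2$ forces $\|u_1-u_2\|^2=0$, whence $u_1=u_2$. Linearity of $v\mapsto T^\dagger(v)$ then follows because $\phi_{av_1+bv_2}=\overline{a}\,\phi_{v_1}+\overline{b}\,\phi_{v_2}$ fails only apparently --- tracking the conjugate-linearity in the first slot of the inner product against the conjugation in the construction shows the representing vectors combine linearly --- so that $T^\dagger(av_1+bv_2)=aT^\dagger(v_1)+bT^\dagger(v_2)$, and the uniqueness just established pins this down.

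Finally, for boundedness and the norm identity I would insert $w=T^\dagger(v)$ into the defining relation: $\|T^\dagger(v)\|^2=(T^\dagger(v)|T^\dagger(v))=(v|T(T^\dagger(v)))$, and Cauchy--Schwarz gives $\|T^\dagger(v)\|^2\le \|v\|\,\|T\|\,\|T^\dagger(v)\|$, hence $\|T^\dagger(v)\|\le\|T\|\,\|v\|$ and $\|T^\dagger\|\le\|T\|$; in particular $T^\dagger$ is bounded. To close the loop I would conjugate the defining relation to obtain $(w|T^\dagger(v))=(T(w)|v)$, which says exactly that $(T^\dagger)^\dagger=T$; applying the inequality just proved with $T$ replaced by $T^\dagger$ then yields $\|T\|=\|(T^\dagger)^\dagger\|\le\|T^\dagger\|$, and the two inequalities combine to $\|T\|=\|T^\dagger\|$.
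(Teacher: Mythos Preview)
Your argument is correct and is the standard Riesz-representation construction found in the references the paper cites. The paper itself does not prove this theorem --- it is stated with citations to Roman, Blackadar, and Rudin as a well-established fact --- so there is no in-paper proof to compare against. One remark: the paper has already stated Riesz--Fr\'echet as Theorem~\ref{thm:Riesz} in the preceding subsection, so you can invoke it directly for the representation step; your backup construction $u_v=\sum_i (T(\beta_i)|v)\,\beta_i$ is fine but amounts to re-deriving that theorem.
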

For $O\in\mathcal{B}(\mathcal{H})$, if $O^\dagger=O$, it is called \textbf{self-adjoint}, or \textbf{Hermitian} when the field is $\mathbb{C}$ and \textbf{symmetric} when the field is $\mathbb{R}$ \cite{Roman}. As we consider the complex field in this thesis most of the time, ``self-adjoint'' and ``Hermitian'' are interchangeable as long as only bounded operators are considered, but the term ``Hermitian'' is usually used when dealing with finite-dimensional spaces. An operator $T\in\mathcal{B}(\mathcal{H})$ is said to be \textbf{normal} if $O^\dagger O=OO^\dagger$. Obviously, a self-adjoint operator is normal.

A mapping $f$ from a complex vector space $\mathcal{V}$ to another is said to be \textbf{conjugate linear} if \cite{Roman,GrandpaRudin}
\begin{equation}
f(v_1+v_2)=f(v_1)+f(v_2),\, f(av)=a^*f(v)\text{ for any } v,v_i\in\mathcal{V}\text{ and any scalar } a.
\end{equation}
The mapping $\dagger:T\mapsto T^\dagger$ can be shown to be conjugate linear and satisfy
\begin{equation}
(T_1 T_2)^\dagger=T_2^\dagger T_1^\dagger.
\end{equation}

The adjoint of an unbounded operator on a Hilbert space can also be defined, similarly for self-adjointness. However, as an unbounded operator isn't well-defined on the whole space, the domain of an unbounded operator needs to be specified (or chosen), and more technicality must be involved. Readers interested in such topics may refer to Ref.~\cite{GrandpaRudin,HallQ,Einsiedler}.

\subsubsection{Isometry and unitary mappings}
\label{sec:isometry}
A mapping $f$ from a metric space $\mathbb{S}_1$ to another metric space $\mathbb{S}_2$ is called an \textbf{isometry} if it preserves the metric, i.e. $d_{\mathbb{S}_1}(x,y)=d_{\mathbb{S}_2}(f(x),f(y))$ \cite{Tu,TuD}. If $\mathcal{V}$ and $\mathcal{W}$ are normed vector spaces, an isometry $f:\mathcal{V}\rightarrow \mathcal{W}$ satisfies $||v_2-v_1||_\mathcal{V}=||f(v_2)-f(v_1)||_\mathcal{W}$ \cite{Einsiedler}. We'll only deal with isometries on normed vector spaces, and from now on all isometries are assumed to be \emph{linear}. Since an isometry preserves length, it's one-to-one, for a vector has norm 0 if and only if it's the zero vector.

On an inner product space, by the polarization identity (Theorem~\ref{thm:in}) an isometry $T:\mathcal{H}_1\rightarrow \mathcal{H}_2$ should preserve inner product between any two vectors, i.e. $(Ty|Tx)_2=(y|x)_1$ for all $x,y\in\mathcal{H}_1$; conversely, any linear mapping which preserves the inner product is obviously an isometry \cite{Roman}. Since $(Ty|Tx)_2=(y|x)_1$ for all $x,y\in\mathcal{H}_1$, it implies $T^\dagger T=I_{\mathcal{H}_1}$. We can also use this to define a (linear) isometry between Hilbert spaces: We define an isometry as a linear mapping $T:\mathcal{H}_1\rightarrow \mathcal{H}_2$ that satisfies  $T^\dagger T=I_{\mathcal{H}_1}$.

An isometry $U:\mathcal{H}_1\rightarrow \mathcal{H}_2$ is said to be \textbf{unitary} if it's surjective. As an isometry is injective already, a (bounded) linear mapping $U$ on a Hilbert space is unitary if and only if it is a bijective isometry, and if and only if $U^\dagger U=I_{\mathcal{H}_1}$ and $UU^\dagger=I_{\mathcal{H}_2}$ \cite{Einsiedler,Blackadar,HallQ}. By definition, a unitary operator is normal. An isometry from an finite-dimensional inner product space to another with the same dimension is automatically unitary. 

When we say two Hilbert spaces are isomorphic, it always means there exists a unitary mapping between them: With the extra structure of inner product, we'd like the linear mapping to not only be bijective but also also preserve the inner product.

\subsection{Dual Space}
\subsubsection{Linear functionals and the dual Space}\label{sec:lfun}
With an inner product, for each vector $w$ a linear functional can be defined \cite{Loomis,GrandpaRudin}: 
\begin{equation}
l_w(v):=(w|v),\, v\in\mathcal{H}.\label{eq:lalpha} 
\end{equation}
Its linearity derives from the linearity of inner product. Note since the linear mapping $v\mapsto (w|v)$ is (uniformly) continuous \cite{PapaRudin}, $l_w(v)$ is a \emph{bounded} linear functional by Theorem~\ref{thm:boco}. The converse is true as well \cite{Roman,Blackadar,PapaRudin}: 
\begin{thm}[\textbf{Riesz-Fr\'{e}chet}]\label{thm:Riesz}
For any bounded linear functional $l$ on a Hilbert space $\mathcal{H}$, there's a unique $w$ such that $l(v)=(w|v)$ for all $v\in\mathcal{H},$ and $||w||=||l||.$
\end{thm}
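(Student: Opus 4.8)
The plan is to prove existence, uniqueness, and the norm equality in three separate stages. For \emph{existence}, I would split into the trivial and nontrivial cases. If $l$ is the zero functional, then $w=0$ works. Otherwise, the key object is the null space $\ker l$, which is a closed subspace of $\mathcal{H}$ because $l$ is bounded (hence continuous, by Theorem~\ref{thm:boco}) and $\ker l = l^{-1}(\{0\})$ is the preimage of a closed set. Since $l\neq 0$, we have $\ker l \neq \mathcal{H}$, so the orthogonal complement $(\ker l)^\perp$ contains some nonzero vector; normalize it to obtain a unit vector $u\in(\ker l)^\perp$ with $(u|u)=1$. I would then guess the representing vector has the form $w = \overline{l(u)}\,u$ (equivalently $l(u)^*\,u$) and verify $l(v)=(w|v)$ for all $v$.

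To carry out the verification, the standard trick is to observe that for any $v\in\mathcal{H}$, the vector $l(v)u - l(u)v$ lies in $\ker l$, since applying $l$ to it gives $l(v)l(u)-l(u)l(v)=0$. Taking the inner product of this vector with $u$ and using $u\perp\ker l$ yields $0=(u\,|\,l(v)u - l(u)v) = l(v)(u|u) - l(u)(u|v)$, so $l(v) = l(u)(u|v) = (\,\overline{l(u)}\,u\,|\,v\,)$, which is exactly $(w|v)$ with $w=\overline{l(u)}\,u$. Here I am relying on the convention of Definition~\ref{def:inn} that the inner product is linear in the \emph{second} argument and conjugate-linear in the first, so the scalar $\overline{l(u)}$ pulls out of the first slot as $l(u)$.

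For \emph{uniqueness}, suppose $(w_1|v)=(w_2|v)$ for all $v$. Then $(w_1-w_2|v)=0$ for all $v$; choosing $v=w_1-w_2$ gives $\|w_1-w_2\|^2=0$, so $w_1=w_2$ by the positive-definiteness of the inner product (property~5 of Definition~\ref{def:inn}). For the \emph{norm equality} $\|w\|=\|l\|$, one direction follows from Cauchy–Schwarz (Theorem~\ref{thm:in}): $|l(v)|=|(w|v)|\leq\|w\|\,\|v\|$ gives $\|l\|\leq\|w\|$. The reverse direction comes from testing $l$ on the unit vector $w/\|w\|$ (assuming $w\neq 0$, the case $w=0$ being trivial): $\|l\|\geq |l(w/\|w\|)| = |(w|w)|/\|w\| = \|w\|$. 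Combining the two inequalities gives $\|l\|=\|w\|$.

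I expect the main obstacle to be the existence step, specifically justifying that $(\ker l)^\perp$ is nontrivial. This is where completeness of $\mathcal{H}$ is essential and cannot be bypassed: the decomposition $\mathcal{H}=\ker l\oplus(\ker l)^\perp$ relies on the projection theorem for \emph{closed} subspaces of a Hilbert space, which in turn uses the existence of a nearest point (a minimizer of distance) guaranteed by completeness together with the parallelogram law (Theorem~\ref{thm:in}). On an incomplete inner product space the theorem can fail, so the argument must invoke closedness of $\ker l$ together with the Hilbert-space structure rather than treating the orthogonal-complement decomposition as automatic.
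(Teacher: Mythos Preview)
Your proof is correct and is precisely the standard textbook argument (as in Rudin's \emph{Real and Complex Analysis}, one of the cited references). Note, however, that the paper does not actually supply its own proof of this theorem: it is stated as a classical result with citations to \cite{Roman,Blackadar,PapaRudin} and used as a black box thereafter, so there is no in-paper argument to compare against. Your identification of the role of completeness---that the orthogonal decomposition $\mathcal{H}=\ker l\oplus(\ker l)^\perp$ requires the projection theorem for closed subspaces, which the paper records just before Section~\ref{sec:pro}---is exactly the point where the Hilbert-space hypothesis enters, and your handling of the inner-product conventions from Definition~\ref{def:inn} is consistent with the paper's usage.
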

\noindent Note the norm for $w$ is that induced by inner product, and the norm for $l$ is the operator norm. We will label the space formed by bounded linear functionals on $\mathcal{H}$ as $\mathcal{H}^*$.

Because $l_{Tw}(v)=(Tw|v)=(w|T^\dagger v),$ we have
\begin{equation}
l_{Tw}=l_w\circ T^\dagger.
\end{equation} 
In addition, the mapping from $v\mapsto l_v$ is conjugate linear, rather than linear, i.e.
\begin{equation}
l_{a_1 v_1+a_2 v_2}=a_1^* l_{v_1}+a_2^* l_{v_2}.
\end{equation}
By Theorem~\ref{thm:Riesz}, $v\mapsto l_v$ is bijective, so it's a \textbf{conjugate isomorphism} from $H$ to $\mathcal{H}^*$ \cite{Roman}. Due to conjugate linearity, it conforms to matrix multiplication which will be discussed later in Section~\ref{sec:ismat}.

If $\{\alpha_i\}$ is an orthonormal basis of $\mathcal{H}$ (Section~\ref{sec:dual}), then we can define such linear functionals in $\mathcal{H}^*$:
\begin{equation}
l_i(v):=(\alpha_i|v)\;\forall v\in\mathcal{H}.
\end{equation}
By Theorem~\ref{thm:rf} and \ref{thm:Riesz}, $\mathcal{H}^*=\mathrm{cspan}(\{l_i\})$.

\subsubsection{Define a linear mapping through linear functionals}

With a linear functional $l\in\mathcal{V}^*$ and a vector $w\in\mathcal{V}$, we can define such an operator on $\mathcal{V}$:
\begin{equation}
\mathcal{V}\ni v\mapsto l(v)w,
\end{equation}
Let's assume $\mathcal{V}$ and $\mathcal{W}$ are finite-dimensional spaces. Given a basis $\{\alpha_i\}$ of $\mathcal{V}$, as in Section~\ref{sec:lf} we can choose a basis $\{l_i\}$ of $\mathcal{V}^*$ by
\begin{equation}
l_i(\alpha_j)=\delta_{ij}.
\end{equation}
Let $T:\mathcal{V}\rightarrow \mathcal{W}$ be a linear mapping. If 
\begin{equation}
T(\alpha_i)=\beta_i,
\end{equation}
we can see
\begin{equation}
T(v)=\sum_i l_i(v) \beta_i\text{ for all }v\in\mathcal{V}.
\end{equation}
This construction also clearly demonstrates that if $\text{dim}(\mathcal{V})<\text{dim}(\mathcal{W})$, a linear mapping can't be surjective, and if $\text{dim}(\mathcal{V})>\text{dim}(\mathcal{W})$, a linear mapping can't be injective; see Section~\ref{sec:hom}.

Likewise, for any two vectors $w_1,w_2$ in a Hilbert space $\mathcal{H}$, we can define such an operator on $\mathcal{H}$:
\begin{equation}
h_{w_1,w_2}(v):=(w_2|v)w_1=w_1 l_{w_2}(v),\label{eq:haa}
\end{equation}
where $l_{w_2}$ is as defined in \eqref{eq:lalpha}, is an operator on $\mathcal{H}$. The linear combination of such operators is of course still an operator. Given an orthonormal basis $\{\alpha_i\}$, a vector $v\in\mathcal{H}$ is equal to
\begin{equation}
v=\sum_i (\alpha_i|v)\alpha_i,\label{eq:ab}
\end{equation}
which is the Fourier expansion (Section~\ref{sec:dual}). Given any bounded operator $O$ on $\mathcal{H}$, for all $v\in\mathcal{H}$
\begin{align}
Ov&=\sum_i (\alpha_i|Ov)\alpha_i\nonumber\\
&=\sum_{i,j}(\alpha_i|O\alpha_j)(\alpha_j|v)\alpha_i,\label{eq:oalpha1}
\end{align}
by linearity of inner product with respect to the second argument; namely, we obtain
\begin{equation}
O=\sum_{i,j} (\alpha_i|O\alpha_j) h_{i,j},\label{eq:Thij}
\end{equation}
where
\begin{equation}
h_{i,j}:=h_{\alpha_i,\alpha_j}.\label{eq:hii}
\end{equation}
Therefore, a bounded operator on a Hilbert space can be expanded in terms of $h_{i,j}$.

The approach above is heuristic in nature: We didn't deal with the convergence of the series, and \eqref{eq:oalpha1} and \eqref{eq:Thij} should be used with care---though they will pose no problems when the dimension is finite. The same problem will arise again in Section~\ref{sec:bas}. We will have a brief discussion about the convergence of such a series in Section~\ref{sec:id}.

\subsection{Orthogonal Direct Sum, Projections and the Identity Operator}\label{sec:dip}
Readers interested in content of this part may consult Ref.~\cite{Roman}.
\subsubsection{Direct sum}

Let $\mathcal{V}_1,\cdots ,\mathcal{V}_n$ be subspaces of a vector space $\mathcal{V}$. If for $\mathcal{V}_i \cap \mathcal{V}_j=\{0\}$ for any $i\neq j$ and for any $v\in\mathcal{V}$ there are $v_i\in\mathcal{V}_i$ such that $v=v_1+\cdots v_n$, then we say $\mathcal{V}$ is the \textbf{direct sum} of $\mathcal{V}_1,\cdots ,\mathcal{V}_n$, denoted by $\mathcal{V}=\mathcal{V}_1\oplus \cdots \oplus\mathcal{V}_n$. 

For subspaces $\mathcal{H}_1,\cdots ,\mathcal{H}_n$ of an inner product space $\mathcal{H}$, if $\mathcal{H}_i\perp \mathcal{H}_j$ for any $i\neq j$ and for any $v\in\mathcal{H}$ there are $v_i\in\mathcal{H}_i$ such that $v=v_1+\cdots v_n$, then $\mathcal{H}$ is said to be an \textbf{orthogonal direct sum} of $\mathcal{H}_1,\cdots ,\mathcal{H}_n$, denoted also by $\mathcal{H}_1\oplus\cdots \oplus\mathcal{H}_n$, because in this thesis the direct sum of inner product spaces is usually assumed to be orthogonal. 

Clearly an orthogonal direct sum is also a direct sum, but with a stronger requirement of orthogonality. For example, the space $\mathbb{R}^2$ is an orthogonal direct sum of $\text{span}((1,0))$ and $\text{span}((0,1))$, and it's also a direct sum of $\text{span}((1,0))$ and $\text{span}((1,1))$, but  $\text{span}((1,0))$ and $\text{span}((1,1))$ aren't orthogonal.

\subsubsection{Invariant subspaces and the direct sum of operators}
Let $\mathcal{V}=\mathcal{V}_1\oplus \mathcal{V}_2$ and $O\in\mathcal{L}(\mathcal{V})$. If $O(\mathcal{V}_1)\subseteq \mathcal{V}_1$, then $\mathcal{V}_1$ is said to be \textbf{invariant} under $O$; in other words, $O\upharpoonright_{\mathcal{V}_1}$ (Section~\ref{sec:map}) is an operator on $\mathcal{V}_1$. If both $\mathcal{V}_1$ and $\mathcal{V}_2$ are invariant subspaces of $O$, then we may express this as
\begin{equation}
O= O\upharpoonright_{\mathcal{V}_1}\oplus O\upharpoonright_{\mathcal{V}_2}.
\end{equation}
Hence, whenever we write $T=O_1\oplus O_2 \in\mathcal{L}(\mathcal{V})$, it implies there exist two invariant subspaces $\mathcal{V}_1$ and $\mathcal{V}_2$ of $O$, such that $\mathcal{V}_1\oplus \mathcal{V}_2=\mathcal{V}$ and $O\upharpoonright_{\mathcal{V}_i}=O_i$, and $O$ is called the \textbf{direct sum} of $O_1$ and $O_2$ \cite{Roman}.

From another point of view, suppose again $\mathcal{V}=\mathcal{V}_1\oplus \mathcal{V}_2$ and $O_1,O_2\in\mathcal{L}(\mathcal{V})$. If $\mathrm{ran}(O_i)\subseteq \mathcal{V}_i$ and if $\mathcal{V}_2\subseteq\ker(O_1)$ and $\mathcal{V}_1\subseteq\ker(O_2)$, i.e. $O_1(\mathcal{V}_2)=O_2(\mathcal{V}_1)=\{0\}$, then $O_i\upharpoonright_{\mathcal{V}_i}$ is an operator on $\mathcal{V}_i$, and $O_1+O_2=O_1\upharpoonright_{\mathcal{V}_1}\oplus O_2\upharpoonright_{\mathcal{V}_2}$---The operator $O_i\in\mathcal{L}(\mathcal{V})$ can be identified with $O_i\upharpoonright_{\mathcal{V}_i}\in\mathcal{L}(\mathcal{V}_i)$.

\subsubsection{Projection}\label{sec:pro}
For a vector space $\mathcal{V}$, if $\mathcal{V}=\mathcal{V}_1\oplus \mathcal{V}_2$, the linear operator $\Pi$ defined as
\begin{equation}
\Pi_{\mathcal{V}_1,\mathcal{V}_2}(v_1+v_2)=v_1 \;\forall v_1\in\mathcal{V}_1 \text{ and }\forall v_2\in\mathcal{V}_2
\end{equation}
is called a \textbf{projection} onto $\mathcal{V}_1$ along $\mathcal{V}_2$. By definition, $\ker \Pi_{\mathcal{V}_1,\mathcal{V}_2}=\mathcal{V}_2$, and $\mathrm{ran}\Pi_{\mathcal{V}_1,\mathcal{V}_2}=\mathcal{V}_1$. An operator $\Pi$ is a projection if and only if it's \emph{idempotent}, i.e. $\Pi^2=\Pi$. Two projections $\Pi_1$ and $\Pi_2$ are said to be \textbf{orthogonal} if $\Pi_1 \Pi_2 =\Pi_2 \Pi_1=0$ \cite{Roman}. 

On a Hilbert space $\mathcal{H}$, we have the following property \cite{Roman,Einsiedler}:
\begin{thm}
Suppose $\mathcal{H}$ is a Hilbert space and that $\mathcal{S}$ is a closed, and therefore complete subspace of $\mathcal{H}$. Then $\mathcal{S}^\perp$ is also a closed subspace, and $\mathcal{H}=\mathcal{S}\oplus \mathcal{S}^\perp$. 
\end{thm}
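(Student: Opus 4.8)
The plan is to take the two assertions in turn. That $\mathcal{S}^\perp$ is a closed subspace has effectively already been noted in Section~\ref{sec:inn}; I would simply recall the reason: $\mathcal{S}^\perp=\bigcap_{w\in\mathcal{S}}\ker l_w$, where $l_w(v)=(w|v)$ is the bounded (hence continuous) linear functional of \eqref{eq:lalpha}. Each $\ker l_w$ is a closed subspace, being the preimage of the closed set $\{0\}$ under a continuous map, and an arbitrary intersection of closed subspaces is again a closed subspace. The genuine content is therefore the decomposition $\mathcal{H}=\mathcal{S}\oplus\mathcal{S}^\perp$.

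The easy half of the direct sum is that $\mathcal{S}\cap\mathcal{S}^\perp=\{0\}$: any $v$ lying in both satisfies $(v|v)=0$, so $v=0$ by property~5 of Definition~\ref{def:inn}. It remains to show every $v\in\mathcal{H}$ splits as $v=s+t$ with $s\in\mathcal{S}$ and $t\in\mathcal{S}^\perp$; once existence is established, uniqueness is immediate, for two decompositions $s_1+t_1=s_2+t_2$ would force $s_1-s_2=t_2-t_1\in\mathcal{S}\cap\mathcal{S}^\perp=\{0\}$.

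The heart of the proof is the existence of a closest point in $\mathcal{S}$ to a given $v$. I would set $d:=\inf_{u\in\mathcal{S}}||v-u||$ and pick a minimizing sequence $\{s_n\}\subset\mathcal{S}$ with $||v-s_n||\to d$. Applying the parallelogram law (Theorem~\ref{thm:in}) to the vectors $v-s_m$ and $v-s_n$ expresses $||s_m-s_n||^2$ in terms of $||v-s_m||^2$, $||v-s_n||^2$ and $||v-\tfrac{1}{2}(s_m+s_n)||^2$; since $\tfrac{1}{2}(s_m+s_n)\in\mathcal{S}$ the last term is bounded below by $d^2$, and this forces $\{s_n\}$ to be Cauchy. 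Here is where completeness enters: because $\mathcal{S}$ is closed in the complete space $\mathcal{H}$, it is itself complete, so $s_n$ converges to some $s\in\mathcal{S}$, and continuity of the norm gives $||v-s||=d$. Finally I would verify that $t:=v-s$ lies in $\mathcal{S}^\perp$ by the standard variational argument: for any $u\in\mathcal{S}$ and scalar $\lambda$, minimality of $||v-s||$ gives $||t-\lambda u||^2\geq||t||^2$; expanding and choosing $\lambda$ appropriately (e.g.\ $\lambda=(u|t)/||u||^2$) forces $(u|t)=0$. Thus $v=s+t$ is the desired decomposition.

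The main obstacle, and the only step requiring real care, is the Cauchy estimate for the minimizing sequence---this is exactly the place where the parallelogram law and the completeness hypothesis are indispensable, and it is why the theorem fails for merely pre-Hilbert (incomplete) inner product spaces or for non-closed subspaces. Everything else (orthogonality of the remainder, trivial intersection, uniqueness) is routine algebra once the closest point is in hand.
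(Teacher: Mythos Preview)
Your proof is correct and is the standard argument one finds in the references the paper cites (Roman, Einsiedler, Rudin). The paper itself does not supply a proof of this theorem; it is stated as a known result with citations, so there is nothing further to compare.
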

With this theorem in mind, let $\mathcal{S}$ be a closed subspace of a Hilbert space $\mathcal{H}$. A projection $\Pi_\mathcal{S}:= \Pi_{\mathcal{S},\mathcal{S}^\perp}$ is called an \textbf{orthogonal projection} onto $\mathcal{S}$. Furthermore, an operator $\Pi\in\mathcal{B}(\mathcal{H})$  is indempotent and self-adjoint if and only if it's an orthogonal projection onto a closed subspace \cite{Roman,HallQ}. Two orthogonal projections are orthogonal if and only if their images are orthogonal.\footnote{As confusing as they are, the context should be clear enough to distinguish these different concepts.}

\subsubsection{Identity operator and orthogonal projections}\label{sec:id}
For any vector space $\mathcal{V}$, an identity mapping $I:\mathcal{V}\rightarrow\mathcal{V}$ maps a vector to itself, i.e.
\begin{equation}
I(v)=v \text{ for any } v\in\mathcal{V}.\label{eq:identity}
\end{equation}
\eqref{eq:identity} alone ensures $I$ is linear, because $I(av)=av=aI(v)$ and $I(v_1+v_2)=v_1+v_2=I(v_1)+I(v_2)$. Because of its linearity it will be called the identity operator.

By Riesz-Fischer theorem (Theorem~\ref{thm:rf}), \eqref{eq:haa} and \eqref{eq:hii}, given an orthonormal basis $\{\alpha_i\}$ of a Hilbert space:
\begin{equation}
v=\hat{v}=\sum_i \beta_i (\beta_i|v)=\sum_i h_{i,i}(v)=I(v), \;\forall v\in\mathcal{H},\label{eq:vIv}
\end{equation}
so when the orthonormal basis is countable,
\begin{equation}
\sum_i h_{i,i}=I,\label{eq:I}
\end{equation}
i.e. $\sum_{i=1}^N h_{i,i}$ converges to $I$ as $N\rightarrow \infty$. It's worth mentioning that \eqref{eq:oalpha1} could be attained by applying \eqref{eq:I} twice. 

Here's a finer point about the convergence in \eqref{eq:I}: For a sequence of operators $\{T_n\}$, if $O_n v\rightarrow Ov$ for all $v$, it's called strong convergence \cite{Einsiedler}. It's easy to see \eqref{eq:I} doesn't converge in (operator) norm topology: Given any positive integers $m$, $n$ and $n<m$, $\sum_{i=1}^m h_{i,i}-\sum_{i=1}^n h_{i,i}$ is a projection to the space spanned by $\{\alpha_i: n<i\leq m\}$, and its operator norm is also $1$, so \eqref{eq:I} doesn't even form a Cauchy sequence (Section~\ref{sec:conban}) \cite{GrandpaRudin}. However, \eqref{eq:I} converges in the strong operator topology; see Ref.~\cite{Einsiedler}.

\eqref{eq:I} can be written in terms of projections: Again assume $\{\alpha_i\}$ is a countable orthonormal basis of a Hilbert space $\mathcal{H}$, and let $\Pi_{i}$ be the orthogonal projection onto $\text{span}(\alpha_i)$, which is a one-dimensional subspace. We can easily see $\Pi_{i}=h_{i,i},$ so
\begin{equation}
I=\sum_i \Pi_{i}.
\end{equation} 

More generally, if $\mathcal{S}$ is a closed subspace of a Hilbert space $\mathcal{H}$, and $\mathcal{S}$ has an orthonormal basis $\{\alpha_i\}$, then the orthogonal projection onto $\mathcal{S}$ is
\begin{equation}
\Pi_\mathcal{S}=\sum_i \Pi_{i},
\end{equation}
where $\Pi_{i}:=\Pi_{\text{span}(\alpha_i)}$, as above. The identity operator is essentially an (orthogonal) projection onto the whole space $\mathcal{H}$.

\subsection{Isomorphism with $\mathbb{C}^n$ and Matrices}\label{sec:ismat}
In this subsection only finite-dimensional spaces are considered. 
\subsubsection{$\mathbb{C}^n$ and $(\mathbb{C}^n)^*$}
\label{sec:iso}
For an $n$-dimensional Hilbert space $\mathcal{H}$, given an orthonormal basis $\{\alpha_i\}$, we can define a bijective linear mapping $g:\mathcal{H}\rightarrow\mathbb{C}^n$ (Definition~\ref{def:rncn}) by
\begin{equation}
g:\sum_{i}a_i\alpha_i\mapsto (a_1,a_2,\cdots,a_n).\label{eq:gai}
\end{equation}
Through this mapping $\mathcal{H}$ and $\mathbb{C}^n$ are isomorphic Hilbert spaces (Section~\ref{sec:isometry}):
\begin{equation}
(\sum_i b_i \alpha_i|\sum_i a_i \alpha_i)=(b_1,b_2\cdots,b_n)\cdot(a_1,a_2\cdots,a_n).
\end{equation}

Now let's talk about $(\mathbb{C}^n)^*$:
\begin{defi}
Let $(\mathbb{C}^n)^*$ be the space of $n$-tuples\footnote{It's more straightforward to define them as $n$-tuples for our purpose.} of complex numbers with a vector operation
\begin{equation}
c_1(a_1,\cdots,a_n)+c_2(b_1,\cdots,b_n)=(c_1 a_1+c_2 a_2,\cdots,c_1 a_n+c_2 b_n),\;c_1,c_2\in\mathbb{C},
\end{equation} 
and we require any $(b_1,b_2\cdots,b_n)\in (\mathbb{C}^n)^*$ satisfy the following property: For all $(a_1,\cdots,a_n)\in \mathbb{C}^n$, 
\begin{equation}
(b_1,b_2\cdots,b_n) \left( (a_1,\cdots,a_n)\right)=\sum_i b_i a_i. 
\end{equation}
\end{defi}

Any vector or tuple in $(\mathbb{C}^n)^*$ is clearly a linear functional on $\mathbb{C}^n$. There exists a conjugate linear mapping (or conjugate isomorphism) $*$ from $\mathbb{C}^n$ to $(\mathbb{C}^n)^*$:
\begin{equation}
*:\mathbb{C}^n\ni v=(a_1,\cdots,a_n)\mapsto v^*:=(a_1^*,\cdots,a_n^*)\in (\mathbb{C}^n)^*,\label{eq:*}
\end{equation}
so that
\begin{equation}
v^*(w)=(v|w) \text{ for any }v,w\in \mathbb{C}^n.
\end{equation}
Hence the conjugate isomorphism from $\mathcal{H}$ to $\mathcal{H}^*$, \eqref{eq:lalpha}, corresponds to this mapping $*$: For a vector $\sum_{i}a_i\alpha_i\in\mathcal{H}$ we have
\begin{equation}
\sum_{i}a_i\alpha_i\mapsto (a_1,a_2,\cdots,a_n)\in\mathbb{C}^n \mapsto (a_1^*,a_2^*,\cdots,a_n^*)\in(\mathbb{C}^n)^*,
\end{equation}
and 
\begin{equation}
l_w(v)=(w|v)=g(w)^*(g(v)) \text{ for all }w,v\in\mathcal{H}.\label{:inlw}
\end{equation}

We can define an isomorphism from $\mathbb{C}^n$ to ``column'' matrices $\overrightarrow{\mathbb{C}}^n$:
\begin{equation}
(a_1,a_2,\cdots,a_n)\mapsto \begin{pmatrix}
a_1\\a_2\\ \vdots\\a_n
\end{pmatrix}.\label{eq:gvec}
\end{equation} 
and a similar one between $(\mathbb{C}^n)^*$ and ``row'' matrices $\overrightarrow{\mathcal{C}^n}^*$. Therefore the inner product between $(b_1,\cdots,b_n),(a_1,\cdots,a_n)\in \mathbb{C}^n$ is equal to
\begin{align}
(b_1,\cdots,b_n)\cdot(a_1,\cdots,a_n)&=(b_1,\cdots,b_n)^* \left((a_1,\cdots,a_n)\right)\label{eq:ba}\\
&=(b_1^*,\cdots,b_n^*)\begin{pmatrix}
a_1\\a_2\\ \vdots\\a_n
\end{pmatrix},\label{eq:ba2}
\end{align} 
where in \eqref{eq:ba} $(b_1,\cdots,b_n)^*=(b_1^*,\cdots,b_n^*)\in (\mathbb{C}^n)^*$ and in \eqref{eq:ba2} $(b_1^*,\cdots,b_n^*)$ is the row matrix and \eqref{eq:ba2} is the usual matrix multiplication.

Composing \eqref{eq:gvec} and \eqref{eq:gai}, we obtain an isomorphism from $\mathcal{H}$ to column vectors $\overrightarrow{\mathbb{C}}^n$:
\begin{equation}
v=\sum_{i}a_i\alpha_i\mapsto \vec{v}:=\begin{pmatrix}
a_1\\a_2\\ \vdots\\a_n
\end{pmatrix},\label{eq:gi}
\end{equation}
and with the mapping $*$ from \eqref{eq:*} we have an isomorphism from $\mathcal{H}$ to row vectors $\overrightarrow{\mathcal{C}^n}^*$:
\begin{equation}
v=\sum_{i}a_i\alpha_i\mapsto \vec{v}^\dagger:=(a_1^*,\cdots,a_n^*)\in \overrightarrow{\mathcal{C}^n}^*.\label{eq:ggii}
\end{equation}
By \eqref{:inlw}, \eqref{eq:ba} and \eqref{eq:ba2}, they satisfy
\begin{equation}
(w|v)=\vec{w}^\dagger\cdot \vec{v}\;\forall w,v\in\mathcal{H}.
\end{equation}

\subsubsection{Operators}

Suppose the dimension of $\mathcal{H}$ is $n$. Given an orthonormal basis $\{\alpha_i\}$ of $\mathcal{H}$ define a bijective linear mapping from $\mathcal{B}(\mathcal{H})$ to the space of complex $n\times n$ matrices $\overrightarrow{\mathcal{C}}^{n\times n}$ as: 
\begin{equation}
\mathcal{B}(\mathcal{H})\ni O\mapsto \vec{O} \text{ with }\vec{O}_{ij}:=(\alpha_i|O\alpha_j).\label{eq:om}
\end{equation}
Its inverse is:
\begin{equation}
\vec{O}\mapsto \vec{O}_{ij}h_{i,j},\label{eq:mo}
\end{equation}
where $h_{i,j}$ was defined in \eqref{eq:haa} and \eqref{eq:hii}. \eqref{eq:Thij} suggests
\begin{equation}
O=\sum_{i,j} \vec{O}_{ij} h_{i,j}.\label{eq:OO}
\end{equation}
Furthermore, with two operators $N,O\in\mathcal{B}(\mathcal{H})$, for any $v \in\mathcal{H}$
\begin{align}
NOv&=\sum_{i,j,k,l}\vec{N}_{ij}\vec{O}_{kl}(\alpha_l|v)(\alpha_j|\alpha_k)\alpha_i\nonumber\\
&=\sum_{i,l}(\vec{N}\cdot\vec{O})_{il}h_{i,l}(v).
\end{align}
That is, through the mapping \eqref{eq:om} and \eqref{eq:mo}
\begin{equation}
\overrightarrow{O_1\circ O_2}\mapsto \vec{O}_1\cdot \vec{O}_2 \text{ and } \vec{O}_1\cdot\vec{O}_2\mapsto O_1\circ O_2.\label{eq:gg}
\end{equation} 
Operators on an $n$-dimensional Hilbert space and matrices $\overrightarrow{\mathcal{C}}^{n\times n}$ are algebraically isomorphic.

\subsubsection{Everything together}
After the isomorphism \eqref{eq:gi}, we can find
\begin{equation}
Ov=\vec{O}\cdot \vec{v}.\label{eq:Oij}
\end{equation}
Besides, via \eqref{eq:ggii} we obtain
\begin{equation}
l_w(Ov)=(w|Ov)=\vec{w}^\dagger\cdot \vec{O}\cdot \vec{v}.
\end{equation}
Along with \eqref{eq:gg}, it's been demonstrated that operators on $\mathcal{H}$, vectors in $\mathcal{H}$, and linear functionals defined by vectors in $\mathcal{H}$ can all be represented by matrices.

More generally, similar correspondences for other linear mappings and matrices can be established the same way, as long as the domains and codomains coincide. For example, it can be shown for any $T\in\mathcal{B}(\mathcal{H}_1,\mathcal{H}_2)$, where $\mathcal{H}_1$ and $\mathcal{H}_2$ are finite-dimensional,
\begin{equation}
T^\dagger \mapsto \vec{T}^\dagger;
\end{equation}
the adjoint of a linear mapping $T$ becomes the Hermitian conjugate of $T$'s corresponding matrix.

\subsection{Spectrum and Eigenvectors}\label{sec:spe}
As the reader is assumed to have a good grasp of finite-dimensional linear algebras, I will defer the proper definition of spectrum, and instead go straight for the spectral theorem for normal operators (Section~\ref{sec:ad}) on finite-dimensional spaces.
\subsubsection{Spectral theorem for finite-dimensional spaces}\label{sec:fispe}
\begin{defi}
Suppose $\mathcal{V}$ is a complex vector space. For $O\in\mathcal{B}(\mathcal{V})$, if there exists an $v\in \mathcal{V}$ such that $Ov=\lambda v$, $\lambda\in\mathbb{C}$, then $v$ is called an \textbf{eigenvector} of $O$ and $\lambda$ is the corresponding \textbf{eigenvalue}. The eigenvectors of a given eigenvalue form a subspace, called the \textbf{eigenspace}.
\end{defi}
\noindent In other words, any vector in an eigenspace is also an eigenvector. Now let's state the spectral theorem \cite{Roman}:
\begin{thm}[\textbf{Spectral theorem}]\label{thm:spe}
Suppose $\mathcal{H}$ is a finite-dimensional complex Hilbert space, and $O\in\mathcal{B}(\mathcal{H})$ is a normal operator. Then there exist subspaces $\mathcal{H}_1,\cdots,\mathcal{H}_n$ of $\mathcal{H}$ which are orthogonal to each other such that 
\begin{equation}
\mathcal{H}=\mathcal{H}_1\oplus\cdots\oplus \mathcal{H}_n,\label{eq:HV}
\end{equation}
and if $\Pi_i$ are the orthogonal projections onto $\mathcal{H}_i$,
\begin{equation}
O=\sum_{i=1}^n \lambda_i \Pi_i,\,\lambda_i\in\mathbb{C},\label{eq:Tn}
\end{equation}
$\lambda_i$ are exactly the eigenvalues of $O$. If $O$ is self-adjoint, then $\lambda_i$ are real, and if $O$ is unitary, $|\lambda_i|=1$. \eqref{eq:HV} and mutual orthogonality of $\mathcal{H}_i$ imply $\Pi_i \Pi_j=\delta_{ij} \Pi_{i}$ and $\sum_{i=1}^n \Pi_i=I.$

We can further require that all $\lambda_i$ be distinct; if so, $\mathcal{H}_i$ is the eigenspace of $\lambda_i$.   
\end{thm}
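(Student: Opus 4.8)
The plan is to proceed by induction on $\dim\mathcal{H}$, using the fundamental theorem of algebra to extract a single eigenvalue and then peeling off its eigenspace. The base case $\dim\mathcal{H}=1$ is immediate, since every operator is multiplication by a scalar. For the inductive step, because $\mathcal{H}$ is a nonzero finite-dimensional complex space, the characteristic polynomial of $O$ has a root $\lambda_1\in\mathbb{C}$ (the reader is assumed to be comfortable with the matrix picture of Section~\ref{sec:ismat}), so $\mathcal{H}_1:=\ker(O-\lambda_1 I)$ is a nonzero eigenspace.

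The crucial use of normality is the identity $\ker(O-\lambda I)=\ker(O^\dagger-\lambda^* I)$. Since $O$ is normal, so is $N:=O-\lambda I$, and for any normal $N$ one computes $\|Nv\|^2=(v|N^\dagger N v)=(v|NN^\dagger v)=\|N^\dagger v\|^2$; hence $Nv=0$ iff $N^\dagger v=0$. This shows every eigenvector of $O$ for $\lambda_1$ is simultaneously an eigenvector of $O^\dagger$ for $\lambda_1^*$. I would then check that $\mathcal{H}_1^\perp$ is invariant under $O$: for $w\in\mathcal{H}_1^\perp$ and $v\in\mathcal{H}_1$ one has $(v|Ow)=(O^\dagger v|w)=\lambda_1(v|w)=0$, so $Ow\in\mathcal{H}_1^\perp$. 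The same reasoning shows $\mathcal{H}_1^\perp$ is invariant under $O^\dagger$, so $O\upharpoonright_{\mathcal{H}_1^\perp}$ has adjoint $O^\dagger\upharpoonright_{\mathcal{H}_1^\perp}$ and is again normal. Since $\mathcal{H}=\mathcal{H}_1\oplus\mathcal{H}_1^\perp$ with $\dim\mathcal{H}_1^\perp<\dim\mathcal{H}$, the inductive hypothesis supplies an orthogonal eigenspace decomposition of $\mathcal{H}_1^\perp$; adjoining $\mathcal{H}_1$ gives the full decomposition, and collecting coincident eigenvalues makes the $\lambda_i$ distinct with each $\mathcal{H}_i$ the eigenspace of $\lambda_i$.

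With the decomposition in hand the projection identities are formal: mutual orthogonality of the $\mathcal{H}_i$ yields $\Pi_i\Pi_j=\delta_{ij}\Pi_i$, and $\mathcal{H}=\bigoplus_i\mathcal{H}_i$ yields $\sum_i\Pi_i=I$, whence $O=\sum_i\lambda_i\Pi_i$. The two special cases follow by testing on an eigenvector $v$ with $Ov=\lambda v$: if $O=O^\dagger$ then $\lambda(v|v)=(v|Ov)=(Ov|v)=\lambda^*(v|v)$ forces $\lambda\in\mathbb{R}$; if $O$ is unitary then $(v|v)=(Ov|Ov)=|\lambda|^2(v|v)$ forces $|\lambda|=1$.

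I expect the main obstacle to be the invariance of $\mathcal{H}_1^\perp$ under $O$ — this is precisely the step that separates normal operators from arbitrary ones (a general operator admits only Schur triangularization, not diagonalization), and it rests entirely on the normality identity above. A secondary technical point is verifying that the restriction $O\upharpoonright_{\mathcal{H}_1^\perp}$ genuinely remains normal so that the induction can close; this requires that both $\mathcal{H}_1$ and $\mathcal{H}_1^\perp$ be invariant under $O$ \emph{and} $O^\dagger$, which the same argument provides.
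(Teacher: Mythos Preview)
Your proof is correct and follows the standard induction-on-dimension argument. Note, however, that the paper does not actually prove this theorem: it is stated as a well-established result with a citation to Roman's \emph{Advanced Linear Algebra}, and then used freely throughout. So there is no ``paper's own proof'' to compare against --- your argument supplies what the paper omits, and it does so along the lines one would find in a standard linear algebra text.
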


According to Theorem~\ref{thm:spe}, a vector $v$ is the eigenvector of $O$ if and only if $v$ belongs to one of $\mathcal{H}_i$, with eigenvalue $\lambda_i$. Two eigenvectors belonging to different subspaces $\mathcal{H}_i$ are orthogonal; for example, when they have different eigenvalues. 

By the spectral theorem, we can decompose $\mathcal{H}$ as the orthogonal direct sum (Section~\ref{sec:dip}) of the eigenspaces of a normal operator $O$. Namely, any vector $v$ in $\mathcal{H}$ can be decomposed as eigenvectors of a normal operator, by simply applying $I=\sum_i \Pi_i$ on $v$. It also implies for a normal operator $O$,\footnote{We also have similar relations concerning the range and kernel of an operator and its adjoint; see Ref.~\cite{GrandpaRudin,Einsiedler}.}
\begin{equation}
(\ker O)^\perp = \text{ran}O.\label{eq:keran}
\end{equation}

An eigenspace not being one-dimensional is commonly known as degeneracy by physicists.\footnote{In which case it may be convenient to introduce the concept of multisets for indicating the multiplicity  \cite{Roman}, but we won't do this in this thesis.} We can always find an orthonormal basis for an eigenspace---After finding a basis that spans the eigenspace, we can orthonormalize this basis by Gram-Schmidt method. As an eigenspace of a normal operator is always orthogonal to another eigenspace, whichever vectors are chosen as the orthonormal basis for the eigenspace, they're always orthogonal to eigenvectors belonging to other eigenspaces. A normal operator with ``degeneracy'' doesn't have a unique set of orthonormal eigenvectors. The orthonormal bases of all the eigenspaces $\mathcal{H}_i$ constitute an orthonormal basis of $\mathcal{H}$. 

An interesting example of ``degeneracy'' is the identity operator $I$, whose eigenvalue is 1 and the eigenspace is the entire $\mathcal{H}$. Any orthonormal basis comprises orthonormal eigenvectors of $I$.

\subsubsection{Spectrum}

\begin{defi}[\textbf{Spectrum}]
Let $\mathcal{V}$ be a Banach space. For $O\in\mathcal{B}(\mathcal{V})$, its spectrum is the set $\mathrm{Spec}(O)$ which comprises elements $\lambda$ such that $O-\lambda I$ doesn't have a bounded inverse.
\end{defi}

By the closed graph theorem or open mapping theorem, if a bounded operator has an inverse, its inverse is automatically bounded, so $O-\lambda I$ doesn't have a bounded inverse if and only if $O-\lambda I$ isn't bijective, in other words if and only if $O-\lambda I$ isn't one-to-one or isn't onto \cite{GrandpaRudin,Einsiedler,HallQ}. 

If there is a $v\in\mathcal{V}$ such that $Ov=\lambda v$, then $(O-\lambda I)v=0$, so $\lambda$ is in the spectrum of $\mathrm{Spec}(O)$. Hence eigenvalues of $O$ are in the spectrum. To phrase it a bit differently, $\lambda$ is an eigenvalue of $O$ if $O-\lambda I$ fails to be injective, that is if $\ker(O-\lambda I)$ contains not only the zero vector. The eigenspace associated with an eigenvalue $\lambda$ is then $\ker(O-\lambda I)$. 

If $\mathcal{V}$ is finite-dimensional, as discussed in Section~\ref{sec:hom} an operator $O\in\mathcal{B}(\mathcal{V})$ is injective if and only if it's surjective. Therefore, the spectrum is composed entirely of eigenvalues. Because an operator on a finite-dimensional space isn't one-to-one if and only if its determinant is zero, eigenvalues can be found by solving for $\det(O-\lambda I)=0$, as taught in elementary linear algebras.

On the other hand, if $\mathcal{V}$ is infinite-dimensional, in general the spectrum isn't discrete,\footnote{The discrete, or point spectrum of an operator is defined to be composed of all eigenvalues \cite{Einsiedler}.} and an operator may not even have eigenvectors (and eigenvalues) at all. The spectrum of a compact self-adjoint operator \cite{Loomis,Einsiedler,GrandpaRudin} is quite close to what we expect from operators on a finite-dimensional space, but not all bounded operators are compact. Also, even an unbounded operator may have a discrete spectrum, like the Hamiltonians of certain systems, such as harmonic oscillators.

It can be shown \cite{Einsiedler,HallQ}
\begin{thm}
For $O\in\mathcal{B}(\mathcal{\mathcal{V}})$, the spectrum is a closed, bounded, non-empty subset of $\mathbb{C}$. All elements $\lambda\in \mathrm{Spec}(O)$ obey $|\lambda|\leq ||O||$. If $O\in\mathcal{B}(\mathcal{\mathcal{H}})$ is self-adjoint, the spectrum is on the real line, and $\max (|\lambda|:\lambda \in\mathrm{Spec}(O))=||O||$.
\end{thm}

Even though we don't really need this theorem for finite-dimensional spaces, there are two points to be pointed out for those spaces: First, as there are only finite eigenvalues, the spectrum is a closed set, as it's a union of finite points. Second, from the spectral theorem for normal operators (Theorem~\ref{thm:spe}), the operator norm of a normal operator is exactly the largest absolute value of eigenvalues.

\subsubsection{Another form of the spectral theorem}
Having discussed the spectra of operators, now we can take a look at another form of the spectral theorem. We will skip some details; interested readers can refer to Ref.~\cite{HallQ,Blackadar}. 

Consider a finite-dimensional inner product space $\mathcal{H}$ and a normal operator $O\in\mathcal{B}(\mathcal{H})$. Define a vector-valued function $s$ on the spectrum $\mathrm{Spec}(O)$, called a \textbf{section}, and demand $s(\lambda_i)\in \mathcal{H}_i$ (Theorem~\ref{thm:spe}). We can further define an inner product for such functions, by
\begin{equation}
(s_1|s_2):=\sum_{\lambda_i\in\mathrm{Spec}(O)} (s_1(\lambda_i)|s_2(\lambda_i)).
\end{equation}
As $\mathrm{Spec}(O)$ is a finite set, sections form a finite-dimensional inner product space, called the \textbf{direct integral}, denoted by $\bigoplus_{\lambda_i\in\mathrm{Spec}(O)}\mathcal{H}_i$. The direct integral has the same dimension as $\mathcal{H}$, which can be verified. 

From Theorem~\ref{thm:spe}, requiring all $\lambda_i$ be distinct, then for any vector $v\in\mathcal{H}$
\begin{equation}
v=\sum_{\lambda_i\in\mathrm{Spec}(O)} \Pi_i v.
\end{equation}
This suggests we can define a mapping $U:v\mapsto s_v$ by 
\begin{equation}
(Uv)(\lambda_i)=s_v(\lambda_i):=\Pi_i v:=v_i\in\mathcal{H}_i \text{ for all }\lambda_i\in\text{Spec}(O).
\end{equation}
Such a section $s_v$ will satisfy
\begin{equation}
\sum_{\lambda_i\in\mathrm{Spec}(O)} s_v(\lambda_i)=v,
\end{equation}
and
\begin{equation}
(s_v|s_v)=\sum_i (\Pi_i v|\Pi_i v)=\sum_{i,j} (\Pi_i v|\Pi_j v)=(v|v)
\end{equation}
by $\Pi_i \Pi_j=\delta_{ij}\Pi_i$ and self-adjointness of orthogonal projections. Clearly $U$ is linear, so it is unitary. With this unitary mapping $U:\mathcal{H}\rightarrow \bigoplus_{\lambda_i\in\mathrm{Spec}(O)}\mathcal{H}_i$, we can find
\begin{equation}
(UOv)(\lambda_i)= \left(U \sum_j \lambda_i \Pi_i v\right)(\lambda_i)=\sum_j \lambda_j U(\Pi_j v)(\lambda_i)=\lambda_i \Pi_i v \text{ for all }v\in\mathcal{H},\label{eq:utv}
\end{equation}
because $\Pi_i\Pi_j=\delta_{ij}\Pi_i$. 

Given a section $s\in\bigoplus_{\lambda_i\in\mathrm{Spec}(O)}\mathcal{H}_i$, if $s(\lambda_i)=v_i\in \mathcal{H}_i$, then 
\begin{equation}
U^{-1}s=\sum_i v_i.
\end{equation}
Therefore by \eqref{eq:utv}
\begin{equation}
(UOU^{-1}s)(\lambda_i)=\lambda_i v_i=\lambda_i s(\lambda_i).
\end{equation}
Because $UOU^{-1}$ is an operator on $\bigoplus_{\lambda_i\in\mathrm{Spec}(O)}\mathcal{H}_i$, the equation above implies the unitary equivalence of $O$ on the space of sections is an operator that multiplies each section by $\lambda_i$, yielding a new section. We can now state the theorem:
\begin{thm}\label{thm:spe2}
Under the same assumption as Theorem~\ref{thm:spe}, we associate each point $\lambda_i$ in $\mathrm{Spec}(O)$ with an inner product space $\mathcal{H}_i$. There exists a unitary mapping $U$ from $\mathcal{H}$ to the direct integral $\bigoplus_{\lambda_i\in\mathrm{Spec}(O)}\mathcal{H}_i$ such that
\begin{equation}
(UOU^{-1})s(\lambda_i)=\lambda_i s(\lambda_i)
\end{equation}
for all $\lambda_i\in\mathrm{Spec}(O)$ and any $s\in\bigoplus_{\lambda_i\in\mathrm{Spec}(O)}\mathcal{H}_i$.
\end{thm}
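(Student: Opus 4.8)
The plan is to build the unitary $U$ explicitly out of the orthogonal projections $\Pi_i$ delivered by Theorem~\ref{thm:spe}, and then to check three things: that $U$ actually lands in the direct integral, that it is unitary, and that conjugating $O$ by it produces the stated multiplication operator. First I would define $U:\mathcal{H}\to\bigoplus_{\lambda_i\in\mathrm{Spec}(O)}\mathcal{H}_i$ by sending each $v$ to the section $s_v$ with $s_v(\lambda_i):=\Pi_i v$. Since $\Pi_i$ is the orthogonal projection onto $\mathcal{H}_i$ we have $\Pi_i v\in\mathcal{H}_i$, so $s_v$ is a genuine section and $U$ is well defined; its linearity is inherited from the linearity of each $\Pi_i$.

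Next I would verify unitarity. Using the direct-integral inner product together with the relations $\Pi_i\Pi_j=\delta_{ij}\Pi_i$, idempotence, and self-adjointness of the $\Pi_i$ (all from Theorem~\ref{thm:spe}), I would compute
\[
(Uv|Uv)=\sum_i(\Pi_i v|\Pi_i v)=\sum_{i,j}(\Pi_i v|\Pi_j v)=\Big(\sum_i\Pi_i v\Big|\sum_j\Pi_j v\Big)=(v|v),
\]
where the final equality uses the resolution of identity $\sum_i\Pi_i=I$. Thus $U$ is an isometry, hence injective. Because the direct integral decomposes as $\bigoplus_i\mathcal{H}_i$ with $\dim=\sum_i\dim\mathcal{H}_i=\dim\mathcal{H}$, an injective linear map between equidimensional spaces is surjective (Section~\ref{sec:hom}, via Theorem~\ref{thm:rpn}), so $U$ is a bijective isometry and therefore unitary (Section~\ref{sec:isometry}).

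Finally I would identify $U^{-1}$ and carry out the conjugation. For a section $s$ with $s(\lambda_i)=v_i\in\mathcal{H}_i$ the reconstruction $U^{-1}s=\sum_i v_i$ holds, since $\Pi_j\big(\sum_i v_i\big)=v_j$. Then, using $O=\sum_j\lambda_j\Pi_j$ and $\Pi_i\Pi_j=\delta_{ij}\Pi_i$,
\[
(UOU^{-1}s)(\lambda_i)=\Pi_i\,O\Big(\sum_k v_k\Big)=\Pi_i\sum_j\lambda_j v_j=\lambda_i v_i=\lambda_i s(\lambda_i),
\]
which is exactly the claimed multiplication rule; this is essentially the computation already sketched before the statement.

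Since the setting is finite-dimensional and the spectral theorem has done the heavy lifting, I do not expect a serious obstacle. The only point needing a little care is the surjectivity underlying unitarity: rather than exhibiting a preimage by hand, the cleanest route is to match dimensions and invoke the equivalence of injectivity and surjectivity between equidimensional spaces. Everything else reduces to the orthogonality relations $\Pi_i\Pi_j=\delta_{ij}\Pi_i$ and the identity $\sum_i\Pi_i=I$.
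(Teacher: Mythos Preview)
Your proposal is correct and follows essentially the same route as the paper: you define $U$ via $(Uv)(\lambda_i)=\Pi_i v$, verify isometry through $\Pi_i\Pi_j=\delta_{ij}\Pi_i$ and $\sum_i\Pi_i=I$, invoke the dimension match for unitarity, identify $U^{-1}s=\sum_i v_i$, and compute the conjugation exactly as in the discussion preceding the theorem. The only minor difference is that you make the surjectivity argument explicit via Theorem~\ref{thm:rpn}, whereas the paper leaves that step implicit after noting the dimensions agree.
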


The discussion before Theorem~\ref{thm:spe2} explains how to find the unitary mapping $U$, and shows properties thereof. If each spectral space $\mathcal{H}_i$ of $O$ is one-dimensional, then sections can thought of as complex (or real, if self-adjoint) functions on $\mathrm{Spec}(O)$, after choosing a basis for each $\mathcal{H}_i$. Theorem~\ref{thm:spe2} may seem redundant, but its infinite-dimensional counterpart proves fruitful, which we will touch upon informally in Section~\ref{sec:sper}.

\section{$L^2$-space}\label{sec:l2s}
In this section I'll give a quick review of $L^2$-space. Please read books such as Ref.~\cite{PapaRudin,Cohn,Einsiedler} for a more comprehensive picture.

\subsection{$L^p$-space}
\begin{defi}[$L^p$-\textbf{space}]
Let $(X,\mu)$ be a measure space \cite{PapaRudin,Cohn}, where $X$ is the underlying set and $\mu$ is a positive measure, and let $f$ be a complex measurable function on it. Define the $L^p$-norm for   $0<p<\infty$ as
\begin{equation}
||f||_p:=\left\{\int_X |f|^p \;d\mu\right\}^{1/p}.
\end{equation}
An $L^p$-space, denoted by $L^p(X,\mu)$ or $L^p(\mu)$, is the space formed by functions whose $L^p$-norms are finite. For $p=\infty$, define $L^\infty$-norm as the essential supremum \cite{PapaRudin,Einsiedler} of $f$. 
\end{defi}

A complex measurable function in $L^1(\mu)$ is called a \textbf{Lebesgue integrable} function, because its integral will have finite real and complex parts \cite{PapaRudin}. By Minkowski's inequality and $|f+g|\leq |f|+|g|$ \cite{PapaRudin}, an $L^p$-space is indeed a vector space (over $\mathbb{C}$).\footnote{As discussed in Section~\ref{sec:omap}, we assume pointwise addition and multiplication for vector operations on functions.} From the definition, it's apparent that if $f$ is in $L^p$-space, so is $f^*$, and they have the same $L^p$-norm.

The $L^p$-norm is actually a semi-norm, that is, there exist non-zero functions which have (semi-)norm 0 \cite{Einsiedler}. However, we can rid of this problem by establishing equivalence classes of functions by identifying functions which are the same \textbf{almost everywhere}, and the $L^p$-norm and the associated metric then become the norm and metric for equivalent classes of functions \cite{PapaRudin,Einsiedler}. In this context when we mention a function $f$, we actually refer to all functions that are identified as equivalent to $f$ i.e. an equivalent class, rather than a single function. 

There is a critical property of $L^p$-space \cite{PapaRudin}:
\begin{thm}
For $1\leq p\leq\infty$ and any positive measure $\mu$, $L^p(\mu)$ is a Banach space.
\end{thm}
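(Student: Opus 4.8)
The plan is to verify completeness, since the excerpt has already established that $L^p(\mu)$ is a normed vector space over $\mathbb{C}$; what remains is to show that every Cauchy sequence in $L^p(\mu)$ converges to an element of $L^p(\mu)$. I would treat the cases $1\leq p<\infty$ and $p=\infty$ separately, and for the finite case I would exploit the standard reduction that a Cauchy sequence converges as soon as one of its subsequences converges.

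First, for $1\leq p<\infty$, let $\{f_n\}$ be Cauchy in $L^p$. I would extract a subsequence $\{f_{n_k}\}$ with $||f_{n_{k+1}}-f_{n_k}||_p<2^{-k}$, which is possible precisely because the sequence is Cauchy. Then I would form the pointwise partial sums $g_m=\sum_{k=1}^m |f_{n_{k+1}}-f_{n_k}|$ and their increasing limit $g=\sum_{k=1}^\infty |f_{n_{k+1}}-f_{n_k}|$. Applying Minkowski's inequality gives $||g_m||_p\leq\sum_{k=1}^m 2^{-k}<1$ uniformly in $m$, and then the Monotone Convergence Theorem yields $||g||_p\leq 1$. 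The decisive consequence is that $g\in L^p$, so $g(x)<\infty$ for $\mu$-almost every $x$.

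Off the null set where $g=\infty$, the telescoping series $f_{n_1}(x)+\sum_{k}\bigl(f_{n_{k+1}}(x)-f_{n_k}(x)\bigr)$ converges absolutely; I would define $f$ to be its sum there and $0$ elsewhere, and since the $m$th partial sum equals $f_{n_m}$, this gives $f_{n_k}\to f$ pointwise almost everywhere. To upgrade this to $L^p$-convergence, I would fix $\varepsilon>0$, use Cauchyness to find $N$ with $||f_n-f_m||_p<\varepsilon$ for $n,m\geq N$, and apply Fatou's lemma to $|f_{n_k}-f_m|^p$ as $k\to\infty$, obtaining $||f-f_m||_p\leq\varepsilon$ for $m\geq N$. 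This simultaneously shows $f-f_m\in L^p$ (hence $f\in L^p$) and that $f_{n_k}\to f$ in norm; since a Cauchy sequence sharing a limit with one of its subsequences must converge to that limit, $f_n\to f$ in $L^p$.

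For $p=\infty$ I would argue more directly: for each pair $m,n$ let $A_{m,n}$ be the null set where $|f_m-f_n|>||f_m-f_n||_\infty$, take the countable union of all such sets, which is still null, and observe that off this exceptional null set $\{f_n\}$ is uniformly Cauchy; its uniform limit $f$ is then bounded almost everywhere and satisfies $||f_n-f||_\infty\to 0$. The main obstacle in the finite case is the careful interchange of limits with integration: it is here that one genuinely needs the Lebesgue convergence theorems (Monotone Convergence and Fatou), and constructing the candidate limit $f$ pointwise first, via the almost-everywhere finiteness of $g$, before proving it is the norm limit, is the conceptual heart of the argument rather than the algebra of Minkowski's inequality.
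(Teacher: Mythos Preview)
Your argument is correct and is precisely the classical Riesz--Fischer proof of completeness. The paper, however, does not supply its own proof of this theorem: it simply states the result and cites Rudin's \emph{Real and Complex Analysis}, so there is no in-paper proof to compare against --- your write-up is essentially the standard argument from that reference.
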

Also, by H\"{o}lder's inequality \cite{PapaRudin}, if $f\in L^p(\mu)$, $g\in L^q(\mu)$ and $1/p+1/q=1$, then
\begin{equation}
||fg||_1\leq ||f||_p ||g||_q,\label{eq:Holderpq}
\end{equation}
so $fg$ is a Lebesgue integrable function.\footnote{$(fg)(x):=f(x)g(x)$, i.e. it's defined pointwise as well.} 

We're often interested in complex functions on $\mathbb{R}^n$. $L^p(\mathbb{R}^n)$ will denote $L^p$-space on $\mathbb{R}^n$ equipped with the Lebesgue measure, for whose definition please read the references listed at the start of this section.

\subsection{$L^2$-space and Fourier Transform}\label{sec:l2h}
\subsubsection{$L^2$-space as a Hilbert space}
An $L^2$-space is the space of functions whose $L^2$-norms are finite. Functions with finite $L^2$ norms are commonly referred to as \textbf{square-integrable} functions. By \eqref{eq:Holderpq}, if $f,g\in L^2(\mu)$
\begin{equation}
||f^*g||_1\leq ||f||_2 ||g||_2,\label{eq:HoCa}
\end{equation}
that is, $f^*g$ is integrable. Hence
\begin{equation}
\int_X f^*g \;d\mu
\end{equation}
has a definite value. As a result, we can define an inner product on an $L^2$-space by
\begin{equation}
(f|g):=\int_X f^* g\; d\mu \text{ for all }f,g\in L^2(\mu).
\end{equation}
$(f|f)$ is exactly $||f||_2^2$, so the $L^2$-norm is identical to the norm induced by this inner product, and this integral does satisfy all the requirements of an inner product, as laid out in Section~\ref{sec:inner}. Because $L^2$-space is complete, with this inner product it becomes a Hilbert space. Since $L^2$ space is a space of equivalence classes of functions, the zero vector is the class of functions identified with the zero function.

\subsubsection{A special case of $L^2$-spaces}
Now suppose $X=\{x_i\}$ is a finite set with cardinality $n$, and $\mu$ is the counting measure \cite{PapaRudin}. If $f$ and $g$ are complex bounded functions on $X$,
\begin{equation}
(f|g)=\sum_i f(x_i)^* g(x_i):=\sum_i f_i^* g_i.
\end{equation}
Hence we can associate this $L^2$-space with $\mathbb{C}^n$, or any $n$-dimensional space over $\mathbb{C}$.

An $L^2$-space with a counting measure is also denoted by $l^2$, and the space $X$ doesn't need to be finite. Fourier expansion can be thought of as an isometry between $L^2$ and $l^2$ spaces \cite{PapaRudin}.

\subsubsection{Fourier/Plancherel transform}\label{sec:plan}
The Fourier transform $\mathcal{F}$ takes a function on $\mathbb{R}^n$ to another also on $\mathbb{R}^n$. For any $f\in L^1(\mathbf{R}^n)$, it is defined as
\begin{equation}
\mathcal{F}(f)(\mathbf{k}):=\hat{f}(\mathbf{k}):=(2\pi)^{-n/2}\int_{\mathbb{R}^n} f(x)e^{-i\mathbf{k}\cdot \mathbf{x}}d\mathbf{x}.
\end{equation}
In addition to the mapping $\mathcal{F}$, the function $\hat{f}$ is also called the Fourier transform of $f$ as well. A function $f$ in $L^2(\mathbf{R}^n)$ doesn't always have a well-defined Fourier transform, because $f(x)e^{i\mathbf{k}\cdot \mathbf{x}}$ in general is not $L^1$. Fourier transform can be extended to $L^2$-functions by taking the limit \cite{PapaRudin,HallQ}:
\begin{equation}
\mathcal{F}(f)(\mathbf{k})=(2\pi)^{-n/2}\lim_{y\rightarrow\infty}\int_{|\mathbf{x}|\leq y} f(\mathbf{x})e^{-i\mathbf{k}\cdot \mathbf{x}}d\mathbf{x},
\end{equation}
and it becomes a unitary transform from $L^2(\mathbb{R}^n)$ to $L^2(\mathbb{R}^n)$, which is sometimes referred to as the Plancherel transform. The inverse of a Fourier transform, $\mathcal{F}^{-1}$, is
\begin{equation}
\mathcal{F}^{-1}(f)(\mathbf{x})=(2\pi)^{-n/2}\lim_{y\rightarrow\infty}\int_{|\mathbf{x}|\leq y} f(\mathbf{k})e^{i\mathbf{k}\cdot \mathbf{x}}d\mathbf{k}.
\end{equation}
Note that for an $L^2$-function $f$, $f$ and $\mathcal{F}^{-1}\mathcal{F}f$ are identical in terms of $L^2$-norm but not pointwise in general.

\subsection{Schwartz Space}\label{sec:Sch}
For an $n$-tuple of nonnegative integers $\boldsymbol{\alpha}=(\alpha_1,\cdots,\alpha_n)$, let $\mathbf{x}^{\boldsymbol{\alpha}}:=x_1^{\alpha_1}\cdots x_n^{\alpha_n}$ and $\partial^{\boldsymbol{\alpha}}:=\partial_1^{\alpha_n}\cdots \partial_n^{\alpha_n}$, where $\partial_i$ are the partial differential operators $\partial/\partial x_i$. Then we can define:
\begin{defi}[\textbf{Schwartz function}]
A Schwartz function $f$ is a smooth\footnote{Given the defining property, this statement of smoothness is actually redundant.} complex function on $\mathbb{R}^n$ such that
\begin{equation}
\lim_{|\mathbf{x}|\rightarrow\infty}|\mathbf{x}^{\boldsymbol{\alpha}} \partial^{\boldsymbol{\beta}}f(\mathbf{x})|=0,
\end{equation}
for all $\boldsymbol{\alpha},\boldsymbol{\beta}$ of nonnegative $n$-tuples. $\mathcal{S}(\mathbb{R}^n)$ will denote the set and space of all Schwartz functions on $\mathbb{R}^n$, named the \textbf{Schwartz space}.
\end{defi}

Schwartz functions are also refereed to as rapidly decreasing functions, for obvious reasons. Gaussian functions, and any functions that can be factorized into a Gaussian and a polynomial, are Schwartz functions. A Schwartz function has a well-defined Fourier transform; actually, the Fourier transform of a Schwartz function is also a Schwartz function. Because $\mathcal{S}(\mathbb{R}^n)$ is a dense subset of $L^2(\mathbb{R}^n)$, the Fourier transform can be extended to $L^2$-functions by continuity \cite{HallQ,PapaRudin}. Therefore instead of starting from $L^1$-functions, Plancherel transform can be defined via Schwartz functions.

\section{Tensor Product}\label{sec:tp}
Lots of material of this section is adapted from Ref.~\cite{Roman,HallQ}.

\subsection{Bilinear mapping}
\begin{defi}[\textbf{Bilinear mapping}]\label{def:bilinear}
Let $\mathcal{V}$, $\mathcal{W}$ and $\mathcal{X}$ be vector spaces over the same field $\mathbb{F}$. A function $f: \mathcal{V}\times\mathcal{W}\rightarrow\mathcal{X}$ is called a bilinear mapping if 
\begin{align*}
f(c_1 v_1+c_2 v_2,w)&=c_1 f(v_1,w)+c_2 f(v_2,w),\\
f(v,c_1 w_1+c_2 w_2)&=c_1 f(v,w_1)+c_2 f(v,w_2),
\end{align*}
for all $c_i\in\mathbb{F}$, $v,v_i\in\mathcal{V}$ and $w,w_i\in\mathcal{W}$. Namely, the mapping is linear in each of the two arguments. Bilinear mappings form a vector space, named $\text{hom}(\mathcal{V},\mathcal{W};\mathcal{X})$. If $\mathcal{X}$ is the base field $\mathbb{F}$, $f$ is called a \textbf{bilinear form}.
\end{defi}
Note $\mathcal{V}\times\mathcal{W}$ shouldn't be taken as a direct sum of vector spaces---it's simply a cartesian product of sets. Here are some instances of bilinear mappings: 
\begin{itemize}
\item Inner product of a space over $\mathbb{R}$ is bilinear, by linearity in the second argument and $(v_1|v_2)=(v_2|v_1)$ for all $v_1,v_2$.
\item The multiplication of two complex numbers is bilinear; same for any field.
\item More generally, the multiplication of an algebra is bilinear. The example the reader may be most familiar with is multiplication of matrices.
\item Suppose $\mathcal{V}$ and $\mathcal{W}$ are vector spaces. Then the mapping $\phi:\mathcal{L}(\mathcal{V},\mathcal{W})\times \mathcal{V}\rightarrow \mathcal{W}$ defined by
\begin{equation}
\phi(T,v)=T(v)
\end{equation} 
is bilinear.
\item The commutator of operators, $[X,Y]:=XY-YX$ is a bilinear mapping; in addition, it's antisymmetric, namely $[X,Y]=-[Y,X]$. 
\item For real functions on the phase space $\mathbb{R}^{2n}$, the Poisson bracket $\{f,g\}$ is an antisymmetric bilinear mapping; see Section~\ref{sec:psv}. 
\item Both the Poisson bracket and commutator are Lie brackets: The Lie bracket on a Lie algebra $\mathfrak{g}$, $[\cdot,\cdot]:\mathfrak{g}\times\mathfrak{g}\rightarrow\mathfrak{g}$ is a bilinear antisymmetric mapping. More precisely, a Lie algebra is an algebra equipped with a Lie bracket as the multiplication.
\end{itemize}

\subsection{Tensor Product}\label{sec:tensor}
As shown above bilinear mappings are indeed very common, and thereby comes the importance of tensor product:
\begin{defi}[\textbf{Universality of tensor product and tensors}]\label{def:tensor}
Let $\mathcal{V}$, $\mathcal{W}$ and $\mathcal{X}$ be vector spaces over the same field. A universal pair $(\mathcal{T},t:\mathcal{V}\times\mathcal{W}\rightarrow \mathcal{X})$ for bilinear mappings is a vector space $\mathcal{T}$ and a bilinear mapping $t$ such that for any bilinear mapping $f$ from $\mathcal{V}$ and $\mathcal{W}$ to any vector space $\mathcal{X}$ there exists a unique linear mapping $\tau_f:\mathcal{T}\rightarrow\mathcal{X}$ that has the following property:
\begin{equation}
f=\tau_f\circ t.
\end{equation}
$\tau_f$ is called the \textbf{mediating morphism} for $f$, $\mathcal{T}$ the \textbf{tensor product} of $\mathcal{V}$ and $\mathcal{W}$, denoted by $\mathcal{V}\otimes\mathcal{W}$, and the mapping $t$ a \textbf{tensor map} or \textbf{tensor product}. 

Vectors in $\mathcal{V}\otimes\mathcal{W}$ are called \textbf{tensors}. $t(v,w)$ is denoted by $v\otimes w$,and a tensor of the form $v\otimes w$ is called \textbf{decomposable}.
\end{defi}
Simply put, tensor product turns any bilinear mapping on $\mathcal{V}$ and $\mathcal{W}$ into a linear mapping on $\mathcal{V}\otimes\mathcal{W}$. Universal pairs are unique up to isomorphisms, so even though there exist multiple universal pairs, they can be considered equivalent, and any such a space can be referred to as ``the'' tensor product \cite{Roman}. 

The mapping from a bilinear mapping to the corresponding mediating morphism, $f\mapsto \tau_f$ is actually an isomorphism:
\begin{thm}
Let $\mathcal{V}$, $\mathcal{W}$ and $\mathcal{X}$ be vector spaces over the same field. Define the mapping $\phi:\text{hom}(\mathcal{V},\mathcal{W};\mathcal{X})\ni f\mapsto \tau_f\in \mathcal{L}(\mathcal{V}\otimes\mathcal{W},\mathcal{X})$ as the mediating morphism mapping. Then $\phi$ is linear, and it's an isomorphism, that is, $\text{hom}(\mathcal{V},\mathcal{W};\mathcal{X})$ and $\mathcal{L}(\mathcal{V}\otimes\mathcal{W},\mathcal{X})$ are isomorphic.
\end{thm}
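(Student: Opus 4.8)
The plan is to extract everything from the uniqueness clause of the universal property in Definition~\ref{def:tensor}: that clause is precisely what characterizes $\tau_f$, and it supplies a cancellation principle, namely that any two linear maps $\mathcal{V}\otimes\mathcal{W}\to\mathcal{X}$ which agree after composition with the tensor map $t$ must be equal. Since $\phi$ is already a well-defined mapping (the mediating morphism exists and is unique), the whole proof reduces to three short diagram-chases invoking this principle, with no need to build a basis or to unpack the internal construction of $\mathcal{V}\otimes\mathcal{W}$.

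First I would verify linearity of $\phi$. Given bilinear maps $f_1,f_2$ and scalars $c_1,c_2$, I would form the candidate linear map $c_1\tau_{f_1}+c_2\tau_{f_2}\in\mathcal{L}(\mathcal{V}\otimes\mathcal{W},\mathcal{X})$ and compute, using $\tau_{f_i}\circ t=f_i$, that $(c_1\tau_{f_1}+c_2\tau_{f_2})\circ t=c_1 f_1+c_2 f_2$. Because a linear map composed with $t$ recovers $c_1 f_1+c_2 f_2$, the uniqueness of the mediating morphism forces $\tau_{c_1 f_1+c_2 f_2}=c_1\tau_{f_1}+c_2\tau_{f_2}$, which is exactly $\phi(c_1 f_1+c_2 f_2)=c_1\phi(f_1)+c_2\phi(f_2)$. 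For injectivity, by the kernel criterion of Section~\ref{sec:hom} it suffices to check $\ker\phi=\{0\}$: if $\tau_f=0$ then $f=\tau_f\circ t=0$, so $f$ is the zero bilinear map, and $\phi$ is injective.

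For surjectivity, given an arbitrary $L\in\mathcal{L}(\mathcal{V}\otimes\mathcal{W},\mathcal{X})$ I would set $f:=L\circ t$ and show $f=\phi(f)=L$. The only genuinely non-formal step here---and thus the main obstacle---is confirming that $f$ really is bilinear rather than merely a map on the cartesian product, so that $\tau_f$ is defined in the first place; this holds because $t$ is bilinear (Definition~\ref{def:tensor}) and $L$ is linear, so each partial map $v\mapsto f(v,w)$ and $w\mapsto f(v,w)$ is a composite of linear maps. Once $f$ is known to lie in $\mathrm{hom}(\mathcal{V},\mathcal{W};\mathcal{X})$, both $L$ and $\tau_f$ are linear maps satisfying $f=(\,\cdot\,)\circ t$, so uniqueness gives $L=\tau_f=\phi(f)$. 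Combining surjectivity with the linearity and injectivity above yields that $\phi$ is a linear isomorphism, establishing the stated equivalence of $\mathrm{hom}(\mathcal{V},\mathcal{W};\mathcal{X})$ and $\mathcal{L}(\mathcal{V}\otimes\mathcal{W},\mathcal{X})$.
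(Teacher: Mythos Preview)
Your proof is correct and is the standard universal-property argument; the paper itself does not supply a proof for this theorem, stating it as a known fact with a citation to Roman. One small slip: in the surjectivity paragraph you write ``show $f=\phi(f)=L$,'' but $f$ is bilinear while $\phi(f)$ and $L$ are linear, so the intended claim is just $\phi(f)=L$; the argument that follows makes this clear.
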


Constructing the tensor product can be carried out in a coordinate (or basis) dependent or a coordinate free way.
\subsubsection{Coordinate dependent}
Let $\{e_i\}$ and $\{f_j\}$ be bases of $\mathcal{V}$ and $\mathcal{W}$. Then the mapping $t$ can be defined by assigning arbitrary values to $t(e_i,f_j)$, and extending it by bilinearity:
\begin{equation}
t(\sum_i a_i e_i,\sum_j b_j f_j)=\sum_{i,j}a_i b_j t(e_i,f_j).
\end{equation}
Since $t$ is the most general bilinear mapping, we demand $t(e_i,f_j)$ be linearly independent, and use the symbol $e_i\otimes f_j$ for $t(e_i,f_j)$. That is, we let 
\begin{equation}
\{e_i\otimes f_j\}
\end{equation}
be the basis of $t(\mathcal{V},\mathcal{W})$.

Such a construction does yield a tensor product. If $g$ is a bilinear mapping on $\mathcal{V}$ and $\mathcal{W}$, then the mediating morphism can be obtained by $\tau_g(e_i\otimes f_j):=g(e_i,f_j)$ and extending it by linearity, which is indeed a unique linear mapping on $t(\mathcal{V},\mathcal{W})$, so $t(\mathcal{V},\mathcal{W})$ can be identified with $\mathcal{V}\otimes\mathcal{W}$. From this construction we can also find
\begin{thm}
Suppose $\mathcal{V}$ and $\mathcal{W}$ are finite-dimensional vector spaces. Then
\begin{equation}
\mathrm{dim}(\mathcal{V}\otimes \mathcal{W})=\mathrm{dim}(\mathcal{V})\mathrm{dim}(\mathcal{W}).
\end{equation}
\end{thm}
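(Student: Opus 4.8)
The plan is to read the dimension directly off the coordinate-dependent construction of $\mathcal{V}\otimes\mathcal{W}$ just described, and then confirm that the answer does not depend on the choices made. First I would fix a basis $\{e_i\}_{i=1}^m$ of $\mathcal{V}$ and a basis $\{f_j\}_{j=1}^n$ of $\mathcal{W}$, so that $m=\mathrm{dim}(\mathcal{V})$ and $n=\mathrm{dim}(\mathcal{W})$. By the construction, the tensor map $t$ is specified by its values $t(e_i,f_j)=e_i\otimes f_j$ extended bilinearly, and $t(\mathcal{V},\mathcal{W})=\mathcal{V}\otimes\mathcal{W}$ is defined to have $\{e_i\otimes f_j\}$ as a basis. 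Since this indexing runs over all pairs $(i,j)$ with $1\leq i\leq m$ and $1\leq j\leq n$, the basis has exactly $mn$ elements, which would immediately give $\mathrm{dim}(\mathcal{V}\otimes\mathcal{W})=mn=\mathrm{dim}(\mathcal{V})\mathrm{dim}(\mathcal{W})$.

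The step that actually carries content is justifying that these $mn$ vectors are genuinely distinct and linearly independent, rather than secretly collapsing. The construction \emph{demands} linear independence, so what must be checked is that such a demand is consistent, i.e. that a vector space realizing it exists. I would make this explicit by taking the free vector space $\mathcal{T}$ with formal basis symbols $\{\epsilon_{ij}:1\leq i\leq m,\,1\leq j\leq n\}$, defining $t(\sum_i a_i e_i,\sum_j b_j f_j):=\sum_{i,j}a_i b_j\,\epsilon_{ij}$, and verifying directly that $t$ is bilinear and that the pair $(\mathcal{T},t)$ satisfies the universal property of Definition~\ref{def:tensor}: given any bilinear $g$, the assignment $\tau_g(\epsilon_{ij}):=g(e_i,f_j)$ extends uniquely to a linear map with $g=\tau_g\circ t$. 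Because $\{\epsilon_{ij}\}$ is a basis of $\mathcal{T}$ by fiat, its $mn$ elements are linearly independent, and identifying $\epsilon_{ij}$ with $e_i\otimes f_j$ realizes the claimed basis.

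Finally I would address well-definedness. Since the dimension is being asserted for \emph{the} tensor product, I would invoke that universal pairs are unique up to isomorphism (noted just after Definition~\ref{def:tensor}), together with the fact from Section~\ref{sec:hom} that two finite-dimensional spaces are isomorphic if and only if they share the same dimension; hence $\mathrm{dim}(\mathcal{V}\otimes\mathcal{W})$ is independent of the realization and of the bases chosen. I expect the main obstacle to be precisely this linear-independence and consistency point---showing the $e_i\otimes f_j$ do not degenerate---since the counting itself is trivial once a concrete model with an honest basis of size $mn$ is exhibited and shown to satisfy the universal property.
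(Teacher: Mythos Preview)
Your proposal is correct and follows the same approach as the paper: the theorem is stated immediately after the coordinate-dependent construction, with the paper simply remarking ``From this construction we can also find'' the dimension formula, since the declared basis $\{e_i\otimes f_j\}$ has $mn$ elements. Your version is more thorough in explicitly verifying consistency of the demanded linear independence and in invoking uniqueness of universal pairs to show the dimension is well-defined, but the underlying argument is identical.
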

\subsubsection{Coordinate independent}
In this approach, unlike the previous one, no bases of $\mathcal{V}$ and $\mathcal{W}$ are presumed, but additional mathematical concepts are required \cite{TuD,Roman}. The reader can skip this part if she/he wants. 

Again suppose $\mathcal{V}$ and $\mathcal{W}$ are vector spaces over the same field $\mathbb{T}$. Let $\mathcal{T}_{\mathcal{V}\times \mathcal{W}}$ be a vector space over $\mathbb{F}$ with basis $\mathcal{V}\times \mathcal{W},$ i.e. its basis is
\begin{equation}
\{(v,w):v\in\mathcal{V}, w\in\mathcal{W} \}.
\end{equation} 
Hence the space is infinite-dimensional, and in this space two vectors of the form $(v,w)$ are identical, i.e. $(v_1,w_1)=(v_2,w_2)$ if and only if $v_1=v_2$ and $w_1=w_2$; additionally, $c(v_1,v_2)\neq (cv_1,cv_2)$.

Let $\mathcal{S}$ be a subspace generated by such vectors:
\begin{align}
&c_1(v_1,w)+c_2(v_2,w)-(c_1v_1+c_2v_2,w),\nonumber\\
&c_1(v,w_1)+c_2(v,w_2)-(v,c_1w_1+c_2w_2).\label{eq:cvw}
\end{align}
Since $\mathcal{V}\otimes\mathcal{W}$ is universal, vectors like \eqref{eq:cvw}, by Definition~\ref{def:bilinear}, are expected to be zero in the tensor product space---a linear mapping always maps zero to zero. Hence, we identify the tensor product with the quotient space:
\begin{equation}
\frac{\mathcal{T}_{\mathcal{V}\times \mathcal{W}}}{\mathcal{S}}.
\end{equation} 
For example, $c(v,w)=(cv,w)=(v,cw)$, in particular $(0,w)=(v,0)=0$ by such identification. Any element of this space is the linear combination of cosets \cite{Roman}
\begin{equation}
\sum_i [(v_i,w_i)+S].
\end{equation}
Let $v\otimes w$ denote the coset $(v,w)+\mathcal{S}$. Then the space is composed of vectors of this form
\begin{equation}
\sum_i u_i\otimes v_i.
\end{equation}
The tensor map to this quotient space then is 
\begin{equation}
t(u,v)=u\otimes v=(u,v)+\mathcal{S}.
\end{equation}
It can be shown this does lead to a tensor product, i.e. $(\frac{\mathcal{F}_{\mathcal{V}\times\mathcal{W}}}{\mathcal{S}},t)$ is a universal pair \cite{Roman}.

\subsubsection{Rank of a Tensor}\label{sec:rank}
Given bases $\{e_i\}$ and $\{f_j\}$ of vector spaces $\mathcal{V}$ and $\mathcal{W}$, a tensor can be written as
\begin{equation}
x=\sum_{i,j} c_{ij} e_i\otimes f_j.
\end{equation}
$c_{ij}$ is called the \textbf{coordinate matrix} of the tensor.\footnote{The coordinate matrix is quite often treated as ``the'' tensor by physicists, which, even though is not entirely wrong per se, it offers a fragmentary perspective on what tensors truly are.} The rank of the coordinate matrix is independent of the choice of bases, and is defined as the \textbf{rank} of the tensor \cite{Roman}. A decomposable tensor therefore has rank 1.

\subsection{Linear Mappings and Tensor Product}\label{sec:lmtensor}

\begin{thm}\label{thm:emiso}
Suppose $\mathcal{V}$, $\mathcal{W}$, $\mathcal{V}'$ and $\mathcal{W}'$ are vector spaces. There exists a unique linear mapping $\theta: \mathcal{L}(\mathcal{V},\mathcal{V}')\otimes \mathcal{L}(\mathcal{W},\mathcal{W}')\rightarrow \mathcal{L}(\mathcal{V}\otimes\mathcal{W},\mathcal{V}'\otimes\mathcal{W}')$, defined by
\begin{equation}
\left(\theta(f\otimes g)\right)(v\otimes w):=f(v)\otimes g(w),\; v\in\mathcal{V},\;w\in\mathcal{W}\;f\in\mathcal{L}(\mathcal{V},\mathcal{V}')\;g\in\mathcal{L}(\mathcal{W},\mathcal{W}').
\end{equation} 
$\theta$ is injective, and an isomorphism when $\mathcal{V}$ and $\mathcal{W}$ are finite-dimensional.
\end{thm}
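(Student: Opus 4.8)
The plan is to construct $\theta$ by invoking the universal property of the tensor product (Definition~\ref{def:tensor}) twice, then to establish injectivity by a direct kernel computation with no dimension hypothesis, and finally to promote $\theta$ to an isomorphism in the finite-dimensional case by writing down an explicit preimage of an arbitrary target map.

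First I would build $\theta$. Fixing $f\in\mathcal{L}(\mathcal{V},\mathcal{V}')$ and $g\in\mathcal{L}(\mathcal{W},\mathcal{W}')$, the assignment $(v,w)\mapsto f(v)\otimes g(w)$ is a bilinear map $\mathcal{V}\times\mathcal{W}\to\mathcal{V}'\otimes\mathcal{W}'$, since $f,g$ are linear and the tensor map is bilinear. By universality it factors through a \emph{unique} linear map $B(f,g)\colon\mathcal{V}\otimes\mathcal{W}\to\mathcal{V}'\otimes\mathcal{W}'$ with $B(f,g)(v\otimes w)=f(v)\otimes g(w)$; this already pins down what $\theta(f\otimes g)$ must be and guarantees it is well defined as a map on all of $\mathcal{V}\otimes\mathcal{W}$. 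I would then check that $(f,g)\mapsto B(f,g)$ is itself bilinear into $\mathcal{L}(\mathcal{V}\otimes\mathcal{W},\mathcal{V}'\otimes\mathcal{W}')$, which is routine once both sides are compared on the spanning decomposable tensors $v\otimes w$. A second application of universality furnishes the unique linear $\theta$ with $\theta(f\otimes g)=B(f,g)$, settling existence and uniqueness simultaneously.

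For injectivity I would show $\ker\theta=\{0\}$. Take $T\neq 0$ and write $T=\sum_{i=1}^{n}f_i\otimes g_i$ in a representation whose $\{g_i\}$ are linearly independent (regroup over a basis of their span). The elementary tensor lemma---if $\{g_i\}$ are independent and $\sum_i f_i\otimes g_i=0$ then every $f_i=0$, which follows from the basis construction of Section~\ref{sec:tensor}---guarantees some $f_j\neq 0$. Pick $v$ with $f_j(v)\neq 0$ and a functional $\phi\in(\mathcal{V}')^{*}$ with $\phi(f_j(v))\neq 0$ (a nonzero vector is separated by a coordinate functional). Then $\sum_i\phi(f_i(v))\,g_i$ is a nontrivial combination of the independent $g_i$, hence nonzero, so $\sum_i\phi(f_i(v))\,g_i(w)\neq 0$ for some $w$. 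Applying the linear map $\mathcal{V}'\otimes\mathcal{W}'\to\mathcal{W}'$ induced (again by universality) from the bilinear assignment $(a,b)\mapsto\phi(a)\,b$ to $\theta(T)(v\otimes w)=\sum_i f_i(v)\otimes g_i(w)$ recovers exactly $\sum_i\phi(f_i(v))\,g_i(w)\neq 0$, so $\theta(T)(v\otimes w)\neq 0$ and $T\notin\ker\theta$; hence $\ker\theta=\{0\}$ and $\theta$ is injective with no dimensional assumption.

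Finally, assuming $\mathcal{V},\mathcal{W}$ finite-dimensional, I would prove surjectivity explicitly. Fix bases $\{e_i\}$, $\{d_j\}$ with dual functionals $\{e_i^{*}\}$, $\{d_j^{*}\}$ (Section~\ref{sec:lf}), so $\{e_i\otimes d_j\}$ is a basis of $\mathcal{V}\otimes\mathcal{W}$. For any $H\in\mathcal{L}(\mathcal{V}\otimes\mathcal{W},\mathcal{V}'\otimes\mathcal{W}')$ each $H(e_i\otimes d_j)$ is a finite sum $\sum_k a_{ijk}\otimes b_{ijk}$ of decomposables; letting $f_{ijk}$ be the map $v\mapsto e_i^{*}(v)\,a_{ijk}$ and $g_{ijk}$ the map $w\mapsto d_j^{*}(w)\,b_{ijk}$, and setting $T=\sum_{i,j,k}f_{ijk}\otimes g_{ijk}$, one computes $\theta(T)(e_p\otimes d_q)=\sum_k a_{pqk}\otimes b_{pqk}=H(e_p\otimes d_q)$ for all $p,q$, whence $\theta(T)=H$. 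Together with injectivity this yields the isomorphism, and the argument uses only finite-dimensionality of $\mathcal{V}$ and $\mathcal{W}$, exactly as stated; if $\mathcal{V}',\mathcal{W}'$ happen to be finite-dimensional as well, one may instead match dimensions via Theorem~\ref{thm:dim} and the dimension formula for tensor products (Section~\ref{sec:tensor}) and deduce bijectivity from injectivity alone. I expect the injectivity step to be the main obstacle: it is where the structure of the tensor product genuinely enters, through the independence lemma and the choice of a separating functional over an arbitrary field, whereas the two universality constructions and the finite-dimensional surjectivity are essentially bookkeeping.
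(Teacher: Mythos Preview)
Your argument is correct. The two applications of universality to construct $\theta$, the kernel computation for injectivity (writing $T$ with independent $g_i$, choosing a separating functional on $\mathcal{V}'$, and pushing through the induced map $\mathcal{V}'\otimes\mathcal{W}'\to\mathcal{W}'$), and the explicit preimage construction for surjectivity all go through as written. The only mild caution is that the existence of a separating functional on an arbitrary vector space over an arbitrary field appeals to a basis, hence to the axiom of choice in the infinite-dimensional case; this is standard and consistent with the paper's conventions.

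As for comparison: the paper does not supply a proof of this theorem at all---it is stated as background and the reader is referred to Ref.~\cite{Roman}. Your proof is essentially the one found there, and in particular your injectivity argument is the standard route. Your direct surjectivity construction is a nice touch, since it works under exactly the hypothesis stated (only $\mathcal{V}$ and $\mathcal{W}$ finite-dimensional), whereas the dimension-count alternative you mention at the end requires all four spaces to be finite-dimensional.
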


The tensor product of two linear mappings has often been treated as a linear mapping on the tensor product, and Theorem~\ref{thm:emiso} gives it a sound justification. By applying Theorem~\ref{thm:emiso} and a field being a vector space itself, the correspondence between several kinds of mappings and spaces can be established; interested readers can refer to Ref.~\cite{Roman}. 

\subsection{Multilinear Mappings and Tensor Product}
\begin{defi}[\textbf{Multilinear mapping}]
Suppose $\mathcal{V}_i$, $i=1,\cdots N$ and $\mathcal{W}$ are vector spaces over the same field $\mathbb{F}$. A mapping $f:\mathcal{V}_1\times \mathcal{V}_2\cdots\times\mathcal{V}_N\rightarrow \mathcal{W}$ is said to be a multilinear mapping if it's linear in each argument. If $\mathcal{W}$ is the base field $\mathbb{F}$, $f$ is called a multilinear form. 
\end{defi}
\noindent Clearly a bilinear mapping is also a multilinear mapping. Treating an $n\times n$ matrix as $n$ vectors in $\mathbb{C}^n$, a determinant is a multilinear form.

By the same token, the tensor product of multiple vector spaces can be also defined by universality: The tensor product space is universal for multilinearity, that is, any multilinear mapping is a composition of the tensor product and a linear mapping (mediating morphism), as discussed in Section~\ref{sec:tensor}. Please read Ref.~\cite{Roman} for details.

Tensor product of multiple vector spaces has two important properties:
\begin{thm}
Suppose $\mathcal{V}_i$, $i=1,\cdots n$ and $\mathcal{W}_j$, $j=1,\cdots m$ are vector spaces over the same field. Then,
\begin{description}
\item[Associativity:] $(\mathcal{V}_1\otimes\cdots\otimes \mathcal{V}_n)\otimes (\mathcal{W}_1\otimes \cdots \otimes\mathcal{W}_m)$ and $\mathcal{V}_1\otimes\cdots\otimes \mathcal{V}_n\otimes \mathcal{W}_1\otimes \cdots \otimes\mathcal{W}_m$ are isomorphic.
\item[Commutativity:] For any permutation $\sigma$ of $\{1,\cdots,n\}$, $\mathcal{V}_1\otimes\cdots\otimes \mathcal{V}_n$ and $\mathcal{V}_{\sigma(1)}\otimes\cdots\otimes \mathcal{V}_{\sigma(n)}$ are isomorphic.
\end{description}
\end{thm}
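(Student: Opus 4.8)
The plan is to rely entirely on the universal property (Definition~\ref{def:tensor}) together with the fact, already noted in the excerpt, that universal pairs are unique up to isomorphism. The guiding principle throughout is: if two vector spaces, each carrying a tensor-type map, both solve the \emph{same} universal problem for multilinear mappings out of one fixed Cartesian product, then they must be isomorphic. This lets me avoid any coordinate computation and reduces both claims to exhibiting the right universal pairs (for associativity) or the right mediating morphisms (for commutativity).

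For associativity, I would set $P := (\mathcal{V}_1\otimes\cdots\otimes \mathcal{V}_n)\otimes (\mathcal{W}_1\otimes \cdots \otimes\mathcal{W}_m)$ and define
\[
p(v_1,\ldots,v_n,w_1,\ldots,w_m):=(v_1\otimes\cdots\otimes v_n)\otimes(w_1\otimes\cdots\otimes w_m),
\]
a map from $\mathcal{V}_1\times\cdots\times\mathcal{V}_n\times\mathcal{W}_1\times\cdots\times\mathcal{W}_m$ into $P$. This $p$ is multilinear, since each inner tensor map is multilinear in its own arguments and the outer tensor map is bilinear. The heart of the argument is to verify that $(P,p)$ is itself a universal pair for multilinear mappings on that Cartesian product. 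Given any multilinear $f$ into an arbitrary space $\mathcal{X}$, I would build the mediating morphism $\tau_f$ in two stages: first, for each fixed $w_1,\ldots,w_m$ the map $(v_1,\ldots,v_n)\mapsto f(v_1,\ldots,v_n,w_1,\ldots,w_m)$ is multilinear and hence factors uniquely through $\mathcal{V}_1\otimes\cdots\otimes \mathcal{V}_n$; second, one checks the resulting assignment depends multilinearly on $(w_1,\ldots,w_m)$ and so factors through the outer tensor product, producing a linear map on $P$. Uniqueness is immediate because tensors of the form $(v_1\otimes\cdots\otimes v_n)\otimes(w_1\otimes\cdots\otimes w_m)$ span $P$. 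Since $\mathcal{V}_1\otimes\cdots\otimes \mathcal{V}_n\otimes \mathcal{W}_1\otimes \cdots \otimes\mathcal{W}_m$ solves the same universal problem by definition, uniqueness of universal pairs yields the isomorphism.

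For commutativity, I would construct the isomorphism explicitly. The assignment $(v_1,\ldots,v_n)\mapsto v_{\sigma(1)}\otimes\cdots\otimes v_{\sigma(n)}$ is a multilinear map from $\mathcal{V}_1\times\cdots\times\mathcal{V}_n$ into $\mathcal{V}_{\sigma(1)}\otimes\cdots\otimes \mathcal{V}_{\sigma(n)}$, so by universality it induces a unique linear map $\Phi_\sigma$ sending $v_1\otimes\cdots\otimes v_n$ to $v_{\sigma(1)}\otimes\cdots\otimes v_{\sigma(n)}$. Running the identical construction with $\sigma^{-1}$ gives a linear map in the opposite direction; composing the two reproduces the identity on every decomposable tensor, and since these span the whole space and the induced map is unique, each composite is the identity operator. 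Hence $\Phi_\sigma$ is an isomorphism.

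I expect the main obstacle to lie in the staged-factoring step of the associativity argument --- specifically, confirming that after factoring out the $v$-multilinearity the residual dependence on $(w_1,\ldots,w_m)$ is genuinely multilinear and well defined, so that the second factorization through the outer tensor product is legitimate and the two stages assemble into a single \emph{linear} map on $P$. The commutativity half is then routine, resting only on the observation that two linear maps agreeing on the spanning set of decomposable tensors must coincide.
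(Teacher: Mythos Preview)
The paper does not actually prove this theorem; it is stated without proof and the reader is referred to Ref.~\cite{Roman}. Your proposal is correct and is precisely the standard universal-property argument one finds in that reference: show that the iterated tensor product, equipped with the obvious map on decomposables, satisfies the same universal property as the one-shot $(n+m)$-fold tensor product, then invoke uniqueness of universal pairs; for commutativity, induce mutually inverse linear maps from the multilinear ``shuffle'' maps. The one point you rightly flag as delicate---that the mediating morphism obtained after factoring out the $v$-variables depends multilinearly on $(w_1,\ldots,w_m)$---is handled by the \emph{uniqueness} clause of the universal property: both $g_{w_1,\ldots,aw_j+bw_j',\ldots,w_m}$ and $a\,g_{\ldots,w_j,\ldots}+b\,g_{\ldots,w_j',\ldots}$ are linear maps on $\mathcal{V}_1\otimes\cdots\otimes\mathcal{V}_n$ agreeing on decomposables, hence equal. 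With that in hand your two-stage factorization (really three: through $\mathcal{V}_1\otimes\cdots\otimes\mathcal{V}_n$, then through $\mathcal{W}_1\otimes\cdots\otimes\mathcal{W}_m$, then through the outer bilinear product) goes through cleanly.
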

\noindent Hence, the orders of tensor product don't matter---They're all the same, up to isomorphism. There also exist embeddings (and isomorphisms) like Theorem~\ref{thm:emiso} for tensor product of multiple spaces, but I won't list them here; again please refer to Ref.~\cite{Roman}.

\subsection{Hilbert Space Tensor Product}\label{sec:HSTP}
So far the construction of tensor product has been among ``vanilla'' vector spaces, that is those without extra structures. Now let's shift our focus to inner product spaces. The following theorem allows us to have a ``nice'' inner product on the tensor product of two inner product spaces \cite{HallQ}:
\begin{thm}\label{thm:HTP}
Let $\mathcal{V}$ and $\mathcal{W}$ be inner product spaces, with inner products $(\cdot|\cdot)_\mathcal{V}$ and $(\cdot|\cdot)_\mathcal{W}$. Then there exists a unique inner product on $\mathcal{V} \otimes \mathcal{W}$ such that
\begin{equation}
(v_1\otimes w_1|v_2\otimes w_2)=(v_1| v_2)_\mathcal{V} (w_1|w_2)_\mathcal{W}\label{eq:OQ}
\end{equation}
for all $v_1,v_2\in\mathcal{V}$ and $w_1,w_2\in\mathcal{W}$.
\end{thm}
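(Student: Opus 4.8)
The plan is to separate uniqueness from existence, since essentially all the work lies in existence. For uniqueness, recall from the construction of $\mathcal{V}\otimes\mathcal{W}$ that the decomposable tensors $v\otimes w$ span the entire space. Any inner product is sesquilinear — conjugate-linear in its first argument and linear in its second, by Definition~\ref{def:inn} — so its value on every finite linear combination of decomposable tensors is completely fixed by its values on the decomposables. Hence two inner products both satisfying \eqref{eq:OQ} must agree on all of $\mathcal{V}\otimes\mathcal{W}$, giving uniqueness at once.

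For existence, the first and most delicate step is to produce a \emph{well-defined} sesquilinear form $B$ obeying \eqref{eq:OQ}; the hazard is that a tensor admits many representations as a sum of decomposables, so a formula written in terms of one representation might depend on the choice. I would dispose of this hazard with the universal property (Definition~\ref{def:tensor}). Fixing $v_1,w_1$, the map $(v_2,w_2)\mapsto (v_1|v_2)_{\mathcal{V}}(w_1|w_2)_{\mathcal{W}}$ is bilinear — linearity in each slot comes from linearity of each inner product in its second argument — so universality hands back a unique linear functional $\Phi_{v_1,w_1}$ on $\mathcal{V}\otimes\mathcal{W}$ with $\Phi_{v_1,w_1}(v_2\otimes w_2)=(v_1|v_2)_{\mathcal{V}}(w_1|w_2)_{\mathcal{W}}$. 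Letting $v_1,w_1$ range, the assignment $(v_1,w_1)\mapsto\Phi_{v_1,w_1}(y)$ is conjugate-linear in each of $v_1$ and $w_1$ for every fixed $y$; applying universality a second time to its complex conjugate yields a form $B(x,y)$ that is linear in $y$, conjugate-linear in $x$, and reduces to \eqref{eq:OQ} on decomposables. The decisive feature is that representation-independence is enforced automatically by the \emph{uniqueness} clause of the universal property, so no separate consistency check is required.

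It remains to verify that $B$ meets Definition~\ref{def:inn}. Conjugate symmetry $B(x,y)^*=B(y,x)$ I would check first on decomposables, where it is immediate from $(v_1|v_2)_{\mathcal{V}}^*=(v_2|v_1)_{\mathcal{V}}$ and the analogous identity in $\mathcal{W}$, and then extend by sesquilinearity. The substantive axiom, positive-definiteness, is the main obstacle, and here the bilinear-form machinery alone is insufficient — the inner-product structure of the factors must be exploited. I would write an arbitrary $x\in\mathcal{V}\otimes\mathcal{W}$ as a finite sum $\sum_k v_k\otimes w_k$, take an orthonormal basis $\{e_i\}$ (via Gram--Schmidt, Section~\ref{sec:dual}) of the finite-dimensional subspace of $\mathcal{W}$ spanned by the $w_k$, and re-expand to obtain $x=\sum_i v_i'\otimes e_i$. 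Then $B(x,x)=\sum_{i,j}(v_i'|v_j')_{\mathcal{V}}(e_i|e_j)_{\mathcal{W}}=\sum_i ||v_i'||_{\mathcal{V}}^2\ge 0$, with equality forcing every $v_i'=0$; since $\{e_i\}$ is linearly independent this yields $x=0$, establishing properties~4 and~5 of Definition~\ref{def:inn}. The orthonormality of $\{e_i\}$ is precisely what collapses the double sum into a sum of squares, so rewriting $x$ as a one-sided orthonormal expansion is the key manoeuvre of the whole argument.
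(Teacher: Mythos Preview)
The paper does not supply a proof of this theorem; it is stated with a citation to Hall's textbook \cite{HallQ} and used thereafter as a black box. Your argument is correct and is essentially the standard one found in such references: well-definedness via two applications of the universal property (with a conjugation to convert the conjugate-bilinear dependence on the first slot into genuine bilinearity so that Definition~\ref{def:tensor} applies), followed by the one-sided orthonormal rewriting to collapse $B(x,x)$ into a sum of squared norms. One cosmetic remark: in the final step you cite linear independence of $\{e_i\}$ to pass from $v_i'=0$ for all $i$ to $x=0$, but that implication is immediate from $0\otimes e_i=0$; linear independence would only be needed for the converse, which your argument does not use.
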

\noindent Given any two inner product spaces $\mathcal{V}$ and $\mathcal{W}$, $\mathcal{V}\otimes\mathcal{W}$ is always assumed to be an inner product space with such an inner product.

For two Hilbert spaces $\mathcal{H}_1$ and $\mathcal{H}_2$, if they're infinite-dimensional, the tensor product $\mathcal{H}_1\otimes \mathcal{H}_2$ in general is incomplete. To remedy this, define the \textbf{Hilbert space inner product} of $\mathcal{H}_1$ and $\mathcal{H}_2$ as the completion of $\mathcal{H}_1\otimes \mathcal{H}_2$ with respect to the inner product in Theorem~\ref{thm:HTP}, denoted by $\mathcal{H}_1\hat{\otimes} \mathcal{H}_2$. This space has an orthonormal basis obtained from those of the constituent spaces \cite{HallQ}:
\begin{thm}\label{thm:bah}
Let $\{a_i\}$ and $\{b_j\}$ be orthonormal bases of Hilbert spaces $\mathcal{H}_1$ and $\mathcal{H}_2$, respectively. Then $\{a_i\otimes b_j\}$ is an orthonormal basis of $\mathcal{H}_1\hat{\otimes} \mathcal{H}_2$.
\end{thm}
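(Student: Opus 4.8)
The plan is to verify the two defining properties of an orthonormal basis in turn: first that $\{a_i\otimes b_j\}$ is orthonormal, and then that it is complete (maximal). For orthonormality I would invoke the characterizing inner product of Theorem~\ref{thm:HTP} directly: since the inner product on $\mathcal{H}_1\hat{\otimes}\mathcal{H}_2$ restricts to that on the algebraic tensor product $\mathcal{H}_1\otimes\mathcal{H}_2$,
\begin{equation}
(a_i\otimes b_j|a_k\otimes b_l)=(a_i|a_k)_{\mathcal{H}_1}(b_j|b_l)_{\mathcal{H}_2}=\delta_{ik}\delta_{jl},
\end{equation}
using that $\{a_i\}$ and $\{b_j\}$ are each orthonormal. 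In particular each $a_i\otimes b_j$ is a unit vector and distinct ones are orthogonal.

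For completeness, rather than checking maximality directly I would use the equivalences in the Riesz-Fischer theorem (Theorem~\ref{thm:rf}): it suffices to show that the closed span $\mathrm{cspan}(\{a_i\otimes b_j\})$ is all of $\mathcal{H}_1\hat{\otimes}\mathcal{H}_2$. By the definition of $\hat{\otimes}$ as the completion of the algebraic tensor product $\mathcal{H}_1\otimes\mathcal{H}_2$, that algebraic tensor product is dense, and it is spanned by the decomposable tensors $v\otimes w$. Hence the whole problem reduces to showing that every decomposable tensor $v\otimes w$ lies in $\mathrm{cspan}(\{a_i\otimes b_j\})$.

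To do this I would approximate $v$ and $w$ separately by their Fourier expansions. Writing $v_M$ and $w_N$ for the partial sums of $v=\sum_i (a_i|v)a_i$ and $w=\sum_j (b_j|w)b_j$ (each converging in its own factor by Theorem~\ref{thm:rf}, with only countably many nonzero coefficients), the finite vector $v_M\otimes w_N=\sum_{i\le M,\,j\le N}(a_i|v)(b_j|w)\,a_i\otimes b_j$ lies in the span of $\{a_i\otimes b_j\}$. Using the telescoping identity $v\otimes w-v_M\otimes w_N=(v-v_M)\otimes w+v_M\otimes(w-w_N)$ together with the multiplicativity of the norm $\|x\otimes y\|=\|x\|\,\|y\|$ (immediate from Theorem~\ref{thm:HTP}) and the triangle inequality, I would bound
\begin{equation}
\|v\otimes w-v_M\otimes w_N\|\le \|v-v_M\|\,\|w\|+\|v_M\|\,\|w-w_N\|.
\end{equation}
Since $\|v-v_M\|\to 0$, $\|w-w_N\|\to 0$, and $\|v_M\|$ stays bounded (indeed $\|v_M\|\to\|v\|$ by Bessel's inequality), the right-hand side tends to $0$, so $v\otimes w$ is a limit of elements of the span and therefore lies in its closure.

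Combining the two parts gives $\mathrm{cspan}(\{a_i\otimes b_j\})=\mathcal{H}_1\hat{\otimes}\mathcal{H}_2$, whence by Theorem~\ref{thm:rf} the orthonormal set $\{a_i\otimes b_j\}$ is an orthonormal basis. The main obstacle I anticipate is purely the bookkeeping in the completed space: making the double-limit argument clean when the index sets may be uncountable (only countably many coefficients survive for each fixed $v$ and $w$, so one reduces to ordinary sequential limits), and taking care that all convergence is measured in the $\hat{\otimes}$-norm rather than factorwise. The algebraic facts---orthonormality and the norm identity---are immediate consequences of Theorem~\ref{thm:HTP}.
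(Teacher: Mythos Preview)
The paper does not actually prove this theorem: it is stated with a citation to Ref.~\cite{HallQ} and left without proof. Your argument is correct and is the standard one---verify orthonormality directly from Theorem~\ref{thm:HTP}, then use density of the algebraic tensor product in the completion together with the factorwise Fourier approximations and the norm identity $\|x\otimes y\|=\|x\|\,\|y\|$ to obtain $\mathrm{cspan}(\{a_i\otimes b_j\})=\mathcal{H}_1\hat{\otimes}\mathcal{H}_2$, concluding via Theorem~\ref{thm:rf}. There is nothing to compare against in the paper itself.
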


\subsection{Tensor Product of Functions}\label{sec:tf}
Suppose $X$ and $Y$ are two sets, and let $\mathcal{V}_X$ and $\mathcal{V}_Y$ be the spaces of functions from $X$ to $\mathbb{C}$ and from $Y$ to $\mathbb{C}$, respectively. We can define a mapping $\pi$ on $\mathcal{V}_X \times \mathcal{V}_Y$ such that
\begin{equation}
\pi(f,g)(x,y):=f(x)g(y) \text{ for all }x\in X,\,y\in Y,\, f\in\mathcal{V}_X,\, g\in\mathcal{V}_Y.\label{eq:pifg}
\end{equation}
$\pi$ is bilinear over $\mathbb{C}$ because multiplication of two complex numbers is bilinear; explicitly: 
\begin{align*}
\pi(c_1 f_1+c_2 f_2,g)(x,y)&=(c_1 f_1+c_2f_2)(x)g(y)\\
&=c_1 f_1(x)g(y)+c_2 f_2(x)g(y)\\
&=[c_1 \pi(f_1,g)+c_2\pi(f_2,g)](x,y), \text{ for all }x\in X,\,y\in Y,\,c_1,c_2\in\mathbb{C},\,f_1,f_2\in\mathcal{V}_X,
\end{align*}
implying $\pi(c_1 f_1+c_2 f_2,g)$ and $c_1 \pi(f_1,g)+c_2\pi(f_2,g)$ are identical; same for the second argument. We can therefore construct a mediating morphism $\tau_\pi (f\otimes g):=\pi(f,g)$, as the tensor product is the most fundamental bilinear mapping.

Having see how the tensor product can define new functions, here comes a theorem relevant to quantum mechanics \cite{HallQ}:
\begin{thm}\label{thm:L2}
Suppose $(X_1,\mu_1)$ and $(Y,\mu_2)$ be $\sigma$-finite measure spaces. Then $L^2(X\times Y,\mu_1\times \mu_2)$ and $L^2(X,\mu_2) \hat{\otimes} L^2(Y,\mu_2)$ are isomorphic, where $\pi$ as defined in \eqref{eq:pifg} is a unitary mapping.
\end{thm}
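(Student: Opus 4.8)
The plan is to identify the natural candidate isomorphism, verify that it is an isometry on the algebraic tensor product, and then promote it to a unitary between the completions by showing its range is dense. Since $\pi$ as defined in \eqref{eq:pifg} is bilinear over $\mathbb{C}$ (as the excerpt already verifies), the universality of the tensor product (Definition~\ref{def:tensor}) furnishes a unique linear mediating morphism $\tau_\pi$ on $L^2(X,\mu_1)\otimes L^2(Y,\mu_2)$ with $\tau_\pi(f\otimes g)=\pi(f,g)$. The goal is to show this $\tau_\pi$ extends to a unitary onto $L^2(X\times Y,\mu_1\times\mu_2)$.

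First I would check that $\tau_\pi$ preserves inner products on decomposable tensors. For $f_1,f_2\in L^2(X,\mu_1)$ and $g_1,g_2\in L^2(Y,\mu_2)$, writing out the $L^2(X\times Y)$ inner product of $\pi(f_1,g_1)$ and $\pi(f_2,g_2)$ yields the integral of $\overline{f_1(x)g_1(y)}\,f_2(x)g_2(y)$ against $\mu_1\times\mu_2$. Here the $\sigma$-finiteness hypothesis is exactly what is needed: by Fubini--Tonelli the integrand is jointly integrable (Cauchy--Schwarz in each variable controls it) and the double integral factors as $\left(\int_X \overline{f_1}f_2\,d\mu_1\right)\left(\int_Y \overline{g_1}g_2\,d\mu_2\right)=(f_1|f_2)(g_1|g_2)$. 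This is precisely the defining inner product on $L^2(X)\hat{\otimes}L^2(Y)$ from \eqref{eq:OQ} in Theorem~\ref{thm:HTP}. Hence $\tau_\pi$ is inner-product-preserving on decomposable tensors, and by sesquilinearity on all finite linear combinations, so it is a linear isometry; since $L^2(X\times Y)$ is complete, it extends by continuity to an isometry on the completion $L^2(X)\hat{\otimes}L^2(Y)$.

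It remains to prove surjectivity, which I expect to be the main obstacle. The cleanest route is through orthonormal bases: fix orthonormal bases $\{a_i\}$ of $L^2(X,\mu_1)$ and $\{b_j\}$ of $L^2(Y,\mu_2)$, so that $\{a_i\otimes b_j\}$ is an orthonormal basis of $L^2(X)\hat{\otimes}L^2(Y)$ by Theorem~\ref{thm:bah}. The isometry maps these to the functions $(x,y)\mapsto a_i(x)b_j(y)$, which the Fubini computation above shows are orthonormal in $L^2(X\times Y)$. By Riesz--Fischer (Theorem~\ref{thm:rf}) it then suffices to show that this orthonormal set is \emph{complete}, i.e.\ maximal: if $h\in L^2(X\times Y)$ is orthogonal to every $a_i(x)b_j(y)$, then $h=0$.

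To close this, I would run a two-stage Fubini argument. For fixed $j$ set $H_j(x):=\int_Y \overline{b_j(y)}\,h(x,y)\,d\mu_2(y)=(b_j\,|\,h(x,\cdot))$, which is defined for $\mu_1$-almost every $x$ because $h(x,\cdot)\in L^2(Y)$ for a.e.\ $x$ (a consequence of $h\in L^2$ together with Fubini). The orthogonality hypothesis reads $(a_i\,|\,H_j)=0$ for all $i$, so completeness of $\{a_i\}$ forces $H_j=0$ a.e.\ for each $j$; equivalently, for a.e.\ $x$ the slice $h(x,\cdot)$ is orthogonal to every $b_j$, whence $h(x,\cdot)=0$ in $L^2(Y)$ for a.e.\ $x$ by completeness of $\{b_j\}$, giving $h=0$. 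The delicate points here---interchanging the order of integration, the almost-everywhere existence of the slices, and marshalling the countably many nonzero Fourier coefficients so that the exceptional null sets may be unioned---are all underwritten by $\sigma$-finiteness, and this measure-theoretic bookkeeping is where the real work lies; the algebraic and isometric parts are routine once Theorems~\ref{thm:HTP} and \ref{thm:bah} are in hand.
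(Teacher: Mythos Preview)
The paper does not actually prove this theorem; it is stated with a citation to Hall's \emph{Quantum Theory for Mathematicians} \cite{HallQ} and then used without further argument. So there is no in-paper proof to compare against.

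That said, your proposal is correct and is the standard argument one finds in references like \cite{HallQ}: factor the inner product on decomposable tensors via Fubini--Tonelli (this is where $\sigma$-finiteness enters), extend the resulting isometry to the completion, and then establish surjectivity by showing that the image of a product orthonormal basis $\{a_i\otimes b_j\}$ is complete in $L^2(X\times Y)$ through the two-stage slicing argument. One small point worth making explicit in your write-up is that $H_j\in L^2(X,\mu_1)$, which follows from $|H_j(x)|^2\le \|h(x,\cdot)\|_{L^2(Y)}^2$ and Fubini; without this you cannot invoke completeness of $\{a_i\}$ against $H_j$. Otherwise the bookkeeping you flag (a.e.\ slices, countable unions of null sets) is exactly right.
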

Beware of the Hilbert space tensor product. As the spaces are isomorphic, the mediating morphism is usually ignored, and the tensor product of two $L^2$-functions $f\otimes g$ is regarded as a function on $X\times Y$; nevertheless it's worthwhile to know the reason behind such an identification.

The mapping $\pi$ gives us a function on $X\times Y$. If $X$ and $Y$ are vector spaces, then $X\times Y$ can be identified with $X\oplus Y$. For example, if $f$ and $g$ are complex functions on $\mathbb{R}^{n_1}$ and $\mathbb{R}^{n_2}$, then $f\otimes g$ can be considered a complex function on $R^{n_1+n_2}$. By the theorem above (and the fact that Lebesgue measure $m^{n_1+n_2}$ is the completion of $m^{n_1}\times m^{n_2}$ \cite{PapaRudin}), $L^2(\mathbb{R}^{n_1+n_2})$ and $L^2(\mathbb{R}^{n_1})\hat{\otimes}L^2(\mathbb{R}^{n_2})$ are isomorphic Hilbert spaces.

\section{Quantum States and Dirac Notation}
\label{sec:qstate}
\subsection{Quantum Mechanics}
Here I will list some axioms of quantum mechanics. The axioms on probabilities and measurements won't be included.
\begin{axi}
The state of a quantum system is described by a nonzero vector in a complex Hilbert space $\mathcal{H}$. Two vectors $\psi_2$ and $\psi_1$ are considered the same state if there exists a nonzero scalar $c$ such that $\psi_2=c\psi_1$. 
\end{axi}
Therefore, a quantum state is actually an equivalent class of vectors. We usually (if not always) normalize the vectors such that they have norm 1. If so, then two vectors $\psi_2$ and $\psi_1$ belong to the same state if there's a scalar $c$ with $|c|=1$ such that $\psi_2=c\psi_1.$
\begin{axi}
The evolution of a quantum state in a (closed) system is governed by a one-parameter strongly continuous \cite{HallQ} unitary transformation, with time $t$ being the parameter. The associated infinitesimal generator is called the Hamiltonian.
\end{axi}
By Stone's theorem, a one-parameter strongly continuous unitary transformation $U(t)$ can always be expressed in terms of an exponential map, $e^{itA}$, where $A$ is called the infinitesimal generator, and it's a self-adjoint operator \cite{HallQ,Einsiedler}. Since quantum state vectors are normalized, by unitarity a quantum state always stays on the $1$-sphere as it evolves.

\begin{axi}\label{ax:ob}
Quantum observables are self-adjoint operators on $\mathcal{H}$. To each real function on a classical phase space there corresponds a self-adjoint operator. 
\end{axi}

Self-adjointness of quantum observables can be considered both an assumption and a necessity: Spectral theorem for normal operators ensures orthogonality of disjoint spectral spaces, so that states/vectors belong to distinct eigenspaces\footnote{If the Hilbert space is infinite-dimensional, then in general we should consider the spectral space, unless the spectrum is discrete; see Ref.~\cite{Einsiedler,HallQ,GrandpaRudin}.} can be distinguished from one another by inner product, and the spectrum of a self-adjoint operator is in the real line. Additionally, if an operator (on a finite-dimensional Hilbert space) has orthogonal eigenspaces and real eigenvalues, then it must be self-adjoint.

The process of converting a real function on a phase space to a self-adjoint operator is called quantization \cite{HallQ,deGosson}. For example, the $i$-th position coordinate function on the phase space $\mathbb{R}^{2n}$ becomes the position operator $x_i$; the $i$-th component of the angular momentum becomes the $L_i$ angular momentum operator. 

Quantization of $x_i$ and $p_i$ can also be thought of as finding an irreducible unitary representation of the Heisenberg group (and algebra); the uniqueness (up to unitary transformation) of such a representation is ensured by Stone-von Neumann theorem. The irreducible representation is exactly $L^2(\mathbb{R}^n)$---hence we describe the motion of a particle moving in $\mathbb{R}^n$ as a state in $L^2(\mathbb{R}^n)$ \cite{HallQ,Woit,deGosson}. We'll have a short discussion about these operators in Section~\ref{sec:pomo}, and will talk about a quantization scheme in Section~\ref{sec:Weyl}.

\begin{axi}
Let $\mathcal{H}_A$ and $\mathcal{H}_B$ be the spaces describing two quantum systems. The state of the composite system is in the Hilbert space $\mathcal{H}_A\hat{\otimes}\mathcal{H}_B$.
\end{axi}

The inner product on $\mathcal{H}_A\hat{\otimes}\mathcal{H}_B$ is the one presented in Theorem~\ref{thm:HTP}. It should be emphasized (again) that it's a Hilbert space tensor product, rather than a mere tensor product. This axiom should be compared with Theorem~\ref{thm:L2}---As the state of a particle moving in $\mathbb{R}^n$ is a nonzero vector in $L^2(\mathbb{R}^n)$, the state of two moving particle is described by a nonzero vector in $L^2(\mathbb{R}^{n+n})$.

\subsection{Density Operator}\label{sec:deop}
The state of a system can't be always described by a ``wave function.'' For example, to portray the state of system A for a state vector in $\mathcal{H}_A\hat{\otimes}\mathcal{H}_B$ in general can't be achieved by a vector in $\mathcal{H}_A$. Hence comes the following concept \cite{HallQ}:
\begin{defi}
A linear functional $\Phi: \mathcal{B}(\mathcal{H})\rightarrow \mathbb{C}$ is called \textbf{a family of expectation values} if
\begin{enumerate}
\item $\Phi(I)=1$.
\item $\Phi(O)$ is real if $O\in\mathcal{B}(\mathcal{H})$ is self-adjoint.
\item $\Phi(O)\geq 0$ if $O$ is a positive operator.
\item Continuity with respect to strong convergence: For any sequence $\{O_n\}$ of operators in $\mathcal{B}(\mathcal{H})$, if $||O_n v-Ov||\rightarrow 0$ for any $v\in\mathcal{H}$, then $\Phi(O_n)\rightarrow \Phi(O)$.
\end{enumerate}
\end{defi}  
The motivation is that given an observable the state has a corresponding expectation value, and the four properties above are reasonable requirements of it.
\begin{defi}\label{def:den}
An operator $\rho\in\mathcal{B}(\mathcal{H})$, where $\mathcal{H}$ is a complex Hilbert space, is called a density operator if it it positive and has trace 1. It's also called a density matrix, state matrix, or state operator.
\end{defi}
A density matrix may also refer to the matrix isomorphism of a density operator with respect to an orthonormal basis. By definition a density operator is a trace class operator; see Section~\ref{sec:HSinner} for more properties of such class of operators. Families of expectation values and density operators are in fact two sides of the same coin:
\begin{thm}\label{thm:fev}
For any family of expectation values $\Phi$, there's always a unique density operator $\rho$ such that $\Phi(O)=\mathrm{tr}(\rho O)=\mathrm{tr}(O\rho)$ for all $O\in\mathcal{B}(\mathcal{H})$.
\end{thm}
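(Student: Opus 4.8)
The plan is to build the density operator explicitly from its matrix elements in a fixed orthonormal basis, and then verify the defining properties one at a time, invoking the four conditions defining a family of expectation values exactly where each is needed. Fix an orthonormal basis $\{\alpha_i\}$ of $\mathcal{H}$ and recall the rank-one operators $h_{i,j}=h_{\alpha_i,\alpha_j}$ from \eqref{eq:haa} and \eqref{eq:hii}. Since for any operator with matrix elements $\rho_{kl}=(\alpha_k|\rho\,\alpha_l)$ one has $\mathrm{tr}(\rho\,h_{i,j})=\rho_{ji}$, the only possible candidate is the operator whose matrix elements are $\rho_{kl}:=\Phi(h_{l,k})$. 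I would take this as the \emph{definition} of $\rho$ and then prove it is a legitimate density operator satisfying the representation.

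First I would establish positivity. For any finite linear combination $w=\sum_i c_i\alpha_i$ the operator $h_{w,w}=\sum_{i,j}c_j c_i^{*}\,h_{j,i}$ is positive, so property 3 gives $\Phi(h_{w,w})=\sum_{i,j}c_j c_i^{*}\,\rho_{ij}\ge 0$; hence every finite principal block $(\rho_{ij})_{i,j\le N}$ is positive semidefinite, and self-adjointness of these blocks follows (property 2 may also be used directly to get $\rho_{ij}=\rho_{ji}^{*}$). Next I would compute the trace: by \eqref{eq:I} the partial sums $\sum_{i\le N}h_{i,i}$ converge strongly to $I$, so property 4 together with linearity of $\Phi$ yields $\sum_{i\le N}\rho_{ii}=\Phi\!\left(\sum_{i\le N}h_{i,i}\right)\to\Phi(I)=1$. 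A positive matrix with summable diagonal equal to $1$ assembles into a genuine trace-class (hence bounded) operator $\rho$ with $\mathrm{tr}(\rho)=1$, so the candidate is a bona fide density operator.

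It then remains to verify $\Phi(O)=\mathrm{tr}(\rho O)$ on all of $\mathcal{B}(\mathcal{H})$. Let $\Pi_N$ be the orthogonal projection onto $\mathrm{span}(\alpha_1,\dots,\alpha_N)$. I would show $\Pi_N O\Pi_N\to O$ strongly via the splitting $\Pi_N O\Pi_N v-Ov=\Pi_N O(\Pi_N v-v)+(\Pi_N-I)Ov$, each term tending to $0$ because $\Pi_N\to I$ strongly and $O$ is bounded. Since $\Pi_N O\Pi_N=\sum_{i,j\le N}(\alpha_i|O\alpha_j)\,h_{i,j}$ is a finite combination, linearity of $\Phi$ and $\Phi(h_{i,j})=\rho_{ji}$ give $\Phi(\Pi_N O\Pi_N)=\sum_{i,j\le N}(\alpha_i|O\alpha_j)\rho_{ji}=\mathrm{tr}(\rho\,\Pi_N O\Pi_N)$. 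Letting $N\to\infty$ and applying property 4 on the left, while using continuity of $\mathrm{tr}(\rho\,\cdot)$ (valid because $\rho$ is trace class) on the right, produces $\Phi(O)=\mathrm{tr}(\rho O)=\mathrm{tr}(O\rho)$. Uniqueness is then immediate: if two density operators represent $\Phi$, their difference $\Delta$ satisfies $\mathrm{tr}(\Delta\,h_{i,j})=\Delta_{ji}=0$ for all $i,j$, forcing $\Delta=0$.

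The main obstacle lies entirely in the infinite-dimensional analysis: guaranteeing that the array $(\rho_{ij})$ genuinely assembles into a bounded, trace-class operator, and rigorously justifying the two limit interchanges (pushing $\Phi$ through $\sum_i h_{i,i}$, and through $\Pi_N O\Pi_N\to O$). Both rest squarely on property 4, the continuity with respect to strong convergence, which is precisely the hypothesis that substitutes for the automatic finiteness available in finite dimensions; in the finite-dimensional case every step above collapses to a one-line computation using \eqref{eq:OO} and \eqref{eq:I} with no convergence concerns at all.
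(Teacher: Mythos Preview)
The paper does not prove this theorem; it is stated as a standard result (the surrounding text cites Hall's \emph{Quantum Theory for Mathematicians}, where the argument appears) and immediately moves on to the axiom that follows from it. So there is no ``paper's own proof'' to compare against.

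Your argument is essentially the standard one and is correct in outline. Two small points would tighten it. First, when you invoke continuity of $O\mapsto\mathrm{tr}(\rho O)$ on the right-hand side, that functional is \emph{not} continuous for the strong operator topology on all of $\mathcal{B}(\mathcal{H})$; you need the uniform bound $\|\Pi_N O\Pi_N\|\le\|O\|$ together with dominated convergence along the spectral decomposition of $\rho$ (or the fact that trace-class functionals are ultraweakly continuous, which agrees with strong continuity on norm-bounded sets). Second, the sentence ``a positive matrix with summable diagonal equal to $1$ assembles into a genuine trace-class operator'' is true but is itself the nontrivial analytic step: one route is to note $|\rho_{ij}|^2\le\rho_{ii}\rho_{jj}$ from positivity of the $2\times2$ blocks, whence $\sum_{i,j}|\rho_{ij}|^2\le(\sum_i\rho_{ii})^2=1$, so the array defines a Hilbert--Schmidt (hence bounded) operator, and then positivity plus $\sum_i\rho_{ii}=1$ yield trace-class with unit trace. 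With these two clarifications your proof is complete.
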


Axiom 1 can be adjusted accordingly:
\begin{axi}
The state of a quantum system is described by a density operator $\rho\in\mathcal{B}(\mathcal{H})$. The expectation value of an observable $O$ on $\mathcal{H}$, if it's bounded, is equal to $\mathrm{tr}(\rho O)=\mathrm{tr}(O\rho)$.
\end{axi}
A density operator is often just called ``the state.'' Theorem~\ref{thm:fev} suggests that a quantum state can also be described by a (unique) family of expectation values; for example in Ref.~\cite{Stormer} it's the family of expectation values instead of the density operator that is referred to as ``the state.'' A quantum state $\rho$ is said to be pure if $\rho$ is an orthogonal projection onto a one-dimensional subspace, and is said to be mixed otherwise. Whereas there's ambiguity due to phase (or a nonzero constant if not normalized) when dealing with state vectors, this problem isn't present in density operators---A quantum state corresponds to exactly one density operator.

We will discuss more properties of density operators in Section~\ref{sec:HSinner}, after formally introducing trace-class operators.

\subsection{Dirac Notation}
In quantum mechanics, there are some conventions of notation, which are not always strictly followed \cite{HallQ}: 
\begin{itemize}
\item A vector in $\mathcal{H}$ is encased in a \textbf{ket}, for example $\ket{\psi},$ so
\begin{equation}
\ket{a\psi}=a\ket{\psi},\,a\in\mathbb{C}.
\end{equation}
\item The inner product between any two vectors $\ket{\psi_1}$ and $\ket{\psi_2}$ in $\mathcal{H}$ is denoted by $\inner{\psi_1}{\psi_2}$.
\item The continuous functional $l_\phi:\ket{\psi}\mapsto \inner{\phi}{\psi}$, where $\ket{\phi}$ can be any vector in $\mathcal{H}$, is enclosed in a \textbf{bra}:
\begin{equation}
\bra{\phi}:=l_{\phi}.
\end{equation}
Because the mapping $\phi\mapsto l_\phi$ is conjugate linear (Section~\ref{sec:lfun}), $\bra{a\phi}=a^*\bra{\phi}$. As Riesz-Fr\'{e}chet theorem (Theorem~\ref{thm:Riesz}) entails, the dual space of $\mathcal{H}$, $\mathcal{H}^*$, is composed of ``bras.''

\item For a linear mapping $T$ on $\mathcal{H}$, the vector $T\psi$ for $\psi\in\mathcal{H}$ is very often denoted by $T\ket{\psi}$. Likewise the inner product $\langle \psi_1|T\psi_2\rangle$ is often denoted by $\bracket{\psi_1}{T}{\psi_2}$. 
\item The mapping $\ket{\phi}\mapsto \ket{\psi_1} \langle \psi_2|\phi\rangle$, \eqref{eq:haa}, is denoted by
\begin{equation}
\ket{\psi_1}\bra{\psi_2}
\end{equation}

\item The tensor map $\otimes$ and identity operators are sometimes ignored: For $\psi_i\in\mathcal{H}_i$, $\ket{\psi_1}\ket{\psi_2}$ means $\ket{\psi_1}\otimes \ket{\psi_2}$, similar for ``bras.'' For a linear mapping $T_1$ on $\mathcal{H}_1$, $T_1\otimes I_{\mathcal{H}_2}$ is often written simply as $T_1$.

\end{itemize}

There are some notable mappings:
\begin{enumerate}
\item An orthogonal projection $\Pi$ onto a closed subspace whose orthonormal basis is $\{\ket{\alpha_i}\}$ is equal to
\begin{equation}
\Pi=\sum_i\ket{\alpha_i}\bra{\alpha_i}.
\end{equation}
In particular, if the subspace is one-dimensional then $\Pi=\ket{\alpha}\bra{\alpha}$, and if $\mathcal{H}$ has an orthonormal basis $\{\ket{\beta_i}\}$, the identity operator is equal to
\begin{equation}
I=\sum_i \ket{\beta_i}\bra{\beta_i};
\end{equation}
see Section~\ref{sec:id}.

\item $(\ket{\psi_1}\bra{\psi_2})(\ket{\psi_3}\bra{\psi_4})=\inner{\psi_2}{\psi_3}\ket{\psi_1}\bra{\psi_4}$, as $
h_{\psi_1,\psi_2}\circ h_{\psi_3,\psi_4}=(\psi_2|\psi_3)h_{\psi_1,\psi_4}.$

\item The adjoint of $\ket{\psi_1}\bra{\psi_2}$ is $\ket{\psi_2}\bra{\psi_1}$ because\footnote{We can also choose $\phi_1=\phi_2=\phi$, because if $(\phi|O\phi)=0$ for all $v$ then $O=0$ \cite{GrandpaRudin}.}
\begin{equation}
\Big(\phi_1\Big| \psi_1(\psi_2|\phi_2)\Big)=(\psi_2|\phi_2)(\phi_1|\psi_1)=\Big(\psi_2(\psi_1|\phi_1)\Big|\phi_2\Big)\text{ for any }\phi_1,\,\phi_2\in\mathcal{H}.
\end{equation} 

\item For $O\in\mathcal{B}(\mathcal{H})$ and $\ket{\psi_1},\ket{\psi_2}\in\mathcal{H}$, the mapping $\ket{\psi_1}\bra{\psi_2} O$, as a composition $(\ket{\psi_1}\bra{\psi_2})\circ O$, equals $ \ket{\psi_1}\bra{O^\dagger\psi_2}$ because for any $\ket{\phi}\in\mathcal{H}$
\begin{equation}
\ket{\psi_1}\bra{\psi_2} O_2\ket{\phi}=\ket{\psi_1}\inner{O^\dagger \psi_2}{\phi}=(\ket{\psi_1}\bra{O^\dagger\psi_2})\ket{\phi}.
\end{equation}
\end{enumerate}

\section{Group Actions and Representations}
We won't go through all the details and definitions for the material in this and the next section. Readers interested in these topics can refer to references such as Ref.~\cite{HallQ,Hall,Woit,Hilgert,Einsiedler}.
\subsection{Group Action}
\begin{defi}[\textbf{Group action}]
The group action of a group $G$ on a set $X$ is a mapping $\cdot: G\times X\rightarrow X$ such that 
\begin{equation}
g_1\cdot(g_2\cdot x)=(g_1 g_2)\cdot x \text{ for all }g_1,g_2\in G \text{ and }x\in X.
\end{equation}
Here $g_1 g_2$ denotes the group multiplication $g_1\cdot g_2$.
\end{defi}

If $G$ is a topological group (a group with a topology, for which multiplication and inversion are continuous), it's often required that the group action be continuous. An example of group action is $\mathbb{C}/\{0\}$, with multiplication of complex numbers as group multiplication, on a complex vector space $\mathcal{V}$, defined by\footnote{This action, being a linear mapping, also constitutes a representation as will be discussed in the next subsection---as trivial a representation as it is.}
\begin{equation}
c\cdot v:=cv,\,c\in\mathbb{C}/\{0\}\text{ and } v\in\mathcal{V}.
\end{equation}

Properties of groups put some constraints on the group action---Let $e$ be the identity of group $G$. It should be obvious that 
\begin{equation}
e\cdot x=x\text{ for all } x\in X.\label{eq:ex}
\end{equation}
In addition,
\begin{equation}
g\cdot (g^{-1}\cdot x)=g^{-1}\cdot(g\cdot x)=e\cdot x=x,\label{eq:inv}
\end{equation}
so any group action is invertible/bijective, in that given any $g\in G$ the mapping $x\mapsto g\cdot x$ always has an inverse mapping $x\mapsto g^{-1} x$. 

Through a group action we obtain a group homomorphism: If we define
\begin{equation}
f_g(x):=g\cdot x \text{ for all }g\in G\text{ and }x\in X,
\end{equation} 
then 
\begin{equation}
f_{g_1}\circ f_{g_2}=f_{g_1 g_2}.\label{eq:com}
\end{equation} 
From \eqref{eq:ex} we see that $f_e$ is the identity function, and \eqref{eq:inv} shows that $f_{g^{-1}}={f_g}^{-1}$. Hence, the image of $G\ni g\mapsto f_g$ is also a group, and $g\mapsto f_g$ is a group homomorphism. 

To put it another way, stemming from a group action demands that $\{f_g\}$ have an identity function, and that every function in $\{f_g\}$ have an inverse. These two conditions make $\{f_g\}$ a group (of invertible functions), with function composition as the multiplication. Associativity of a group is already built-in for functions: $f_1\circ (f_2\circ f_3)=(f_1\circ f_2)\circ f_3$ for any functions $f_i$ if their domains and codomains coincide. \eqref{eq:com} then renders $g\mapsto f_g$ a homomorphism. 

The discussion above inspires another definition of group actions:
\begin{defi}[\textbf{Another definition of group actions}]
Suppose $X$ is a set and $G$ is a group. Let $R(X)$ be the set of bijective functions from $X$ to $X$\footnote{In other words, automorphisms on $X$.}, and $h:G\rightarrow R(X)$ be any group homomorphism. A group action is the mapping $G\times X \ni(g,x)\mapsto h(g)(x)$. 
\end{defi}

To end this part, let me give another example of group actions: Take $\mathbb{R}$ as an addition group. For $a\in\mathbb{R}$ we can define an action on $\mathbb{R}^3$ by $a\cdot (x,y,z)=(x-a,y,z)$. This corresponds to mappings $f_a$ that shift the $x$-coordinate of an $\mathbb{R}^3$ vector. Such mappings form a group, and $a\mapsto f_a$ is a group homomorphism (and isomorphism). They are a subgroup of the group of all translations on $\mathbb{R}^3$, which group is isomorphic to $\mathbb{R}^3$.

\subsection{Representation}
\begin{defi}[\textbf{General linear group}]
A general linear group on a vector space $\mathcal{V}$ is the group of invertible operators, denoted by $GL(\mathcal{V})$.
\end{defi}

If $\mathcal{V}$ is the space $\mathbb{C}^n$, then the corresponding general linear group is commonly denoted by $GL(n,\mathbb{C})$, and $GL(n,\mathbb{C})$ is identified as a matrix group. The space $\mathcal{V}$ can be any vector space---it doesn't need to be finite-dimensional. Note a general linear group isn't a vector space.

\begin{defi}[\textbf{Representation}]
Let $G$ be a group and $\mathcal{V}$ be a vector space. A group representation is a homomorphism $\pi: G\rightarrow GL(\mathcal{V})$.
\end{defi}

From the discussion in the previous subsection, a representation can be identified with a group action that is linear in $X=\mathcal{V}$, and we can define a representation that way. For a group that is expressible in terms of matrices, in other words, isomorphic to a matrix group, it has a natural representation: For example, a natural representation of $SO(3)$ on $\mathbb{R}^3$ is the group of $3\times 3$ orthogonal matrices with unit determinant, and the homomorphism $\pi$ can be thought of as $I$, the identity mapping. 

Another well-known representation is the adjoint representation for a matrix Lie group $G$, $Ad: G\rightarrow GL(\mathfrak{g})$: Here $\mathfrak{g}$ is the Lie algebra of group $G$, and $\mathfrak{g}$ itself is a (real) vector space of matrices.\footnote{To be more precise, the tangent space at the identity of $G$.} The mapping $Ad(g)$ is defined as $Ad(g): \mathfrak{g}\ni X\mapsto g X g^{-1}$, which is clearly linear. It can be shown that $g X g^{-1}$ is in $\mathfrak{g}$ for any $g\in G$ and $X\in\mathfrak{g}$, and that the adjoint representation $G\ni g\mapsto Ad(g)$ is indeed a homomorphism.

Here's a type of representations we're interested in: Let $G$ be a group, $X$ a set and $\mathcal{V}$ a vector space. The functions $X\rightarrow \mathcal{V}$ form a vector space, denoted by $\mathcal{K}(X,\mathcal{V})$, and suppose $f$ is one such function. If there's a group action of $G$ on $X$, then 
\begin{equation}
f_g(x):=f(g^{-1}\cdot x) \;\forall x\in X\label{eq:f_g}
\end{equation}
is another function in $\mathcal{K}(X,\mathcal{V})$. This gives rise to a representation \cite{Woit,HallQ}:
\begin{thm}
If there's a group action of $G$ on $X$, the mapping $\pi$ defined by
\begin{equation}
\pi(g)f:=f_g \text{ for all } f\in\mathcal{K}(X,\mathcal{V})
\end{equation}
is a representation, i.e. $\pi: G\rightarrow GL(\mathcal{K}(X,\mathcal{V}))$ is a homomorphism.
\end{thm}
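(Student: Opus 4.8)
The plan is to verify the two things that the word ``representation'' requires: that each $\pi(g)$ is an invertible linear operator on $\mathcal{K}(X,\mathcal{V})$, so that $\pi$ actually lands in $GL(\mathcal{K}(X,\mathcal{V}))$, and that $\pi$ respects group multiplication, $\pi(g_1 g_2)=\pi(g_1)\pi(g_2)$. Since $\mathcal{K}(X,\mathcal{V})$ carries the pointwise vector-space structure (Section~\ref{sec:omap}), I would first check linearity of $\pi(g)$ by evaluating at an arbitrary point: for scalars $c_1,c_2$ and $f_1,f_2\in\mathcal{K}(X,\mathcal{V})$, the definition \eqref{eq:f_g} gives $(\pi(g)(c_1 f_1 + c_2 f_2))(x) = (c_1 f_1 + c_2 f_2)(g^{-1}\cdot x) = c_1 f_1(g^{-1}\cdot x) + c_2 f_2(g^{-1}\cdot x)$, which is exactly $(c_1\pi(g)f_1 + c_2\pi(g)f_2)(x)$. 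As this holds for every $x\in X$, the two functions coincide and $\pi(g)$ is linear.

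The heart of the argument is the homomorphism identity, and this is where I would be most careful. Fixing $f\in\mathcal{K}(X,\mathcal{V})$ and $x\in X$, I would compute $(\pi(g_1)\pi(g_2)f)(x)$ by peeling off one factor at a time: writing $h=\pi(g_2)f$ so that $h(y)=f(g_2^{-1}\cdot y)$, I get $(\pi(g_1)h)(x) = h(g_1^{-1}\cdot x) = f\bigl(g_2^{-1}\cdot(g_1^{-1}\cdot x)\bigr)$. The group-action axiom then collapses the nested action into $g_2^{-1}\cdot(g_1^{-1}\cdot x) = (g_2^{-1}g_1^{-1})\cdot x = (g_1 g_2)^{-1}\cdot x$, so the whole expression equals $f\bigl((g_1 g_2)^{-1}\cdot x\bigr) = (\pi(g_1 g_2)f)(x)$. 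This is the step where the inverse built into \eqref{eq:f_g} is doing real work: had the definition used $g\cdot x$ in place of $g^{-1}\cdot x$, the identical computation would produce $\pi(g_2 g_1)$, making $\pi$ an anti-homomorphism. I expect this order-of-multiplication bookkeeping to be the only genuine obstacle; everything else is formal.

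Finally, to confirm that each $\pi(g)$ is invertible I would evaluate $\pi(e)$, where $e$ is the group identity: by \eqref{eq:f_g} and \eqref{eq:ex}, $(\pi(e)f)(x)=f(e^{-1}\cdot x)=f(e\cdot x)=f(x)$, so $\pi(e)$ is the identity operator on $\mathcal{K}(X,\mathcal{V})$. Combining this with the homomorphism identity just proved, $\pi(g)\pi(g^{-1}) = \pi(g g^{-1}) = \pi(e) = I$ and likewise $\pi(g^{-1})\pi(g)=I$, so $\pi(g)$ is invertible with inverse $\pi(g^{-1})$ and therefore lies in $GL(\mathcal{K}(X,\mathcal{V}))$. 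With $\pi(g)\in GL(\mathcal{K}(X,\mathcal{V}))$ for every $g$ and $\pi$ multiplicative, $\pi$ is a homomorphism into the general linear group, which is precisely the claim.
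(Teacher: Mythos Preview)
Your proposal is correct and follows essentially the same approach as the paper: verify linearity pointwise, establish the homomorphism identity $\pi(g_1)\pi(g_2)=\pi(g_1 g_2)$ via the same chain of equalities (collapsing $g_2^{-1}\cdot(g_1^{-1}\cdot x)$ to $(g_1 g_2)^{-1}\cdot x$), and confirm $\pi(e)=I$ to get invertibility. The only cosmetic difference is ordering---the paper checks $(f_g)_{g^{-1}}=f$ directly before proving the homomorphism identity, whereas you derive invertibility as a corollary of it---and both proofs make the same remark about why the inverse in \eqref{eq:f_g} is needed.
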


\begin{proof}
First, it's clear that given $g\in G$, the mapping $\pi(g):f\mapsto f_g$ is a linear mapping:
\begin{equation*}
(cf)_g= c (f_g)\text{ and } (f_1+f_2)_g={f_1}_g+{f_2}_g.
\end{equation*}
We can also easily see that $(f_{g})_{g^{-1}}=(f_{g^{-1}})_{g}=f$ and $f_e=f$, so 
\begin{equation*}
\pi(e)=I \text{ and } \pi(g^{-1})\pi(g)=\pi(g)\pi(g^{-1})=I \;\forall g\in G.
\end{equation*}
To see $\pi(g_1 g_2)=\pi(g_1)\pi(g_2)$, we can show $(f_{g_2})_{g_1}=f_{g_1 g_2}$ for all $f\in\mathcal{K}(X,\mathcal{V})$:
\begin{equation*}
f_{g_1 g_2}(x)=f\left((g_1g_2)^{-1}\cdot x\right)=f\left((g_2^{-1} g_1^{-1})\cdot x\right)=f\left(g_2^{-1}\cdot (g_1^{-1}\cdot x)\right)=f_{g_2}\left(g_1^{-1}\cdot x\right)=(f_{g_2})_{g_1}(x),
\end{equation*}
true for all $x\in X$. Therefore $\pi$ is a representation.
\end{proof}

The proof should make it obvious why \eqref{eq:f_g} was defined that way: If it were instead defined as $f_g(x):=f(g\cdot x)$, then $\pi$ wouldn't be a homomorphism unless, for example, $G$ is abelian. $\mathbb{R}^n$, $\mathbb{C}^n$, $SO(2)$ and $U(1)$ are instances of abelian groups.

\subsection{Unitary Representation}\label{sec:ur}
Let $(X,\mu)$ be a measure space. We say a mapping $f:X\rightarrow X$ is \textbf{measure-preserving} if
\begin{equation}
\mu(f(E))=\mu(E)\label{eq:muE}
\end{equation}
for all measurable sets $E$. Similarly, a group action on a measure space is said to be measure-preserving if \cite{Einsiedler}
\begin{equation}
\mu(g\cdot E)=\mu(E)
\end{equation}
for all $g\in G$ and all measurable sets $E$. Note it's implied the mapping and the group action always bring a measurable set to another measurable set.\footnote{Another way to say this, though maybe quirky, is that the mapping is an endomorphism and the group action is an automorphism on the $\sigma$-algebra. \eqref{eq:muE} implies it's an endomorphism, and being a group action implies bijectivity.} If $X=\mathbb{R}^n$ and $\mu$ is the Lebesgue measure, translations and any operators $O$ on $\mathbb{R}^n$ with $|\det (O)|=1$, e.g. rotations, are measure-preserving.

Suppose $\mathcal{H}$ is a Hilbert space. A representation $\pi$ is called a \textbf{unitary representation} if $\pi(g)$ is unitary for all $g\in G$, i.e. $\pi:G\rightarrow \mathcal{U}(\mathcal{H})$ is a homomorphism, where $\mathcal{U}(\mathcal{H})$ is the group of unitary operators on $\mathcal{H}$. Suppose $G$ is a group, $(X,\mu)$ is a measure space and there's a measuring-preserving group action of $G$ on $X$. As discussed in Section~\ref{sec:l2h}, $L^2(X,\mu)$ is a Hilbert space. For all $f\in L^2(X,\mu)$ let's define a mapping $f_g$ like \eqref{eq:f_g}. Then $f_g$ is also in $L^2(X,\mu)$ and $||f_g||_2=||f||_2$, because
\begin{equation}
||f_g||_2^2=\int_X |f_g|^2 d\mu=\int_X |f(g^{-1}\cdot x)|^2 d\mu=\int_X |f(x)|^2 d\mu=||f||_2^2,
\end{equation}
by the group action being measure-preserving and invertible. Therefore, 
\begin{thm}\label{thm:gxl2}
For a group $G$ with a meaure-preserving action on $(X,\mu)$, the mapping $\pi$, defined as $\pi(g)f=f_g$ for all $g\in G$ and $f\in L^2(X,\mu)$, is a unitary representation on $L^2(X,\mu)$.
\end{thm}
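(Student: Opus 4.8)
The plan is to build on the representation theorem established just before this one: for any group action of $G$ on $X$, the assignment $\pi(g)f := f_g$, with $f_g(x) = f(g^{-1}\cdot x)$, already defines a homomorphism into $GL(\mathcal{K}(X,\mathbb{C}))$, so $\pi(g)$ is linear, $\pi(e)=I$, $\pi(g^{-1})=\pi(g)^{-1}$, and $\pi(g_1 g_2)=\pi(g_1)\pi(g_2)$. What remains is purely $L^2$-specific: that each $\pi(g)$ actually maps $L^2(X,\mu)$ into itself and is unitary there. So the strategy is to promote the abstract homomorphism to a homomorphism into $\mathcal{U}(L^2(X,\mu))$.

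First I would establish the key isometry identity. The computation displayed just before the statement shows $||f_g||_2 = ||f||_2$; the content I would make precise is the change of variables $\int_X |f(g^{-1}\cdot x)|^2 \,d\mu = \int_X |f(x)|^2 \,d\mu$, i.e. that integration is invariant under the measure-preserving invertible map $T_g : x \mapsto g^{-1}\cdot x$. I would justify it the standard way: the identity $\int_X h\circ T_g \,d\mu = \int_X h\,d\mu$ holds for indicator functions of measurable sets by the very definition of measure-preservation, extends to nonnegative simple functions by linearity, and then to all nonnegative measurable $h$ by monotone convergence; applying this to $h=|f|^2$ gives $||f_g||_2 = ||f||_2$. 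In particular $f_g\in L^2$ whenever $f\in L^2$, so $\pi(g)$ restricts to a norm-preserving linear map on $L^2(X,\mu)$.

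Before calling this an isometry I should verify well-definedness on equivalence classes: if $f=f'$ almost everywhere, then the set where $f_g\neq f'_g$ is $g\cdot\{y : f(y)\neq f'(y)\}$, which has measure zero since $T_{g^{-1}}$ is measure-preserving, so $\pi(g)$ descends to the quotient space and is genuinely a linear isometry. By the characterization in Section~\ref{sec:isometry}, a linear isometry between Hilbert spaces is unitary precisely when it is surjective. Surjectivity follows from the already-established relation $\pi(g)\pi(g^{-1})=I$ together with the fact, from the previous paragraph applied to $g^{-1}$, that $\pi(g^{-1})$ also maps $L^2$ into $L^2$: for any $f\in L^2$, the identity $f = \pi(g)\bigl(\pi(g^{-1})f\bigr)$ exhibits $f$ in the range of $\pi(g)$. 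Hence each $\pi(g)$ is a bijective linear isometry, i.e. unitary.

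Finally I would assemble the pieces: $\pi$ maps $G$ into $\mathcal{U}(L^2(X,\mu))$, and since it is the restriction of the homomorphism from the previous theorem it remains a homomorphism, so $\pi$ is a unitary representation on $L^2(X,\mu)$. The only genuine obstacle is the change-of-variables identity, namely upgrading measure-preservation from the level of sets to the level of integrals; everything else is bookkeeping inherited from the general representation theorem and the isometry/unitary dictionary of Section~\ref{sec:isometry}.
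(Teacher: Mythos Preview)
Your proposal is correct and follows the same approach as the paper: the paper's argument is precisely the isometry computation $||f_g||_2=||f||_2$ displayed just before the theorem, combined with the homomorphism property inherited from the preceding representation theorem. You have simply filled in details the paper leaves implicit (the monotone-convergence justification of the change of variables, well-definedness on $L^2$ equivalence classes, and the explicit surjectivity argument), but the route is identical.
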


\section{Quantum Operators on $L^2(\mathbb{R}^n)$ and Representations}
\label{sec:pomo}
In this section the position and momentum operators will be labeled in uppercase $X_i$ and $P_i$, to avoid confusion with the position and momentum coordinate functions $x_i$ and $p_i$.
\subsection{Position and Momentum Operators}
To describe the (pure) state of a particle in $L^2(\mathbb{R}^n)$, we use a function in $L^2(\mathbb{R}^n)$. If there are $n$ particles moving in $\mathbb{R}^m$, then their state is a function in $L^2(\mathbb{R}^{mn})$. The position operators are the operators $X_i$ which multiply a (nice enough) function by $x_i$: $(X_i f)(\mathbf{x})=x_i f(\mathbf{x})$. The momentum operators $P_j$ are differential operators\footnote{On Schwartz functions both position and momentum operators are well-defined (Section~\ref{sec:Sch}).} $(h/i)\partial/\partial x_j$: $P_j f=(h/i)\partial f/\partial x_j$.

The way that momentum operators are defined can be thought of as a generalization of de Broglie's concept: A particle moving in $\mathbb{R}$ with a definite momentum $p$ corresponds to a wave with wave vector $k=p/\hbar$, that is, a wave of the form $e^{ikx}=e^{ipx/\hbar}$. This is exactly the eigenvector of the operator $(\hbar/i)d/dx$ with eigenvalue $p$ \cite{HallQ}. 

Nevertheless, $e^{ipx/\hbar}$ is not an $L^2$-function on $\mathbb{R}$, so there's no states with definite momenta in $L^2(\mathbb{R})$. To remedy this, we observe that the Fourier transform and its inverse transform looks like a ``superposition'' of $e^{ipx/\hbar}$; in addition, the Fourier transform brings the momentum operator $(h/i) d/dx$ to an operator that multiplies the Fouier transform $\hat{f}$ by $p$, just like the position operator on a function $f$. This suggests the Fourier transform is the gateway to turning a position wave function to a momentum wave function, and that the momentum operator $(h/i)d/dx$ is indeed the right choice. 

\subsection{Unitary Representations on $L^2(\mathbb{R}^n)$}

\subsubsection{Momentum}
If $(X,\mu)$ is $\mathbb{R}^n$ with the Lebesgue measure, any group action preserving the Lebesgue measure has a corresponding unitary representation on $L^2(\mathbb{R}^n)$ (Theorem~\ref{thm:gxl2}). Consider a unitary representation of $\mathbb{R}^n$, as an addition group, on $L^2(\mathbb{R}^n)$: Let the representation be $\mathbb{R}^n\ni \mathbf{a}\mapsto U_\mathbf{a}$ such that 
\begin{equation}
(U_\mathbf{a} f)(\mathbf{x})=f(\mathbf{x}+\mathbf{a})
\end{equation}
$\{U_{\mathbf{a}}\}$ is a Lie group, and the Lie algebra of this group has a basis $\{A_j\}$ (for nice enough functions): 
\begin{equation}
A_j=-i\frac{\partial}{\partial x_j}, \;j=1,\cdots,n,
\end{equation}
which, except for difference in coefficients, are the momentum operators; or we can say that there's a Lie algebra representation $\mathbf{e}_i\mapsto A_i$ corresponding to this unitary representation.

\subsubsection{Position}
The previous unitary representation is a result of Theorem~\ref{thm:gxl2}. Now let's take a look at another unitary representation that doesn't derive from this theorem: Define unitary operators $V_\mathbf{a}$ for all $\mathbf{a}\in\mathbb{R}^n$ on $L^2(\mathbb{R}^n)$ by
\begin{equation}
(V_\mathbf{a} f)(\mathbf{x})=e^{i \mathbf{a}\cdot \mathbf{x}}f(\mathbf{x}) \text{ for all }f\in L^2(\mathbb{R}^n),
\end{equation}
where $\mathbf{a}\cdot \mathbf{x}$ is the inner/dot product on $\mathbb{R}^n$. The mapping $\mathbf{a}\mapsto V_\mathbf{a}$ again is a unitary representation.\footnote{It can be easily shown that $\{V_\mathbf{a}\}$ is a group of unitary operators, and that $\mathbf{a}\mapsto V_\mathbf{a}$ is a homomorphism.}  The Lie algebras of the group $\{V_\mathbf{a}\}$ has a basis $\{B_j\}$, which multiply functions by $x_j$:
\begin{equation}
(B_j f)(\mathbf{x})=x_j f(\mathbf{x}).
\end{equation}
Again, there's a corresponding Lie algebra homomorphism $\mathbf{e}_j\mapsto B_j$. $B_j$ are exactly the position operators.

\subsubsection{$SO(n)$ and angular momentum operators}

Let's consider the matrix Lie group $SO(n)$, defined by $n\times n$ orthogonal matrices with determinant 1, which has a natural representation, or group action $\pi=I$ on $\mathbb{R}^n$ or $\mathbb{C}^n$. With this and Theorem~\ref{thm:gxl2} we can define a unitary representation $\pi$ of $SO(n)$ on $L^2(\mathbb{R}^n)$ by
\begin{equation}
\pi(g)f(\mathbf{x})=f(g^{-1}\cdot \mathbf{x}) \text{ for all }\mathbf{x}\in\mathbb{R}^n\text{ and }f\in L^2(\mathbb{R}^n).\label{eq:rot}
\end{equation}
This is a representation of the rotation group on functions on the Euclidean space.

Now consider such a unitary representation $\pi$ for $SO(3)$. The Lie algebra of $SO(3)$, $so(3)$, has a basis
\begin{equation}
F_1=\begin{pmatrix}
0&0&0\\
0&0&-1\\
0&1&0
\end{pmatrix},\;
\begin{pmatrix}
0&0&1\\
0&0&0\\
-1&0&0
\end{pmatrix},\;
\begin{pmatrix}
0&-1&0\\
1&0&0\\
0&0&0
\end{pmatrix}.
\end{equation}
They satisfy the commutation relations $[F_1,F_2]=F_3$, $[F_2,F_3]=F_1$, $[F_3,F_1]=F_2$. The Lie algebra representation corresponding to $\pi$ has $F_j\mapsto L_j/(i\hbar)$, where $L_j$ are the angular momentum operators:
\begin{equation}
L_1=X_2 P_3-X_3 P_2,\, L_2= X_3 P_1-X_1 P_3,\, L_3= X_1 P_2-X_2 P_1.
\end{equation}
As a Lie algebra representation, the angular momentum operators should obey the commutation relations as $F_j$ do, for example
\begin{equation}
[F_1,F_2]=F_3\mapsto \left[\frac{L_1}{i\hbar},\frac{L_2}{i\hbar}\right]=\frac{L_3}{i\hbar}\Rightarrow [L_1,L_2]=i\hbar L_3.
\end{equation}
That is to say, as the basis of the Lie algebra unitary representation of $so(3),$ the commutation relations of $L_i$ stem from the structure of $so(3)$. In addition, similar to classical mechanics where angular momenta ``generate'' rotation of $\textbf{x}$ and $\textbf{p}$, angular momentum operators in quantum mechanics are the infinitesimal generators of the rotation group \eqref{eq:rot}.

\subsection{Representations and Wave Functions}

It can be found
\begin{equation}
\widehat{U_\mathbf{a}f}(\mathbf{k})=e^{-i\mathbf{a}\cdot \mathbf{k}}\hat{f}(\mathbf{k})
\end{equation}
and
\begin{equation}
\widehat{V_\mathbf{a}f}(\mathbf{k})=\hat{f}(\mathbf{k}+\mathbf{a}).
\end{equation}
Therefore, a translation corresponds to a phase shift in the Fourier transform, and a phase shift corresponds to a translation in the Fourier transform. Ignoring the factor $\hbar$, from a physical point of view this means 
\begin{itemize}
\item A phase shift in the position wave function corresponds to a shift of coordinate in the momentum wave function; similarly, shifting the coordinate in the position wave function corresponds to a phase shift in the momentum wave function. 
\item Position operators are the infinitesimal generators of translating the momentum, and momentum operators are the infinitesimal generators of translating the position.
\item Translations in both momentum and position are unitary representations of $\mathbb{R}^n$, and position and momentum operators are bases of the associated Lie algebra representations.  $[X_i,X_j]=[P_i,P_j]=0$ reflects $\mathbb{R}^n$ being abelian.  
\end{itemize}

Th operators $U_\mathbf{a}$ and $V_\mathbf{a}$ are special cases of Weyl-Heisenberg operators. They're related to the unitary representation of the Heisenberg group (sometimes known as the Weyl group) on $L^2(\mathbb{R}^n)$ \cite{Woit,deGosson}. 
\subsection{Spectral Theorem and Realization of a State}\label{sec:sper}
For details in this part please refer to Ref.~\cite{Einsiedler,HallQ}.

By the spectral theorem in the direct integral formulation, of which the finite-dimensional version is Theorem~\ref{thm:spe2}, for any self-adjoint operator $O$ on $\mathcal{H}$ there is a unitary mapping $U$ from $\mathcal{H}$ to a Hilbert space (the direct integral) formed by sections $s$, such that $UOU^{-1}$ is a multiplication operator: $(UTU^{-1})(s)(\lambda)=\lambda s(\lambda)$, $\lambda \in \mathrm{Spec}(O)$. For the momentum operators, the Fourier transform $\mathcal{F}$ is exactly the unitary mapping $U$: For $f\in L^2(\mathbb{R}^n)$ and $j$-th component momentum operator $P_j$, $U(P_jf)= UP_j{U}^{-1} Uf=p_j Uf$,\footnote{Here $p_j$ is considered a ``coordinate'' function: $p_j(\mathbf{p})=p_j$.} i.e. a momentum operator $P_j$ is an operator that multiplies a function (section) by $p_j$ on the momentum space. 

A benefit of this form of spectral theorem\footnote{Another commonly seen formulation of the spectral theorem is one that utilizes the projection-valued measure and functional calculus; see Ref.~\cite{HallQ,Einsiedler,GrandpaRudin}.} is that a ``quantum state'' can be regarded as an independent entity---The realization of a quantum state is a vector or section in the direct integral associated with a self-adjoint operator, in which direct integral the self-adjoint operator and its functional calculus act as multiplication operators. The spaces in which the state is realized are unitarily related, so they can be considered equivalent. The position and momentum wave functions are unitarily the same,\footnote{As stated in the Plancherel theorem \cite{PapaRudin,HallQ}; also see Section~\ref{sec:plan}.} just realized in different spaces, or direct integrals: one corresponding to the position operators, the other to the momentum operators. If a self-adjoint operator has a discrete spectrum, e.g. the Hamiltonian of a harmonic oscillator, then it's akin to Theorem~\ref{thm:spe2}. 

In standard quantum mechanics, we often express a particular realization like an inner product, e.g. $\inner{p}{\psi}$. However such an inner product may not exist, as the vector to ``realize'' the state, in this case $\ket{p}$ may not be in the Hilbert space. However, this ``inner product'' may be seen as a distribution or linear functional that may not be continuous; see Ref.~\cite{HallQ,deGosson,Einsiedler,Grubb}.

It should be reemphasized that unlike finite-dimensional spaces for which spaces of the same dimension are isomorphic, how infinite-dimensional spaces are configured decide whether they're isomorphic. For example, assuming $\mathbb{S}$ is a compact subset of $\mathbb{R}$, $L^2(\mathbb{S})$, unsurprisingly, isn't isomorphic to $L^2(\mathbb{R})$. Describing the space of states clearly may seem superfluous, but it's paramount---As an instance, the ``momentum'' and position operator on $L^2([0,1])$ don't obey the uncertainty relation \cite{HallQ}.

\section{Phase Spaces and Symplectic Vector Spaces}\label{sec:psv}
In this short section we will develop some concepts required for later chapters. Please consult Ref.~\cite{Goldstein,Woit,deGosson} for more details.
\subsection{Classical Mechanics and the Phase Space}\label{sec:cm}
As a starter, let's lay out the axioms of classical (Hamiltonian) mechanics:
\begin{axi}
The state of a classical mechanical system is given by a point in $\mathbb{R}^{2n}=\mathbb{R}^n\oplus \mathbb{R}^n,$ where the former $\mathbb{R}^n$ describes the position $\mathbf{x}$ and the latter the (conjugate) momentum $\mathbf{p}$ of the system. In this context, the space $\mathbb{R}^{2n}$ is usually called a \textbf{phase space}.
\end{axi}
\begin{axi}
A (classical) observable is a real function on the phase space $\mathbb{R}^{2n}$. For example the coordinate functions or canonical coordinates $x_i$ and $p_i$ are observables, for which 
\begin{align}
x_i((\mathbf{x},\mathbf{p}))&=\mathbf{x}_i,\\
p_i((\mathbf{x},\mathbf{p}))&=\mathbf{p}_i.
\end{align}
\end{axi}
\begin{axi}
There's a distinguished observable, called the Hamiltonian of the system, denoted by $h$. The evolution of a system is determined by Hamilton's equations:
\begin{align}
\frac{dx_i}{dt}&=\frac{\partial h}{\partial p_i},\\
\frac{dp_i}{dt}&=-\frac{\partial h}{\partial x_i}.
\end{align}
\end{axi}
Here to ease matters we don't assume explicit time-dependency of observables, including $h$; namely they are functions on the phase space $\mathbb{R}^{2n}$, but not on $\mathbb{R}^{2n}\times \mathbb{R}$ or $\mathbb{R}^{2n}\oplus \mathbb{R}$. Now let's introduce
\begin{defi}[\textbf{Poisson bracket}]
The Poisson bracket $\{\cdot,\cdot\}$ is a bilinear mapping on functions on the phase space $\mathbb{R}^{2n}$, defined as
\begin{equation}
\{f_1,f_2\}:=\sum_{i=1}^n \left(\partiald{f_1}{x_i}\partiald{f_2}{p_i}- \partiald{f_1}{p_i}\partiald{f_2}{x_i}\right).
\end{equation}
\end{defi}

The time-dependency of an observable $f$ (that doesn't depend on time explicitly) along a trajectory of motion $(\mathbf{x}(t),\mathbf{p}(t))$ is 
\begin{equation}
\frac{df}{dt}(\mathbf{x}(t),\mathbf{p}(t))=\{f,h\},
\end{equation}
by Hamilton's equations. Specifically, Hamilton's equations can be rewritten as
\begin{equation}
\frac{dx_i}{dt}=\{p_i,h\},\;\frac{dp_i}{dt}=\{x_i,h\}.
\end{equation}
An observable $f$ is conserved under a Hamiltonian $h$ if $\{f,h\}=0$. 

Furthermore, the Poisson bracket is anti-symmetric and obeys the Jacobi identity, so it's a Lie bracket of functions on $\mathbb{R}^{2n}$. In other words, functions on $\mathbb{R}^{2n}$ form an infinite-dimensional Lie algebra under the Poisson bracket. The coordinate functions $x_i$ and $p_i$ are a basis of $(\mathbb{R}^{2n})^*$, and along with the function $\hat{1}$ that yields $1$ on every point of the phase space, they form a $2n+1$-dimensional subalgebra $(\mathbb{R}^{2n})^*\oplus \mathbb{R}$:
\begin{equation}
\{x_i,x_j\}=\{p_i,p_j\}=\{x_i,\hat{1}\}=\{p_i,\hat{1}\}=0,\, \{x_i,p_j\}=\delta_{ij}\hat{1}.\label{eq:Ms}
\end{equation} 
This subalgebra is isomorphic to the Heisenberg algebra \cite{Woit}: As the position and momentum operators on $L^2(\mathbb{R}^n)$ come from a unitary representation of the Heisenberg algebra, they are also the representation of the subalgebra formed by $x_i,p_i,\hat{1}$.

\subsection{Symplectic Vector Space}\label{sec:sym}

\begin{defi}[\textbf{Symplectic form}]
A symplectic form $\omega$ on a real vector space $\mathcal{V}$ is an anti-symmetric bilinear form (Definition~\ref{def:bilinear}) that is also non-degenerate, i.e. $\omega(v,w)=0$ for all $v\in\mathcal{V}$ if and only if $w=0$. A vector space equipped with a symplectic form is called a \textbf{symplectic vector space}.
\end{defi}

There's a standard symplectic form on $\mathbb{R}^{2n}$: Let $\hat{e}_i$ and $\hat{f}_i$ denote the standard basis of $\mathbb{R}^{2n}$,\footnote{$\hat{e}_i$ for the ``position'' components and $\hat{f}_i$ for the ``momentum'' ones.}
\begin{equation}
\omega(\hat{e}_i,\hat{e}_j)=\omega(\hat{f}_i,\hat{f}_j)=0,\,\omega(\hat{e}_i,\hat{f}_j)=-\delta_{ij},
\end{equation}
and extend this by bilinearity and anti-symmetry; explicitly
\begin{equation}
\omega\left((x_1,\cdots,p_n),(x_1',\cdots,p_n')\right)=\sum_i\left( p_i x_i'-p_i' x_i\right).
\end{equation}
We will always use this standard symplectic form on $\mathbb{R}^{2n}$.

\begin{defi}[\textbf{Symplectic group}]
	The standard symplectic group $Sp(n)$ is the group of all bijective linear mappings $\mathbb{R}^{2n}\rightarrow \mathbb{R}^{2n}$ that preserve the symplectic form $\omega$ on $\mathbb{R}^{2n}$, i.e. $\omega(z,z)=\omega(Sz,Sz')$ for all $z,z'\in\mathbb{R}^{2n}$ and $S\in Sp(n)$. 
\end{defi}

A symplectic mapping to a symplectic vector space is like a unitary mapping to a Hilbert space. A transformation (mapping) on the phase space is called a canonical transformation if the coordinates remain canonical. It can be shown a transformation is a (restricted) canonical transformation if and only if its differential (or Jacobian) is symplectic, and if and only if the Poisson bracket is invariant under such a transformation \cite{Goldstein}. In mathematics, a restricted canonical transformation is known as a symplectomorphism \cite{deGosson,Woit}.

The differential of a linear mapping is the mapping itself \cite{Loomis}, so a linear symplectomorphism is a symplectic mapping. In other words, as far as linear mappings are concerned, a phase space remains a phase space under a transformation if and only if the transformation is symplectic. This is the kind of symplectomorphism we will be interested in.

\chapter{Several Topics of Operators}\label{ch:op}
\section{Positive Operators}
\begin{defi}
An operator $P\in\mathcal{B}(\mathcal{H})$ is said to be positive (or non-negative or positive semi-definite) if
\begin{equation}
(v|Pv)\geq 0\;\forall v\in\mathcal{H}.\label{eq:po}
\end{equation}
\end{defi}

It can be easily shown that a positive operator is self-adjoint \cite{Einsiedler,PapaRudin}: Let $P$ be a positive operator, and $P=H_1+i H_2$, where $H_1$ and $H_2$ are self-adjoint. Because
\begin{equation}
(v|Pv)=(v|(H_1+iH_2)v)=(v|H_1v)+i(v|H_2v)\geq 0,
\end{equation}
Because $H_1$ and $H_2$ are self-adjoint, $(v|H_1v)$ and $(v|H_2v)$ are always real. For $(v|Pv)$ to be real, $(v|H_2v)$ must be 0 for all $v\in\mathcal{H}$. Because an operator $O=0$ if (and only if) \cite{PapaRudin}
\begin{equation}
(v|Ov)=0 \text{ for all }v\in\mathcal{H},
\end{equation}  
we find $H_2=0$, and thus $P$ is self-adjoint.

The condition \eqref{eq:po} implies $P$ is allowed to have a null space larger than $\{0\}$; on the other hand, a positive-definite operator $P'>0$ obeys $(v|Pv)> 0$ for all nonzero $v\in\mathcal{H}$, so $\ker P'=\{0\}$.

\section{Hilbert-Schmidt and Trace-class Operators}
\label{sec:HSinner}
In this section we will have a glance at two important classes of operators. For more information about such operators the reader may refer to Ref.~\cite{HallQ,Einsiedler,Blackadar,Kadison,Davies,deGosson}. 
\subsection{Trace and Hilbert-Schmidt Inner Product}
\begin{defi}[\textbf{Trace}]
Trace of an operator $O\in\mathcal{B}(\mathcal{H})$ is defined as \cite{Blackadar}
\begin{equation}
\trace O:=\sum_{i}(\beta_i|O (\beta_i)).\label{eq:tr}
\end{equation}
\end{defi}
\begin{defi}[\textbf{Hilbert-Schmidt inner product}]
For $T_1,T_2\in \mathcal{B}(\mathcal{H}_1,\mathcal{H}_2)$, their Hilbert-Schmidt inner product is \cite{Bhatia}:
\begin{equation}
(T_2|T_1):=\trace (T_2^\dagger\circ T_1),\label{eq:in1}
\end{equation}
where with an orthonormal basis $\{\beta_i\}$ in $\mathcal{H}$. Note $T_2^\dagger\circ T_1\in\mathcal{B}(\mathcal{H}_1).$ 
\end{defi}
Another way of expressing \eqref{eq:in1} is
\begin{equation}
(T_2|T_1)=\sum_{i}(T_2 (\beta_i)|T_1 (\beta_i)).\label{eq:in2}
\end{equation}
It's an inner product, because it satisfies all the properties for an inner product; see Section~\ref{sec:inner}.

\subsection{Hilbert-Schmidt and Trace-class Operators}\label{sec:htop}
When the space is infinite-dimensional, the trace of an operator and Hilbert-Schmidt inner product between operators don't necessarily exist, which inspires the following definition:
\begin{defi}
A bounded operator on a Hilbert space $O\in\mathcal{B}(\mathcal{H})$ is said to be a \textbf{trace-class operator} if $\trace |O|<\infty$, and a \textbf{Hilbert-Schmidt operator} if $(O|O)=\trace O^\dagger O<\infty$. 
\end{defi}
Accordingly, the trace of a trace-class operator always exists, so does the inner product between two Hilbert-Schmidt operators. When the dimension is infinite, not all bounded operators are trace-class or Hilbert-Schmidt, e.g. the identity operator. With Hilbert-Schmidt inner product, the space of Hilbert-Schmidt operators (on a Hilbert space) is a Hilbert space \cite{HallQ}. Both trace class and Hilbert-Schmidt operators are compact operators, so they have discrete spectra \cite{Blackadar,Einsiedler}, and trace-class operators are Hilbert-Schmidt operators \cite{Davies,Blackadar,HallQ,deGosson}. If $T$ is trace-class and $O$ is bounded, then $TO$ and $OT$ are both trace-class \cite{Davies,Blackadar,HallQ}, so given a quantum state the expectation value of a bounded operator exists and is finite.\footnote{By this it's implied that the codomain of the trace is the extended real line $[-\infty,\infty]$ instead of the real line $(-\infty,\infty)$.}

There are two important theorems concerning trace class operators, but first we need to define \cite{Einsiedler,Davies,HallQ}
\begin{defi}[\textbf{Integral operator}]
Let $(X,\mu)$ be a measure space, and $f$ any measurable complex function on $(X,\mu)$. The mapping $T$
\begin{equation}
(Tf)(x):=\int_X k(x,y) f(y) \,d\mu(y)
\end{equation}
with $k: X\times X\rightarrow \mathbb{C}$, is an operator. This kind of operators are called integral operators and the function $k$ is called the (integral) \textbf{kernel} of $T$.
\end{defi}
The following theorem not only shows Hilbert-Schmidt operators on $L^2(\mathbb{R}^n)$ are integral operators but also conveys some of the conditions under which an integral operator is well-defined \cite{HallQ,Davies}:
\begin{thm}\label{thm:hsi}
Let $T$ be an integral operator with integral kernel $k\in L^2(\mathbb{R}^n\times \mathbb{R}^n)$. Then for any $f\in L^2(\mathbb{R}^n)$, $Tf\in L^2(\mathbb{R}^n)$, and $T$ is a Hilbert-Schmidt operator on $L^2(\mathbb{R}^n)$. Conversely, any Hilbert-Schmidt operator on $L^2(\mathbb{R}^n)$ has a corresponding integral operator with a unique kernel in $L^2(\mathbb{R}^n\times \mathbb{R}^n)$.
\end{thm}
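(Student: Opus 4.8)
The plan is to prove both directions through the natural identification $L^2(\mathbb{R}^n\times\mathbb{R}^n)\cong L^2(\mathbb{R}^n)\hat{\otimes}L^2(\mathbb{R}^n)$ supplied by Theorem~\ref{thm:L2}, and then to match the Hilbert-Schmidt inner product of operators against the $L^2$ inner product of kernels, coefficient by coefficient, using Parseval's equality. The organizing idea is that if $\{e_i\}$ is an orthonormal basis of $L^2(\mathbb{R}^n)$ then the products $\phi_{ij}(x,y):=e_i(x)\overline{e_j(y)}$ form an orthonormal basis of $L^2(\mathbb{R}^n\times\mathbb{R}^n)$ (here $\{\overline{e_j}\}$ is again orthonormal because conjugation preserves the $L^2$-norm, so Theorem~\ref{thm:bah} applies), and that the Fourier coefficients of a kernel with respect to this basis are exactly the matrix elements of the associated operator.

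For the forward direction I would first fix $f\in L^2(\mathbb{R}^n)$ and show $Tf$ is well-defined and lies in $L^2$. By Tonelli's theorem $\int\!\int|k(x,y)|^2\,dy\,dx=||k||_2^2<\infty$, so $k(x,\cdot)\in L^2(\mathbb{R}^n)$ for almost every $x$; for such $x$ the Cauchy-Schwarz inequality (Theorem~\ref{thm:in}) gives $|(Tf)(x)|^2\le ||k(x,\cdot)||_2^2\,||f||_2^2$, and integrating in $x$ yields $||Tf||_2\le ||k||_2\,||f||_2$, so $T$ is bounded. To see $T$ is Hilbert-Schmidt, a direct computation (interchanging the order of integration by Fubini, justified by the integrability just established) gives the key identity $(\phi_{ij}|k)=(e_i|Te_j)$. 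Parseval's equality (Theorem~\ref{thm:rf}) applied to $k$ then reads
\begin{equation}
||k||_2^2=\sum_{i,j}|(\phi_{ij}|k)|^2=\sum_{i,j}|(e_i|Te_j)|^2=\sum_j||Te_j||^2=\trace(T^\dagger T)=(T|T),
\end{equation}
so $T$ is Hilbert-Schmidt with Hilbert-Schmidt norm equal to $||k||_2$.

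For the converse, given a Hilbert-Schmidt operator $T$ I would set $t_{ij}:=(e_i|Te_j)$; the same computation shows $(T|T)=\sum_{i,j}|t_{ij}|^2<\infty$, so the coefficients are square-summable and $k:=\sum_{i,j}t_{ij}\phi_{ij}$ converges in $L^2(\mathbb{R}^n\times\mathbb{R}^n)$, defining a kernel. By the forward direction its integral operator $T_k$ satisfies $(e_i|T_ke_j)=(\phi_{ij}|k)=t_{ij}=(e_i|Te_j)$ for all $i,j$, and since two bounded operators with identical matrix elements in an orthonormal basis coincide, $T_k=T$. Uniqueness is immediate: a kernel inducing the zero operator has all coefficients $(\phi_{ij}|k)$ equal to zero and therefore vanishes as an element of $L^2$.

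The step requiring the most care is the measure-theoretic bookkeeping in the forward direction: invoking Tonelli to guarantee that the slice $k(x,\cdot)$ is square-integrable for almost every $x$ (so that the defining integral is even meaningful), and then Fubini to legitimize the interchange of integration that produces $(\phi_{ij}|k)=(e_i|Te_j)$. Once this identity is secured, both directions collapse to Parseval's equality on the product basis, and the remaining arguments are essentially bookkeeping against the completeness of $\{\phi_{ij}\}$.
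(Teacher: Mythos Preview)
The paper does not actually prove this theorem; it is stated as a well-known result with citations to Hall's \emph{Quantum Theory for Mathematicians} and Davies' \emph{Linear Operators and Their Spectra}, and then used downstream. So there is no proof in the paper to compare against.

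Your argument is correct and is essentially the standard textbook proof. The forward direction is fine: Tonelli gives a.e.\ square-integrability of the slices $k(x,\cdot)$, Cauchy--Schwarz yields the bound $||Tf||_2\le||k||_2||f||_2$, and the identity $(\phi_{ij}|k)=(e_i|Te_j)$ together with Parseval on the product basis $\{\phi_{ij}\}$ gives $(T|T)=||k||_2^2$. The converse via reconstructing $k$ from the matrix coefficients $t_{ij}$ and appealing back to the forward direction is also standard, as is the uniqueness argument. One small remark: you invoke Theorem~\ref{thm:bah} to say $\{e_i(x)\overline{e_j(y)}\}$ is an orthonormal basis; strictly speaking that theorem gives $\{e_i\otimes \overline{e_j}\}$ as a basis of the Hilbert-space tensor product, and you are implicitly composing with the unitary of Theorem~\ref{thm:L2} to view these as functions on $\mathbb{R}^n\times\mathbb{R}^n$---exactly as you say at the outset, so this is consistent.
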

A positive trace-class operator on $L^2(\mathbb{R}^n)$, as a Hilbert-Schmidt operator, is an integral operator as well, and its trace may be determined by: 
\begin{thm}[\textbf{Mercer's theorem}]
If a positive and bounded integral operator $T$ has a continuous kernel $k$, then
\begin{equation}
\mathrm{tr} O=\int_X k(x,x)\,d\mu(x)
\end{equation}
\end{thm}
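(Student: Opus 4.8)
The plan is to reduce the statement to the spectral decomposition of $T$ and then gain control of the kernel on the diagonal. Since $T$ is positive it is self-adjoint, and being trace-class it is compact; hence there is an orthonormal basis $\{\phi_i\}$ of eigenfunctions with eigenvalues $\lambda_i\geq 0$ such that $T=\sum_i \lambda_i\ket{\phi_i}\bra{\phi_i}$. Because $\{\phi_i\}$ is an orthonormal basis, the defining formula \eqref{eq:tr} for the trace gives immediately $\trace T=\sum_i(\phi_i|T\phi_i)=\sum_i\lambda_i$, a convergent sum by the trace-class hypothesis. The goal is therefore to show that $\int_X k(x,x)\,d\mu(x)$ equals this same sum.

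First I would identify the kernel of each rank-one piece: the operator $\ket{\phi_i}\bra{\phi_i}$ is the integral operator with kernel $\phi_i(x)\overline{\phi_i(y)}$, so by Theorem~\ref{thm:hsi} (uniqueness of the $L^2$-kernel of a Hilbert-Schmidt operator) the partial sums $k_N(x,y):=\sum_{i=1}^N\lambda_i\phi_i(x)\overline{\phi_i(y)}$ converge to $k$ in $L^2(X\times X)$. The heart of the argument is to upgrade this to convergence along the diagonal. The key lemma is that a positive integral operator with continuous kernel has a real, pointwise non-negative diagonal: self-adjointness forces $k(x,x)$ real, and if $k(x_0,x_0)<0$ then by continuity $k$ stays negative on a small neighborhood, so testing $T$ against a function concentrated there would give $(f|Tf)<0$, contradicting positivity. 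Applying this lemma to the remainder $T-\sum_{i=1}^N\lambda_i\ket{\phi_i}\bra{\phi_i}$, which is again positive, yields $(k-k_N)(x,x)\geq 0$; equivalently, the continuous functions $x\mapsto\sum_{i=1}^N\lambda_i|\phi_i(x)|^2$ increase monotonically in $N$ and are bounded above by the continuous function $k(x,x)$.

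With monotonicity and a continuous limit in hand, I would invoke Dini's theorem on a compact domain to conclude that $\sum_i\lambda_i|\phi_i(x)|^2$ converges \emph{uniformly} to $k(x,x)$. Uniform convergence (or, applied directly to the non-negative partial sums, the monotone convergence theorem) then licenses term-by-term integration:
\[
\int_X k(x,x)\,d\mu(x)=\sum_i\lambda_i\int_X|\phi_i(x)|^2\,d\mu(x)=\sum_i\lambda_i=\trace T,
\]
using $||\phi_i||_2=1$. The main obstacle is precisely this passage from $L^2$-convergence of the kernel series to control on the diagonal: the diagonal has measure zero, so $L^2$-convergence alone says nothing there, and it is exactly positivity together with continuity of $k$ that rescues the diagonal values, through the non-negativity lemma and Dini's theorem. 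A secondary technical point is that Dini's theorem requires compactness, so for a non-compact $X$ one must either impose the relevant support or decay conditions on $k$, or exhaust $X$ by compact sets and verify that monotone convergence still delivers the term-by-term integration.
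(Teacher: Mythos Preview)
The paper does not prove Mercer's theorem; it is stated there as an established result (the paper reserves the label ``Theorem'' for cited facts and does not supply proofs for them). So there is no proof in the paper to compare against. Your outline is essentially the classical Mercer argument, but since you asked for review, two genuine gaps deserve mention.

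First, you invoke ``being trace-class it is compact'' at the outset, but trace-class is not among the hypotheses---indeed the paper remarks immediately after the statement that the trace may be $+\infty$. Compactness should instead come from the Hilbert--Schmidt property: on a compact $X$ a continuous kernel is bounded, hence square-integrable, and Theorem~\ref{thm:hsi} applies. Relatedly, you need the eigenfunctions $\phi_i$ to be continuous (so that the partial sums $k_N$ are), which follows from $\phi_i=\lambda_i^{-1}T\phi_i$ and continuity of $k$, but you do not say this.

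Second, and more seriously, your Dini step is circular as written. You establish only that the partial sums $\sum_{i\le N}\lambda_i|\phi_i(x)|^2$ are monotone and \emph{bounded above} by $k(x,x)$; Dini's theorem requires knowing in advance that the pointwise limit \emph{equals} the continuous function $k(x,x)$. Boundedness gives a limit $g(x)\le k(x,x)$, but you have not shown $g=k$ on the diagonal. The standard repair is to use the diagonal bound together with Cauchy--Schwarz to get uniform convergence of the full bilinear series $\sum_i\lambda_i\phi_i(x)\overline{\phi_i(y)}$ to a continuous kernel, then argue that this continuous limit must coincide with $k$ (since both represent the same Hilbert--Schmidt operator and are continuous). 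Only then does setting $y=x$ give $g(x)=k(x,x)$, after which your integration step goes through.
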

Note Mercer's theorem doesn't imply such an integral operator is trace-class: Its trace may be $+\infty.$  

\subsection{Another Look at Density Operators}\label{sec:ado}
A density operator $\rho$ (Section~\ref{sec:deop}) is said to be pure if it's an orthogonal projection onto a one-dimensional space; in other words, it's a mapping of this form $\ket{\psi}\bra{\psi}$, where $\ket{\psi}$ is normalized. If the Hilbert space is $L^2(\mathbb{R}^n)$, what this mapping does on an $L^2$-function $f$, \emph{by definition}, is 
\begin{equation}
(\ket{\psi}\bra{\psi} f)(\mathbf{x})=\psi(\mathbf{x})\int_{\mathbf{R}^n}\psi^*(\mathbf{y})f(\mathbf{y})  \,d\mathbf{y}=\int_{\mathbf{R}^n}\psi(\mathbf{x})\psi^*(\mathbf{y})f(\mathbf{y})  \,d\mathbf{y}.
\end{equation}
Therefore, $\ket{\psi}\bra{\psi}$ is an integral operator with kernel $k(\mathbf{x},\mathbf{y})=\psi(\mathbf{x})\psi^*(\mathbf{y})$. As an orthogonal projection onto a one-dimensional subspace, $\trace(\ket{\psi}\bra{\psi})=1$, agreed with Mercer's theorem if $\psi$ is continuous: $\int k(\mathbf{x},\mathbf{x})\,d\mathbf{x}=\int |\psi(\mathbf{x})|^2\,d\mathbf{x}=1$. 

In general, by definition a density operator is trace-class. According to Theorem~\ref{thm:hsi}, any density operator on $L^2(\mathbb{R}^n)$, being trace-class and therefore Hilbert-Schmidt, has a corresponding integral operator with a kernel in $L^2(\mathbb{R}^n\times \mathbb{R}^n)$. This is often denoted by 
\begin{equation}
\rho f(\mathbf{x})= \int_{\mathbb{R}^n} \rho(\mathbf{x},\mathbf{y}) f(\mathbf{y}) \,d\mathbf{y} \;\text{for any }f\in L^2(\mathbb{R}^n),
\end{equation}
where $\rho$ on the right side of the equality is an operator, whereas $\rho(\mathbf{x},\mathbf{y})$ on the left side is the kernel of $\rho$. Having trace 1, by Mercer's theorem the kernel $\rho(\mathbf{x},\mathbf{y})$, if \emph{continuous}, satisfies $\int_{\mathbb{R}^n} \rho(\mathbf{x},\mathbf{x})\,d \mathbf{x}=1$. The kernel is sometimes referred to as a ``density matrix.'' Beware of the application of Mercer's theorem: In general $\int_{\mathbb{R}^n} \rho(\mathbf{x},\mathbf{x})\,d \mathbf{x}\neq 1$ without extra conditions on the kernel or $\rho$ \cite{deGosson}.

Now let's consider a density operator $\rho$ on any separable Hilbert space $\mathcal{H}$. Being trace-class and thus compact, it has a discrete spectrum, and there exist orthogonal subspaces $\mathcal{H}_i$ of $\mathcal{H}$ such that \cite{Einsiedler,Loomis,deGosson}
\begin{equation}
\rho=\sum_i \lambda_i \Pi_{\mathcal{H}_i},
\end{equation} 
where $\lambda_i$ are eigenvalues of $\rho$ and $\Pi_{\mathcal{H}_i}$ are orthogonal projections onto $\mathcal{H}_i$. As a positive operator with unit trace, $\lambda_i\geq 0$ and $\sum_i \lambda_i=1$. If each $\mathcal{H}_i$ has an orthonormal basis $\{\psi_{i,j}\}$, then in terms of bra-ket we obtain
\begin{equation}
\rho=\sum_{i,j(i)}\lambda_i \ket{\psi_{i,j}}\bra{\psi_{i,j}}.\label{eq:rhoij}
\end{equation}

\subsection{Basis}\label{sec:bas}
With Hilbert-Schmidt inner product, we can choose an orthonormal basis for the Hilbert space formed by Hilbert-Schmidt operators. If the Hilbert space $\mathcal{H}$ they operate on has an orthonormal basis $\{\ket{a_i}\}$, we can define an orthonormal basis by \cite{HallQ,dePillis67,Jamiolkowski72}
\begin{equation}
	E_{ij}:=\ket{a_i}\bra{a_j}.\label{eq:eij}
\end{equation}
Clearly $E_{ij}^\dagger=E_{ji}$, so $\{E_{ij}^\dagger\}=\{E_{ij}\}$. The Fourier expansion of a Hilbert-Schmidt operator $O$ in terms of this basis is
\begin{equation}
O=\sum_{i,j} (E_{ij}|O)E_{ij}=\sum_{i,j}\bracket{a_i}{O}{a_j}\ket{a_i}\bra{a_j}.
\end{equation}
We can also write this as (c.f. Section~\ref{sec:id})
\begin{equation}
\mathcal{I}=\sum_{i,j} \Pi_{E_{ij}},
\end{equation}
where $\mathcal{I}$ is the identity mapping/operator on the space of Hilbert-Schmidt operators, and $\Pi_{E_{ij}}$ is the orthogonal projection onto the one-dimensional subspace spanned by $E_{ij}$. In a more quantum-mechanical language,
\begin{equation}
\mathcal{I}=\sum_{i,j}|E_{ij})(E_{ij}|.
\end{equation}

In terms of matrices, $E_{ij}$ is a matrix whose entries are $1$ at $(i,j)$ and 0 elsewhere, and the Fourier expansion is expressing a matrix as a linear combination of matrices $E_{ij}$ \cite{Choi75}. In general, the series will converge in the norm induced by Hilbert-Schmidt inner product, but whether it converges ``strongly'' needs further justification. Since we will use such expansions only in the finite-dimensional case, this won't be a problem.

\section{Weyl Quantization}\label{sec:Weyl}
Here we will examine how quantum operators can be linked to functions on the phase space. This will be useful in analyzing certain infinite-dimensional quantum systems.
\subsection{Weyl Quantization and Dequantization}\label{sec:weqde}
Weyl quantization is one of the quantization schemes, which map functions on the phase space to operators on $L^2$-functions on the configuration space, c.f. Axiom~\ref{ax:ob}. The process of finding the corresponding classical function on the phase space of a quantum operator/observable is called dequantization, which, like quantization, is not unique \cite{HallQ,deGosson}.
\begin{defi}[\textbf{Weyl quantization}]
Suppose $f$ is any function in $L^2(\mathbb{R}^{2n})$. Define $k: \mathbb{R}^{n}\times \mathbb{R}^{n}\rightarrow \mathbb{C}$ as
\begin{equation}
	k(\mathbf{x},\mathbf{y}):=\frac{1}{(2\pi\hbar)^n}\int_{\mathbb{R}^n}f(\frac{\mathbf{x}+\mathbf{y}}{2},\mathbf{p})e^{-i(\mathbf{y-}\mathbf{x})\cdot\mathbf{p}/\hbar}\,d\mathbf{p}.
\end{equation}
	The integral operator $Q_W(f)$ with kernel $k$, which is an operator on $L^2(\mathbb{R}^n)$, is called the Weyl quantization of $f$.
\end{defi}
The integral above may not always converge: Like the Fourier (Plancherel) transform on $L^2$-functions, it can be treated as $\lim_{R\rightarrow\infty}\int_{|\mathbf{p}|\leq R}$, rather than $\int_{\mathbb{R}^n}$. Even though the function is assumed to be $L^2$, there are non-$L^2$-functions with well-defined Weyl quantization. For example, the function $x_i p_i$ on $\mathbb{R}^{2n}$ has a Weyl quantization $(X_i P_i+P_i X_i)/2$, and the Weyl quantization of $f=1$ is $I$ \cite{deGosson}. More generally, the Weyl quantization of any polynomial $(\mathbf{a}\cdot \mathbf{x}+\mathbf{b}\cdot \mathbf{p})^j$ is $(\mathbf{a}\cdot \mathbf{X}+\mathbf{b}\cdot \mathbf{P})^j$ for all $\mathbf{a},\mathbf{b}\in\mathbb{R}^n$ and non-negative $j$, which is commonly known as \emph{symmetric ordering} by physicists \cite{HallQ,Adesso14,Barnett}.

The inverse of Weyl quantization, or dequantization, is \cite{HallQ,deGosson}:
\begin{thm}\label{thm:we}
Let $O$ be a Hilbert-Schmidt operator on $L^2(\mathbb{R}^n)$ with kernel $k(\mathbf{x},\mathbf{y})$. The inverse mapping of Weyl quantization is
\begin{align}
Q_W^{-1}(T)(\mathbf{x},\mathbf{p})&=\int_{\mathbb{R}^n}k(\mathbf{x}-\frac{ \mathbf{y}}{2},\mathbf{x}+\frac{\mathbf{y}}{2})e^{i\mathbf{y}\cdot\mathbf{p}/\hbar}\,d\mathbf{y}\\
&=2^n\int_{\mathbb{R}^n}k(\mathbf{x}-\mathbf{y},\mathbf{x}+\mathbf{y})e^{2i\mathbf{y}\cdot\mathbf{p}/\hbar}\,d\mathbf{y},
\end{align}
and $Q_W^{-1}(O)$ is in $L^2(\mathbb{R}^{2n})$. In addition, for $f\in L^2(\mathbb{R}^{2n})$, $Q_W(f^*)=Q_W(f)^\dagger$.
\end{thm}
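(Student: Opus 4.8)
The plan is to exhibit Weyl quantization as a composition of maps whose invertibility is already known, verify the stated formula on a dense subspace, and then extend by continuity. First I would rewrite the defining relation for the kernel $k$ of $Q_W(f)$ in the sum-and-difference coordinates $\mathbf{q}=(\mathbf{x}+\mathbf{y})/2$ and $\mathbf{r}=\mathbf{y}-\mathbf{x}$. In these coordinates the defining integral becomes, up to the constant $(2\pi\hbar)^{-n/2}$, exactly the unitary Plancherel transform (Section~\ref{sec:plan}, with the $\hbar$-normalization $(2\pi\hbar)^{-n/2}$) of $f(\mathbf{q},\cdot)$ in the momentum variable, evaluated at $\mathbf{r}$. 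Thus $Q_W$ factors as (i) the partial Plancherel transform in $\mathbf{p}$, which is unitary on $L^2(\mathbb{R}^{2n})$, followed by (ii) the linear shear $(\mathbf{x},\mathbf{y})\mapsto(\mathbf{q},\mathbf{r})$ acting on the kernel, whose Jacobian has determinant $1$ and which therefore preserves $L^2$-membership and norm. This immediately gives that $Q_W$ carries $L^2(\mathbb{R}^{2n})$ into the space of $L^2(\mathbb{R}^n\times\mathbb{R}^n)$ kernels, i.e. by Theorem~\ref{thm:hsi} into Hilbert-Schmidt operators, and that it is injective with a bounded inverse.

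Next I would invert the two factors in reverse order. The inverse of the shear is $(\mathbf{q},\mathbf{r})\mapsto(\mathbf{x},\mathbf{y})=(\mathbf{q}-\mathbf{r}/2,\mathbf{q}+\mathbf{r}/2)$, and the inverse partial Plancherel transform replaces $e^{-i\mathbf{r}\cdot\mathbf{p}/\hbar}$ by $e^{+i\mathbf{r}\cdot\mathbf{p}/\hbar}$; composing these and relabeling $\mathbf{q}\to\mathbf{x}$, $\mathbf{r}\to\mathbf{y}$ produces precisely the claimed formula, the $(2\pi\hbar)^{\pm n/2}$ constants cancelling so that no prefactor survives. Since the kernel $k$ of a Hilbert-Schmidt operator $O$ lies in $L^2(\mathbb{R}^n\times\mathbb{R}^n)$ by Theorem~\ref{thm:hsi}, and both inverse operations preserve $L^2$, the output $Q_W^{-1}(O)$ lies in $L^2(\mathbb{R}^{2n})$. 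The second displayed form follows from the substitution $\mathbf{y}\mapsto 2\mathbf{y}$, which turns $\mathbf{x}\mp\mathbf{y}/2$ into $\mathbf{x}\mp\mathbf{y}$, sends the exponent to $2i\mathbf{y}\cdot\mathbf{p}/\hbar$, and contributes the Jacobian factor $2^n$.

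For the adjoint identity I would use that an integral operator with kernel $k(\mathbf{x},\mathbf{y})$ has adjoint with kernel $\overline{k(\mathbf{y},\mathbf{x})}$, which is a one-line computation from the definition of the Hilbert-Schmidt pairing. Writing out $\overline{k(\mathbf{y},\mathbf{x})}$ from the definition of $Q_W(f)$ and using $\overline{e^{-i(\mathbf{x}-\mathbf{y})\cdot\mathbf{p}/\hbar}}=e^{-i(\mathbf{y}-\mathbf{x})\cdot\mathbf{p}/\hbar}$ shows that this kernel is identical to the kernel of $Q_W(f^*)$, whence $Q_W(f)^\dagger=Q_W(f^*)$.

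The main obstacle is the rigor of the integral manipulations: for general $f\in L^2(\mathbb{R}^{2n})$ neither the defining integral nor the inversion integral need converge absolutely, so the use of Fubini's theorem and of pointwise Fourier inversion above is only formal. I would circumvent this by first establishing all the identities on the Schwartz space $\mathcal{S}(\mathbb{R}^{2n})$, where every integral converges absolutely, Fubini applies, and Plancherel inversion holds pointwise; since $\mathcal{S}(\mathbb{R}^{2n})$ is dense in $L^2(\mathbb{R}^{2n})$ (Section~\ref{sec:Sch}) and all the maps involved are bounded (unitary transforms and a measure-preserving change of variables), the identities extend to all of $L^2(\mathbb{R}^{2n})$ by continuity.
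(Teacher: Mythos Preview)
The paper does not actually supply a proof of this theorem; it is stated with citations to Hall and de~Gosson and then used. Your argument is correct and is in fact the standard one found in those references: factor $Q_W$ as an $\hbar$-scaled partial Plancherel transform in the momentum variable composed with the measure-preserving shear $(\mathbf{x},\mathbf{y})\leftrightarrow(\mathbf{q},\mathbf{r})$, invert each factor, and track the constants. Your bookkeeping of the $(2\pi\hbar)^{\pm n/2}$ factors, the Jacobian~$1$ of the shear, the substitution $\mathbf{y}\mapsto 2\mathbf{y}$ for the second displayed formula, and the kernel identity $k_{T^\dagger}(\mathbf{x},\mathbf{y})=\overline{k_T(\mathbf{y},\mathbf{x})}$ for the adjoint part are all correct. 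The density-plus-continuity maneuver via $\mathcal{S}(\mathbb{R}^{2n})$ is the right way to make the formal Fourier inversion rigorous for general $L^2$ symbols, and is exactly how the cited references handle it.
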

This theorem shows that Hilbert-Schmidt operators on $L^2(\mathbb{R}^n)$ can be dequantized, and that if $f$ is a real function, then its Weyl quantization will be self-adjoint. The ``dequantized'' function will be called the \textbf{Weyl symbol} of an operator. The pair of an operator $T$ and its Weyl symbol $t$, or a function $t$ on the phase space and its Weyl quantization $T$, following Ref.~\cite{deGosson}, will be denoted by
\begin{equation}
T\weyl t.\label{eq:we}
\end{equation}

\subsection{Wigner Transform}\label{sec:wtr}

\begin{defi}[\textbf{Wigner transform}]\label{def:Wig}
Let $\psi\in L^2(\mathbb{R}^n)$. The Wigner transform $W:L^2(\mathbb{R}^n)\rightarrow L^2(\mathbb{R}^{2n})$ is defined as $\psi\mapsto (2\pi\hbar)^{-n} Q_W^{-1}(\ket{\psi}\bra{\psi})$, explicitly, 
\begin{equation}
W(\psi)(\mathbf{x},\mathbf{p})=\frac{1}{(2\pi\hbar)^n}\int_{\mathbb{R}^n}\psi(\mathbf{x}-\frac{\mathbf{y}}{2})\psi^*(\mathbf{x}+\frac{\mathbf{y}}{2}) e^{i \mathbf{y}\cdot\mathbf{p}/\hbar}\,d\mathbf{y}. 
\end{equation}
\end{defi}
As discussed in Section~\ref{sec:ado}, $\ket{\psi}\bra{\psi}$ is an integral operator with kernel $k(\mathbf{x},\mathbf{y})=\psi(\mathbf{x})\psi^*(\mathbf{y})$, and it's necessarily Hilbert-Schmidt, so the Wigner transform of an $L^2$-function is also $L^2$, by Theorem~\ref{thm:we}. The Wigner transform essentially brings a function on the configuration space to a function on the phase space. There's a similar transform, called Wigner-Moyal transform, which is a bilinear mapping on two functions $\psi,\phi$, replacing $\ket{\psi}\bra{\psi}$ with $\ket{\psi}\bra{\phi}$; interested readers can refer to Ref.~\cite{deGosson}.

There's an important property of Wigner transform due to Moyal \cite{deGosson}:
\begin{thm}\label{thm:Wexp}
Suppose $f$ is a function on $\mathbb{R}^{2n}$ and has a well-defined Weyl quantization. Then for any $\psi\in L^2(\mathbb{R}^n)$
\begin{equation}
\langle\psi|Q_W(f)\psi\rangle=\int_{\mathbb{R}^{2n}} W(\psi)f \,d\mathbf{x}\,d\mathbf{p}.
\end{equation}
\end{thm}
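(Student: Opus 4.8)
The plan is to write both sides as explicit triple integrals and show they coincide after a single linear change of variables; the computation itself is short, and essentially all of the genuine difficulty lies in justifying the rearrangements of integrals that need not converge absolutely.

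First I would unfold the left-hand side. By definition $Q_W(f)$ is the integral operator with kernel $k$, so
\[
\langle\psi|Q_W(f)\psi\rangle=\int_{\mathbb{R}^n}\int_{\mathbb{R}^n}\psi^*(\mathbf{x})\,k(\mathbf{x},\mathbf{y})\,\psi(\mathbf{y})\,d\mathbf{x}\,d\mathbf{y}.
\]
Substituting the defining formula for $k(\mathbf{x},\mathbf{y})$ turns this into
\[
\langle\psi|Q_W(f)\psi\rangle=\frac{1}{(2\pi\hbar)^n}\int\!\!\int\!\!\int \psi^*(\mathbf{x})\,\psi(\mathbf{y})\,f\!\Big(\tfrac{\mathbf{x}+\mathbf{y}}{2},\mathbf{p}\Big)\,e^{-i(\mathbf{y}-\mathbf{x})\cdot\mathbf{p}/\hbar}\,d\mathbf{x}\,d\mathbf{y}\,d\mathbf{p}.
\]

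Then I would apply the change of variables $\mathbf{q}=(\mathbf{x}+\mathbf{y})/2$, $\mathbf{u}=\mathbf{y}-\mathbf{x}$ (so $\mathbf{x}=\mathbf{q}-\mathbf{u}/2$, $\mathbf{y}=\mathbf{q}+\mathbf{u}/2$), whose Jacobian is $1$. This carries the integrand to $\psi^*(\mathbf{q}-\tfrac{\mathbf{u}}{2})\psi(\mathbf{q}+\tfrac{\mathbf{u}}{2})f(\mathbf{q},\mathbf{p})e^{-i\mathbf{u}\cdot\mathbf{p}/\hbar}$. On the other hand, expanding the right-hand side by the definition of the Wigner transform gives
\[
\int_{\mathbb{R}^{2n}}W(\psi)\,f\,d\mathbf{x}\,d\mathbf{p}=\frac{1}{(2\pi\hbar)^n}\int\!\!\int\!\!\int \psi(\mathbf{q}-\tfrac{\mathbf{y}}{2})\psi^*(\mathbf{q}+\tfrac{\mathbf{y}}{2})\,e^{i\mathbf{y}\cdot\mathbf{p}/\hbar}\,f(\mathbf{q},\mathbf{p})\,d\mathbf{y}\,d\mathbf{q}\,d\mathbf{p}.
\]
Relabelling $\mathbf{y}\mapsto-\mathbf{u}$ (a measure-preserving reflection) makes this integrand identical to the one obtained for the left-hand side, so the two sides agree termwise.

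The main obstacle is purely analytic: the $\mathbf{p}$-integral defining $k$ (and, dually, the integral defining $W(\psi)$) is in general only a Plancherel-type limit rather than an absolutely convergent Lebesgue integral, so neither the substitution of $k$ into the double integral nor the interchange of the order of integration is automatically licensed. I would handle this by first proving the identity for $\psi\in\mathcal{S}(\mathbb{R}^n)$ and $f\in\mathcal{S}(\mathbb{R}^{2n})$, where every integrand above is jointly Schwartz, all triple integrals converge absolutely, and Fubini together with the change of variables applies verbatim. The general case then follows by density and continuity: both sides are continuous in $f$ (for fixed $\psi$) and in $\psi$ in the relevant topologies---the left side because $Q_W$ and the expectation pairing are continuous, the right side because $W(\psi)\in L^2(\mathbb{R}^{2n})$ depends continuously on $\psi$ (Theorem~\ref{thm:we}) and pairs continuously with $f$---so the equality extends from the dense Schwartz class to all admissible $f$ and $\psi$.
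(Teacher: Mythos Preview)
The paper does not actually prove this theorem: it is stated as a known result attributed to Moyal and cited from de Gosson's book, with no argument given. So there is no ``paper's proof'' to compare against.

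Your proposed argument is the standard one and is correct. The formal computation is right: after substituting the kernel and making the change of variables $(\mathbf{x},\mathbf{y})\mapsto(\mathbf{q},\mathbf{u})$ with unit Jacobian, the two triple integrals match term by term. Your identification of the analytic difficulty is also on target---the $\mathbf{p}$-integral in the kernel and the Fourier-type integral in $W(\psi)$ are generally not absolutely convergent---and the remedy of first proving the identity on $\mathcal{S}(\mathbb{R}^n)\times\mathcal{S}(\mathbb{R}^{2n})$ (where everything is jointly Schwartz and Fubini applies) and then extending by density and continuity is exactly how this is handled in the references the paper cites. One small point worth tightening: the phrase ``all admissible $f$ and $\psi$'' is a bit vague, since the theorem as stated only assumes $f$ has ``a well-defined Weyl quantization,'' which the paper itself leaves somewhat informal. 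Your density argument as written covers $f\in L^2(\mathbb{R}^{2n})$ cleanly (via the $L^2$-$L^2$ pairing with $W(\psi)$); extending to more general symbols (e.g.\ polynomials, or tempered distributions) requires a separate continuity statement that you would need to spell out if you want the full generality the paper's applications seem to assume.
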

This theorem implies that the expectation value of an observable over a state $\ket{\psi}$, if the observable has a classical counterpart, can be computed in the phase space. In particular,
\begin{equation}
||\psi||^2=\inner{\psi}{\psi}=\int_{\mathbb{R}^{2n}} W(\psi)\,d\mathbf{x}\,d\mathbf{p},
\end{equation}
by choosing $f=1$ so that $Q_W(f)=I$. Therefore, if $\psi$ is normalized, this integral is equal to 1. 

\subsection{Wigner Quasi-probability Distribution}
\begin{defi}[\textbf{Wigner quasi-probability distribution}]
The Weyl symbol of a density operator $\rho$ on $L^2(\mathbb{R}^n)$ divided by $(2\pi\hbar)^n$ is called the Wigner quasi-probability distribution of $\rho$, denoted by $W_\rho$ \cite{Adesso14}. Namely (c.f. Theorem~\ref{thm:we})
\begin{equation}
W_\rho:=\frac{1}{(2\pi\hbar)^n}Q_W^{-1}(\rho)=\frac{1}{(2\pi\hbar)^n}\int_{\mathbb{R}^n}\rho(\mathbf{x}-\frac{ \mathbf{y}}{2},\mathbf{x}+\frac{\mathbf{y}}{2})e^{i\mathbf{y}\cdot\mathbf{p}/\hbar}\,d\mathbf{y},
\end{equation}
where $\rho$ on the right side of the equation is the kernel of the operator $\rho$.
\end{defi}

Comparing this with Definition~\ref{def:Wig}, the Wigner quasi-probability distribution of a pure state $\ket{\psi}\bra{\psi}$ is exactly the Wigner transform of $\psi$, i.e. $W_\rho=W(\psi)$ \cite{deGosson}. More generally, 
\begin{thm}\label{thm:Wig}
Decomposing a density operator $\rho$ on $L^2(\mathbb{R}^n)$ as orthogonal projections onto one-dimensional spaces as in \eqref{eq:rhoij},
\begin{equation*}
\rho=\sum_{i,j(i)}\lambda_i \ket{\psi_{i,j}}\bra{\psi_{i,j}},
\end{equation*}
the (integral) kernel of $\rho$ then is
\begin{equation}
\rho(\mathbf{x},\mathbf{y})=\sum_{i,j(i)} \lambda_i\psi_{i,j}(x)\psi^*_{i,j}(y),
\end{equation} 
and the Wigner quasi-probability distribution is
\begin{equation}
W_\rho=\sum_{i,j(i)} \lambda_i W(\psi_{i,j}),
\end{equation}
so
\begin{equation}
\int_{\mathbb{R}^{2n}} W_\rho\,d\mathbf{x}\,d\mathbf{p}=1,
\end{equation}
\end{thm}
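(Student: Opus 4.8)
The plan is to reduce the whole statement to the pure-state case already in hand and to exploit linearity throughout, so that almost nothing new has to be computed; the only genuine work is controlling the infinite sum. Everything rests on three earlier facts: the kernel of a rank-one projection (Section~\ref{sec:ado}), linearity and continuity of dequantization (Theorem~\ref{thm:we}), and the normalization $\int W(\psi)=\inner{\psi}{\psi}$ that follows Theorem~\ref{thm:Wexp}.

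First I would read off the kernel. By Section~\ref{sec:ado} each rank-one projection $\ket{\psi_{i,j}}\bra{\psi_{i,j}}$ is the integral operator with kernel $\psi_{i,j}(\mathbf{x})\psi^*_{i,j}(\mathbf{y})$, and by Theorem~\ref{thm:hsi} the map sending a Hilbert-Schmidt operator to its kernel in $L^2(\mathbb{R}^n\times\mathbb{R}^n)$ is linear and bijective. Since $\rho$ is trace-class, hence Hilbert-Schmidt, the spectral series \eqref{eq:rhoij} converges in Hilbert-Schmidt norm, so applying the kernel map termwise yields $\rho(\mathbf{x},\mathbf{y})=\sum_{i,j(i)}\lambda_i\psi_{i,j}(\mathbf{x})\psi^*_{i,j}(\mathbf{y})$ with convergence in $L^2(\mathbb{R}^n\times\mathbb{R}^n)$. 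Applying next the dequantization $Q_W^{-1}$, which by Theorem~\ref{thm:we} is a bounded linear map from the Hilbert-Schmidt operators into $L^2(\mathbb{R}^{2n})$, and dividing by $(2\pi\hbar)^n$, I would invoke Definition~\ref{def:Wig} in the form $W(\psi)=(2\pi\hbar)^{-n}Q_W^{-1}(\ket{\psi}\bra{\psi})$ to get $W_\rho=\sum_{i,j(i)}\lambda_i W(\psi_{i,j})$, again converging in $L^2(\mathbb{R}^{2n})$.

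Finally I would integrate. Each $\psi_{i,j}$ is a unit vector, so the corollary of Moyal's theorem stated right after Theorem~\ref{thm:Wexp} gives $\int_{\mathbb{R}^{2n}}W(\psi_{i,j})\,d\mathbf{x}\,d\mathbf{p}=\inner{\psi_{i,j}}{\psi_{i,j}}=1$; summing against the weights and using $\sum_{i,j(i)}\lambda_i=\trace\rho=1$ (the double index runs over a full orthonormal eigenbasis, so the weighted count of the $\lambda_i$ is the trace) gives $\int_{\mathbb{R}^{2n}}W_\rho\,d\mathbf{x}\,d\mathbf{p}=1$. The hard part is the last interchange of $\sum$ and $\int$: the integral is the pairing of $W_\rho$ with the constant $1\notin L^2(\mathbb{R}^{2n})$, so $L^2$-convergence of the series does not by itself license termwise integration, and the summands $W(\psi_{i,j})$ are sign-indefinite so a monotone-convergence shortcut is unavailable. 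I would settle this by establishing the mixed-state form of Moyal's identity, $\trace(\rho\,Q_W(f))=\int_{\mathbb{R}^{2n}}W_\rho\,f\,d\mathbf{x}\,d\mathbf{p}$, which follows from Theorem~\ref{thm:Wexp} by the same linearity-plus-trace-norm-continuity argument, and then specializing to $f=1$, $Q_W(1)=I$, so that $\int_{\mathbb{R}^{2n}}W_\rho=\trace(\rho I)=\trace\rho=1$ in a single step, bypassing any delicate termwise exchange.
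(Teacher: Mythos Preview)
The paper states this theorem without proof, treating the kernel and Wigner formulas as immediate from linearity applied to the pure-state case established in Section~\ref{sec:ado} and Definition~\ref{def:Wig}, and the normalization as following from $\int W(\psi_{i,j})=1$ together with $\sum\lambda_i=1$. Your approach matches this implicit reasoning and is in fact more careful than the paper about the convergence of the infinite sums.

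One point deserves comment: your final workaround for the $\sum$--$\int$ interchange is circular as stated. You propose to avoid the termwise exchange by invoking the mixed-state Moyal identity $\trace(\rho\,Q_W(f))=\int W_\rho f$, derived from Theorem~\ref{thm:Wexp} ``by the same linearity-plus-trace-norm-continuity argument.'' But passing from the pure-state identity to the mixed-state one termwise requires precisely the interchange you are trying to sidestep: on the right-hand side the functional $W\mapsto\int Wf$ with $f=1$ is not continuous in the $L^2$ topology in which you have established convergence of $\sum\lambda_i W(\psi_{i,j})$. The paper itself flags exactly this difficulty in the discussion immediately following the theorem (the $\overset{?}{=}$ in \eqref{eq:orho}) and admits its own dominated-convergence argument is incomplete. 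A cleaner route, which the paper cites a few lines later from \cite{deGosson}, is the direct statement that for trace-class self-adjoint $A\weyl a$ one has $\trace A=(2\pi\hbar)^{-n}\int a$; applying this to $A=\rho$ gives $\int W_\rho=\trace\rho=1$ without any termwise manipulation.
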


Given an operator $T$ on $L^2(\mathbb{R}^n)$, Theorem~\ref{thm:Wig} along with Theorem~\ref{thm:Wexp} seems to imply \cite{Adesso14}
\begin{align}
\trace (T\rho)&=\sum_{i,j(i)}\lambda_i\int_{\mathbb{R}^{2n}} t  W(\psi_{i,j})\,d\mathbf{x}\,d\mathbf{p}\label{eq:orho0}\\
&\overset{?}{=} \int_{\mathbb{R}^{2n}} t \sum_{i,j(i)}\lambda_i W(\psi_{i,j})\,d\mathbf{x}\,d\mathbf{p} =\int_{\mathbb{R}^{2n}} t W_\rho \,d\mathbf{x}\,d\mathbf{p}.\label{eq:orho}
\end{align}
For \eqref{eq:orho} to be true some extra conditions are required. 
\subsubsection{Trace-class observable}
It can be shown \cite{deGosson}
\begin{thm}
If both $A$ and $B$ are trace-class and self-adjoint operators on $L^2(\mathbb{R}^n)$ with $A\weyl a$ and $B\weyl b$, then
\begin{equation}
\trace A= \frac{1}{(2\pi\hbar)^n}\int a \,d\mathbf{x}\,d\mathbf{p},
\end{equation}
and\footnote{As trace-class operators are Hilbert-Schmidt, $\trace(AB)$ exists.}
\begin{equation}
\trace (AB)=\frac{1}{(2\pi\hbar)^n}\int ab \,d\mathbf{x}\,d\mathbf{p}.
\end{equation}
\end{thm}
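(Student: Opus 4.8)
The plan is to prove both identities by passing to integral kernels and reducing each trace to the integral of a kernel along the diagonal. Throughout I write $k_A(\mathbf{x},\mathbf{y})$ for the kernel of $A$ and $k_B$ for that of $B$; both exist and lie in $L^2(\mathbb{R}^n\times\mathbb{R}^n)$ because trace-class operators are Hilbert-Schmidt, so Theorem~\ref{thm:hsi} applies. The two formulas are then almost the same computation, driven by the dequantization formula of Theorem~\ref{thm:we} together with Fourier inversion in the $\mathbf{p}$-variable.

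For the first identity I would start from Theorem~\ref{thm:we}, which gives $a(\mathbf{x},\mathbf{p})=\int_{\mathbb{R}^n}k_A(\mathbf{x}-\mathbf{y}/2,\mathbf{x}+\mathbf{y}/2)e^{i\mathbf{y}\cdot\mathbf{p}/\hbar}\,d\mathbf{y}$. Integrating over $\mathbf{p}$ and using $\int_{\mathbb{R}^n}e^{i\mathbf{y}\cdot\mathbf{p}/\hbar}\,d\mathbf{p}=(2\pi\hbar)^n\delta(\mathbf{y})$ collapses the $\mathbf{y}$-integral onto the diagonal, yielding $\int a\,d\mathbf{p}=(2\pi\hbar)^n k_A(\mathbf{x},\mathbf{x})$. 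A further integration over $\mathbf{x}$ gives $\frac{1}{(2\pi\hbar)^n}\int a\,d\mathbf{x}\,d\mathbf{p}=\int k_A(\mathbf{x},\mathbf{x})\,d\mathbf{x}$, which equals $\trace A$ by Mercer's theorem.

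For the second identity I would first compute the kernel of the composition, $k_{AB}(\mathbf{x},\mathbf{z})=\int k_A(\mathbf{x},\mathbf{y})k_B(\mathbf{y},\mathbf{z})\,d\mathbf{y}$, so that $\trace(AB)=\int\int k_A(\mathbf{x},\mathbf{y})k_B(\mathbf{y},\mathbf{x})\,d\mathbf{x}\,d\mathbf{y}$. On the other side, substituting the dequantization formulas for both $a$ and $b$ and integrating over $\mathbf{p}$ produces a factor $(2\pi\hbar)^n\delta(\mathbf{y}+\mathbf{y}')$ that forces $\mathbf{y}'=-\mathbf{y}$; the change of variables $\mathbf{u}=\mathbf{x}-\mathbf{y}/2$, $\mathbf{v}=\mathbf{x}+\mathbf{y}/2$, which has unit Jacobian, then turns $\frac{1}{(2\pi\hbar)^n}\int ab\,d\mathbf{x}\,d\mathbf{p}$ into $\int\int k_A(\mathbf{u},\mathbf{v})k_B(\mathbf{v},\mathbf{u})\,d\mathbf{u}\,d\mathbf{v}=\trace(AB)$. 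Equivalently, since $A$ is self-adjoint its symbol $a$ is real by the relation $Q_W(f^*)=Q_W(f)^\dagger$ in Theorem~\ref{thm:we}, so $\trace(AB)=\trace(A^\dagger B)$ is the Hilbert-Schmidt inner product $(A|B)$ and the statement becomes a Plancherel identity asserting that the Weyl correspondence is an isometry up to the factor $(2\pi\hbar)^n$; this is the same calculation read differently.

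The main obstacle is rigor rather than the formal manipulation. Mercer's theorem as quoted requires a \emph{positive} operator with a \emph{continuous} kernel, whereas $A$ is only self-adjoint, so I would split $A$ into its positive and negative parts via the spectral theorem and apply Mercer to each, invoking the kernel regularity available for trace-class operators and keeping in mind the warning in Section~\ref{sec:ado} that $\int\rho(\mathbf{x},\mathbf{x})\,d\mathbf{x}$ need not equal the trace without such hypotheses. The delta-function and Fubini steps likewise need care, since the Weyl symbols are defined only as $L^2$-limits; the clean route is to establish both identities first on a dense, well-behaved class of operators (for instance Weyl quantizations of Schwartz symbols, where every integral converges absolutely and all interchanges are legitimate) and then extend by continuity, using that each side is continuous in the relevant trace-class and $L^2(\mathbb{R}^{2n})$ norms.
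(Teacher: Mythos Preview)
The paper does not prove this theorem: it is stated with the preamble ``It can be shown \cite{deGosson}'' and no argument is given, so there is no paper proof to compare against. Your formal computation is the standard one and is correct at that level; the identification of $\trace(AB)$ with the Hilbert-Schmidt inner product $(A|B)$ via self-adjointness, together with the observation that Weyl dequantization is an $L^2$-isometry up to the factor $(2\pi\hbar)^n$, is in fact the cleanest way to get the second formula rigorously and avoids the delta-function step entirely.

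Your own diagnosis of the weak points is accurate. The serious gap is the first identity: Mercer's theorem as stated in the thesis needs positivity and continuity of the kernel, and neither is guaranteed for a general self-adjoint trace-class operator, so splitting $A=A^+-A^-$ only helps if you can justify continuity of the kernels of $A^\pm$, which you have not. The route you sketch at the end---prove both identities for operators with Schwartz symbols, where everything is absolutely convergent, and then extend by continuity---is the right fix, but note that for the first identity the relevant continuity is in the trace norm on the operator side and in $L^1(\mathbb{R}^{2n})$ on the symbol side, and the thesis has not established that $A\mapsto a$ is continuous from trace-class to $L^1$; this is true but requires a separate argument (it is part of the content of the cited reference). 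For the second identity the extension is easier because both sides are manifestly continuous in the Hilbert--Schmidt/$L^2$ topology.
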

Therefore \eqref{eq:orho} holds true for a trace-class observable $T$. 
\subsubsection{Utilizing dominated convergence}
The argument below is incomplete as will be explained toward the end, so please take it with a grain of salt. Lebesgue's dominated convergence theorem \cite{PapaRudin,Cohn} provides a sufficient condition for whether the limit outside the integral sign can be taken inside. For example, if $T$ is a positive operator with $T \weyl t$, then $\bracket{\psi_{i,j}}{T}{\psi_{i,j}}\geq 0,$ so for $N_1\geq N_2$
\begin{equation}
\sum_{i}^{N_1}\sum_{j(i)}\bracket{\psi_{i,j}}{T}{\psi_{i,j}}\geq \sum_{i}^{N_2}\sum_{j(i)}\bracket{\psi_{i,j}}{T}{\psi_{i,j}}.
\end{equation}
Hence, if $t W_\rho\in L^1(\mathbb{R}^{2n})$, then by the dominated convergence theorem \eqref{eq:orho} will be true. In the same vein, if there's a positive operator $T'\geq T$ which has a Weyl symbol, and if the product of the Weyl symbols is $L^1$, then \eqref{eq:orho} holds.

Now suppose $W_\rho$ is a Schwartz function (Section~\ref{sec:Sch}) and $t$ is a polynomial in $\mathbf{x}$ and $\mathbf{p}$. Because there exists a positive polynomial larger than $t$ everywhere, and because the product of such a polynomial and a Schwartz function is Schwartz, which is also an $L^1$-function \cite{Einsiedler}, the integral converges to \eqref{eq:orho}. Simply put, \eqref{eq:orho} is correct for a density operator $\rho$ with a Schwartz $W_\rho$ and an observable whose Weyl symbol is a polynomial in $\mathbf{x}$ and $\mathbf{p}$.

There's actually one issue that hasn't been resolved: is the equality in \eqref{eq:orho0} correct? To be more precise, since $T$ may be an unbounded operator, we don't know whether $T\sum_{i,j}\ket{\psi_{i,j}}\bra{\psi_{i,j}}=\sum_{i,j}T\ket{\psi_{i,j}}\bra{\psi_{i,j}}$, and whether $\trace\sum_{i,j}T\ket{\psi_{i,j}}\bra{\psi_{i,j}}=\sum_{i,j}\trace T\ket{\psi_{i,j}}\bra{\psi_{i,j}}$. Given the classes of operators $\rho$ and $T$ are, it may be easier to verify $\trace T\rho=\int t W_\rho \,d\mathbf{x}\,d\mathbf{p}$ directly, but we'll stop pursuing this problem here, and take it for granted.

\subsection{Metaplectic Group}
Here we will glimpse the metaplectic group. As to introduce it in a even remotely rigorous way requires many tools beyond the scope of this thesis, we will only touch on what is essential for this work. Please refer to Ref.~\cite{deGosson,Woit,Simon94,Arvind95} if interested.

A metaplectic group $Mp(n)$ is a subgroup of $\mathcal{U}(L^2(\mathbb{R}^n))$, the group of unitary operators on $L^2(\mathbb{R}^n)$. There exists a group homomorphism $\pi_{Mp}$ from $Mp(n)$ \emph{onto} $Sp(n)$ (Section~\ref{sec:sym}), with kernel $\{I,-I\}$. There's a vital property of $Mp(n)$ \cite{deGosson}:
\begin{thm}\label{thm:meta}
For every Weyl operator with the corresponding symbol $A\overset{\mathrm{Weyl}}{\longleftrightarrow} a$, if $\hat{S}\in Mp(n)$ has $\pi_{Mp}(\hat{S})=S\in Sp(n)$,
\begin{equation}
a\circ S\overset{\mathrm{Weyl}}{\longleftrightarrow} \hat{S}^{-1}A \hat{S}.
\end{equation}
\end{thm}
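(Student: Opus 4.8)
The plan is to reduce the statement to the behaviour of the Heisenberg--Weyl (displacement) operators under metaplectic conjugation. Writing $\hat{T}(z_0)$, $z_0\in\mathbb{R}^{2n}$, for the displacement operators --- the joint exponentials of the operators $U_\mathbf{a}$ and $V_\mathbf{a}$ of Section~\ref{sec:pomo} --- every Weyl operator admits the integral representation
\[
Q_W(a)=\left(\frac{1}{2\pi\hbar}\right)^n\int_{\mathbb{R}^{2n}} a_\omega(z_0)\,\hat{T}(z_0)\,dz_0,
\]
where $a_\omega(z):=(2\pi\hbar)^{-n}\int a(z')\,e^{-i\omega(z,z')/\hbar}\,dz'$ is the symplectic Fourier transform of $a$ and $\omega$ is the standard symplectic form of Section~\ref{sec:sym}. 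My first step would be to record this representation and confirm it agrees with the kernel definition of $Q_W$ given above; this is a routine Fourier computation.

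The crux is the intertwining relation $\hat{S}^{-1}\hat{T}(z_0)\hat{S}=\hat{T}(S^{-1}z_0)$, valid whenever $\pi_{Mp}(\hat{S})=S$. Granting it, I would conjugate the integral representation, pass $\hat{S}^{-1}(\cdot)\hat{S}$ inside the integral, and substitute $z_0=Sw$; because $S\in Sp(n)$ forces $\det S=1$ the measure is unchanged, leaving $\hat{S}^{-1}A\hat{S}=(2\pi\hbar)^{-n}\int a_\omega(Sw)\,\hat{T}(w)\,dw$. A short computation using the $S$-invariance of $\omega$ (so that $\omega(w,S^{-1}u)=\omega(Sw,u)$) gives $a_\omega(Sw)=(a\circ S)_\omega(w)$; comparing with the representation of $Q_W(a\circ S)$ then yields $\hat{S}^{-1}A\hat{S}=Q_W(a\circ S)$, which is the assertion.

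The main obstacle is exactly the intertwining relation, since it encodes the defining property of the metaplectic representation, which this thesis does not construct in detail. I see two routes. The first is to adopt that relation as the characterising property of $Mp(n)$ and cite it (Ref.~\cite{deGosson}), after which the theorem follows formally as above. The second, self-contained route verifies the covariance on a generating set of $Sp(n)$ --- the symmetric ``shear'' blocks (quantized by multiplication by a chirp $e^{i\,x\tr B x/2\hbar}$), the ``dilation'' maps $x\mapsto A^{-1}x$ (quantized by $\psi\mapsto|\det A|^{-1/2}\psi(A^{-1}x)$), and the ``Fourier'' block $\left(\begin{smallmatrix}0&I\\-I&0\end{smallmatrix}\right)$ (quantized by $\mathcal{F}$) --- checking the relation on each and extending to all of $Sp(n)$ by the homomorphism property of $\pi_{Mp}$.

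An alternative staying closer to the tools already established here is to invoke Moyal's identity (Theorem~\ref{thm:Wexp}): testing both sides of the claimed identity against an arbitrary $\psi\in L^2(\mathbb{R}^n)$ and using the unitarity of $\hat{S}$ together with the change of variables $w=Sz$ reduces everything to the metaplectic covariance of the Wigner transform, $W(\hat{S}\psi)=W(\psi)\circ S^{-1}$, with polarization promoting the resulting diagonal matrix-element identity to the full operator identity. This merely trades one covariance statement for another of equal depth, so I would still expect the generator-by-generator verification of $W(\hat{S}\psi)=W(\psi)\circ S^{-1}$ to be where the real work lies.
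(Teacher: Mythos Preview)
The paper does not prove this theorem at all: it is stated with a citation to Ref.~\cite{deGosson} and, in keeping with the thesis's convention that a ``Theorem'' denotes a well-established result quoted from the literature, no argument is supplied. So there is nothing in the paper to compare your proposal against.

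That said, your outline is the standard proof one finds in the cited reference. The integral representation of $Q_W(a)$ via displacement operators, the conjugation identity $\hat{S}^{-1}\hat{T}(z_0)\hat{S}=\hat{T}(S^{-1}z_0)$, the change of variables using $\det S=1$, and the symplectic-Fourier calculation $a_\omega(Sw)=(a\circ S)_\omega(w)$ together give exactly the result. You are right that the intertwining relation is the substantive input and that it is equivalent in depth to the Wigner covariance $W(\hat{S}\psi)=W(\psi)\circ S^{-1}$; verifying either on a generating set of $Sp(n)$ is the honest work, and the thesis explicitly declines to construct $Mp(n)$ in that detail. Your proposal is therefore more than the paper offers, and it is correct.
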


This theorem implies that given a density operator $\rho$ and its Wigner quasi-probability distribution $W_\rho$, we can perform a symplectic mapping $S$ on the phase space, resulting in another phase-space function $W_\rho\circ S$, which amounts to another density operator $\rho'$ that is unitarily equivalent to $\rho$. Beware of surjectivity of the homomorphism $\pi:Mp(n)\rightarrow Sp(n)$: This means there always exists one (actually two) $\hat{S}$ for every $S$, so every symplectic transform on the phase space is possible.

\section{Hermitian and Positive Operators on a Finite-Dimensional Hilbert Space}\label{sec:HPF}
In this section the Hilbert space $\mathcal{H}$ is always assumed to be finite-dimensional. The symbol $H$ refers to a Hermitian operator, and $P$ to a positive operator. 
\subsection{Norms on Operators}\label{sec:normop}
In addition to the operator norm introduced in Section~\ref{sec:opnorm}, in this thesis several norms on operators will be used. In this section I'll list them and show several properties thereof relevant to our topic.
\subsubsection{Operator Norm}
When the space in question is a Hilbert space, the norm is often that induced by inner product, and in terms of bra-ket: 
\begin{equation}
||O||:=\sup_{v\in\mathcal{V}}\frac{\bracket{v}{O^\dagger O}{v}^{1/2}}{\inner{v}{v}^{1/2}}.
\end{equation}
For a Hermitian operator $H$,
\begin{equation}
||H||=\max(|h_i|:h_i \text{ are the eigenvalues of }H).
\end{equation}

\subsubsection{Trace Norm}
First, for $O\in\mathcal{B}(\mathcal{H})$ define \cite{Bhatia}
\begin{equation}
|O|:=\sqrt{O^\dagger O},
\end{equation}
as $O^\dagger O\geq0.$ Trace norm then is \cite{Bhatia,Vidal02}
\begin{equation}
||O||_1:=\trace |O|.
\end{equation}
For a Hermitian operator $H$,
\begin{equation}
||H||_1=\sum_i |h_i|,\,h_i:\text{ eigenvalues of }H.\label{eq:Htr}
\end{equation}
Obviously, for any positive operator $P$, its trace is equal to its trace norm, $\trace P=||P||_1$, and I will use them interchangeably.

\subsubsection{Hilbert-Schmidt Norm}
Hilbert-Schmimdt norm, as its name suggests, is a norm induced by Hilbert-Schmidt inner product \cite{Bhatia}:
\begin{equation}
||O||_2=(O|O)^{1/2}=\sqrt{\trace O^\dagger O}=\sqrt{\trace |O|^2}.
\end{equation}
For a Hermitian operator $H$,
\begin{equation}
||H||_1=\sqrt{\sum_i |h_i|^2},\,h_i:\text{ eigenvalues of }H.
\end{equation}

In Appendix~\ref{app:isHS}, I'll discuss isometry of specific mappings with respect to Hilbert-Schmidt norm, and why it won't be used in Chapter~\ref{ch:EC}. 

\subsubsection{Schatten $p$-norm}
For an operator $O$ its Schatten $p$-norm is \cite{Bhatia,Rastegin12,Horn}
\begin{equation}
||O||_p=\left(\trace |O|^p\right)^{1/p},\,p\geq 1.
\end{equation}
For $p=\infty$ it's (defined as) the largest singular value of $O$, hence the same as the operator norm; for $p=1$ it's exactly the trace norm; and Hilbert-Schmidt norm is Schatten $2$-norm  \cite{Bhatia,Rastegin12}. For a Hermitian operator $H$ whose eigenvalues are $h_i$,
\begin{equation}
||H||_p=\left(\sum_i |h_i|^p\right)^{1/p}.
\end{equation}

\subsection{Inner Product between Hermitian and Positive Operators}
\label{sec:innerHP}
For two positive operators $P_1,P_2\in\mathcal{B}(\mathcal{H})$, by the spectral decomposition $P_1=\sum_i p_i \ket{\psi_i}\bra{\psi_i}$ we can  find \cite{Campbell10}:\footnote{The inner product here is Hilbert-Schmidt inner product, introduced in Section~\ref{sec:HSinner}.}
\begin{equation}
\min_i p_i\trace P_2 \leq (P_1| P_2)\leq \max_i p_i \trace P_2=||P_1||\,||P_2||_1,\label{eq:P1P2}
\end{equation}
by $\trace (P_1P_2)=\sum_i p_i \bracket{\psi_i}{P_2}{\psi_i}$ and $\trace P_2=\sum_i \bracket{\psi_i}{P_2}{\psi_i}.$

The inequality on the right of \eqref{eq:P1P2} can be regarded as an application of H\"{o}lder's inequality \cite{PapaRudin}. Either side of \eqref{eq:P1P2} becomes an equality if and only if $(\ker P_2)^\perp=\mathrm{ran}P_2$ is in the eigensapce of $P_1$ with the maximum or minimum eigenvalue, e.g. if $P_1=p I,$ $\trace P_1 P_2=p||P_2||_1$; $\ker$ refers to the kernel or null space (Section~\ref{sec:hom}).

Matrix H\"{o}lder inequality \cite{Baumgartner11} states that for $1\leq p,q\leq \infty$ satisfying $1/p+1/q=1$:
\begin{equation}
|\trace O_1^\dagger O_2|=|(O_1,O_2)|\leq ||O_1||_p||O_2||_q,\label{eq:Holder}
\end{equation}
where $||\cdot||_p$ and $||\cdot||_q$ are the Schatten norm. It corresponds to \eqref{eq:P1P2} at $p=1$ and $q=\infty$, magnitude-wise.

By applying the same procedure as \eqref{eq:P1P2}, an inequality similar to \eqref{eq:P1P2} for Hermitian operators $H_1$ and $H_2$ can be found:
\begin{equation}
-||H_1||\,||H_2||_1\leq (H_1| H_2)\leq ||H_1||\,||H_2||_1,\label{eq:H1H2}
\end{equation}
which is identical to \eqref{eq:Holder}. 

Given any Hermitian $H_1$, we can always find a Hermitian operator $H_2$ that satisfies either side \eqref{eq:H1H2} in pretty much the same way as the process laid out below \eqref{eq:P1P2}.\footnote{It's worth mentioning that if $(H_1|H_2)=||H_1||\,||H_2||_1,$ then $(H_1|-H_2)=-||H_1||\,||-H_2||_1,$ and vice versa.}  Hence, we can make two quick observations:
\begin{itemize}
\item The inner product between two positive operators is non-negative.
\item For a Hermitian operator $H$ if $(H|P)\geq 0$ for any $P\geq 0$, then $H\geq 0$.
\end{itemize}

\subsection{Decomposition of a Hermitian Operator}\label{sec:eig}
Any Hermitian operator $H\in \mathcal{B}(\mathcal{H})$ can be decomposed as \cite{Vidal02}
\begin{equation}
H=\widetilde{H}^+-\widetilde{H}^-,\,\widetilde{H}^\pm\geq 0.\label{eq:Hpma}
\end{equation}
Such decompositions are not unique. For example, with the spectral decomposition of $H=\sum_i h_i \ket{\psi_i}\bra{\psi_i}$, $h_i$ being its eigenvalues and $\ket{\psi_i}$ the eigenvectors, by defining 
\begin{equation}
H^\pm=\sum_{i,\pm} h_i^\pm\ket{\psi_i^\pm}\bra{\psi_i^\pm},\label{eq:Hpm}
\end{equation}
where $h_i^\pm$ are the positive/negative eigenvalues and $\ket{\psi_i^\pm}$ the corresponding eigenvectors, we obtain 
\begin{equation}
H=H^+-H^-,\,H^\pm\geq 0.
\end{equation}
By \eqref{eq:Htr} \cite{Vidal02},
\begin{equation}
||H||_1=\trace H^++\trace H^-=\trace H+2\trace H^-,\label{eq:H+H-}
\end{equation}
where the second equality comes from linearity of trace: 
\begin{equation}
\trace H= \trace (H^+-H^-)=\trace H^+- \trace H^-.
\end{equation}

In this thesis, we will mark \emph{any decomposition} of a Hermitian operator $H$ like \eqref{eq:Hpma} as $\widetilde{H}^\pm$, and the decomposition \emph{according to spectral decomposition} \eqref{eq:Hpm} as $H^\pm$. With this in mind, let's define
\begin{equation}
\mathcal{H}^\pm:=(\ker H^\pm)^\perp=\mathrm{ran}H^\pm\text{ and }\widetilde{\mathcal{H}}^\pm:=(\ker \widetilde{H}^\pm)^\perp=\mathrm{ran}\widetilde{H}^\pm,
\end{equation}
by \eqref{eq:keran}. Then $\mathcal{H}$ is an orthogonal direct sum (Section~\ref{sec:dip}) of 
\begin{equation}
\mathcal{H}=\mathcal{H}^+\oplus \mathcal{H}^-\oplus\ker H,
\end{equation}
by spectral theorem (Theorem~\ref{thm:spe}).

Here's a lemma concerning such a decomposition, which is a generalization of Lemma 2 from Ref.~\cite{Vidal02}:
\begin{lem}\label{lem:or}
For any Hermitian operator $H$ on a finite-dimensional Hilbert space $\mathcal{H}$, among all possible such decompositions: $H=\widetilde{H}^+-\widetilde{H}^-,$ $\widetilde{H}^\pm\geq 0$, the spectral decomposition $H=H^+-H^-$ is the \textbf{unique} one that minimizes $\trace\widetilde{H}^+$,  $\trace\widetilde{H}^-$, and $\trace(\widetilde{H}^++\widetilde{H}^-)$; minimizing any one of them is the same as minimizing all of them. A decomposition in which $\widetilde{\mathcal{H}}^\pm$ are orthogonal is equivalent to the spectral decomposition.
\end{lem}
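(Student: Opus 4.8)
The plan is to handle the three minimization claims in one stroke, then establish a lower bound that the spectral decomposition saturates, and finally read off uniqueness (and the orthogonality characterization) from the equality case of that bound. First I would record that any admissible decomposition satisfies $\trace\widetilde{H}^+ - \trace\widetilde{H}^- = \trace H$ by linearity of the trace applied to $H = \widetilde{H}^+ - \widetilde{H}^-$. Hence $\trace\widetilde{H}^+ = \trace H + \trace\widetilde{H}^-$ and $\trace(\widetilde{H}^+ + \widetilde{H}^-) = \trace H + 2\trace\widetilde{H}^-$, where $\trace H$ is a constant fixed by $H$ alone. The three functionals therefore differ only by an additive constant and a positive multiplicative factor, so a decomposition minimizes one exactly when it minimizes the others; this disposes of the clause that minimizing any one is the same as minimizing all.

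For the lower bound I would test against the orthogonal spectral projections of Theorem~\ref{thm:spe}. Writing $\Pi^+$ for the orthogonal projection onto $\mathcal{H}^+ = \mathrm{ran}\,H^+$, cyclicity together with $\Pi^+ H^- = 0$ gives $\trace(\Pi^+ H) = \trace H^+$. Since $\widetilde{H}^\pm \geq 0$ and $\Pi^+$ is a projection, one has $\trace\widetilde{H}^+ \geq \trace(\Pi^+\widetilde{H}^+)$ and $\trace(\Pi^+\widetilde{H}^-) \geq 0$, whence $\trace\widetilde{H}^+ \geq \trace(\Pi^+\widetilde{H}^+) - \trace(\Pi^+\widetilde{H}^-) = \trace(\Pi^+ H) = \trace H^+$. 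Thus the spectral decomposition minimizes $\trace\widetilde{H}^+$, and by the previous paragraph the other two functionals as well; equivalently the bound can be phrased through the triangle inequality for the trace norm, $\trace\widetilde{H}^+ + \trace\widetilde{H}^- = \|\widetilde{H}^+\|_1 + \|\widetilde{H}^-\|_1 \geq \|H\|_1 = \trace H^+ + \trace H^-$, using \eqref{eq:H+H-}.

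Uniqueness I would extract from the equality case. Saturating $\trace\widetilde{H}^+ = \trace H^+$ forces both $\trace((I-\Pi^+)\widetilde{H}^+) = 0$ and $\trace(\Pi^+\widetilde{H}^-) = 0$. Because a positive operator of vanishing trace is itself zero, applied to $(I-\Pi^+)\widetilde{H}^+(I-\Pi^+)$ and to $\Pi^+\widetilde{H}^-\Pi^+$, I obtain $\widetilde{H}^+(I-\Pi^+) = 0$ and $\widetilde{H}^-\Pi^+ = 0$; invoking $(\ker O)^\perp = \mathrm{ran}\,O$ for the positive operators $\widetilde{H}^\pm$ (eq.~\eqref{eq:keran}) this says $\widetilde{\mathcal{H}}^+ \subseteq \mathcal{H}^+$ and $\mathcal{H}^+ \perp \widetilde{\mathcal{H}}^-$. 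Repeating the argument with $\Pi^-$ yields $\widetilde{\mathcal{H}}^- \subseteq \mathcal{H}^-$, so orthogonality of $\widetilde{\mathcal{H}}^\pm$ follows from $\mathcal{H}^+ \perp \mathcal{H}^-$. Projecting $H = \widetilde{H}^+ - \widetilde{H}^-$ onto $\mathcal{H}^+$ then collapses it to $\widetilde{H}^+ = \Pi^+ H \Pi^+ = H^+$, and likewise $\widetilde{H}^- = H^-$, giving uniqueness. The converse — that any decomposition with orthogonal $\widetilde{\mathcal{H}}^\pm$ already coincides with the spectral one — I would prove by observing that orthogonal ranges make $\widetilde{\mathcal{H}}^+$, $\widetilde{\mathcal{H}}^-$ and their common orthocomplement invariant under $H$, on which $H$ is positive definite, negative definite and zero respectively; uniqueness in Theorem~\ref{thm:spe} then identifies them with $\mathcal{H}^+$, $\mathcal{H}^-$ and $\ker H$.

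The trace bookkeeping is routine; the delicate part is the equality analysis, namely converting the scalar trace identities into operator identities. The main obstacle will be carrying out this conversion cleanly — arguing that $\trace(\Pi P \Pi) = 0$ with $P \geq 0$ forces $P^{1/2}\Pi = 0$ and hence $P\Pi = 0$ — while simultaneously keeping track of $\ker H$, so that the range inclusions $\widetilde{\mathcal{H}}^\pm \subseteq \mathcal{H}^\pm$ genuinely pin down $\widetilde{H}^\pm$ and leave no residual freedom supported on the null space of $H$.
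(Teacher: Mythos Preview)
Your proof is correct and complete. The core ingredients---trace linearity, positivity, and the spectral theorem---are the same as in the paper, but you organize them more efficiently. Two points of difference are worth noting. First, you establish the equivalence of the three minimization problems \emph{before} proving any bound, simply from $\trace\widetilde{H}^+-\trace\widetilde{H}^-=\trace H$; the paper instead deduces this equivalence only after the uniqueness argument. Your ordering is cleaner and avoids redundancy. Second, your lower bound comes from a single projection estimate $\trace\widetilde{H}^+\geq\trace(\Pi^+H)=\trace H^+$, whereas the paper works pointwise on eigenvectors, first showing $\mathcal{H}^\pm\subseteq\widetilde{\mathcal{H}}^\pm$ and then summing the inequalities $\langle\psi^\pm|H^\pm|\psi^\pm\rangle\leq\langle\psi^\pm|\widetilde{H}^\pm|\psi^\pm\rangle$. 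These are morally the same, but your global projection argument packages it more compactly and makes the equality analysis---converting $\trace((I-\Pi^+)\widetilde{H}^+)=0$ into $\widetilde{H}^+(I-\Pi^+)=0$ via the square root---more transparent. For the orthogonality characterization you essentially reproduce the paper's shorter alternative argument at the end of its part~3, appealing directly to the spectral theorem rather than the longer subspace-chasing in the paper's main text.
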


\begin{proof}
\hfill

\noindent\emph{1. Spectral decomposition minimizes traces}

We can find $\mathcal{H}^\pm\subseteq \widetilde{\mathcal{H}}^\pm$; otherwise there would exist nonzero $\ket{\psi^\pm}\in \mathcal{H}^\pm$ such that $\bracket{\psi^\pm}{H}{\psi^\pm}=\bracket{\psi^\pm}{\widetilde{H}^+-\widetilde{H}^-}{\psi^\pm}$ doesn't come out positive/negative, because $\widetilde{H}^\pm\geq 0$. Furthermore, since $\widetilde{\mathcal{H}}^\pm$ may intersect $\mathcal{H}^\mp$ (and $\ker H$),
\begin{equation}
\big|\bracket{\psi^\pm}{H}{\psi^\pm}\big|=\bracket{\psi^\pm}{H^\pm}{\psi^\pm}\leq\bracket{\psi^\pm}{\widetilde{H}^\pm}{\psi^\pm},\label{eq:ineq1}
\end{equation}
because $\bracket{\psi^\pm}{H}{\psi^\pm}=\bracket{\psi^\pm}{\widetilde{H}^+}{\psi^\pm}-\bracket{\psi^\pm}{\widetilde{H}^-}{\psi^\pm}$ and $\widetilde{H}^\pm\geq 0$. Hence
\begin{equation}
\trace\widetilde{H}^\pm\geq \trace H^\pm,\label{eq:ineq2}
\end{equation}
i.e. spectral decomposition minimizes $\trace \widetilde{H}^\pm$ and $\trace \widetilde{H}^++\trace \widetilde{H}^-.$

\noindent\emph{2. Only spectral decomposition minimizes traces}

Suppose we want to minimize both $\trace\widetilde{H}^\pm$ at once. Because $\widetilde{H}^\pm\geq 0$ and because of \eqref{eq:ineq1}, for both sides of \eqref{eq:ineq2} to be equal, i.e. to minimize both $\trace\widetilde{H}^\pm$, $\widetilde{\mathcal{H}}^\pm$ must be the same as $\mathcal{H}^\pm$ (as $\mathcal{H}^\pm\subseteq \widetilde{\mathcal{H}}^\pm$), implying
\begin{equation}
\pm\bracket{\psi^\pm}{H}{\phi^\pm}=\bracket{\psi^\pm}{\widetilde{H}^\pm}{\phi^\pm}=\bracket{\psi^\pm}{H^\pm}{\phi^\pm} \label{eq:eq3}
\end{equation}
and
\begin{equation}
\bracket{\psi^\pm}{\widetilde{H}}{\phi^\mp}=0,
\end{equation}
where $\ket{\psi^\pm}$ and $\ket{\phi^\pm}$ are any vectors in $\mathcal{H}^\pm$; the same is true for any vectors in $\ker H.$ Therefore $\bracket{\psi}{\widetilde{H}^\pm}{\phi}=\bracket{\psi}{H^\pm}{\phi}$ for any $\ket{\phi},\ket{\psi}\in \mathcal{H}$ as $\mathcal{H}=\mathcal{H}^+\oplus \mathcal{H}^-\oplus\ker H,$ so $\widetilde{H}^\pm=H^\pm$. Because both $\trace\widetilde{H}^+$ and $\trace\widetilde{H}^-$ are minimized, so is $\trace(\widetilde{H}^++\widetilde{H}^-)$. 

To minimize $\trace(\widetilde{H}^++\widetilde{H}^-)$ both $\trace\widetilde{H}^+$ and $\trace\widetilde{H}^-$ should be minimized, leading to the same conclusion. If we want to minimize only one of $\trace\widetilde{H}^+$ and $\trace\widetilde{H}^-$, say $\trace\widetilde{H}^-$, then by \eqref{eq:ineq2} $\trace\widetilde{H}^-=\trace H^-$. Because $\trace H=\trace H^+-\trace H^-=\trace \widetilde{H}^+-\trace\widetilde{H}^-,$ $\trace\widetilde{H}^+=\trace H^+,$ i.e. both $\trace\widetilde{H}^+$ and $\trace\widetilde{H}^-$ are minimized. Hence minimizing any of them is identical to minimizing all of them, and minimization of traces requires spectral decomposition.

\noindent\emph{3. Orthogonality is equivalent to spectral decomposition}

It's already known that if $\widetilde{H}^\pm=H^\pm$ then $\widetilde{\mathcal{H}}^+\perp\widetilde{\mathcal{H}}^-$, so we only have to show the reverse: Does $\widetilde{\mathcal{H}}^+\perp\widetilde{\mathcal{H}}^-$ lead to $\widetilde{H}^\pm=H^\pm$?

Because $\widetilde{\mathcal{H}}^+\perp\widetilde{\mathcal{H}}^-$, $\mathcal{H}^\pm $ as subsets of $\widetilde{\mathcal{H}}^\pm$\footnote{See the first part of the proof.} are also perpendicular to $\widetilde{\mathcal{H}}^\mp$, i.e. $\widetilde{\mathcal{H}}^\pm \perp\mathcal{H}^\mp$, which implies $\widetilde{\mathcal{H}}^\pm \subseteq \mathcal{H}^\pm\oplus\ker H$. For any vector $\ket{\psi}\in\ker H$, $\widetilde{H}^+\ket{\psi}=\widetilde{H}^-\ket{\psi}$; however, because $\widetilde{\mathcal{H}}^+\perp\widetilde{\mathcal{H}}^-$, from \eqref{eq:keran} $\widetilde{H}^+\ket{\psi}=\widetilde{H}^-\ket{\psi}=0,$ i.e. $\ker H\subseteq\ker\widetilde{H}^\pm$ and hence $\ker H\perp \widetilde{H}^\pm$, which, by $\widetilde{\mathcal{H}}^\pm \subseteq \mathcal{H}^\pm\oplus\ker H$, implies $\widetilde{\mathcal{H}}^\pm \subseteq \mathcal{H}^\pm$. Since $\mathcal{H}^\pm\subseteq \widetilde{\mathcal{H}}^\pm$, we conclude $\mathcal{H}^\pm=\widetilde{\mathcal{H}}^\pm$, so according to the arguments from the second part of the proof $\widetilde{H}^\pm=H^\pm$. 

Or rather, we can directly apply the spectral theorem (Theorem~\ref{thm:spe}): Since $\widetilde{\mathcal{H}}^\pm$ are orthogonal, the eigenspaces of $\widetilde{\mathcal{H}}^\pm$ are orthogonal. After spectral-decomposing $\widetilde{H}^\pm$, $\widetilde{H}^+-\widetilde{H}^-$ is already a spectral decomposition of $H$.
\end{proof}

From Lemma~\ref{lem:or}, $\widetilde{H}^\pm$ and  $H^\pm$ are related by
\begin{equation}
\widetilde{H}^\pm=H^\pm+H'\geq 0 \text{ and } \trace H'> 0, \label{eq:de}
\end{equation}
where $H'$ is also a Hermitian operator. From \eqref{eq:ineq1} and $\mathcal{H}^\pm\subseteq \widetilde{\mathcal{H}}^\pm$ it may seem $\widetilde{H}^\pm\geq H^\pm$, but this isn't true in general---what \eqref{eq:ineq1} implies is actually $\widetilde{H}^\pm\geq H^\pm$ \emph{to the restriction of} $\mathcal{H}^\pm.$ To be specific, for a vector $\ket{\psi^+}+\ket{\psi^-}$, $\ket{\psi^\pm}\in \mathcal{H}^\pm$, the value of
\begin{align}
\left(\bra{\psi^+}+\bra{\psi^-}\right)H'\left(\ket{\psi^+}+\ket{\psi^-}\right),
\end{align}
even though $\bracket{\psi^\pm}{H'}{\psi^\pm}\geq 0$, isn't necessarily non-negative, as it depends on $\bracket{\psi^\pm}{H'}{\psi^\mp}$ as well. In terms of matrices, it just reflects the fact that \emph{a matrix that has non-negative diagonal entries aren't necessarily positive (semi-definite)}.

\subsection{Ensembles of Positive Operators}
\label{sec:en}
For any positive operators $P$, we can always express it as a mixture of projections \cite{Hughston93,Horodecki97}:
\begin{equation}
P=\sum_i p_i \ket{\psi_i}\bra{\psi_i},\;p_i>0,
\end{equation}
with $||\psi_i||=1$. An example is of course the spectral decomposition of $P$. We can absorb $p_i$ into $\psi_i$, and it becomes
\begin{equation}
P=\sum_i \ket{\tilde{\psi}_i}\bra{\tilde{\psi}_i},
\end{equation}
where $\ket{\tilde{\psi}_i}=\sqrt{p_i }\ket{\psi_i}$ aren't normalized in general. The nonzero vectors $\{\ket{\tilde{\psi}_i}\}$ form a so-called $P$-ensemble. The ensemble obtained through spectral decomposition is called an eigenensemble \cite{Hughston93,Horodecki97}. In the lemma below for convenience the tildes are ignored:

\begin{lem}\label{lem:oen}
Suppose $P_1$ and $P_2$ are two positive operators, and they have such ensembles:
\begin{equation*}
P_i=\sum_j |\psi_j^i\rangle \langle\psi_j^i |,
\end{equation*}
where each $\{|\psi_j^i\rangle\}$ is a set of nonzero vectors that aren't necessarily normalized or mutually orthogonal. Then each eigenvector/eigenspace of $P_2$ corresponding to a nonzero eigenvalue is orthogonal to each of $P_1$ if and only if
\begin{equation*}
\langle \psi_i^1|\psi_j^2\rangle=0 \;\forall i,j,\label{eq:lm4}
\end{equation*}
in other words, if and only if
\begin{equation}
\mathrm{ran}P_1\perp \mathrm{ran}P_2.
\end{equation}
Note $\mathrm{ran}O=(\ker O)^\perp$ for a normal operator $O$, \eqref{eq:keran}.
\end{lem}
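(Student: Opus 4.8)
The plan is to collapse all three assertions to the single statement $\mathrm{ran}P_1\perp\mathrm{ran}P_2$, the linchpin being a clean description of the range of a positive operator in terms of \emph{any} of its ensembles. First I would establish that for a positive operator $P=\sum_j\ket{\psi_j}\bra{\psi_j}$ one has $\mathrm{ran}P=\mathrm{span}\{\ket{\psi_j}\}$. To see this, positivity gives $\bracket{v}{P}{v}=\sum_j|\inner{\psi_j}{v}|^2$, so $Pv=0$ if and only if $\bracket{v}{P}{v}=0$ if and only if $\inner{\psi_j}{v}=0$ for every $j$; hence $\ker P=(\mathrm{span}\{\ket{\psi_j}\})^\perp$. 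Taking orthogonal complements and invoking $\mathrm{ran}P=(\ker P)^\perp$ from \eqref{eq:keran} (legitimate because $P$ is self-adjoint, hence normal), together with the finite-dimensionality assumed throughout this section so that $(\mathbb{S}^\perp)^\perp=\mathrm{span}(\mathbb{S})$, yields $\mathrm{ran}P=\mathrm{span}\{\ket{\psi_j}\}$.

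With this identity for both $P_1$ and $P_2$, the equivalence $\inner{\psi_i^1}{\psi_j^2}=0\;\forall i,j\Leftrightarrow\mathrm{ran}P_1\perp\mathrm{ran}P_2$ is immediate: by sesquilinearity of the inner product (Definition~\ref{def:inn}), two subspaces are orthogonal exactly when every pair of spanning vectors is orthogonal, and $\{\ket{\psi_j^i}\}$ spans $\mathrm{ran}P_i$ by the previous step. This settles the second stated ``if and only if.''

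For the first condition---that every eigenvector of $P_2$ with nonzero eigenvalue is orthogonal to every such eigenvector of $P_1$---I would note that the eigenensemble obtained from the spectral decomposition (Theorem~\ref{thm:spe}) is itself a particular $P$-ensemble, so its nonzero-eigenvalue eigenvectors likewise span $\mathrm{ran}P_i$. Therefore the eigenvector orthogonality condition is nothing but $\mathrm{ran}P_1\perp\mathrm{ran}P_2$ read off the eigenensemble, which closes the loop among all three statements.

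I expect the only genuinely delicate point to be the range identity $\mathrm{ran}P=\mathrm{span}\{\ket{\psi_j}\}$, and in particular the essential use of positivity: the implication $\bracket{v}{P}{v}=0\Rightarrow Pv=0$ fails for general operators and is precisely what lets the kernel be characterized by orthogonality to the ensemble vectors. Everything else is routine bookkeeping with orthogonal complements, kept clean by the finite-dimensional setting.
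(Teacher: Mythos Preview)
Your argument is correct and coincides with the paper's second, ``more elegant'' proof: both reduce everything to the identity $\mathrm{ran}P_i=\mathrm{span}\{\ket{\psi_j^i}\}$ and then observe that orthogonality of spans is equivalent to orthogonality of spanning vectors. You in fact supply a cleaner justification of that range identity (via $\bracket{v}{P}{v}=\sum_j|\inner{\psi_j}{v}|^2$ and $(\ker P)^\perp=\mathrm{ran}P$) than the paper, which simply asserts it.

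The paper also gives a first, more pedestrian proof that you bypass: it invokes the Hughston--Jozsa--Wootters result that any two ensembles of the same positive operator are related by a unitary matrix, so each $\ket{\psi_j^i}$ is a linear combination of the eigenensemble vectors $\ket{e_k^i}$ and vice versa, whence pairwise orthogonality transfers between the two ensembles. Your route avoids this external reference entirely and is more self-contained; the HJW approach, on the other hand, makes explicit why no particular choice of ensemble matters.
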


\begin{proof}
From Ref.~\cite{Hughston93}, if the eigenensembles of $P_i$ are
\begin{equation}
P_i=\sum_{j(i)} \ket{e_j^i}\bra{e_j^i},\label{eq:pie}
\end{equation}
where $\ket{e_j^i}$ aren't normalized, then there exist unitary matrices $U^i$ such that\footnote{See Section~\ref{sec:hom} for the definition of a rank.}
\begin{equation}
\begin{cases}
\ket{e_j^i}=\sum_k U_{jk}^i\ket{\psi_k^i}& j\leq \text{Rank}(P_i)\\
\ket{0}=\sum_k U_{jk}^i\ket{\psi_k^i}& j> \text{Rank}(P_i)
\end{cases},\label{eq:pe1}
\end{equation}
and
\begin{equation}
\ket{\psi_j^i}=\sum_k (U^i)^{-1}_{jk}\ket{e_k^i}.\label{eq:pe2}
\end{equation}
If $\inner{e^1_i}{e^2_j}=0$ $\forall i,j$, from \eqref{eq:pe2} we can see $\inner{\psi^1_i}{\psi^2_j}=0$ $\forall i,j$. Similarly, if $\inner{\psi^1_i}{\psi^2_j}=0$ then from \eqref{eq:pe1} $\inner{e^1_i}{e^2_j}=0$ $\forall i,j$. In other words, because each $|\psi_j^i \rangle$ is a linear combination of $\{\ket{e_j^i}\}$, and each $\ket{e_j^i}$ is also linear combinations of $\{|\psi_j^i \rangle\}$, orthogonality of two ensembles guarantees that of the other two. 

Here's a more elegant proof: The orthogonality of the vectors as stated in the lemma is satisfied if and only if their spans are orthogonal. From \eqref{eq:pie}, the spans of $\{\psi^i_j\}$ are identical to $(\ker P_i)^\perp=\mathrm{ran}P_i$, which of course are also identical to the spans of $\{e_j^i\}$. Therefore the orthogonality of one pair implies that of the other. 
\end{proof}

\chapter{Linear Mappings from Operators to Operators}\label{ch:li}
In this chapter and the next, the spaces $\mathcal{H}$ are in general assumed to be finite-dimensional. 
\section{Linear Mappings from Operators to Operators and Quantum Operations}\label{sec:liqo}
Many quantum processes are linear by nature, such as
\begin{itemize}
\item The evolution of a closed system, which is determined by the Hamiltonian, or the corresponding unitary operator.
\item Measurements, whether they are the standard collapsing model or more general POVM (positive operator valued measure) measurements.
\item The evolution of an open system with a fixed environment initial state: $\trace_E \left(U(\rho_S\otimes \rho_E)U^\dagger\right)$, where $E$ and $S$ stand for the environment and system. This is linear in $\rho_S$, the input state.   
\end{itemize} 
As such, the properties of this kind of linear mappings are of great interest. 

In this chapter, we will discuss linear mappings from operators to operators, explicitly, linear mappings in such a space
\begin{equation}
\mathcal{B}(\mathcal{B}(\mathcal{H}_1),\mathcal{B}(\mathcal{H}_2)),
\end{equation}
and a linear mapping $L$ in this space is said to be \cite{dePillis67,Jamiolkowski72,Bengtsson,Choi75}
\begin{itemize}
\item \textbf{Hermicity-preserving (HP)} if $L(H)$ remains Hermitian, namely $L(H)^\dagger=L(H)$ for any Hermitian operator $H$ in $\mathcal{B}(\mathcal{H}_1)$, which is equivalent to\footnote{Because any operator is the sum of a Hermitian operator and an anti-Hermitian operator.} $L(O^\dagger)=L(O)^\dagger$ for any $O\in\mathcal{B}(\mathcal{H}_1)$; in other words, $L$ and the adjoint mapping (Section~\ref{sec:ad}) commute;

\item \textbf{positive} if $L(P)\geq 0$ for any positive operator $P\in\mathcal{B}(\mathcal{H}_1)$;

\item \textbf{completely positive (CP)} if $\mathcal{I}\otimes L$ is a positive mapping for any identity mapping $\mathcal{I}\in \mathcal{B}(\mathcal{B}(\mathcal{H}_i))$, where $\mathcal{H}_i$ is any finite-dimensional Hilbert space;

\item \textbf{trace-preserving (TP)} if $\trace L(O)=\trace O$ for any $O\in\mathcal{B}(\mathcal{H}_1)$.
\end{itemize}
Positive and CP mappings are HP.

It's worth mentioning (however obvious it may seem) that the four properties above are preserved under composition. I'll demonstrate preservation of complete positivity: Assume $L_1: \mathcal{B}(\mathcal{H}_1)\rightarrow \mathcal{B}(\mathcal{H}_2)$ and $L_2: \mathcal{B}(\mathcal{H}_2)\rightarrow \mathcal{B}(\mathcal{H}_3)$ are CP. Then
\begin{equation}
(\mathcal{I}\otimes L_2)\circ (\mathcal{I}\otimes L_1)=\mathcal{I}\otimes (L_2\circ L_1), \,\mathcal{I}\in \mathcal{B}(\mathcal{B}(\mathcal{H}_i)).
\end{equation} 
Because $[(\mathcal{I}\otimes L_2)\circ (\mathcal{I}\otimes L_1)](P)\geq 0$ for any positive operator $P\in \mathcal{B}(H_i\otimes H_1)$, $L_2\circ L_1$ is CP.

I'll usually call linear mappings from operators to operators simply ``linear mappings,'' but the context dictates what they actually mean. A mapping in $\mathcal{B}(\mathcal{B}(\mathcal{H}))$ is actually an operator, namely an operator on operators, but to prevent confusion I avoid such a wording. 
  
\subsection{Quantum Operation}\label{sec:qop}
Let $\mathcal{H}_1$ and $\mathcal{H}_2$ be two (finite-dimensional) Hilbert spaces. There are two types of quantum operations:
\begin{itemize}
\item A \textbf{deterministic quantum operation} is a CPTP mapping $S: \mathcal{B}(\mathcal{H}_1)\rightarrow \mathcal{B}(\mathcal{H}_2)$.
\item A \textbf{probabilistic}/stochastic operation $S$ is composed of sub-operations $S_i: \mathcal{B}(\mathcal{H}_1)\rightarrow \mathcal{B}(\mathcal{H}_2)$, each of which is CP, such that $\sum_i S_i$ is TP (and CP). The probability that a sub-operation $S_i$ occurs is $\trace S_i(\rho),$ and the resultant state is $S_i(\rho)/(\trace S_i(\rho)).$ 
\end{itemize}

Because a density operator is positive and has unit trace, quantum operations are CP so that a density operator remains positive, and they are TP as a whole so that the probabilities add up to 1 and the trace stays at 1. A deterministic operation can be considered a probabilistic one that has a single sub-operation.

I'll use the symbol $L$ for a linear mapping (from operators to operators), and $S$ and $S_i$ for quantum operations and sub-operations thereof.\footnote{The symbol $S$ refers to ``superoperators,'' as they're operators or linear mappings on operators \cite{Rains99,Breuer}. However I won't use this term.} In general we use $S$ without a subscript for a TP or deterministic operation, but sometimes I'll be a bit loose on this, as in Section~\ref{sec:up}. 

The most well-known probabilistic operation may be the collapsing model of measurement: Suppose we're measuring an observable $H$ whose spectral decomposition (see Theorem~\ref{thm:spe}) is $H=\sum_i \lambda_i \Pi_i$. The probability for an outcome $i$ to occur is 
\begin{equation}
p_i=\trace S_i(\rho)=\trace(\Pi_i \rho \Pi_i),
\end{equation}
and the state will become $S_i(\rho)/p_i=\Pi_i \rho \Pi_i/p_i$. The probabilities sum up to 1: 
\begin{equation}
\sum_i \trace (\Pi_i \rho \Pi_i)=\sum_i \trace (\Pi_i\rho)=\trace [(\sum_i \Pi_i) \rho]=\trace \rho=1,
\end{equation}
which is also equal to $\trace S(\rho)=\trace \sum S_i(\rho)$, so $S=\sum S_i$ is TP.

In this thesis a deterministic operation $S$ for which $S(\rho)=U\rho U^\dagger$ with unitary $U$ will be called a \textbf{unitary operation}. This name suggests the unitarity of $U$, though $S$ is also a unitary mapping: Because it's an isometry with respect to Hilbert-Schmidt inner product and the domain and codomain have the same dimension, it's unitary (Section~\ref{sec:isometry} and \ref{sec:HSinner}), and its inverse is $S^{-1}(A)=U^\dagger A U$.

We will also consider a sub-operation $S_i$ of a probabilistic operation alone. A sub-operation $S_i$ is a CP mapping with $S_i^\dagger(I)\leq I$, for which the significance of $S_i^\dagger(I)$ will be addressed soon.

\section{Adjoint of a Linear Mapping}\label{sec:adl}
For a linear mapping $L:\mathcal{B}(\mathcal{H}_1)\rightarrow \mathcal{B}(\mathcal{H}_2)$ \cite{Jamiolkowski72,Stormer}, 
\begin{equation}
(O_2|L(O_1))=(L^\dagger(O_2)|O_1)\;\forall O_i\in\mathcal{B}(\mathcal{H}_i).
\end{equation}
$L^\dagger: \mathcal{B}(\mathcal{H}_2)\rightarrow\mathcal{B}(\mathcal{H}_1)$ is the adjoint of $L$ with respect to Hilbert-Schmidt inner product (Section~\ref{sec:ad} and \ref{sec:HSinner}). In particular, with $I_i$ being the identity operator on $\mathcal{H}_i$, we're interested in $L^\dagger(I_{2})\in\mathcal{B}(\mathcal{H}_1)$, because
\begin{equation}
\trace L(O_1)=(I_{2}|L(O_1))=(L^\dagger(I_{2})|O_1).
\end{equation}
The trace of an operator after a linear mapping $L$ can be expressed in terms of an inner product between the operator and $L^\dagger(I_2).$ From now on if there's no risk of confusion, we will ignore the subscript of the identity operator, e.g. $L^\dagger(I)$ stands for $L^\dagger(I_2)$.

\begin{thm}
A linear mapping $L:\mathcal{B}(\mathcal{H}_1)\rightarrow \mathcal{B}(\mathcal{H}_2)$ is TP if and only if $L^\dagger(I_2)=I_1$ \cite{Bengtsson}.
\end{thm}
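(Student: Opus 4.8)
The plan is to collapse both the trace-preserving condition and the equation $L^\dagger(I_2)=I_1$ into a single statement about the Hilbert-Schmidt inner product, and then close the argument by nondegeneracy of that inner product. First I would record the two facts already available in the excerpt. Because the Hilbert-Schmidt inner product is $(T_2|T_1)=\trace(T_2^\dagger T_1)$ and $I^\dagger=I$, every operator $X$ satisfies $\trace X=(I|X)$; and the defining property of the adjoint gives $(I_2|L(O_1))=(L^\dagger(I_2)|O_1)$ for all $O_1\in\mathcal{B}(\mathcal{H}_1)$. Chaining these,
\begin{equation}
\trace L(O_1)=(I_2|L(O_1))=(L^\dagger(I_2)|O_1),
\end{equation}
while at the same time $\trace O_1=(I_1|O_1)$.

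Next I would translate the TP condition. By definition $L$ is TP precisely when $\trace L(O_1)=\trace O_1$ for every $O_1$, which by the identities above is equivalent to
\begin{equation}
(L^\dagger(I_2)-I_1\,|\,O_1)=0 \quad\text{for all } O_1\in\mathcal{B}(\mathcal{H}_1),
\end{equation}
using additivity of the inner product in its first argument to combine $(L^\dagger(I_2)|O_1)$ and $(I_1|O_1)$. So the whole statement reduces to: an operator $A:=L^\dagger(I_2)-I_1$ annihilates every $O_1$ under the inner product if and only if $A=0$.

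Finally I would invoke nondegeneracy to settle the one substantive direction. Taking $O_1:=A$ yields $||A||_2^2=(A|A)=0$, and positive-definiteness of the Hilbert-Schmidt inner product forces $A=0$, i.e. $L^\dagger(I_2)=I_1$; the converse is immediate, since $L^\dagger(I_2)=I_1$ makes the displayed equation hold for all $O_1$ and hence $L$ is TP. The proof is thus essentially a direct substitution, and the only place requiring any care is this appeal to nondegeneracy—the main (and quite mild) obstacle. Since $\mathcal{H}_1$ and $\mathcal{H}_2$ are finite-dimensional, boundedness of $L$ and existence of the adjoint $L^\dagger$ are automatic, so no analytic subtleties enter.
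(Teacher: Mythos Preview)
Your proof is correct and follows essentially the same approach as the paper: both rewrite the TP condition as $(L^\dagger(I_2)-I_1\,|\,O)=0$ for all $O$ via the Hilbert--Schmidt inner product and the adjoint, and then conclude by taking $O=L^\dagger(I_2)-I_1$ and invoking positive-definiteness.
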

\begin{proof}
$L$ is TP if and only if
\begin{equation}
\trace L(O)=(L^\dagger(I_2)|O)=(I_1|O) \;\forall O\in\mathcal{B}(\mathcal{H}_1),
\end{equation} 
so 
\begin{equation}
(L^\dagger(I_2)-I_1|O)=0 \;\forall O\in\mathcal{B}(\mathcal{H}_1).
\end{equation}
If $L^\dagger(I)-I\neq 0$,\footnote{$0$ here is the zero operator on $\mathcal{H}_1$.} then it would violate positive-definiteness of inner product (Section~\ref{sec:inn}) because $O$ can be $L^\dagger(I)-I.$ Therefore $L^\dagger(I_2)=I_1$.
\end{proof}

If $L:\mathcal{B}(\mathcal{H}_1)\rightarrow \mathcal{B}(\mathcal{H}_2)$ is HP, it always admits an operator-sum representation \cite{Kraus71,Choi75,Bengtsson}: For all $O\in\mathcal{B}(\mathcal{H}_1)$ the mapping $L$ can be expresses as
\begin{equation}
L(O)=\sum_i c_i V_i O V_i^\dagger,\,V_i\in\mathcal{B}(\mathcal{H}_1,\mathcal{H}_2) \text{ and }c_i\in\mathbb{R}.\label{eq:opsum}
\end{equation}
Therefore
\begin{equation}
L^\dagger(I)=\sum_i c_i V_i^\dagger V_i.\label{eq:Kr}
\end{equation}
From this it's apparent $L^\dagger(I)$ is Hermitian if $L$ is HP. If $L$ is CP, then we can always find an operator-sum representation of it with $c_i>0$ in \eqref{eq:opsum}; we will demonstrate how to find it in Section~\ref{sec:opsum}.

From \eqref{eq:Kr} it's obvious that $L^\dagger(I)\geq 0$ for a CP mapping $L$, and since a probabilistic operation is CPTP as a whole and each sub-operation is CP, $S_i^\dagger(I)\leq I$. \eqref{eq:P1P2} also provides another perspective on why $S_i^\dagger(I)\leq I$, else the trace (and probability) could surpass 1.

\section{Choi Isomorphism}\label{sec:choi}
\subsection{Choi Isomorphism}
Let $\mathcal{H}_1$ and $\mathcal{H}_2$ be finite-dimensional Hilbert spaces, $\{a_i\}$ be an orthonormal basis of $\mathcal{H}_1$, and
\begin{equation}
E_{ij}:=\ket{a_i}\bra{a_j}
\end{equation}
be a basis of $\mathcal{B}(\mathcal{H}_1)$ as shown in Section~\ref{sec:bas}. \textbf{Choi isomorphism}\footnote{It's often called Choi-Jamio\l{}kowski isomorphism. However, the mapping studied by de Pillis and Jamio\l{}kowski is different \cite{dePillis67,Jamiolkowski72,Jiang13}.} $\mathscr{T}$ for linear mappings $L:\mathcal{B}(\mathcal{H}_1)\rightarrow \mathcal{B}(\mathcal{H}_2)$ is defined as \cite{Choi75}
\begin{equation}
\mathscr{T}(L):=\mathcal{I}\otimes L(\sum_{i,j} E_{ij}\otimes E_{ij})=\sum_{i,j} E_{ij}\otimes L(E_{ij}),
\end{equation}
where $\mathcal{I}$ is the identity mapping on $\mathcal{B}(\mathcal{H}_1)$. Therefore, Choi isomorphism brings $L\in\mathcal{B}(\mathcal{B}(\mathcal{H}_1),\mathcal{B}(\mathcal{H}_2))$ to an operator $\mathscr{T}(L)$ in $\mathcal{B}(\mathcal{H}_1)\otimes \mathcal{B}(\mathcal{H}_2)$ or $\mathcal{B}(\mathcal{H}_1\otimes \mathcal{H}_2)$, as these two spaces are isomorphic, c.f. Theorem~\ref{thm:emiso}. In this thesis Choi isomorphism refers to both the mapping $\mathscr{T}$ itself and the image thereof, namely $\mathscr{T}(L)$, and the context should be clear enough to distinguish which is which.

Choi isomorphism is an important tool in studying CP and HP mappings because
\begin{thm}\label{thm:Choi}
A linear mapping $L$ is CP if and only if $\mathscr{T}(L)\geq 0$, and that $L$ is HP if and only if $\mathscr{T}(L)$ is Hermitian \cite{Choi75}.
\end{thm}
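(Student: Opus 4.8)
The plan is to exploit an explicit inverse for the Choi isomorphism and then treat the two equivalences separately. First I would record the recovery formula: for any indices $k,l$,
\begin{equation}
L(E_{kl})=(\bra{a_k}\otimes I_2)\,\mathscr{T}(L)\,(\ket{a_l}\otimes I_2),
\end{equation}
which follows at once from $\mathscr{T}(L)=\sum_{i,j}E_{ij}\otimes L(E_{ij})$ together with $\bra{a_k}E_{ij}\ket{a_l}=\delta_{ki}\delta_{jl}$. Since $\{E_{kl}\}$ is a basis of $\mathcal{B}(\mathcal{H}_1)$ (Section~\ref{sec:bas}), this reconstructs $L$ from $\mathscr{T}(L)$ by linearity, so $\mathscr{T}$ is injective; a dimension count via Theorem~\ref{thm:dim} ($\dim\mathcal{B}(\mathcal{B}(\mathcal{H}_1),\mathcal{B}(\mathcal{H}_2))=d_1^2 d_2^2=\dim\mathcal{B}(\mathcal{H}_1\otimes\mathcal{H}_2)$) then upgrades this to an isomorphism. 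I would also note $\sum_{i,j}E_{ij}\otimes E_{ij}=\ket{\Phi}\bra{\Phi}$ with $\ket{\Phi}:=\sum_i\ket{a_i}\otimes\ket{a_i}$, so that $\mathscr{T}(L)$ is the image under $\mathcal{I}\otimes L$ of a single positive rank-one operator.

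For the HP equivalence I would argue both directions directly. If $L$ is HP, then using $E_{ij}^\dagger=E_{ji}$ and $L(E_{ij})^\dagger=L(E_{ij}^\dagger)=L(E_{ji})$ I obtain
\begin{equation}
\mathscr{T}(L)^\dagger=\sum_{i,j}E_{ij}^\dagger\otimes L(E_{ij})^\dagger=\sum_{i,j}E_{ji}\otimes L(E_{ji})=\mathscr{T}(L),
\end{equation}
the last equality holding after relabelling the dummy indices $i\leftrightarrow j$, so $\mathscr{T}(L)$ is Hermitian. Conversely, if $\mathscr{T}(L)$ is Hermitian, the recovery formula gives
\begin{equation}
L(E_{kl})^\dagger=(\bra{a_l}\otimes I_2)\,\mathscr{T}(L)^\dagger\,(\ket{a_k}\otimes I_2)=(\bra{a_l}\otimes I_2)\,\mathscr{T}(L)\,(\ket{a_k}\otimes I_2)=L(E_{lk})=L(E_{kl}^\dagger),
\end{equation}
and since $\{E_{kl}\}$ spans $\mathcal{B}(\mathcal{H}_1)$, linearity yields $L(O^\dagger)=L(O)^\dagger$ for all $O$, i.e. $L$ is HP.

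For the CP equivalence, one direction is immediate: if $L$ is CP then $\mathcal{I}\otimes L$ is positive, and applying it to the positive operator $\ket{\Phi}\bra{\Phi}$ gives $\mathscr{T}(L)\geq 0$. The substantive direction, which I expect to be the main obstacle, is the converse. Assuming $\mathscr{T}(L)\geq 0$, I would invoke finite-dimensionality and the spectral theorem (Theorem~\ref{thm:spe}) to write a decomposition into rank-one positive pieces, $\mathscr{T}(L)=\sum_m\ket{v_m}\bra{v_m}$ with $\ket{v_m}\in\mathcal{H}_1\otimes\mathcal{H}_2$ (absorbing the nonnegative eigenvalues into the vectors). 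Expanding $\ket{v_m}=\sum_i\ket{a_i}\otimes\ket{w_m^i}$ and defining $V_m\in\mathcal{B}(\mathcal{H}_1,\mathcal{H}_2)$ by $V_m\ket{a_i}:=\ket{w_m^i}$, the recovery formula computes
\begin{equation}
L(E_{kl})=\sum_m(\bra{a_k}\otimes I_2)\ket{v_m}\bra{v_m}(\ket{a_l}\otimes I_2)=\sum_m\ket{w_m^k}\bra{w_m^l}=\sum_m V_m E_{kl}V_m^\dagger.
\end{equation}
By linearity $L(O)=\sum_m V_m O V_m^\dagger$ for every $O$, an operator-sum representation with positive coefficients; since for any positive $P$ and any ancilla each term $(I\otimes V_m)P(I\otimes V_m)^\dagger$ is positive, $\mathcal{I}\otimes L$ is positive and $L$ is CP. The crux is thus the passage from the abstract positivity $\mathscr{T}(L)\geq 0$ to the concrete Kraus operators $V_m$; everything else is bookkeeping on the basis $\{E_{ij}\}$.
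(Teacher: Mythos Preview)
Your proof is correct and follows the standard route due to Choi. However, the paper does not actually supply its own proof of this theorem: it is stated as a known result with a citation to \cite{Choi75} and used as a black box. That said, the paper later carries out essentially your converse-CP argument in Section~\ref{sec:opsum}, where it spectrally decomposes $\mathscr{T}(L)$ and extracts the operator-sum representation $L(O)=\sum_i c_i V_i O V_i^\dagger$ via the same construction $V_i=\sum_{j,k}d^i_{jk}\ket{b_k}\bra{a_j}$ that you use; so your approach is entirely in keeping with the paper's own machinery.
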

\noindent With Theorem~\ref{thm:Choi}, we can easily distinguish CP mappings.

If $L:\mathcal{B}(\mathcal{H}_1)\rightarrow \mathcal{B}(\mathcal{H}_2)$ is TP,
\begin{equation}
\trace \mathscr{T}(L)=\trace\sum_{i,j} E_{ij}\otimes L(E_{ij})=\sum_{ij} \delta_{ij}^2=\mathrm{dim}\mathcal{H}_1.\label{eq:d}
\end{equation}
Therefore for a deterministic operation $S: \mathcal{B}(\mathcal{H}_1)\rightarrow \mathcal{B}(\mathcal{H}_2)$, $\trace\mathscr{T}(S)=\mathrm{dim}\mathcal{H}_1$.

Choi isomorphism is an isomorphism because it is linear and invertible, which will be shown very soon. Hence, we can treat linear mappings like operators, and many tools for operators can be utilized to handle problems about linear mappings from operators to operators. The dimension of its domain and codomain is
\begin{equation}
\mathrm{dim}\left[\mathcal{B}(\mathcal{B}(\mathcal{H}_1),\mathcal{B}(\mathcal{H}_2))\right]=\mathrm{dim}\left[\mathcal{B}(\mathcal{H}_1\otimes \mathcal{H}_2)\right]=(\mathrm{dim}\mathcal{H}_1)^2 (\mathrm{dim}\mathcal{H}_2)^2.\label{eq:dim}
\end{equation}

\subsection{Choi Isomorphism is Indeed an Isomorphism}
\label{sec:Choi}
For a linear mapping $L:\mathcal{B}(\mathcal{H}_1)\rightarrow\mathcal{B}(\mathcal{H}_2)$, it can be found \cite{Jiang13}:\footnote{Even though transposition is to be introduced in the next section, as the reader should be familiar enough with transposition I think it shouldn't be a problem to put it here.}
\begin{align}
\trace_1 \left[(O\tr\otimes I)\mathscr{T}(L)\right]&=\sum_{i,j} \trace_1 \left[(O\tr\otimes I)(E_{ij}\otimes L(E_{ij}))\right]\nonumber\\
&=\sum_{i,j} \trace(O\tr E_{ij}) L(E_{ij})\nonumber\\
&=\sum_{i,j} \trace(E_{ji}O)L(E_{ij}) \because\text{transposition is TP}\nonumber\\
&=\sum_{i,j}(E_{ij}|O)L(E_{ij})\nonumber\\
&=L\Big(\sum_{i,j}E_{ij}(E_{ij}|O)\Big)\nonumber\\
&=L(O)\; \forall O\in\mathcal{B}(\mathcal{H}_1).\label{eq:Choi}
\end{align}
Put it another way, $L$ is equal to the linear mapping $O\mapsto \trace_1(O\tr\otimes I \mathscr{T}(L))$. 

What we showed through \eqref{eq:Choi} is that there exists a mapping $\mathscr{T}^{-1}$ from $\mathcal{B}(\mathcal{H}_1)\otimes \mathcal{B}(\mathcal{H}_2)$ to  $\mathcal{B}(\mathcal{B}(\mathcal{H}_1),\mathcal{B}(\mathcal{H}_2))$ such that $\mathscr{T}^{-1}\circ \mathscr{T}=\mathcal{I}_1$, the identity mapping on $\mathcal{B}(\mathcal{H}_1)$. It suggests that $\mathscr{T}$ is injective, but it doesn't directly show $\mathscr{T}$ is surjective, c.f. Theorem~\ref{thm:inj}. It's through Theorem~\ref{thm:rpn} and \eqref{eq:dim} we know $\mathscr{T}$ is bijective and therefore an isomorphism. In other words, \eqref{eq:Choi} shows 
\begin{thm}\label{thm:ChoiI}
Choi isomorphism $\mathscr{T}$ is a bijective linear mapping, or isomorphism, from $\mathcal{B}(\mathcal{B}(\mathcal{H}_1),\mathcal{B}(\mathcal{H}_2))$ to $\mathcal{B}(\mathcal{H}_1\otimes \mathcal{H}_2)$. Its inverse is 
\begin{equation}
\mathscr{T}^{-1}: \mathcal{B}(\mathcal{H}_1\otimes \mathcal{H}_2)\ni O_{12}\mapsto 
\left[\mathcal{B}(\mathcal{H}_1)\ni O\mapsto \trace_1 \left((O\tr\otimes I) O_{12}\right)\in\mathcal{B}(\mathcal{H}_2)\right].\label{eq:Choii}
\end{equation}
\end{thm}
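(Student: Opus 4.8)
The plan is to leverage the computation already carried out in \eqref{eq:Choi}, which does the only nontrivial algebra, and then to package it into the three ingredients needed for an isomorphism: linearity of $\mathscr{T}$, injectivity of $\mathscr{T}$, and a dimension count that promotes injectivity to bijectivity. First I would record that $\mathscr{T}$ is linear. This is immediate from the definition $\mathscr{T}(L)=\sum_{i,j}E_{ij}\otimes L(E_{ij})$: since each evaluation $L\mapsto L(E_{ij})$ is linear in $L$, we get $\mathscr{T}(aL_1+bL_2)=a\mathscr{T}(L_1)+b\mathscr{T}(L_2)$ for all scalars $a,b$ and mappings $L_1,L_2$.

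Next I would read \eqref{eq:Choi} as the statement that the mapping $\Phi:\mathcal{B}(\mathcal{H}_1\otimes\mathcal{H}_2)\to\mathcal{B}(\mathcal{B}(\mathcal{H}_1),\mathcal{B}(\mathcal{H}_2))$ given by $\Phi(O_{12})=\left[O\mapsto\trace_1\!\left((O\tr\otimes I)O_{12}\right)\right]$ satisfies $\Phi\circ\mathscr{T}=\mathcal{I}$, the identity mapping on the domain; that is, $\Phi$ is a \emph{left} inverse of $\mathscr{T}$. By Theorem~\ref{thm:inj}, the mere existence of such a left inverse forces $\mathscr{T}$ to be injective. I would then close the gap to bijectivity using finite-dimensionality. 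By \eqref{eq:dim} the domain and codomain both have dimension $(\dim\mathcal{H}_1)^2(\dim\mathcal{H}_2)^2$, so they share the same finite dimension; by the first remark following the rank-plus-nullity theorem (Theorem~\ref{thm:rpn}), an injective linear mapping between finite-dimensional spaces of equal dimension is automatically surjective, hence bijective. Thus $\mathscr{T}$ is an isomorphism. Finally, once $\mathscr{T}$ is known to be bijective its left inverse is automatically its unique two-sided inverse, so $\mathscr{T}^{-1}=\Phi$, which is precisely the formula asserted in the statement.

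The main obstacle — really the only subtlety — is exactly this last promotion. Equation \eqref{eq:Choi} by itself delivers only a one-sided (left) inverse, and therefore only injectivity; one must resist concluding bijectivity from it directly, since in infinite dimensions a left inverse need not be a genuine inverse (compare the injective-but-not-surjective maps discussed after Theorem~\ref{thm:rpn}). The finite-dimensional hypothesis on $\mathcal{H}_1$ and $\mathcal{H}_2$, together with the exact dimension match recorded in \eqref{eq:dim}, is what makes the argument valid, and it is worth flagging explicitly that this is the step where finite-dimensionality is essential.
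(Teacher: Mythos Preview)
Your proposal is correct and follows essentially the same route as the paper: \eqref{eq:Choi} gives a left inverse, Theorem~\ref{thm:inj} yields injectivity, and the dimension match \eqref{eq:dim} together with Theorem~\ref{thm:rpn} upgrades this to bijectivity. Your added explicit check of linearity and your emphasis on why the finite-dimensional hypothesis is indispensable are welcome clarifications, but the underlying argument is identical to the paper's.
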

\noindent What's inside the square brackets of \eqref{eq:Choii} is a linear mapping from $\mathcal{B}(\mathcal{H}_1)$ to $\mathcal{B}(\mathcal{H}_2)$. As a bijective mapping, $\mathscr{T}\circ \mathscr{T}^{-1}$ and $\mathscr{T}^{-1}\circ \mathscr{T}$ are identity mappings on their respective domains. With (the inverse of) Choi isomorphism we can define a linear mapping from $\mathcal{B}(\mathcal{H}_1)$ to $\mathcal{B}(\mathcal{H}_2)$ by an operator on $\mathcal{H}_1\otimes \mathcal{H}_2$. 

In addition, if the space $\mathcal{B}(\mathcal{H}_1)\rightarrow\mathcal{B}(\mathcal{H}_2)$ is equipped with the Hilbert-Schmidt inner product as $\mathcal{B}(\mathcal{H}_1\otimes \mathcal{H}_2)$ does, then $\mathscr{T}$ is an isometry, so it's also an isomorphism in the Hilbert-space sense (Section~\ref{sec:isometry}); please refer to Appendix~\ref{app:isHS}.
\section{Tensor Product Space, Transposition and Partial Transposition}
\label{sec:bi}

\subsection{Choi Isomorphism on a Bipartite System}
When there are two parties Alice and Bob (A and B), the state is a density operator on a tensor product space $\mathcal{H}_{A_1}\otimes \mathcal{H}_{B_1}$. Let's consider a linear mapping $L$ from 
$\mathcal{B}(\mathcal{H}_{A_1}\otimes \mathcal{H}_{B_1})$ to $\mathcal{B}(\mathcal{H}_{A_2}\otimes \mathcal{H}_{B_2})$. We don't assume $\mathcal{H}_{A_1}$ and $\mathcal{H}_{A_2}$ are isomorphic, similarly for $\mathcal{H}_{B_1}$ and $\mathcal{H}_{B_2}$, so the dimensions before and after $L$ can be different. Assign bases for $\mathcal{B}(\mathcal{H}_{A_1})$ and $\mathcal{B}(\mathcal{H}_{B_1})$ by \eqref{eq:eij}
\begin{equation}
E_{ij}:=\ket{a_i}\bra{a_j} ,\; F_{ij}:=\ket{b_i}\bra{b_j},
\end{equation}
where $\{\ket{a_i}\}$ and $\{\ket{bi}\}$ are orthonormal bases of $\mathcal{H}_{A_1}$ and $\mathcal{H}_{B_1}$. Choi isomorphism of $L$ then is
\begin{equation}
\mathscr{T}(L)=\sum_{i,j,k,l} E_{ij}\otimes F_{kl}\otimes L(E_{ij}\otimes F_{kl}).
\end{equation}
$\mathscr{T}(L)$ is in $\mathcal{B}(\mathcal{H}_{A_1}\otimes \mathcal{H}_{B_1}\otimes \mathcal{H}_{A_2}\otimes \mathcal{H}_{B_2})$.\footnote{Or any space isomorphic to it, like $\mathcal{B}(\mathcal{H}_{A_1}\otimes \mathcal{H}_{B_1})\otimes \mathcal{B}(\mathcal{H}_{A_2}\otimes \mathcal{H}_{B_2})$.} From now on $\{E_{ij}\}$ and $\{F_{ij}\}$ will denote bases of operators.

\subsection{Transposition and Partial Transposition}\label{sec:TPT}
\subsubsection{On operators}
\begin{defi}[\textbf{Transposition and partial transposition}]
Given an orthonormal basis $\{\ket{a_i}\}$ of $\mathcal{H}$ and $E_{ij}=\ket{a_i}\bra{a_j}$, transposition $\mathrm{T}\in\mathcal{B}(\mathcal{B}(\mathcal{H}))$ with respect to $\{a_i\}$ or $\{E_{ij}\}$ is a linear mapping on operators: For any $O\in\mathcal{B}(\mathcal{H})$ whose Fourier expansion in terms of $\{E_{ij}\}$ is (Section~\ref{sec:dual} and \ref{sec:bas})
\begin{equation}
O=\sum_{i,j}(E_{ij}|O)E_{ij},
\end{equation}
its partial transpose is
\begin{equation}
O\tr:=\mathrm{T}(O)=\sum_{i,j}(E_{ij}|O)E_{ji}.
\end{equation}
Partial transposition $\Gamma:\mathcal{B}(\mathcal{H}_A\otimes \mathcal{H}_B)\rightarrow \mathcal{B}(\mathcal{H}_A\otimes \mathcal{H}_B)$, defined as
\begin{equation}
\Gamma:=\mathrm{T}_A\otimes\mathcal{I}_B,
\end{equation}
where $\mathrm{T}_A$ is transposition on $\mathcal{B}(\mathcal{H}_A)$ and $\mathcal{I}_B$ is the identity mapping on $\mathcal{B}(\mathcal{H}_B)$, is also a linear mapping. Like transpose, $O^\Gamma$ will denote the partial transpose of an operator $O\in\mathcal{B}(\mathcal{H}_A\otimes \mathcal{H}_B)$.
\end{defi}

Transposition is an example of positive but non-CP mappings \cite{Horodecki96}, i.e. $T$ is positive but $\Gamma$ is not. They're both TP, i.e. 
\begin{equation}
\trace O\tr=\trace O \text{ and }\trace O^\Gamma=\trace O,
\end{equation}
and the inverses to themselves: $\mathrm{T}\circ \mathrm{T}=\mathcal{I}$ and $\Gamma\circ \Gamma=\mathcal{I_A}\otimes\mathcal{I_B}=\mathcal{I_{AB}}$, so both are bijective and are isomorphisms.\footnote{They're also isomorphisms in the Hilbert-space sense; see Appendix~\ref{app:isHS}.} As both transposition and partial transposition commute with the adjoint mapping, they are both HP.\footnote{The adjoint here is with respect to the inner products on $\mathcal{H}_1$ and $\mathcal{H}_2$.} 

To get familiarized with the basis $\{E_{ij}\}$, let's use it to show $(O_1 O_2)\tr=O_2\tr O_1\tr$:\footnote{It may be neater to prove it through adjoint---the proof below is more like a practice.} First,
\begin{align}
O_1 O_2&=\sum_{i,j,k,l}(E_{ij}|O_1)(E_{kl}|O_2)E_{ij}E_{kl}\nonumber\\
&=\sum_{i,j,k,l}\delta_{jk} (E_{ij}|O_1)(E_{kl}|O_2)E_{il}\nonumber\\
&=\sum_{i,j,l}(E_{ij}|O_1)(E_{jl}|O_2)E_{il}\label{eq:o1o2}
\end{align}
Because
\begin{align}
(E_{ij}|O_1 O_2)=&\sum_{k,l} \Big(E_{ij}\Big|O_1 E_{kl}(E_{kl}|O_2) \Big)\nonumber\\
=&\sum_{k,l} (E_{ij}E_{lk}|O_1  ) (E_{kl}|O_2)\nonumber\\
=&\sum_{k} (E_{ik}|O_1  ) (E_{kj}|O_2)\nonumber\\
=&\sum_{k} (E_{jk}|O_2\tr)(E_{ki}|O_1\tr),
\end{align}
we obtain
\begin{equation}
(O_1 O_2)\tr=\sum_{ij} (E_{ij}|O_1 O_2) E_{ji}=\sum_{i,j,k} (E_{jk}|O_2\tr)(E_{ki}|O_1\tr)E_{ji}=O_2\tr O_1\tr,
\end{equation}
by \eqref{eq:o1o2}. 

\subsubsection{On linear mappings from operators to operators}
The partial transpose of a linear mapping $L:\mathcal{B}(\mathcal{H}_{A_1}\otimes \mathcal{H}_{B_1})\rightarrow \mathcal{B}(\mathcal{H}_{A_2}\otimes \mathcal{H}_{B_2})$ is defined as\footnote{We can similarly define a transpose of a linear mapping from operators to operators, but it's not needed in this thesis.} \cite{Rains99}
\begin{equation}
\Gamma: L\mapsto L^\Gamma:=\Gamma\circ L\circ \Gamma,\label{eq:par}
\end{equation} 
where the first (rightmost) $\Gamma$ in \eqref{eq:par} maps $\mathcal{B}(\mathcal{H}_{A_1}\otimes \mathcal{H}_{B_1})$ to $\mathcal{B}(\mathcal{H}_{A_1}\otimes \mathcal{H}_{B_1})$, and the second one maps $\mathcal{B}(\mathcal{H}_{A_2}\otimes \mathcal{H}_{B_2})$ to $\mathcal{B}(\mathcal{H}_{A_2}\otimes H_{B_2})$. \eqref{eq:par} itself is a linear mapping on linear mappings (from operators to operators):
\begin{equation*}
\Gamma \in \mathcal{B}\left\{\mathcal{B}\left[\mathcal{B}(\mathcal{H}_{A_1}\otimes \mathcal{H}_{B_1}), \mathcal{B}(\mathcal{H}_{A_2}\otimes \mathcal{H}_{B_2})\right]\right\}.
\end{equation*}
Like partial transposition on operators, this partial transposition on linear mappings is invertible and it's its own inverse: $(L^\Gamma)^\Gamma=L$. Hence, it's also an isomorphism.\footnote{See Appendix~\ref{app:isHS}.} Since partial transposition on operators is HP/TP, for any HP/TP mapping $L$ its partial transpose $L^\Gamma$ is also HP/TP. This property of partial transposition will play an important role in Chapter~\ref{ch:EC}.

\subsection{PPT, Peres Criterion and Negativity}\label{sec:neg}
A state $\rho$ whose partial transpose is positive is called a PPT (positive-partial-transpose) state \cite{Horodecki97,Rains99}. PPT states are important in entanglement theory, because of:
\begin{thm}[\textbf{Peres-Horodecki criterion}]\label{thm:peres}
For a density operator $\rho$ on $\mathcal{H}_A\otimes \mathcal{H}_B$ to be separable, it's necessary that $\rho$ is PPT. This is a necessary and sufficient condition if $\mathrm{dim}\mathcal{H}_A=2$ and $\mathrm{dim}\mathcal{H}_B=3$, or if $\mathrm{dim}\mathcal{H}_A=3$ and $\mathrm{dim}\mathcal{H}_B=2$ \cite{Peres96,Horodecki96,Horodecki97}.
\end{thm}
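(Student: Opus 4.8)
The plan is to prove the two assertions separately: the necessity of PPT in arbitrary dimensions, and the sufficiency of PPT when $\mathrm{dim}\mathcal{H}_A\cdot\mathrm{dim}\mathcal{H}_B\leq 6$.

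First I would dispatch necessity, which is the elementary direction. If $\rho$ is separable, write $\rho=\sum_i p_i\,\rho_i^A\otimes\rho_i^B$ with $p_i\geq 0$. Applying $\Gamma=\mathrm{T}_A\otimes\mathcal{I}_B$ and using its linearity yields $\rho^\Gamma=\sum_i p_i\,(\rho_i^A)\tr\otimes\rho_i^B$. Since transposition is a positive map, each $(\rho_i^A)\tr\geq 0$ (it has the same spectrum as $\rho_i^A$); the tensor product of two positive operators is positive, and a nonnegative combination of positive operators is again positive. Hence $\rho^\Gamma\geq 0$, so $\rho$ is PPT.

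The sufficiency direction is where the real work lies, and I would route it through the positive-map characterization of separability. The key intermediate claim is that $\rho$ is separable if and only if $(\mathcal{I}_A\otimes\Lambda)(\rho)\geq 0$ for \emph{every} positive map $\Lambda$ on $\mathcal{B}(\mathcal{H}_B)$. The ``only if'' repeats the necessity argument with an arbitrary positive $\Lambda$ in place of $\mathrm{T}$. For the ``if'' part I would use that the set of separable states is closed and convex, so if $\rho$ were entangled a Hahn--Banach separation produces a Hermitian witness $W$ with $\trace(W\sigma)\geq 0$ for all separable $\sigma$ but $\trace(W\rho)<0$; the condition $\trace\!\big(W\,\ket{\psi_A}\!\bra{\psi_A}\otimes\ket{\psi_B}\!\bra{\psi_B}\big)\geq 0$ on all product states is precisely block-positivity, which under the Choi isomorphism $\mathscr{T}$ (Theorem~\ref{thm:Choi} and the surrounding discussion) corresponds to a positive map $\Lambda$, and pushing $\trace(W\rho)<0$ back through $\mathscr{T}^{-1}$ gives $(\mathcal{I}_A\otimes\Lambda)(\rho)\not\geq 0$. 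With this characterization I would then invoke the structural theorem of St{\o}rmer and Woronowicz: in the cases at hand every positive map decomposes as $\Lambda=\Lambda_1+\Lambda_2\circ\mathrm{T}$ with $\Lambda_1,\Lambda_2$ completely positive. For a PPT state this gives
\begin{equation}
(\mathcal{I}_A\otimes\Lambda)(\rho)=(\mathcal{I}_A\otimes\Lambda_1)(\rho)+(\mathcal{I}_A\otimes\Lambda_2)(\rho^{\mathrm{T}_B}),
\end{equation}
where the first term is positive because $\Lambda_1$ is CP and $\rho\geq 0$, and the second because $\Lambda_2$ is CP and $\rho^{\mathrm{T}_B}\geq 0$ (equivalent to $\rho^\Gamma\geq 0$, the two partial transposes differing by a global transposition that preserves positivity). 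Thus $(\mathcal{I}_A\otimes\Lambda)(\rho)\geq 0$ for all positive $\Lambda$, and the characterization forces $\rho$ to be separable.

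The main obstacle is genuinely twofold. The positive-map characterization requires the Hahn--Banach separation together with the witness-to-positive-map dictionary supplied by the Choi isomorphism, which is routine but must be set up carefully. The deeper difficulty is the decomposability theorem: the fact that \emph{every} positive map is of the form $\Lambda_1+\Lambda_2\circ\mathrm{T}$ is special to the $2\times 2$ and $2\times 3$ regimes and fails in higher dimensions, where indecomposable positive maps exist. This is exactly why sufficiency breaks down for $3\times 3$ and larger, so the dimensional hypothesis is not a convenience but the crux of the argument.
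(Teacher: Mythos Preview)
The paper does not actually prove this theorem: it is stated as a classical result with citations to \cite{Peres96,Horodecki96,Horodecki97}, and no argument is supplied. So there is no ``paper's own proof'' to compare against.

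That said, your outline is the standard route from the cited literature and is essentially correct. The necessity direction is exactly Peres' observation. For sufficiency your two-step plan---first the Horodecki positive-map characterization of separability via Hahn--Banach separation and the block-positivity/positive-map correspondence, then the St{\o}rmer--Woronowicz decomposability theorem in low dimensions---is precisely how \cite{Horodecki96} establishes it. Your remark that partial transposition on $A$ versus $B$ differ by a global transpose (hence give the same positivity condition) correctly handles the mismatch with the paper's convention $\Gamma=\mathrm{T}_A\otimes\mathcal{I}_B$. One cosmetic point: the theorem as stated here only lists the $2\times 3$ and $3\times 2$ cases, while your phrasing ``$\mathrm{dim}\mathcal{H}_A\cdot\mathrm{dim}\mathcal{H}_B\leq 6$'' also covers $2\times 2$; this is fine since St{\o}rmer's result handles $2\times 2$ and Woronowicz's handles $2\times 3$, but you may want to align the statement with what you actually prove.
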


In other words, (the set of) separable states are a subset of (the set of) PPT states, and in certain dimensions the two sets are identical \cite{Chitambar14}. 

For a quantum operation $S$, if its partial transpose is CP, it's called a PPT-preserving operation, or simply \textbf{PPT operation} \cite{Rains99,Cirac01}, because if $S^\Gamma$ is CP and the input state $\rho$ is PPT, then
\begin{equation}
S(\rho)^\Gamma:=(S(\rho))^\Gamma=S^\Gamma(\rho^\Gamma)\geq 0,
\end{equation}
that is, a PPT state is still PPT after a PPT operation; as a comparison, a separable operation preserves separability of a state \cite{Rains99,Cirac01}. 

\textbf{Negativity} is basically a quantitative version of Peres criterion \cite{Vidal02}: For a density operator $\rho$ on $\mathcal{H}_A\otimes \mathcal{H}_B$, its negativity is defined as
\begin{equation}
E_N(\rho):=\left(||\rho^\Gamma||_1-1\right)/2=\trace(\rho^\Gamma)^-,
\end{equation}
where the second equality comes from the discussion in Section~\ref{sec:eig} and partial transposition being TP. A similar quantity/function, the \textbf{logarithmic negativity} is defined as
\begin{equation}
E_L(\rho):=\log||\rho^\Gamma||.
\end{equation}
Again, since partial transposition is TP, $E_L(\rho)$ is always non-negative. Both negativities are motononic under LOCC or a PPT operation, so they are genuine entanglement monontones (or measures) \cite{Vidal02,Plenio05}.

\section{Adjoint, Choi Isomorphism and Complex Conjugation}
\label{sec:l2}
\subsection{Adjoint and Choi Isomorphism}
Having discussed transposition, we can now introduce a lemma relating Choi isomorphism and the adjoint of a linear mapping. With $L:\mathcal{B}(\mathcal{H}_1)\rightarrow\mathcal{B}(\mathcal{H}_2),$ $\mathscr{T}(L)\in \mathcal{B}(\mathcal{H}_1)\otimes \mathcal{B}(\mathcal{H}_2)$, and $\trace_2$ in the following lemma traces out $\mathcal{B}(\mathcal{H}_2)$. The proof of this lemma also serves to demonstrate complex conjugation on operators, which we will treat more rigorously in the next subsection. 
\begin{lem}\label{lem:tl}
For a linear mapping $L:\mathcal{B}(\mathcal{H}_1)\rightarrow\mathcal{B}(\mathcal{H}_2)$,
\begin{equation*}
L^\dagger (I)=\trace_2\mathscr{T}(L)^*.
\end{equation*}
\end{lem}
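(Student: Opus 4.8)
The plan is to expand both sides of the claimed identity in the operator basis $\{E_{ij}\}$ of $\mathcal{B}(\mathcal{H}_1)$ and check that the coefficients agree. The crucial preliminary observation is that the complex conjugation $*$ and the Choi isomorphism $\mathscr{T}$ are both defined with respect to the \emph{same} orthonormal basis $\{a_i\}$ of $\mathcal{H}_1$, so the two operations are compatible. First I would record the facts I need about conjugation on operators: it is entrywise conjugation in that basis, hence conjugate-linear, it distributes over tensor products, $(A\otimes B)^*=A^*\otimes B^*$, it fixes the basis operators since $E_{ij}=\ket{a_i}\bra{a_j}$ has real (0/1) matrix entries so $E_{ij}^*=E_{ij}$, and it interacts with the trace via $\trace(O^*)=(\trace O)^*$ and with the adjoint via $O^*=(O^\dagger)\tr$, giving in particular $\trace(O^\dagger)=(\trace O)^*$.

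Next I would compute the right-hand side. Starting from $\mathscr{T}(L)=\sum_{i,j}E_{ij}\otimes L(E_{ij})$ and applying conjugation termwise, the preliminary facts give $\mathscr{T}(L)^*=\sum_{i,j}E_{ij}\otimes L(E_{ij})^*$. Tracing out the second factor then yields $\trace_2\mathscr{T}(L)^*=\sum_{i,j}\trace\!\big(L(E_{ij})^*\big)\,E_{ij}=\sum_{i,j}\big(\trace L(E_{ij})\big)^*E_{ij}$, using $\trace(O^*)=(\trace O)^*$ on each summand.

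Then I would compute the left-hand side via the Fourier expansion $L^\dagger(I)=\sum_{i,j}\big(E_{ij}\,|\,L^\dagger(I)\big)E_{ij}$ in the orthonormal operator basis. The adjoint relation $(O_2|L(O_1))=(L^\dagger(O_2)|O_1)$ together with conjugate symmetry of the Hilbert-Schmidt inner product gives $\big(E_{ij}|L^\dagger(I)\big)=\big(L(E_{ij})|I\big)=\trace\!\big(L(E_{ij})^\dagger\big)=\big(\trace L(E_{ij})\big)^*$. Hence $L^\dagger(I)=\sum_{i,j}\big(\trace L(E_{ij})\big)^*E_{ij}$, which coincides exactly with the expression obtained for $\trace_2\mathscr{T}(L)^*$, completing the proof.

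The computation is essentially routine bookkeeping, so the main obstacle is conceptual rather than technical: keeping the conjugation conventions straight. I must be careful that $*$ is conjugate-linear (not linear), that it is the basis-dependent entrywise conjugation rather than the adjoint, and that the inner product is taken linear in the second argument as fixed in Definition~\ref{def:inn}. The one place a sign or a conjugate could slip is the adjoint step for $L^\dagger(I)$; verifying $\big(E_{ij}|L^\dagger(I)\big)=\big(L(E_{ij})|I\big)$ cleanly from $(O_2|L(O_1))=(L^\dagger(O_2)|O_1)$ and conjugate symmetry is the step I would write out most carefully.
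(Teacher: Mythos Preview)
Your proof is correct and follows essentially the same route as the paper: both expand $\trace_2\mathscr{T}(L)^*=\sum_{i,j}\big(\trace L(E_{ij})\big)^*E_{ij}$ and then identify this with $L^\dagger(I)$. The only cosmetic difference is that the paper verifies the identification by testing against an arbitrary $O$ via the inverse Choi formula $\trace L(O)=\trace\big[(O\tr\otimes I)\mathscr{T}(L)\big]$, whereas you read off the Fourier coefficients $(E_{ij}\,|\,L^\dagger(I))=\big(\trace L(E_{ij})\big)^*$ directly from the adjoint relation; these are two phrasings of the same computation.
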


\begin{proof}
First of all,
\begin{equation}
\trace_2\mathscr{T}(L)^*=\sum_{i,j}E_{ij}\trace L(E_{ij})^*,\label{eq:p2}
\end{equation}
where $ L(E_{ij})^*$ denotes $(L(E_{ij}))^*$. From Section~\ref{sec:Choi}, for any $O\in\mathcal{B}(\mathcal{H}_1)$
\begin{align}
\trace L(O)&=\trace (O\tr\otimes I\mathscr{T}(L))\nonumber\\
&=\sum_{i,j}\trace (O\tr E_{ij})\trace L(E_{ij})\nonumber\\
&=\sum_{i,j}(E_{ij}|O)\trace L(E_{ij})\nonumber\\
&=\Big(\sum_{i,j}E_{ij}\trace L(E_{ij})^*\Big|O\Big)\nonumber\\
&=(\trace_2\mathscr{T}(L)^*|O)\label{eq:p1}
\end{align}
By comparing \eqref{eq:p1} and
\begin{equation*}
\trace L(O)=(I|L(O))=(L^\dagger(I)|O)\label{eq:p4}
\end{equation*}
we can conclude $L^\dagger(I)=\trace_2 \mathscr{T}(L)^*.$
\end{proof}

\subsection{Complex Conjugation and Basis} 
\label{sec:l2b}
In the proof for Lemma~\ref{lem:tl} it's important in which basis we performed complex conjugation, because the transpose and complex conjugate of an operator depend on the basis: Given an orthonormal basis $\{E_{ij}\}$ of $\mathcal{B}(\mathcal{H})$, complex conjugation on $\mathcal{B}(\mathcal{H})$ is
\begin{equation}
*: \mathcal{B}(\mathcal{H})\ni O\mapsto O^*:= \sum_{i,j}(E_{ij}|O)^* E_{ij}=\sum_{i,j}(O|E_{ij}) E_{ij}.
\end{equation}
Obviously complex conjugation is a conjugate linear mapping from $\mathcal{B}(\mathcal{H})$ to $\mathcal{B}(\mathcal{H})$. It also satisfies
\begin{equation}
\trace (O^*)=(\trace O)^* \text{ for any }O\in\mathcal{B}(\mathcal{H}).\label{eq:o*}
\end{equation}

Let's return back to the proof of Lemma~\ref{lem:tl}. If the Choi isomorphism of $L$ is carried out with respect to $\{E_{ij}\}$, then the complex conjugation on $\mathscr{T}(L)\in \mathcal{B}(\mathcal{H}_1)\otimes \mathcal{B}(\mathcal{H}_2)$ should be executed with respect to $\{E_{ij}\otimes F_{kl}\}$, where $\{F_{kl}\}$ is \emph{any orthonormal basis} of $\mathcal{B}(\mathcal{H}_2)$: Because of \eqref{eq:o*} and because trace is basis-independent,  $\trace \left[L(E_{ij})^*\right]$ will always be equal to $\left[\trace L(E_{ij})\right]^*$ in whichever orthonormal basis of $\mathcal{B}(\mathcal{H}_2)$, and $\trace_2\mathscr{T}(L)^*$ will turn out to be the same. More generally, 
\begin{lem}
Let $\mathcal{H}_1$ and $\mathcal{H}_2$ be finite-dimensional Hilbert spaces, $\{E_{ij}\}$ be an orthonormal basis of $\mathcal{B}(\mathcal{H}_1)$ and $\{F_{ij}\}$ and $\{F_{ij}'\}$ be two orthonormal bases of $\mathcal{B}(\mathcal{H}_2)$. Given any operator $O\in \mathcal{B}(\mathcal{H}_1\otimes \mathcal{H}_2)$, $\trace_2 O^*$ will be the same whether the complex conjugate is executed with respect to $\{E_{ij}\otimes F_{kl}\}$ or $\{E_{ij}\otimes F_{kl}'\}$.
\end{lem}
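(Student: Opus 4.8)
The plan is to reduce the whole statement to the single basis-independent identity $\trace(B^*)=(\trace B)^*$, namely \eqref{eq:o*}, which holds for complex conjugation with respect to \emph{any} orthonormal basis. Write $*_1$ for complex conjugation on $\mathcal{B}(\mathcal{H}_1)$ relative to $\{E_{ij}\}$; write $*_F$ and $*_{F'}$ for conjugation on $\mathcal{B}(\mathcal{H}_1\otimes\mathcal{H}_2)$ relative to the product bases $\{E_{ij}\otimes F_{kl}\}$ and $\{E_{ij}\otimes F_{kl}'\}$; and write $*_{2,F},*_{2,F'}$ for conjugation on $\mathcal{B}(\mathcal{H}_2)$ relative to $\{F_{kl}\},\{F_{kl}'\}$. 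Under the identification $\mathcal{B}(\mathcal{H}_1\otimes\mathcal{H}_2)\cong\mathcal{B}(\mathcal{H}_1)\otimes\mathcal{B}(\mathcal{H}_2)$ (Theorem~\ref{thm:emiso}) with the factorized Hilbert-Schmidt inner product (Theorem~\ref{thm:HTP}), the products $\{E_{ij}\otimes F_{kl}\}$ are genuinely orthonormal, so each of $*_F,*_{F'}$ is well defined as coordinate conjugation in its basis.

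First I would record that conjugation in a product basis factorizes on decomposable operators: for $A\in\mathcal{B}(\mathcal{H}_1)$, $B\in\mathcal{B}(\mathcal{H}_2)$,
\begin{equation*}
(A\otimes B)^{*_F}=A^{*_1}\otimes B^{*_{2,F}}.
\end{equation*}
This is a one-line check: expanding $A=\sum_{ij}\alpha_{ij}E_{ij}$ and $B=\sum_{kl}\beta_{kl}F_{kl}$ gives $A\otimes B=\sum_{ijkl}\alpha_{ij}\beta_{kl}\,E_{ij}\otimes F_{kl}$, and conjugating the coordinates yields $\sum_{ijkl}\alpha_{ij}^*\beta_{kl}^*\,E_{ij}\otimes F_{kl}=A^{*_1}\otimes B^{*_{2,F}}$.

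The heart of the argument is then the identity
\begin{equation*}
\trace_2\big(O^{*_F}\big)=\big(\trace_2 O\big)^{*_1}\qquad\text{for all }O\in\mathcal{B}(\mathcal{H}_1\otimes\mathcal{H}_2),
\end{equation*}
whose right-hand side does not mention $F$ at all. I would verify it first on a decomposable $O=A\otimes B$, using the factorization above together with $\trace_2(X\otimes Y)=X\,\trace(Y)$:
\begin{equation*}
\trace_2\big((A\otimes B)^{*_F}\big)=A^{*_1}\,\trace\big(B^{*_{2,F}}\big)=A^{*_1}\,(\trace B)^{*}=\big(A\,\trace B\big)^{*_1}=\big(\trace_2(A\otimes B)\big)^{*_1},
\end{equation*}
where the second equality is precisely \eqref{eq:o*} applied in $\mathcal{B}(\mathcal{H}_2)$ with the basis $\{F_{kl}\}$, and the third uses conjugate-linearity of $*_1$ and the fact that $\trace B$ is a scalar. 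Crucially, \eqref{eq:o*} holds for \emph{whatever} orthonormal basis one conjugates in, so the same computation run with $F'$ produces the identical right-hand side.

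Finally I would pass from decomposable tensors to arbitrary $O$. Both $O\mapsto\trace_2(O^{*_F})$ and $O\mapsto(\trace_2 O)^{*_1}$ are conjugate-linear (a linear map composed with a conjugate-linear one), and in finite dimensions every $O$ is a finite sum of decomposable tensors, so agreement on decomposables forces agreement everywhere. This establishes the boxed identity, and since its right-hand side is built only from the fixed basis $\{E_{ij}\}$, we get $\trace_2(O^{*_F})=\trace_2(O^{*_{F'}})$, the desired independence of the $\mathcal{H}_2$-basis. The single point needing care is the factorization of conjugation over a product basis: because conjugation is \emph{conjugate}-linear, "tensor product of conjugations" must be read as the conjugate-linear extension of $E_{ij}\otimes F_{kl}\mapsto E_{ij}\otimes F_{kl}$, which is exactly why I would verify it by hand on decomposables rather than cite a linear tensor-product-of-maps theorem.
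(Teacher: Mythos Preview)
Your proof is correct and follows essentially the same approach as the paper: both factorize conjugation over the tensor product and reduce the claim to the basis-independent identity $\trace(B^*)=(\trace B)^*$, \eqref{eq:o*}. The paper reaches this via the specific expansion $O=\sum_{i,j}E_{ij}\otimes B_{ij}$ (so that $O^*=\sum_{i,j}E_{ij}\otimes B_{ij}^*$ and $\trace_2 O^*=\sum_{i,j}E_{ij}\,\trace B_{ij}^*$), whereas you verify the slightly stronger identity $\trace_2(O^{*_F})=(\trace_2 O)^{*_1}$ on decomposable tensors and extend by conjugate-linearity; the core idea is identical.
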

\begin{proof}
Suppose $O=\sum_{i,j} E_{ij}\otimes  B_{ij}.$\footnote{$B_{ij}$ will be uniquely determined; see Theorem 14.5 of Ref.~\cite{Roman}.} Then $O^*=\sum_{i,j} E_{ij}\otimes  B_{ij}^*$, where $B_{ij}$ depend on the basis chosen for $\mathcal{B}(\mathcal{H}_2)$. Hence $\trace_2 O^*=\sum_{i,j} E_{ij}\trace  B_{ij}^*$, which from \eqref{eq:o*} will be the identical regardless of the basis for $\mathcal{B}(\mathcal{H}_2)$.
\end{proof}

Now let's assume the mapping $L$ is HP, so we can perform a spectral decomposition on $\mathscr{T}(L)$ as it is Hermitian (Theorem~\ref{thm:Choi}). Suppose its spectral decomposition is 
\begin{equation}
\mathscr{T}(L)=\sum_i c_i \ket{v_i}\bra{v_i}
\end{equation}
with $\{v_i\}$ being orthonormal. Let
\begin{equation*}
\ket{v_i}=\sum_{j,k}d_{jk}^i\ket{a_j}\ket{b_k},
\end{equation*}
where $\{b_i\}$ is any orthonormal basis of $\mathcal{H}_2,$ and $\{E_{ij}=\ket{a_i}\bra{a_j}\}$ is the basis chosen for Choi isomorphism. Then,
\begin{align}
(\ket{v_i}\bra{v_i})^*&=\sum_{j,k,l,m}{d_{jk}^i}^* d_{lm}^i\ket{a_j}\bra{a_l}\otimes \ket{b_k}\bra{b_m}\nonumber\\
&=\sum_{j,k}{d_{jk}^i}^*\ket{a_j}\ket{b_k}\sum_{l,m}d_{lm}^i\bra{a_l} \bra{b_m}\nonumber\\
&=\ket{v_i^*}\bra{v_i^*},
\end{align}
where
\begin{equation*}
\ket{v_i^*}:=\sum_{j,k}{d_{jk}^i}^*\ket{a_j}\ket{b_k}.
\end{equation*}
Because $\inner{u}{v}^*=\inner{u^*}{v^*}=\inner{v}{u}$ for any vector $u$ and $v$,
\begin{equation}
\inner{v_i^*}{v_j^*}=\inner{v_j}{v_i}=\delta_{ij}.\label{eq:or}
\end{equation}
We thus obtain the complex conjugate of $\mathscr{T}(L)$ with respect to the basis $\{\ket{a_i}\bra{a_j}\otimes \ket{b_k}\bra{b_l}\}$:
\begin{equation}
\mathscr{T}(L)^*=\sum_i c_i\ket{v_i^*}\bra{v_i^*}\label{eq:TL}
\end{equation}
Since $\{v_i\}$ is orthonormal, \eqref{eq:TL} is the spectral decomposition of $\mathscr{T}(L)^*$. From \eqref{eq:or} and \eqref{eq:TL}, $\mathscr{T}(L)^*$ and $\mathscr{T}(L)$ have the same eigenvalues, so
\begin{cor}[of Lemma \ref{lem:tl}]\label{cor:1}
For any HP mapping $L,$
\begin{equation}
\trace L^\dagger(I)=\trace \mathscr{T}(L).
\end{equation}
\end{cor}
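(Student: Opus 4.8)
The plan is to start from the identity already secured in Lemma~\ref{lem:tl}, namely $L^\dagger(I)=\trace_2\mathscr{T}(L)^*$, and simply take the full trace of both sides. The key bookkeeping fact I would invoke is that tracing out the remaining factor of a partial trace reproduces the full trace, i.e. $\trace\circ\,\trace_2=\trace$ on $\mathcal{B}(\mathcal{H}_1\otimes\mathcal{H}_2)$. Applying this gives
\begin{equation}
\trace L^\dagger(I)=\trace\left(\trace_2\mathscr{T}(L)^*\right)=\trace\mathscr{T}(L)^*,
\end{equation}
so the whole problem reduces to comparing $\trace\mathscr{T}(L)^*$ with $\trace\mathscr{T}(L)$.

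Next I would use the complex-conjugation property \eqref{eq:o*}, which states $\trace(O^*)=(\trace O)^*$ for any operator $O$. Setting $O=\mathscr{T}(L)$ yields $\trace\mathscr{T}(L)^*=(\trace\mathscr{T}(L))^*$. At this point the only thing left is to observe that the trace being conjugated is real: since $L$ is HP, Theorem~\ref{thm:Choi} tells us $\mathscr{T}(L)$ is Hermitian, and the trace of a Hermitian operator (a real sum of real eigenvalues, by the spectral theorem, Theorem~\ref{thm:spe}) is real. Hence $(\trace\mathscr{T}(L))^*=\trace\mathscr{T}(L)$, and chaining the equalities delivers $\trace L^\dagger(I)=\trace\mathscr{T}(L)$.

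There is essentially no hard step here; the corollary is a short consequence of Lemma~\ref{lem:tl} together with the reality of the trace of a Hermitian operator. The only points demanding a little care are confirming the identity $\trace\circ\,\trace_2=\trace$ (immediate from the definition of the partial trace against a product basis $\{a_i\otimes b_j\}$) and making sure the basis in which complex conjugation is performed does not affect $\trace_2\mathscr{T}(L)^*$ — but this basis-independence is exactly the lemma proved immediately after Lemma~\ref{lem:tl}, so it may be cited rather than reproved. As an alternative to invoking \eqref{eq:o*}, I note that the eigenvalue computation preceding the corollary already shows $\mathscr{T}(L)^*$ and $\mathscr{T}(L)$ share the same (real) spectrum via \eqref{eq:TL} and \eqref{eq:or}; summing eigenvalues then gives $\trace\mathscr{T}(L)^*=\trace\mathscr{T}(L)$ directly, providing a second, equally quick route to the same conclusion.
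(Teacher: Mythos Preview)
Your proof is correct and essentially matches the paper's. The paper arrives at $\trace\mathscr{T}(L)^*=\trace\mathscr{T}(L)$ via the spectral computation \eqref{eq:or}--\eqref{eq:TL} showing $\mathscr{T}(L)^*$ and $\mathscr{T}(L)$ share the same eigenvalues, which is exactly the alternative route you mention at the end; your primary route through \eqref{eq:o*} and the reality of $\trace\mathscr{T}(L)$ for Hermitian $\mathscr{T}(L)$ is a slightly more direct shortcut to the same equality.
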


\subsection{Adding an Ancilla}
Here let's prove another corollary of Lemma~\ref{lem:tl}: 
\begin{cor}[of Lemma \ref{lem:tl}]\label{cor:2lm1}
For any linear mapping $L:\mathcal{B}(\mathcal{H}_1)\rightarrow \mathcal{B}(\mathcal{H}_2)$,
\begin{equation*}
(\mathcal{I}_a\otimes L)^\dagger(I_{\mathcal{H}_a\otimes \mathcal{H}_2})=I_a\otimes L^\dagger(I_{2}),
\end{equation*}
where $\mathcal{I}_a$ is the identity mapping on operators on the ``ancilla'' space $\mathcal{H}_a$, and the subscript beside each identity operator refers to the domain. 
\end{cor}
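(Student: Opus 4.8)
The plan is to apply Lemma~\ref{lem:tl} to the composite mapping $\mathcal{I}_a \otimes L$, which sends $\mathcal{B}(\mathcal{H}_a \otimes \mathcal{H}_1)$ to $\mathcal{B}(\mathcal{H}_a \otimes \mathcal{H}_2)$. By that lemma, $(\mathcal{I}_a \otimes L)^\dagger(I) = \trace_{a,2} \mathscr{T}(\mathcal{I}_a \otimes L)^*$, where $\trace_{a,2}$ traces out the two \emph{output} factors ($\mathcal{H}_a$ and $\mathcal{H}_2$) of the Choi operator. So the task reduces to computing $\mathscr{T}(\mathcal{I}_a \otimes L)$, conjugating it, and tracing out the output block.

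First I would expand the Choi isomorphism in a product basis. Choosing an orthonormal basis $\{G_{kl}\}$ of $\mathcal{B}(\mathcal{H}_a)$ and the usual $\{E_{ij}\}$ of $\mathcal{B}(\mathcal{H}_1)$, the set $\{G_{kl} \otimes E_{ij}\}$ is an orthonormal basis of $\mathcal{B}(\mathcal{H}_a \otimes \mathcal{H}_1)$, and $\mathscr{T}(\mathcal{I}_a \otimes L) = \sum_{k,l,i,j} (G_{kl}\otimes E_{ij}) \otimes (G_{kl} \otimes L(E_{ij}))$. The key algebraic step is to recognize that the two sums decouple: after reordering the tensor slots into (input ancilla, output ancilla, input, output) — a fixed isomorphism of the spaces, c.f.\ Theorem~\ref{thm:emiso} — this factors as $\mathscr{T}(\mathcal{I}_a) \otimes \mathscr{T}(L)$ with $\mathscr{T}(\mathcal{I}_a) = \sum_{k,l} G_{kl}\otimes G_{kl}$.

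Then I would push complex conjugation and the partial trace through the product. Conjugation distributes over the product basis, and the partial trace over the output factors splits as $\trace_a \otimes \trace_2$, so $\trace_{a,2}\mathscr{T}(\mathcal{I}_a\otimes L)^* = (\trace_a \mathscr{T}(\mathcal{I}_a)^*) \otimes (\trace_2 \mathscr{T}(L)^*)$. Applying Lemma~\ref{lem:tl} to each factor separately gives $\mathcal{I}_a^\dagger(I_a) \otimes L^\dagger(I_2) = I_a \otimes L^\dagger(I_2)$, since $\mathcal{I}_a^\dagger = \mathcal{I}_a$ and $\mathcal{I}_a(I_a)=I_a$, which is the claim.

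The main obstacle is purely bookkeeping: keeping the four tensor slots (the input and output copies of $\mathcal{H}_a$, together with $\mathcal{H}_1$ and $\mathcal{H}_2$) in the correct order while factoring the Choi operator, and verifying that the partial trace over the output block and the complex conjugation genuinely factor across the slots. A slicker alternative that sidesteps this would be to prove directly that $(\mathcal{I}_a\otimes L)^\dagger = \mathcal{I}_a\otimes L^\dagger$ from the definition of the adjoint and the factorization of the Hilbert--Schmidt inner product on decomposable tensors, i.e.\ to check $((\mathcal{I}_a\otimes L)^\dagger(I)|O_a\otimes O_1) = (I_a\otimes L^\dagger(I_2)|O_a\otimes O_1)$ on a spanning set of product operators and then invoke non-degeneracy of the inner product; this would be my fallback should the factor-ordering argument become unwieldy.
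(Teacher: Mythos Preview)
Your proposal is correct and follows essentially the same route as the paper: factor $\mathscr{T}(\mathcal{I}_a\otimes L)$ as $\big(\sum_{k,l}G_{kl}\otimes G_{kl}\big)\otimes \mathscr{T}(L)$, then apply Lemma~\ref{lem:tl} and split the conjugate-and-trace across the two factors. The only cosmetic difference is that the paper computes the ancilla factor $\trace_2\big(\sum_{k,l}G_{kl}\otimes G_{kl}\big)=\sum_k G_{kk}=I_a$ by hand rather than invoking Lemma~\ref{lem:tl} a second time on $\mathcal{I}_a$.
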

From this corollary $||(\mathcal{I}_a\otimes L)^\dagger(I_{\mathcal{H}_a\otimes \mathcal{H}_2})||=||I_{a}||\,||L^\dagger(I_2)||=||L^\dagger(I_2)||.$ 

\begin{proof}
As the domain of each mapping is already made clear from the statement of the corollary, we will ignore all the subscripts. Let $\{G_{ij}\}$ be an orthonormal basis of $\mathcal{B}(\mathcal{H}_a)$. We can find\footnote{As in Section~\ref{sec:bi}, we implicitly assume the basis with respect to which Choi isomorphism is performed is composed of decomposable (Section~\ref{sec:tensor}) vectors/operators, i.e. of the form $G_{ij}\otimes E_{kl}$.}
\begin{equation*}
\mathscr{T}(\mathcal{I}\otimes L)=\Big(\sum_{i,j}G_{ij}\otimes G_{ij}\Big)\otimes \mathscr{T}(L).
\end{equation*}
By Lemma~\ref{lem:tl},
\begin{align*}
(\mathcal{I}\otimes L)^\dagger(I)&=\trace_2 \mathscr{T}(\mathcal{I}\otimes L)^*\\
&=\trace_2 \Big(\sum_{i,j}G_{ij}\otimes G_{ij}\Big)\otimes\trace_2 \mathscr{T}(L)^*\\
&=\sum_{i,j}G_{ij}\delta_{ij}\otimes L^\dagger(I)\\
&=I\otimes L^\dagger(I).
\end{align*}
This proof can be compared with that in Ref.~\cite{Campbell10}.
\end{proof}

\section{HP and TP Mappings}

\subsection{Decomposition of an HP Mapping}\label{sec:dehp}
As an HP mapping $L:\mathcal{B}(\mathcal{H}_1)\rightarrow \mathcal{B}(\mathcal{H}_2)$ has a Hermitian Choi isomorphism (Theorem~\ref{thm:Choi}), so $\mathscr{T}(L)=\widetilde{\mathscr{T}(L)}^+-\widetilde{\mathscr{T}(L)}^-$, with $\widetilde{\mathscr{T}(L)}^\pm\geq 0$ (Section~\ref{sec:eig}). By Theorem~\ref{thm:ChoiI}, we can define such linear mappings:
\begin{equation}
\widetilde{L}_\pm(O):=\trace_1 \big[(O\tr \otimes I)\widetilde{\mathscr{T}(L)}^\pm\big] \text{ for any }O\in\mathcal{B}(\mathcal{H}_1),
\end{equation}
namely 
\begin{equation}
\widetilde{L}_\pm:=\mathscr{T}^{-1}\left(\widetilde{\mathscr{T}(L)}^\pm\right);
\end{equation}
in particular, by eigendecomposing $\mathscr{T}(L)$, 
\begin{equation}
L_\pm:=\mathscr{T}^{-1}\left(\mathscr{T}(L)^\pm\right).
\end{equation}
Since $\widetilde{\mathscr{T}(L)}^\pm\geq 0$,  
\begin{equation}
L=\widetilde{L}_+-\widetilde{L}_-,\; \widetilde{L}_\pm \text{ are CP},
\end{equation}
so we can always decompose an HP mapping as the difference between two CP mappings.

From the discussion in Section~\ref{sec:innerHP} and \ref{sec:adl}, $L^\dagger(I)\geq 0$ for a CP mapping $L.$ Therefore, for an HP mapping $L$, $\widetilde{L}_\pm^\dagger(I)\geq 0,$ and 
\begin{equation}
||\mathscr{T}(L)||_1=\trace L_+^\dagger(I)+\trace L_-^\dagger(I),
\end{equation} 
by Corollary~\ref{cor:1} and $||H||_1=\trace H^++\trace H^-$.

\subsubsection{HPTP mapping}
The equalities shown here will be utilized later. Suppose $L: \mathcal{B}(\mathcal{H}_1)\rightarrow \mathcal{B}(\mathcal{H}_2)$ is HPTP, so $L^\dagger(I_2)=\widetilde{L}_+^\dagger(I_2)-\widetilde{L}_-^\dagger(I_2)=I_1$, and
\begin{equation}
\widetilde{L}_+^\dagger(I_2)=I_1+\widetilde{L}_-^\dagger(I_2),
\end{equation}
leading to
\begin{equation}
\text{dim}\mathcal{H}_1+2\trace \widetilde{L}_-^\dagger(I_2)=\trace \left(I_1+2\widetilde{L}_-^\dagger(I_2)\right)=\trace\left(\widetilde{L}_+^\dagger(I_2)+\widetilde{L}_-^\dagger(I_2)\right).\label{eq:l1}
\end{equation}
Because $||I+P||=1+||P||$ for $P\geq 0$,
\begin{equation}
1+2 ||\widetilde{L}_-^\dagger(I_2)||=||I_1+2\widetilde{L}_-^\dagger(I_2)||=||\widetilde{L}_+^\dagger(I_2)+\widetilde{L}_-^\dagger(I_2)||.\label{eq:n1}
\end{equation}

\subsection{Operator-sum Representation}
\label{sec:opsum}
As mentioned in Section~\ref{sec:adl}, every HP mapping $L:\mathcal{B}(\mathcal{H}_1)\rightarrow \mathcal{B}(\mathcal{H}_2)$ has an operator-sum representation: For any $O\in\mathcal{B}(\mathcal{H}_1)$
\begin{equation}
L(O)=\sum_i c_i V_i O V_i^\dagger,\,V_i\in\mathcal{B}(\mathcal{H}_1,\mathcal{H}_2) \text{ and }c_i\in\mathbb{R},
\end{equation}
and in the case of quantum operations this is often called the Kraus form \cite{Kraus71,Choi75,Bengtsson}. The operator-sum representation isn't unique. 

While quite often a linear mapping is given in an operator-sum representation, sometimes it's not. For example, transposition and partial transposition aren't defined in such expressions (Section~\ref{sec:TPT}), and in Chapter~\ref{ch:EC} we will need the operator-sum representation of the partial transpose of an operation. Here let's demonstrate how to find an operator-sum representation through Choi isomorphism.

Suppose the spectral decomposition of $\mathscr{T}(L)$ is $\mathscr{T}(L)=\sum_i c_i \ket{v_i}\bra{v_i}$, with orthonormal $\{\ket{v_i}=\sum_{j,k} d_{jk}^i \ket{a_j}\ket{b_k}\},$ where $\{a_i\}$ and $\{b_i\}$ are orthonormal bases of $\mathcal{H}_1$ and $\mathcal{H}_2$ respectively. Then by \eqref{eq:Choi} for any $O\in \mathcal{B}(\mathcal{H}_1)$,
\begin{align}
L(O)&=\trace_1\left((O\tr\otimes I) \sum_i c_i \ket{v_i}\bra{v_i}\right)\nonumber\\
&=\sum_{i,j,k,l,m} c_i d_{jk}^i {d_{lm}^i}^*\trace_1\left(O\tr\ket{a_j}\bra{a_l}\otimes I  \ket{b_k}  \bra{b_m}\right)\nonumber\\
&=\sum_{i,j,k,l,m} c_i d_{jk}^i {d_{lm}^i}^* \bracket{a_j}{O}{a_l}\ket{b_k}\bra{b_m}.\label{eq:osum}
\end{align}
Hence by defining
\begin{equation}
V_i:= \sum_{j,k} d^i_{jk}\ket{b_k}\bra{a_j},\label{eq:V}
\end{equation}
\eqref{eq:osum} becomes
\begin{equation}
L(O)=\sum_{i} c_i V_iOV_i^\dagger,
\end{equation}
and we obtain an operator-sum representation of $L$. Note
\begin{equation}
(V_i|V_j)=(v_i|v_j)=\delta_{ij},
\end{equation}
so $\{V_i\}$ is an orthonormal set of linear mappings from $\mathcal{H}_1$ to $\mathcal{H}_2$.

\subsection{Trace Norm of a Hermitian Operator after an HP or HPTP Mapping}
Given an HP mapping $L: \mathcal{B}(\mathcal{H}_1)\rightarrow\mathcal{B}(\mathcal{H}_2)$ and a Hermitian operator $H\in \mathcal{B}(\mathcal{H}_1)$, $L(H)$ is a Hermitian operator in $\mathcal{B}(\mathcal{H}_2)$, and we'd like to find an upper bound for $||L(H)||_1$.

Since $\widetilde{L}_\pm$ are CP,
\begin{equation}
L(H)=(\widetilde{L}_+-\widetilde{L}_-)(H^+-H^-)=\left[\widetilde{L}_+(H^+)+\widetilde{L}_-(H^-)\right]-\left[\widetilde{L}_+(H^-)+\widetilde{L}_-(H^+)\right],
\end{equation}
with
\begin{equation}
\widetilde{L}_+(H^+)+\widetilde{L}_-(H^-)\geq 0 \text{ and }\widetilde{L}_+(H^-)+\widetilde{L}_-(H^+)\geq 0,
\end{equation}
which is therefore a decomposition of $L(H)$ as two positive operators, i.e. 
\begin{equation}
\widetilde{L(H)}^\pm=\widetilde{L}_+(H^\pm)+\widetilde{L}_-(H^\mp),
\end{equation}
c.f. Section~\ref{sec:eig}. By Lemma~\ref{lem:or} \cite{Campbell10},
\begin{equation}
\trace L(H)^\pm\leq \trace\widetilde{L(H)}^\pm=  \trace\left[\widetilde{L}_+(H^\pm)+\widetilde{L}_-(H^\mp)\right],\label{eq:LH}
\end{equation}
and
\begin{equation}
||L(H)||_1=\trace L(H)^++\trace L(H)^- \leq \trace\widetilde{L(H)}^++\trace\widetilde{L(H)}^-.\label{eq:LH1}
\end{equation}
By \eqref{eq:LH} and \eqref{eq:P1P2} we can find
\begin{align}
||L(H)||_1&\leq\trace\left[\widetilde{L}_+(H^+)+\widetilde{L}_-(H^-)\right]+\trace\left[\widetilde{L}_+(H^-)+\widetilde{L}_-(H^+)\right]\nonumber\\
&=(\widetilde{L}_+^\dagger(I)+\widetilde{L}_-^\dagger(I)|H^++H^-)\nonumber\\
&\leq ||\widetilde{L}_+^\dagger(I)+\widetilde{L}_-^\dagger(I)||\,||H||_1.\label{eq:lh1}
\end{align}

If, in addition to being HP, $L$ is also TP, then
\begin{align*}
\widetilde{L(H)}^\pm&= \trace\left[\widetilde{L}_+(H^\pm)+\widetilde{L}_-(H^\mp)\right]\\
&=\trace \left[(\widetilde{L}_-+L)(H^\pm)+\widetilde{L}_-(H^\mp)\right]\\
&=\trace H^\pm+\trace \left[L_-(H^+ +H^-)\right]\\
&\leq \trace H^\pm+||\widetilde{L}_-^\dagger(I)||\;||H||_1,
\end{align*}
by \eqref{eq:P1P2} and $L$ being TP. Therefore from \eqref{eq:LH1} we obtain
\begin{align}
||L(H)||_1&\leq \trace H^++\trace H^-+2||(\widetilde{L}_-)^\dagger(I)||\;||H||_1\nonumber\\
&=||H||_1+2 ||\widetilde{L}_-^\dagger(I)||\;||H||_1\label{eq:lh2}
\end{align}
Note by \eqref{eq:n1}, \eqref{eq:lh1} and \eqref{eq:lh2} are actually the same for HPTP mappings.

More generally, by matrix H\"{o}lder inequality \eqref{eq:Holder} we have the following lemma:
\begin{lem}\label{lem:lh}
Suppose $1 \leq q,p\leq\infty$ and $1/p+1/q=1$. If $L$ is an HP mapping and $H$ is a Hermitian operator,
\begin{equation}
\trace L(H)^\pm\leq \trace\left[\widetilde{L}_+(H^\pm)+\widetilde{L}_-(H^\mp)\right],\label{eq:lh}
\end{equation}
and
\begin{equation}
||L(H)||_1\leq ||\widetilde{L}_+^\dagger(I)+\widetilde{L}_-^\dagger(I)||_p||H||_q.\label{eq:lhq1}
\end{equation}
If $L$ is also TP, then
\begin{align}
\trace L(H)^\pm&\leq \trace H^\pm+\trace \left[L_-(H^+ +H^-)\right]\label{eq:lht}\\
&=\trace H^\pm+||\widetilde{L}_-^\dagger(I)||_p\;||H||_q\label{eq:lht2}
\end{align}
and
\begin{equation}
||L(H)||_1\leq||H||_1+2 ||\widetilde{L}_-^\dagger(I)||_p||H||_q.\label{eq:lhq2}
\end{equation}

\eqref{eq:lh} and \eqref{eq:lht} become equalities if and only if
\begin{equation}
\mathrm{ran}\left(\widetilde{L}_+(H^+)+\widetilde{L}_-(H^-)\right)\perp\mathrm{ran}\left(\widetilde{L}_+(H^-)+\widetilde{L}_-(H^+)\right);\label{eq:conqi}
\end{equation}
In the case $p=\infty$ and $q=1$, for \eqref{eq:lhq1}, \eqref{eq:lht2} and \eqref{eq:lhq2} to become equalities the necessary and sufficient condition is that \eqref{eq:conqi} be obeyed and $\mathrm{ran}H$ be a subspace of the eigenspace of $\widetilde{L}_+^\dagger(I)+\widetilde{L}_-^\dagger(I)$ with the largest eigenvalue, which eigenspace for \eqref{eq:lht2} and \eqref{eq:lhq2} is equivalent to the eigenspace of $\widetilde{L}_\pm^\dagger(I)$ with the largest eigenvalue. Same can be said of the case $p=1$ and $q=\infty$.
\end{lem}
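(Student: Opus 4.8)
The plan is to build everything from a single positive-operator decomposition of $L(H)$ and then feed it into Lemma~\ref{lem:or} and the matrix Hölder inequality \eqref{eq:Holder}. First I would expand, using bilinearity of $L$ together with the spectral splitting $H=H^+-H^-$, to write $L(H)=[\widetilde{L}_+(H^+)+\widetilde{L}_-(H^-)]-[\widetilde{L}_+(H^-)+\widetilde{L}_-(H^+)]$. Since $\widetilde{L}_\pm$ are CP and $H^\pm\geq 0$, each bracketed term is positive, so $\widetilde{L(H)}^\pm:=\widetilde{L}_+(H^\pm)+\widetilde{L}_-(H^\mp)$ is a legitimate (generally non-spectral) decomposition of the Hermitian operator $L(H)$ into positive parts. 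Lemma~\ref{lem:or} then gives $\trace L(H)^\pm\leq\trace\widetilde{L(H)}^\pm$ at once, which is exactly \eqref{eq:lh}.

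Next I would derive \eqref{eq:lhq1}. Summing the two cases of \eqref{eq:lh} and using $\|L(H)\|_1=\trace L(H)^++\trace L(H)^-$, the right-hand side collapses to $\trace[(\widetilde{L}_++\widetilde{L}_-)(H^++H^-)]$. Converting traces to Hilbert--Schmidt inner products via the adjoint identity $\trace\widetilde{L}_\pm(X)=(\widetilde{L}_\pm^\dagger(I)\,|\,X)$ (Section~\ref{sec:adl}) turns this into $(\widetilde{L}_+^\dagger(I)+\widetilde{L}_-^\dagger(I)\,|\,H^++H^-)$. Applying matrix Hölder \eqref{eq:Holder} with exponents $p,q$, and noting $\||H|\|_q=\|H\|_q$ since $H^++H^-=|H|$ shares the singular values of $H$, yields \eqref{eq:lhq1}.

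For the trace-preserving refinements I would use the relation $\widetilde{L}_+=L+\widetilde{L}_-$, so that $\widetilde{L(H)}^\pm=L(H^\pm)+\widetilde{L}_-(H^++H^-)$. Taking the trace and invoking $\trace L(H^\pm)=\trace H^\pm$ (TP) gives \eqref{eq:lht}; rewriting the last trace as $(\widetilde{L}_-^\dagger(I)\,|\,|H|)$ and applying Hölder gives \eqref{eq:lht2}, and summing the $\pm$ cases produces \eqref{eq:lhq2}. The equality analysis I would carry out by isolating the two independent inequalities used: the step $\trace L(H)^\pm\leq\trace\widetilde{L(H)}^\pm$ is tight precisely when $\widetilde{L(H)}^\pm$ is already the spectral decomposition, which by Lemma~\ref{lem:or} is equivalent to the orthogonality \eqref{eq:conqi} of the two ranges; the Hölder/inner-product step at $p=\infty,q=1$ is tight, by the equality case of \eqref{eq:P1P2}, exactly when $\mathrm{ran}H=(\ker|H|)^\perp$ lies in the top eigenspace of the relevant positive operator. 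For the TP bounds that operator is $\widetilde{L}_-^\dagger(I)$, and the claimed coincidence of top eigenspaces follows from the HPTP identity $\widetilde{L}_+^\dagger(I)=I+\widetilde{L}_-^\dagger(I)$, so that $\widetilde{L}_+^\dagger(I)+\widetilde{L}_-^\dagger(I)=I+2\widetilde{L}_-^\dagger(I)$ is an increasing affine function of $\widetilde{L}_-^\dagger(I)$, preserving eigenvectors and the ordering of eigenvalues.

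The main obstacle I anticipate is the equality discussion rather than the chain of inequalities, which is routine. Keeping the two equality conditions cleanly separated—one coming from Lemma~\ref{lem:or} (yielding \eqref{eq:conqi}) and one from the sharpness of \eqref{eq:P1P2}/Hölder (yielding the top-eigenspace range condition)—and then verifying their simultaneous attainability, together with the eigenspace identification in the TP case, is the delicate part; this is also why the statement only spells the equality case out fully for the extreme exponents $p=\infty,q=1$ (and symmetrically $p=1,q=\infty$), where the Hölder equality condition reduces to the transparent eigenspace criterion of \eqref{eq:P1P2}.
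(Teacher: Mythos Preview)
Your proposal is correct and follows essentially the same route as the paper: the same positive decomposition $\widetilde{L(H)}^\pm=\widetilde{L}_+(H^\pm)+\widetilde{L}_-(H^\mp)$ fed into Lemma~\ref{lem:or}, the same passage through the adjoint and matrix H\"older for \eqref{eq:lhq1}, the same substitution $\widetilde{L}_+=L+\widetilde{L}_-$ in the TP case, and the same two-pronged equality analysis (Lemma~\ref{lem:or} for \eqref{eq:conqi}, the equality case of \eqref{eq:P1P2} for the eigenspace condition, and $\widetilde{L}_+^\dagger(I)=I+\widetilde{L}_-^\dagger(I)$ for the eigenspace coincidence). Your observation that $\mathrm{ran}\,H=\mathrm{ran}(H^++H^-)$ is exactly what the paper invokes as well.
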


\begin{proof}
What remains to be shown is the last part of the lemma. The condition \eqref{eq:conqi} is just a direct application of Lemma~\ref{lem:or}. From the discussion above, \eqref{eq:conqi} is a premise for \eqref{eq:lhq1}, \eqref{eq:lht2} and \eqref{eq:lhq2} to become equal. 

When $p=\infty$ and $q=1$, from \eqref{eq:lh1} and the discussion below \eqref{eq:P1P2} we know they become equalities if and only if $\mathrm{ran}(H^++H^-)$ is a subspace of the eigenspace of $\widetilde{L}_+^\dagger(I)+\widetilde{L}_-^\dagger(I)$ with the largest eigenvalue. Since $H^+$ and $H^-$ are Hermitian and $\mathrm{ran}H^+$ and $\mathrm{ran}H^-$ are orthogonal, (by the spectral theorem)
\begin{equation}
\mathrm{ran}(H^++H^-)=\mathrm{ran}(H^+-H^-)=\mathrm{ran}H.
\end{equation}
Given any HPTP mapping $L$, $\widetilde{L}_+^\dagger(I)=\widetilde{L}_-^\dagger(I)+I$, their spectral decompositions are exactly the same except for a difference of 1 in all the eigenvalues, so their eigenspaces with the largest eigenvalues are identical, c.f. \eqref{eq:n1}, which also demonstrates the equivalence of \eqref{eq:lh1} and \eqref{eq:lh2} (or \eqref{eq:lhq1} and \eqref{eq:lhq2} with $p=\infty$ and $q=1$) for HPTP mappings.
\end{proof}
We could apply matrix H\"{o}lder inequality to \eqref{eq:lh} too, but the result isn't particularly fruitful so we will do it on the fly when it's really needed.

\chapter{Entangling Capacity of a Quantum Operation}
\label{ch:EC}
In this chapter it's always assumed the Hilbert spaces under consideration are finite-dimensional. As a result the spaces of operators are finite-dimensional too, as are the linear mappings thereon.

\section{Entangling Capacity and Perfect Entangler}
\subsection{Entangling Capacity and the Ancilla}\label{sec:EC}
Consider two Hilbert spaces $\mathcal{H}_1$ and $\mathcal{H}_2$, for which
\begin{equation}
\mathcal{H}_i=\mathcal{H}_i^A\otimes \mathcal{H}_i^B,
\end{equation}
and any deterministic quantum operation (Section~\ref{sec:qop}) $S:\mathcal{B}(\mathcal{H}_1)\rightarrow \mathcal{B}(\mathcal{H}_2)$. The \textbf{entangling capacity} of $S$ with respect to an entanglement measure $E$ between A and B is defined as\footnote{Since there are multiple ways of describing the capability or capacity of an operation to produce entanglement, there doesn't seem to be consensus on the terminology for this quantity. For example it's called the entangling power in Ref.~\cite{Linden09}, and the entangling capacity along with descriptive words may refer to quantities, like in Ref.~\cite{Bennett03,Lari09}. We will address the issue of whether to choose the maximum or supremum near the end of this subsection.} \cite{Kraus01,Leifer03,Bennett03,Chefles05,Linden09,Campbell10,Cohen11}
\begin{equation}
	\text{EC}_E(S):=\max_{\rho}\{E\left( S(\rho)\right)-E(\rho)\},\label{eq:EC}
\end{equation}
where the maximization is over all density operators $\rho$ on $\mathcal{H}_1$. $\mathrm{EC}_E(S)\geq 0$ for any operations, because for any separable state $\rho$ we have $E(\rho)=0$. By the same token, for a probabilistic operation $S$ with sub-operations $S_i$, define the (average) entangling capacity as
\begin{equation}
\text{EC}_E(S):=\max_\rho\Big\{\sum_i p_i E\left(\overline{S_i(\rho)}\right)-E(\rho) \Big\},
\end{equation}
where the overhead bar normalizes the state, i.e. $\overline{P}=P/\trace P$ for any $P\geq 0$, and $p_i$ is the probability that $S_i$ occurs. We can also define the entangling capacity for a sub-operation $S_i$ in a similar manner.

As mentioned in Chapter~\ref{ch:int}, we're interested in the entangling capacity of an operation when aided by an ancilla. Now consider an additional ``ancilla'' space 
\begin{equation}
\mathcal{H}_a=\mathcal{H}_a^A\otimes \mathcal{H}_a^B,
\end{equation}
and let $S:\mathcal{B}(\mathcal{H}_1)\rightarrow \mathcal{B}(\mathcal{H}_2)$ be a deterministic operation, and $\mathcal{I}_a:\mathcal{B}(\mathcal{H}_a)\rightarrow \mathcal{B}(\mathcal{H}_a)$ the identity mapping. The entangling capacity of $S$ with respect to an entanglement measure $E$, when aided by an ancilla, is then defined as \cite{Kraus01,Leifer03,Chefles05,Campbell10}
\begin{equation}
	\text{EC}_E(S):=\max_\rho\{E\left(\mathcal{I}_a\otimes S(\rho)\right)-E(\rho)\},
\end{equation}
where the maximization is over all density operators on $\mathcal{H}_a\otimes \mathcal{H}_1$ and $E$ measures the entanglement between $\mathcal{H}_a^A\otimes \mathcal{H}_i^A$ and $\mathcal{H}_a^B\otimes \mathcal{H}_i^B$. We can define the (average) entangling capacity aided by an ancilla for probabilistic operations and sub-operations in the same way.

In Ref.~\cite{Bennett03,Linden09,Campbell10,Cohen11}, entangling capacity (or its analogy) was actually defined to be the supremum, rather than maximum over all states. However, if the measure $E$ is continuous, then since the space is finite-dimensional, density operators in the space of operators is a compact set $\mathcal{K}$, and $E(\mathcal{K})$ is also compact and is therefore bounded and closed \cite{Loomis,Tu}, so there's no difference between the supremum and maximum. On the other hand, if the system in question is infinite-dimensional, it may be more suitable to choose the supremum over maximum.

From now on the entangling capacity (EC), unless stated otherwise, is always assumed to be aided by an ancilla, and $\mathrm{EC}_N(S)$ and $\mathrm{EC}_L(S)$ denote the entangling capacity with respect to negativity and logarithmic negativity (Section~\ref{sec:neg}) respectively.

\subsection{Perfect Entangler}\label{sec:pee}
An operation is called a \textbf{perfect entangler} (with respect to measure $E$) if its entangling capacity is at the maximum under the constraint of the system. When the measure is negativity (Section~\ref{sec:neg}), with the dimensions of $\mathcal{H}_A$ and $\mathcal{H}_B$ being $d_A$ and $d_B$, because
\begin{equation}
\max ||\rho^\Gamma||_1=\min (d_A,d_B),\label{eq:enlmax}
\end{equation}
negativities are at most
\begin{align}
\max_\rho E_N(\rho)&=\frac{\min(d_A,d_B)-1}{2},\nonumber\\
\max_\rho E_L(\rho)&=\log\min(d_A,d_B);
\end{align}
please read Appendix~\ref{app:en} for details. Hence, if an operation $S$ with codomain $\mathcal{B}(\mathcal{H}_A\otimes \mathcal{H}_B)$ has $\text{EC}_N(S)=(\min(d_A,d_B)-1)/2$ or $\text{EC}_L(S)=\log\min(d_A,d_B),$ then $S$ is a perfect entangler with respect to negativities.

\section{Bounds on Entangling Capacity}
Here again we consider operations and sub-operations $S$ and $S_i$ from $\mathcal{B}(\mathcal{H}_1^A\otimes \mathcal{H}_1^B)$ to $\mathcal{B}(\mathcal{H}_2^A\otimes \mathcal{H}_2^B)$. $d_A$ and $d_B$ refer to the dimensions of $\mathcal{H}_1^A$ and $\mathcal{H}_1^B$, respectively. In this section and the followings the domains (and codomains) of identity operators, unless necessary, will usually be ignored, c.f. Section~\ref{sec:adl}.
\subsection{The Bounds}
 $S^\Gamma_-$ and $\widetilde{S^\Gamma}_-$ refers to the ``negative'' part (Section~\ref{sec:dehp}) of $S^\Gamma=\Gamma\circ S\circ \Gamma$, the partial transpose of $S$ (Section~\ref{sec:TPT}); similarly for ${S^\Gamma_i}_-$ and $\widetilde{S^\Gamma_i}_-$. Not having an overhead tilde indicated that the decomposition is obtained through eigendecomposing the Choi isomorphism (Section~\ref{sec:dehp}).
\begin{pro}\label{pro}
1. There exist upper and lower bounds for entangling capacities of deterministic operations:
\begin{align*}
\frac{||{S^\Gamma_-}^\dagger(I)||_1}{d_A d_B}\leq&\mathrm{EC}_N(S)\leq ||\widetilde{S^\Gamma}_-^\dagger(I)||\,||\rho^\Gamma||_1\\
\log\Big(1+2\frac{||{S^\Gamma_-}^\dagger(I)||_1}{d_A d_B}\Big)\leq&\mathrm{EC}_L(S)\leq\log(1+2||\widetilde{S^\Gamma}_-^\dagger(I)||).
\end{align*}

2. For a probabilistic operation composed of sub-operations $S_i$,
\begin{align*}
\sum_i\frac{||{S_i^\Gamma}_-^\dagger(I)||_1}{d_A d_B} \leq&\mathrm{EC}_N(S)\leq ||\sum_i\widetilde{S^\Gamma_i}^\dagger_-(I)||\,||\rho^\Gamma||_1,\\
\sum_{i}\frac{\trace \mathscr{T}(S_i)}{d_A d_B}\log\frac{||\mathscr{T}(S^\Gamma_i)||_1}{\trace \mathscr{T}(S_i)}\leq&\mathrm{EC}_L(S) 
\leq \log(1+2||\sum_i\widetilde{S^\Gamma_i}^\dagger_-(I)||).
\end{align*}
The upper bounds of part 2 can be applied to a deterministic operation $S=\sum_i S_i$.

3. With an initial negativity $E_N$, the expected negativity, i.e. probability times $p_i$ the actual negativity ${E_N}_i$, after a sub-operation $S_i$ is bounded by:
\begin{equation*}
p_i {E_N}_i\leq 
E_N\Big(||\widetilde{S^\Gamma_i}_-^\dagger(I)||+||\widetilde{S^\Gamma_i}_+^\dagger(I)||\Big)+||\widetilde{S^\Gamma_i}_-^\dagger(I)||.
\end{equation*}
The entangling capacity of a sub-operation is bounded from below by:
\begin{align*}
EC_N(S_i)&\geq \frac{\trace {S_i^\Gamma}_-^\dagger(I)}{\trace S_i^\dagger(I)}=\left(\frac{\trace {S_i^\Gamma}_+^\dagger(I)}{{\trace S_i^\Gamma}_-^\dagger(I)}-1\right)^{-1},\\
EC_L(S_i)&\geq\log\left(1+2\frac{\trace {S^\Gamma_i}_-^\dagger(I)}{\trace  S_i^\dagger(I)}\right)=\log\left[1+2\left(\frac{\trace{S_i^\Gamma}_+^\dagger(I)}{\trace {S_i^\Gamma}_-^\dagger(I)}-1\right)^{-1}\right].
\end{align*}
By Corollary~\ref{cor:2lm1}, all the upper bounds remain the same after the addition of an ancilla: $||(\mathcal{I}\otimes \widetilde{S_i^\Gamma}_-)^\dagger(I)||=||\widetilde{S_i^\Gamma}^\dagger_-(I)||$.
\end{pro}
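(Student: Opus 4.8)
The common engine for every one of these bounds is the reduction of output negativity to the partial‑transposed map acting on the partial‑transposed state: since partial transposition is TP, $S(\rho)^\Gamma = S^\Gamma(\rho^\Gamma)$, so $E_N(S(\rho)) = \trace[S^\Gamma(\rho^\Gamma)]^-$ and $E_L(S(\rho)) = \log\|S^\Gamma(\rho^\Gamma)\|_1$. Here $\rho^\Gamma$ is Hermitian and $S^\Gamma$ is HPTP (partial transposition preserves HP and TP, Section~\ref{sec:TPT}), so both expressions are precisely of the shape controlled by Lemma~\ref{lem:lh} with $L=S^\Gamma$ and $H=\rho^\Gamma$. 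My plan is therefore to obtain the upper bounds by plugging these choices into Lemma~\ref{lem:lh}, and to obtain the lower bounds by evaluating the negativity produced from one cleverly chosen input state.

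For the upper bounds I would feed $L=S^\Gamma,\ H=\rho^\Gamma$ into Lemma~\ref{lem:lh}. The HPTP trace estimate \eqref{eq:lht}--\eqref{eq:lht2} gives $\trace[S^\Gamma(\rho^\Gamma)]^- \le \trace(\rho^\Gamma)^- + \|\widetilde{S^\Gamma}_-^\dagger(I)\|\,\|\rho^\Gamma\|_1$, i.e. $E_N(S(\rho))-E_N(\rho)\le \|\widetilde{S^\Gamma}_-^\dagger(I)\|\,\|\rho^\Gamma\|_1$, and maximizing over $\rho$ yields the negativity bound; for logarithmic negativity I use \eqref{eq:lhq2}, namely $\|S^\Gamma(\rho^\Gamma)\|_1 \le (1+2\|\widetilde{S^\Gamma}_-^\dagger(I)\|)\|\rho^\Gamma\|_1$, take logarithms and subtract $E_L(\rho)=\log\|\rho^\Gamma\|_1$. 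The probabilistic bounds of part~2 follow by summing the per‑branch estimate \eqref{eq:lh} over $i$ and exploiting that $\sum_i S_i^\Gamma$ is TP, so $\sum_i\widetilde{S_i^\Gamma}_+^\dagger(I)=I+\sum_i\widetilde{S_i^\Gamma}_-^\dagger(I)$; this collapses the sum to $\sum_i\trace[S_i^\Gamma(\rho^\Gamma)]^- \le \trace(\rho^\Gamma)^- + \|\sum_i\widetilde{S_i^\Gamma}_-^\dagger(I)\|\,\|\rho^\Gamma\|_1$ after \eqref{eq:P1P2}. The averaged logarithmic‑negativity bound needs one extra ingredient: concavity of $\log$ (Jensen) to move the branch average inside a single logarithm, $\sum_i p_i E_L(\overline{S_i(\rho)}) \le \log\sum_i\|S_i^\Gamma(\rho^\Gamma)\|_1$, after which I reuse $\|S_i^\Gamma(\rho^\Gamma)\|_1 = p_i + 2\trace[S_i^\Gamma(\rho^\Gamma)]^-$ together with $\sum_i p_i=1$. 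Part~3's expected‑negativity inequality is just \eqref{eq:lh} with $\trace(\rho^\Gamma)^+ = 1+E_N$ and $\trace(\rho^\Gamma)^-=E_N$ substituted and \eqref{eq:P1P2} applied branchwise.

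All lower bounds come from a single explicit input: take the ancilla isomorphic to $\mathcal{H}_1$ and let $\rho_{\max}=\frac{1}{d_Ad_B}\sum_\mu E_\mu\otimes E_\mu$ be the normalized maximally entangled state between ancilla and system, $\{E_\mu\}$ being the operator basis of $\mathcal{B}(\mathcal{H}_1)$. The two structural facts I would prove are: (i) across the A/B cut $\rho_{\max}$ factorizes into an A‑part times a B‑part, hence $E_N(\rho_{\max})=E_L(\rho_{\max})=0$; and (ii) $(\mathcal{I}_a\otimes S)(\rho_{\max})=\frac{1}{d_Ad_B}\mathscr{T}(S)$. I then establish the identification $\mathscr{T}(S)^\Gamma=\mathscr{T}(S^\Gamma)$ (the A/B partial transpose of the Choi operator equals the Choi operator of the partial‑transposed map), whence $\trace[\mathscr{T}(S)^\Gamma]^- = \trace\mathscr{T}(S^\Gamma)^- = \trace{S^\Gamma_-}^\dagger(I) = \|{S^\Gamma_-}^\dagger(I)\|_1$ by Corollary~\ref{cor:1}. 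Substituting gives $\mathrm{EC}_N(S)\ge E_N(\tfrac{1}{d_Ad_B}\mathscr{T}(S)) = \|{S^\Gamma_-}^\dagger(I)\|_1/(d_Ad_B)$, and the logarithmic version uses $\|\mathscr{T}(S^\Gamma)\|_1 = d_Ad_B + 2\|{S^\Gamma_-}^\dagger(I)\|_1$ from \eqref{eq:d} and TP‑ness. The probabilistic and single‑sub‑operation statements of parts~2 and~3 then follow from the same input by tracking branch probabilities $p_i=\trace\mathscr{T}(S_i)/(d_Ad_B)$ and normalizing, using $\trace\mathscr{T}(S_i)=\trace S_i^\dagger(I)$.

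The step I expect to be the main obstacle is (ii) and the identity $\mathscr{T}(S)^\Gamma=\mathscr{T}(S^\Gamma)$: one must verify carefully that the A/B partial transpose in $E_N$ acts jointly on the ancilla‑A and output‑A indices of $\mathscr{T}(S)$, and that the two copies of $\Gamma$ in $S^\Gamma=\Gamma\circ S\circ\Gamma$ (one on the input, one on the output) convert exactly into transpositions of those two index groups; a basis relabeling in the Choi sum does the job but needs to be done with some care. The remaining subtlety, the ancilla‑invariance of the upper bounds, I would dispatch with Corollary~\ref{cor:2lm1} once I observe $(\mathcal{I}_a\otimes S)^\Gamma=\mathcal{I}_a\otimes S^\Gamma$ and that the Choi factorization forces $\widetilde{(\mathcal{I}_a\otimes S)^\Gamma}_-=\mathcal{I}_a\otimes\widetilde{S^\Gamma}_-$, so that $\|(\mathcal{I}_a\otimes\widetilde{S^\Gamma}_-)^\dagger(I)\|=\|\widetilde{S^\Gamma}_-^\dagger(I)\|$.
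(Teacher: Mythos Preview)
Your proposal is correct and follows essentially the same route as the paper: upper bounds from Lemma~\ref{lem:lh} applied to $L=S^\Gamma$, $H=\rho^\Gamma$ (with Jensen for the averaged logarithmic case), and lower bounds from feeding the separable maximally entangled state $\ket{\Psi_A}\ket{\Psi_B}$ so that the output is the normalized Choi operator, together with the identity $\mathscr{T}(S)^\Gamma=\mathscr{T}(S^\Gamma)$ and Corollary~\ref{cor:1}. The only cosmetic difference is that the paper handles the ancilla by silently replacing $S$ with $\mathcal{I}_a\otimes S$ and invoking Corollary~\ref{cor:2lm1} at the end, rather than proving $\widetilde{(\mathcal{I}_a\otimes S)^\Gamma}_-=\mathcal{I}_a\otimes\widetilde{S^\Gamma}_-$ explicitly as you suggest; either way works.
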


\subsection{Discussion}\label{sec:dispro1}
Acquired through spectral decomposition, ${S_i^\Gamma}_\pm$ is a natural choice of $\widetilde{S_i^\Gamma}_\pm$, and ${S^\Gamma_i}_-^\dagger(I)$ has the smallest trace (norm) by Lemma~\ref{lem:or} and Corollary~\ref{cor:1}, but it may not have the smallest operator norm: $||\widetilde{S^\Gamma_i}_-^\dagger(I)||<||{S^\Gamma_i}_-^\dagger(I)||$ can occur, as shown in Appendix~\ref{app:smanorm}. In the following discussion I will assume $\widetilde{S_i^\Gamma}_-={S_i^\Gamma}_-$. Note the argument in this paragraph can and should be applied to deterministic operations too.

The upper bounds for $\text{EC}_N(S)$, which aren't state-independent, are actually maximized over all states $\rho$ with a given $||\rho^\Gamma||_1$, c.f. the definition in Section~\ref{sec:EC}.\footnote{This dependence on $||\rho^\Gamma||_1$ can be removed by putting in the maximum value of it from \eqref{eq:enlmax}, which in general isn't ideal.} It can be found\footnote{We will defer the justification of this equality until Section~\ref{sec:nge}.}
\begin{equation}
1+2\frac{||{S^\Gamma_-}^\dagger(I)||_1}{d_A d_B}=\frac{||\mathscr{T}(S^\Gamma)||_1}{d_A d_B},\label{eq:loe}
\end{equation}
so the lower bounds for $\mathrm{EC}_L(S)$ in part 1 and 2 of Proposition~\ref{pro} aren't that different as they seem; besides, in part 2 the summation of the upper bounds can be performed outside the norm by the triangle inequality. 
Since $||{S_i^\Gamma}_-^\dagger(I)||_1/(d_A d_B)$ is the average of the eigenvalues of ${S^\Gamma_i}_-^\dagger(I),$ all the upper bounds are never smaller than the lower ones. 

How entangling an operation can be in terms of negativities is associated with the norms of ${S^\Gamma_i}_-^\dagger(I)$ or ${S^\Gamma_-}^\dagger(I)$. They vanish if and only if the (sub-)operations are PPT, corresponding to the fact that an PPT operation on average can not increase the negativities of any state \cite{Vidal02,Plenio05}. With a small or zero entangling capacity with respect to negativities, even if the operation is entangling, it produces mostly or entirely bound entanglement \cite{Horodecki98,Cirac01}.

For a deterministic operation $S=\sum_i S_i$, the upper bounds from part 2 is a special case of part 1, by choosing $\widetilde{S^\Gamma}_\pm=\sum_i \widetilde{S_i^\Gamma}_\pm.$ With $S=p S_1+(1-p)S_2$, where both $S_i$ are TP (deterministic) and $0\leq p\leq 1$, if $S_2$ is PPT, Proposition~\ref{pro} suggests that $S$ is at most about $p$ times as entangling as $S_1$ is---Mixing an operation with a PPT one in general makes it less entangling.

From part 3 of Proposition~\ref{pro}, if a sub-operation is PPT, e.g. LOCC \cite{Chitambar14}, whether its negativity can increase depends on $||{S_i^\Gamma}^\dagger(I)||/p_i.$ If the state is initially PPT, $S_i$ must be non-PPT for any negativity to be produced, and the amount is bounded by $||{S^\Gamma_i}^\dagger_-(I)||/p_i$; in other words,  no entanglement can be distilled out of a PPT state after PPT (sub-)operations \cite{Horodecki98,Cirac01}.

\subsection{Proof of the Proposition}
\subsubsection{Upper Bounds}
In the proof below we will not explicitly assume the existence of an ancilla: The reader can think of $S$ or $S_i$ as $\mathcal{I}_a\otimes S$ or $\mathcal{I}_a\otimes S_i$. This doesn't influence the result because as stated in Proposition~\ref{pro} the upper bounds remain the same whether there's an ancilla or not by Corollary~\ref{cor:2lm1}.
\subsubsection*{Deterministic operation}
An operation $S$ is CP, so it's HP \cite{Kraus71,Choi75,Stormer}. Because $\Gamma$ is HPTP, $S^\Gamma$ is HPTP. Hence we can apply Lemma~\ref{lem:lh} to $S^\Gamma(\rho^\Gamma)=S(\rho)^\Gamma$:
\begin{align}
E_N(S(\rho))-E_N(\rho)&=\trace S^\Gamma(\rho^\Gamma)^--\trace{\rho^\Gamma}^-\nonumber\\
&\leq ||\widetilde{S^\Gamma}^\dagger_-(I)||\,||\rho^\Gamma||_1,
\end{align}
and
\begin{align}
E_L(S(\rho))-E_L(\rho)&=\log||S^\Gamma(\rho^\Gamma)||_1-\log||\rho^\Gamma||_1\nonumber\\
&=\log(1+\frac{||S^\Gamma(\rho^\Gamma)||_1-||\rho^\Gamma||_1}{||\rho^\Gamma||_1})\nonumber\\
&\leq \log(1+2||\widetilde{S^\Gamma}^\dagger_-(I)||).
\end{align}
The upper bounds from part 1 of Proposition~\ref{pro} are proved. 

For $L=\sum_i L_i$, where $L_i$ are HP, we can choose $\widetilde{L}_\pm=\sum_i \widetilde{L_i}_\pm$, and\footnote{This is conjugate linearity of of the adjoint mapping (Section~\ref{sec:adl}).}
\begin{equation*}
(\sum_i \widetilde{L_i}_-)^\dagger=\sum_i \widetilde{L_i}_-^\dagger.
\end{equation*}
Similarly, for a deterministic $S=\sum_i S_i$, we have $S^\Gamma=\sum_i S^\Gamma_i$, and if we choose $\widetilde{S^\Gamma}_\pm=\sum_i (\widetilde{S^\Gamma_i}_\pm)$, we then obtain
\begin{equation}
E_N(S(\rho))-E_N(\rho)\leq ||\sum_i\widetilde{S^\Gamma_i}^\dagger_-(I)||\,||\rho^\Gamma||_1,\label{eq:N1}
\end{equation}
and
\begin{equation}
E_L(S(\rho))-E_L(\rho)\leq \log(1+2||\sum_i\widetilde{S^\Gamma_i}^\dagger_-(I)||).
\end{equation}
Hence we've demonstrated part 2 of Proposition~\ref{pro} can be applied to deterministic $S=\sum_i S_i$, and it's an application of part 1.

Let's emphasize again $\widetilde{S^\Gamma}_\pm=\sum_i (\widetilde{S^\Gamma_i}_\pm)$ are some of the possible decompositions of $S^\Gamma,$ and they may or may not be the ideal choices that yield the smallest bounds. 
\subsubsection*{Probabilistic operation}
Let's prove the upper bounds from part 2 of Proposition~\ref{pro} for probabilistic operations. The average negativity after $S$ is
\begin{equation}
\sum_i p_i {E_N}_i=\sum_i p_i\frac{{\trace S_i(\rho)^\Gamma}^-}{p_i}=\sum_i {\trace S_i(\rho)^\Gamma}^-.
\end{equation}
Furthermore, by Lemma~\ref{lem:lh}
\begin{align}
\sum_i {\trace S_i(\rho)^\Gamma}^\pm&\leq\sum_i[\trace\widetilde{{S^\Gamma_i}}_+({\rho^\Gamma}^\pm)+\trace \widetilde{{S^\Gamma_i}}_-({\rho^\Gamma}^\mp)]\nonumber\\
&=\trace\left(\sum_i \widetilde{{S^\Gamma_i}}_+\right)({\rho^\Gamma}^\pm)+\trace \left(\sum_i \widetilde{{S^\Gamma_i}}_-\right)({\rho^\Gamma}^\mp).\label{eq:Sga}
\end{align}
Essentially, we now have $S^\Gamma=\sum_i S_i^\Gamma=\sum_i \widetilde{S_i^\Gamma}_+-\sum_i\widetilde{S_i^\Gamma}_-,$ with $\widetilde{S^\Gamma}_\pm=\sum_i \widetilde{S_i^\Gamma}_\pm,$ and $S$ is TP as in the deterministic case. By Lemma~\ref{lem:lh}, we essentially recover the right sides of \eqref{eq:N1} for the bounds of $\sum_i p_i {E_N}_i-E_N,$ where $E_N$ is the initial negativity.

As to the bounds for $\text{EC}_L$ of a probabilistic operation, with an initial logarithmic negativity $E_L$:
\begin{align}
\sum p_i {E_L}_i-E_L&=\sum_i p_i \log ||\rho_i^\Gamma||_1-\log||\rho^\Gamma||_1\nonumber\\
&=\sum_i p_i \log \frac{||S_i^\Gamma(\rho^\Gamma)||_1}{p_i}-\log||\rho^\Gamma||_1 \nonumber\\
&\leq \log \sum_i||S_i^\Gamma(\rho^\Gamma)||_1-\log||\rho^\Gamma||_1\nonumber\\
&=\log \left(1+\frac{\sum_i||S_i^\Gamma(\rho^\Gamma)||_1-||\rho^\Gamma||_1}{||\rho^\Gamma||_1}\right),\label{eq:0}
\end{align}
by the concavity of logarithm and Jensen's inequality \cite{Plenio05,Cover,PapaRudin}. By \eqref{eq:H+H-},
\begin{align}
\sum_i||S_i^\Gamma(\rho^\Gamma)||_1&=\sum_i \left[\trace S_i^\Gamma(\rho^\Gamma)+2\trace S_i^\Gamma(\rho^\Gamma)^-\right]\nonumber\\
&=\trace\sum_i S^\Gamma_i(\rho^\Gamma)+2\sum_i\trace S_i^\Gamma(\rho^\Gamma)^-\nonumber\\
&=\trace \rho^\Gamma+2\sum_i\trace S_i^\Gamma(\rho^\Gamma)^-,\label{eq:1}
\end{align}
because $S^\Gamma=\sum_i S^\Gamma_i$ is TP. Now we're back to \eqref{eq:Sga}. From \eqref{eq:H+H-}, $||\rho^\Gamma||_1=\trace \rho^\Gamma +2\trace {\rho^\Gamma}^-=1+2\trace {\rho^\Gamma}^-$; following the same steps as before, we can obtain the desired result. 

\subsubsection*{Sub-operation}
Let's move on to upper bounds from part 3 of Proposition~\ref{pro}. By \eqref{eq:P1P2} and Lemma~\ref{lem:lh},
\begin{align*}
p_i {E_N}_i=\trace S^\Gamma_i(\rho^\Gamma)^-&\leq ||\widetilde{S^\Gamma_i}_+^\dagger(I)||\;||{\rho^\Gamma}^-||_1+||\widetilde{S^\Gamma_i}_-^\dagger(I)||\;||{\rho^\Gamma}^+||_1\\
&=||\widetilde{S^\Gamma_i}_+^\dagger(I)||\;||{\rho^\Gamma}^-||_1+||\widetilde{S^\Gamma_i}_-^\dagger(I)||\left(1+||{\rho^\Gamma}^-||_1\right)\\
&=E_N\left(||\widetilde{S^\Gamma_i}_+^\dagger(I)||+||\widetilde{S^\Gamma_i}_-^\dagger(I)||\right)+||\widetilde{S^\Gamma_i}_-^\dagger(I)||.
\end{align*}

\subsubsection{Lower bounds}
\label{sec:lo}
\subsubsection*{Deterministic operation}

Employing the method proposed by Campbell \cite{Campbell10}, to obtain the lower bounds for entangling capacities, an ancilla is used. As a reminder, the operation $S$ is a mapping from $\mathcal{B}(\mathcal{H}_1^A\otimes \mathcal{H}_1^B)$ to $\mathcal{B}(\mathcal{H}_2^A\otimes \mathcal{H}_2^B)$, and the ancilla is the space $\mathcal{H}_a=\mathcal{H}_a^A\otimes \mathcal{H}_a^B$. 

Here we further require $\mathcal{H}_a^A=\mathcal{H}_1^A$ and $\mathcal{H}_a^B=\mathcal{H}_1^B$; in other words we demand them to be isomorphic spaces. Let $\{a_i\}$ be an orthonormal basis of $\mathcal{H}_a^A$ and $\mathcal{H}_1^A$, and $\{b_i\}$ be an orthonormal basis of $\mathcal{H}_a^B$ and $\mathcal{H}_1^B$, and $d_i:=\mathrm{dim}\mathcal{H}_1^i$. Define a state $\ket{\Psi}$ as:
\begin{equation}
\ket{\Psi}:=\ket{\Psi_{A}}\otimes \ket{\Psi_{B}},\label{eq:Psi}
\end{equation}
where
\begin{subequations}
\begin{align}
\ket{\Psi_{A}}&:=\frac{1}{\sqrt{d_A}}\ket{a_i}\ket{a_i}\\
\ket{\Psi_{B}}&:=\frac{1}{\sqrt{d_A}}\ket{b_i}\ket{b_i}
\end{align}
\end{subequations}
$\ket{\Psi_{A}}$ and $\ket{\Psi_{B}}$ are therefore maximally entangled states. We can also find
\begin{align}
\ket{\Psi}\bra{\Psi}=&\frac{1}{d_A d_B}\sum_{i,j,k,l} \ket{a_i}\bra{a_j}\otimes\ket{b_k}\bra{b_l}\otimes \ket{a_i}\bra{a_j}\otimes\ket{b_k}\bra{b_l}\nonumber\\
=&\frac{1}{d_A d_B}\sum_{i,j,k,l}E_{ij}\otimes F_{kl}\otimes E_{ij}\otimes F_{kl}.\label{eq:psiAB}
\end{align}
Therefore
\begin{equation}
(\mathcal{I}_a\otimes S)(\ket{\Psi}\bra{\Psi})=\frac{1}{d_A d_B}\sum_{i,j,k,l}E_{ij}\otimes F_{kl}\otimes S\left(E_{ij}\otimes F_{kl}\right)=\frac{1}{d_A d_B}\mathscr{T}(S)
\end{equation}
is essentially a normalized Choi isomorphism. From now on we will ignore the subscript of $\mathcal{I}_a$. As $\ket{\Psi}\bra{\Psi}$ isn't entangled, the entanglement measure of $\mathscr{T}(S)/(d_A d_B)$ can give us a lower bound for entangling capacity. 

With $\Gamma=\mathrm{T}_{A}\otimes\mathcal{I}_B$, where $\mathrm{T}_{A}$ operates on $\mathcal{B}(\mathcal{H}_a^A\otimes \mathcal{H}_1^A)$ and $\mathcal{I}_B$ is the identity mapping on $\mathcal{B}(\mathcal{H}_a^B\otimes \mathcal{H}_1^B)$, we have
\begin{align}
\mathscr{T}(S)^\Gamma&=\Big(\sum_{i,j,k,l}E_{ij}\otimes F_{kl}\otimes S(E_{ij}\otimes F_{kl})\Big)^\Gamma\nonumber\\
&=\Big(\sum_{i,j,k,l}E_{ji}\otimes F_{kl}\otimes S^\Gamma(E_{ji}\otimes F_{kl})\Big)\nonumber\\
&=\mathscr{T}(S^\Gamma),\label{eq:Sgamma}
\end{align}
because $S^\Gamma(O)=S(O^\Gamma)^\Gamma$. Therefore
\begin{equation}
\left[\mathscr{T}(S)^\Gamma\right]^\pm=\mathscr{T}(S^\Gamma)^\pm=\mathscr{T}(S^\Gamma_\pm).
\end{equation}
By Corollary~\ref{cor:1}, $\trace\mathscr{T}(S^\Gamma_\pm)=\trace{S^\Gamma_\pm}^\dagger(I)$, so
\begin{equation}
\trace\left\{\left[\mathcal{I}\otimes S(\ket{\Psi}\bra{\Psi})\right]^\Gamma\right\}^\pm=\frac{1}{d_A d_B}\trace\mathscr{T}(S^\Gamma_\pm)=\frac{1}{d_A d_B}\trace{S^\Gamma_\pm}^\dagger(I).\label{eq:spm}
\end{equation}
Alternatively, we can directly calculate $\trace\mathscr{T}(S^\Gamma_\pm)$ without using Corollary~\ref{cor:1}:
\begin{align}
\trace\mathscr{T}(S^\Gamma_\pm)&=\trace\sum_{i,j,k,l}E_{ji}\otimes F_{kl}\otimes S^\Gamma_\pm(E_{ij}\otimes F_{kl})\nonumber\\
&=\sum_{i,j,k,l}\delta_{ij}\delta_{kl}\trace S^\Gamma_\pm(E_{ij}\otimes F_{kl})\nonumber\\
&=\trace S^\Gamma_\pm\left(\sum_{i,k}E_{ii}\otimes F_{kk}\right)\nonumber\\
&=\trace {S^\Gamma_\pm}^\dagger(I),\label{eq:ts}
\end{align}
because $\trace L(I)=(L^\dagger(I)|I)=\trace (L^\dagger(I))^\dagger$, and $L^\dagger(I)$ is Hermitian as $L$ is HP, \eqref{eq:Kr}. Hence
\begin{equation*}
E_N(\mathcal{I}\otimes S(\ket{\Psi}\bra{\Psi}))=\frac{1}{d_A d_B}\trace\mathscr{T}(S^\Gamma_-)=\frac{1}{d_A d_B}\trace {S^\Gamma_-}^\dagger(I).
\end{equation*}

We can use \eqref{eq:H+H-} to acquire the bound for logarithmic negativity. Or, because $\mathscr{T}(S^\Gamma_\pm)$ are orthogonal, 
\begin{equation}
||\left[\mathcal{I}\otimes S(\ket{\Psi}\bra{\Psi})\right]^\Gamma||_1=\frac{1}{d_A d_B}||\mathscr{T}(S^\Gamma)||_1=\frac{1}{d_A d_B}\left[\trace\mathscr{T}(S^\Gamma_+)+\trace\mathscr{T}(S^\Gamma_-)\right].\label{eq:la}
\end{equation}
By Corollary~\ref{cor:1},
\begin{equation}
||\mathscr{T}(S^\Gamma)||_1= \trace {S^\Gamma_+}^\dagger(I)+\trace {S^\Gamma_-}^\dagger(I).\label{eq:s1}
\end{equation}
Since ${S^\Gamma_+}^\dagger(I)-{S^\Gamma_-}^\dagger(I)=I,$
\begin{equation}
\trace {S^\Gamma_+}^\dagger(I)+\trace {S^\Gamma_-}^\dagger(I)=\trace I+2\trace {S^\Gamma_-}^\dagger(I)=d_A d_B+2\trace {S^\Gamma_-}^\dagger(I),
\end{equation}
and the lower bound for logarithmic negativity can be obtained. We can also use \eqref{eq:l1} to show this relation.

\subsubsection*{Sub-operation}

To obtain the lower bounds for a sub-operation the procedure is pretty much the same, but now we should take the probability into account:
\begin{equation}
p_i=\trace \mathcal{I}\otimes S_i(\ket{\Psi}\bra{\Psi})=\frac{\trace \mathscr{T}(S_i)}{d_A d_B}=\frac{\trace S_i^\dagger(I)}{d_A d_B},\label{eq:pi}
\end{equation}
c.f. \eqref{eq:spm} or \eqref{eq:ts}. Use the same method to obtain $\trace \mathcal{I}\otimes{S^\Gamma_i}_-(\ket{\Psi}\bra{\Psi}).$ Thus the negativity is
\begin{equation}
{E_N}_i(\ket{\Psi}\bra{\Psi})=\frac{\trace {S_i^\Gamma}_-^\dagger(I)}{\trace S_i^\dagger(I)}.\label{eq:ns}
\end{equation}

Similarly,
\begin{equation}
||\mathcal{I}\otimes S_i(\ket{\Psi})\bra{\Psi}||_1=\frac{||\mathscr{T}(S^\Gamma_i)||_1}{d_A d_B}=\frac{\trace {S_i^\Gamma}_+^\dagger(I)+\trace {S_i^\Gamma}_-^\dagger(I)}{d_A d_B},
\end{equation}
c.f. \eqref{eq:s1}, so
\begin{equation}
{E_L}_i(\ket{\Psi}\bra{\Psi})=\log\frac{||\mathscr{T}(S^\Gamma_i)||_1}{\trace S_i^\dagger(I)}=\log\frac{\trace {S_i^\Gamma}_+^\dagger(I)+\trace {S_i^\Gamma}_-^\dagger(I)}{\trace S_i^\dagger(I)}.\label{eq:ls}
\end{equation}
The lower bounds for deterministic operations can be regarded as a special case of sub-operations---this can be easily shown with Corollary~\ref{cor:1} and \eqref{eq:d}.

Because transposition and partial transposition are TP, $\trace \mathscr{T}(L)=\trace \mathscr{T}(L^\Gamma)$. For an HP $L$, it means
\begin{equation*}
\trace \mathscr{T}(L)=\trace \mathscr{T}(L^\Gamma)=\trace \mathscr{T}(L^\Gamma_+)-\trace \mathscr{T}(L^\Gamma_-).
\end{equation*}
Therefore by Corollary~\ref{cor:1}
\begin{equation}
\trace {S_i}^\dagger(I)=\trace {S_i^\Gamma}^\dagger(I)=\trace {S_i^\Gamma}_+^\dagger(I)-\trace {S_i^\Gamma}_-^\dagger(I).\label{eq:s}
\end{equation}
With \eqref{eq:s} we can adjust \eqref{eq:ns} and \eqref{eq:ls} to our liking, such as in Proposition~\ref{pro}. Note in general we can't expect ${S_i^\Gamma}^\dagger(I)\geq 0,$ so its trace may not equal its trace norm.
\subsubsection*{Probabilistic operation}

For the average negativity of a probabilistic operation, by \eqref{eq:pi} and \eqref{eq:ns}
\begin{equation}
\sum_i p_i E_N(\mathcal{I}\otimes S_i(\ket{\Psi}\bra{\Psi}))=\sum_i\frac{ \trace {S_i^\Gamma}_-^\dagger(I)}{d_A d_B}.
\end{equation}
For the average logarithmic negativity, by \eqref{eq:pi} and \eqref{eq:ls}
\begin{equation}
\sum_i p_i E_L(\mathcal{I}\otimes S_i(\ket{\Psi}\bra{\Psi}))=\sum_{i}\frac{\trace \mathscr{T}(S_i)}{d_A d_B}\log\frac{||\mathscr{T}(S^\Gamma_i)||_1}{\trace \mathscr{T}(S_i)},
\end{equation}
Note for the upper bounds the denominator $p_i$ could be removed due to concavity (see \eqref{eq:0}), which can't be applied here. Also note by Corollary~\ref{cor:1} and \eqref{eq:s} there are several expressions for these bounds, just like sub-operations.

\section{Geometrical Interpretation of Bounds on Entangling Capacity and Entanglement}
\label{sec:geo}
In this section, we only discuss deterministic operations.
\subsection{Norms and Metrics Induced by Partial Transposition}\label{sec:d1g}
Let $\mathcal{V}$ and $\mathcal{W}$ be two vector spaces. If there's an injective linear mapping $T:V\rightarrow W$ and a norm (Definition~\ref{def:norm}) $f$ on $W$, then $f\circ T$ is a norm on $V$ \cite{Loomis}. Because partial transposition is bijective both on operators and on linear mappings from operators to operators (Section~\ref{sec:TPT}),
\begin{defi}\label{def:normga}
With the Schatten $p$-norm $||\cdot||_p$ (Section~\ref{sec:normop}), we define such norms for any operator $O\in\mathcal{B}(\mathcal{H}_A\otimes \mathcal{H}_B)$:
\begin{equation*}
||O||_{p,\Gamma}:=||O^\Gamma||_p,
\end{equation*}
and any linear mapping $L$ from $\mathcal{B}(\mathcal{H}^A_1\otimes \mathcal{H}^B_1)$ to $\mathcal{B}(\mathcal{H}^A_2\otimes \mathcal{H}^B_2)$:
\begin{equation*}
||L||_{p,\Gamma}:=||L^\Gamma||_n,
\end{equation*}
where\footnote{This is a norm because Choi isomorphism is one-to-one; moreover being isomorphic the two spaces can be regarded as identical.}
\begin{equation*}
||L||_p:=||\mathscr{T}(L)||_n.
\end{equation*}
These lead to the following metric/distance for operators or linear mappings 
\begin{equation*}
D_{1,\Gamma}(X,Y):=||X-Y||_{1,\Gamma},
\end{equation*}
where $X$ and $Y$ are either operators or linear mappings.
\end{defi}

\subsubsection{Norm, negativity and entangling capacity}\label{sec:nge}

Because $L^\Gamma$ is TP for any (bipartite) TP mapping $L$ (Section~\ref{sec:TPT}), for any operation $S:\mathcal{B}(\mathcal{H}_1^A\otimes \mathcal{H}_1^B)\rightarrow \mathcal{B}(\mathcal{H}_2^A\otimes \mathcal{H}_2^B)$ by \eqref{eq:d} and \eqref{eq:H+H-}, 
\begin{equation}
||S||_{1,\Gamma}=||\mathscr{T}(S^\Gamma)||_1=\trace \mathscr{T}(S^\Gamma)+2\trace \mathscr{T}(S^\Gamma_-)=d_A d_B+2\trace \mathscr{T}(S^\Gamma_-),
\end{equation}
where again $d_A:=\mathrm{dim}\mathcal{H}_1^A$ and $d_B:=\mathrm{dim}\mathcal{H}_1^B$. By Corollary~\ref{cor:1}, $\trace{S^\Gamma_-}^\dagger(I)=\trace \mathscr{T}(S^\Gamma_-)$, so
\begin{equation}
||S||_{1,\Gamma}=d_A d_B+2||{S^\Gamma_-}^\dagger(I)||_1.\label{eq:g1g}
\end{equation}
If $S$ is a PPT operation
\begin{equation}
||S||_{1,\Gamma}=d_A d_B.\label{eq:g2g}
\end{equation}

By the inequality $||O||\leq||O||_1$ \cite{Lidar08,Rastegin12}, $||{S^\Gamma_-}^\dagger(I)||\leq ||{S^\Gamma_-}^\dagger(I)||_1=\trace{S^\Gamma_-}^\dagger(I);$ . From \eqref{eq:g1g} $2 ||{S^\Gamma_-}^\dagger(I)||_1=||S||_{1,\Gamma}-d_A d_B$, so we obtain:
\begin{equation}
||{S^\Gamma_-}^\dagger(I)||\leq ||{S^\Gamma_-}^\dagger(I)||_1=(||S||_{1,\Gamma}-d_A d_B)/2.\label{eq:upeb}
\end{equation}
The difference between the two sides of the inequality, $||{S^\Gamma_-}^\dagger(I)||_1-||{S^\Gamma_-}^\dagger(I)||=\trace \mathscr{T}(S^\Gamma_-)-||{S^\Gamma_-}^\dagger(I)||$, is the sum of the eigenvalues of ${S^\Gamma_-}^\dagger(I)$ minus the largest one. When the operation is PPT, both sides of the inequality become zero.

$||\rho||_{1,\Gamma}$ decides the negativities of a state (Section~\ref{sec:neg}), and $||S||_{1,\Gamma}$ can be considered the negativity of the operation (if normalized), because $||\mathscr{T}(S^\Gamma)||_1=||(\mathscr{T}(S))^\Gamma||_1$ \cite{Stormer,Zanardi01}. By \eqref{eq:upeb} and \eqref{eq:g1g}, $||S||_{1,\Gamma}$ not only corresponds to the lower bounds of entangling capacity from  Proposition~\ref{pro}, but also bounds the upper bound of entangling capacity. Therefore $||S||_{1,\Gamma}$ and $||\rho||_{1,\Gamma}$ \emph{quantify/estimate the entangling capacity and entanglement of an operation, and entanglement of a state}.

Even though the norm $||\cdots||_{1,\Gamma}$ is more suitable for deterministic operations, we can normalize a sub-operation by dividing it by $||\mathscr{T}(S_i)||_1/(d_A d_B)$, and thus reconcile them.

\subsubsection{Distance, negativity and entangling capacity}
From \eqref{eq:g2g}, geometrically PPT operations are a subset of a $d_A d_B$-sphere with center at the origin with respect to $||\cdots||_{1,\Gamma}$.  By the triangle inequality (Section~\ref{sec:normmet}), $|\; ||a||-||b|| \;|\leq ||a-b||$, 
so the length of an operation is bounded by the distance to another operation, and any operations within an open ball with center being a non-PPT operation $S$ and radius $||S||_{1,\Gamma}-d_A d_B$ are non-PPT.  The distance from a non-PPT operation $S$ to a PPT one is at least $||S||_{1,\Gamma}-d_A d_B=2||\mathscr{T}(S^\Gamma)^-||_1$\footnote{As $\trace\mathscr{T}({S^\Gamma})=d_A d_B$, by \eqref{eq:H+H-}, we can obtain this relation.}---It may not be the exact distance, as linear mappings $||S||_{1,\Gamma}-d_A d_B$ away from $S$ are not necessarily quantum operations.  All these can be applied to density operators; see Figure~\ref{fig:norm}. This is somewhat like distance-based entanglement measures \cite{Vedral97}.

\begin{figure}[hbtp!]
\begin{subfigure}[b]{0.5\linewidth}
\includegraphics[width=\linewidth]{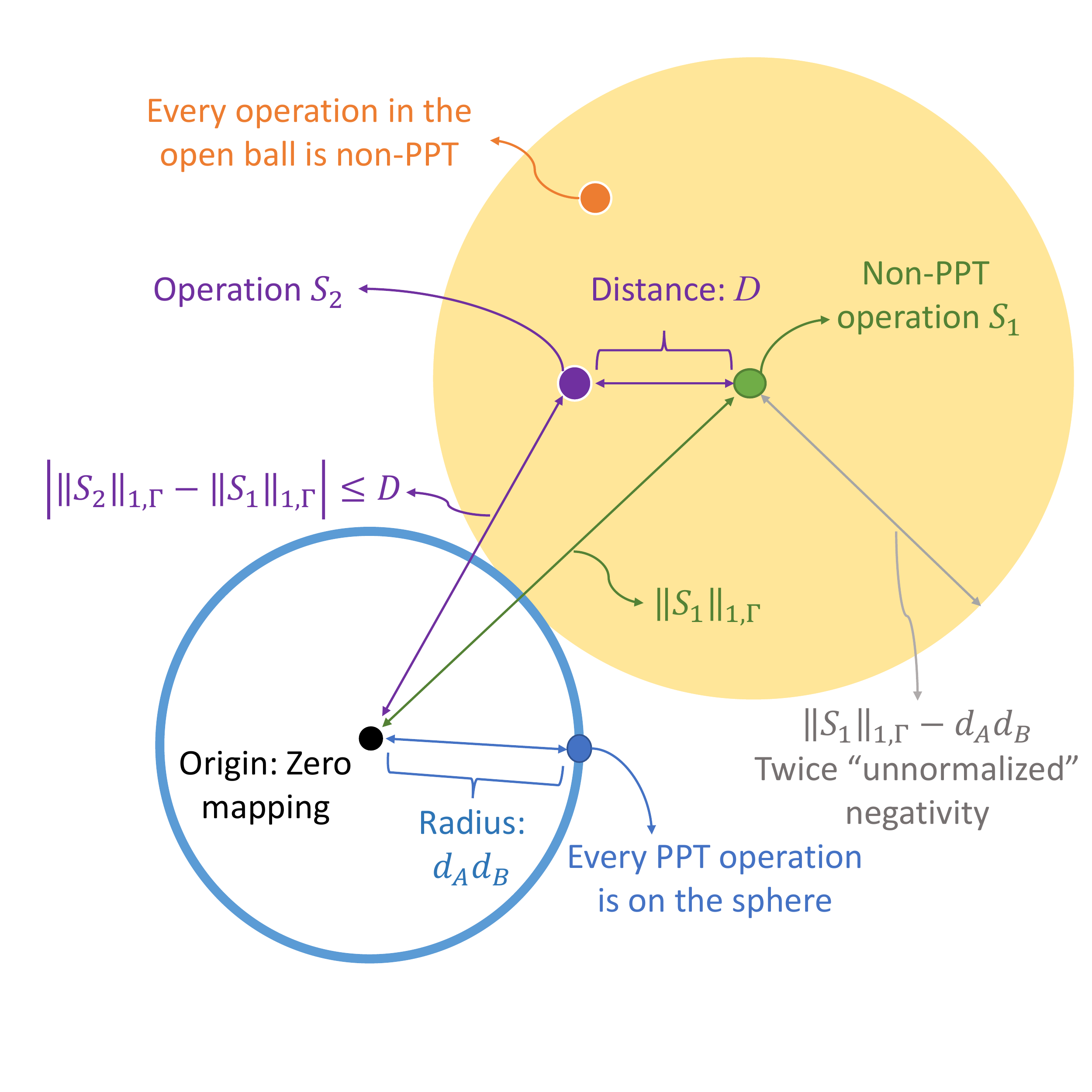}
\caption{PPT and non-PPT operations in the space of  linear mappings.}
\end{subfigure}
\begin{subfigure}[b]{0.5\linewidth}
\includegraphics[width=\linewidth]{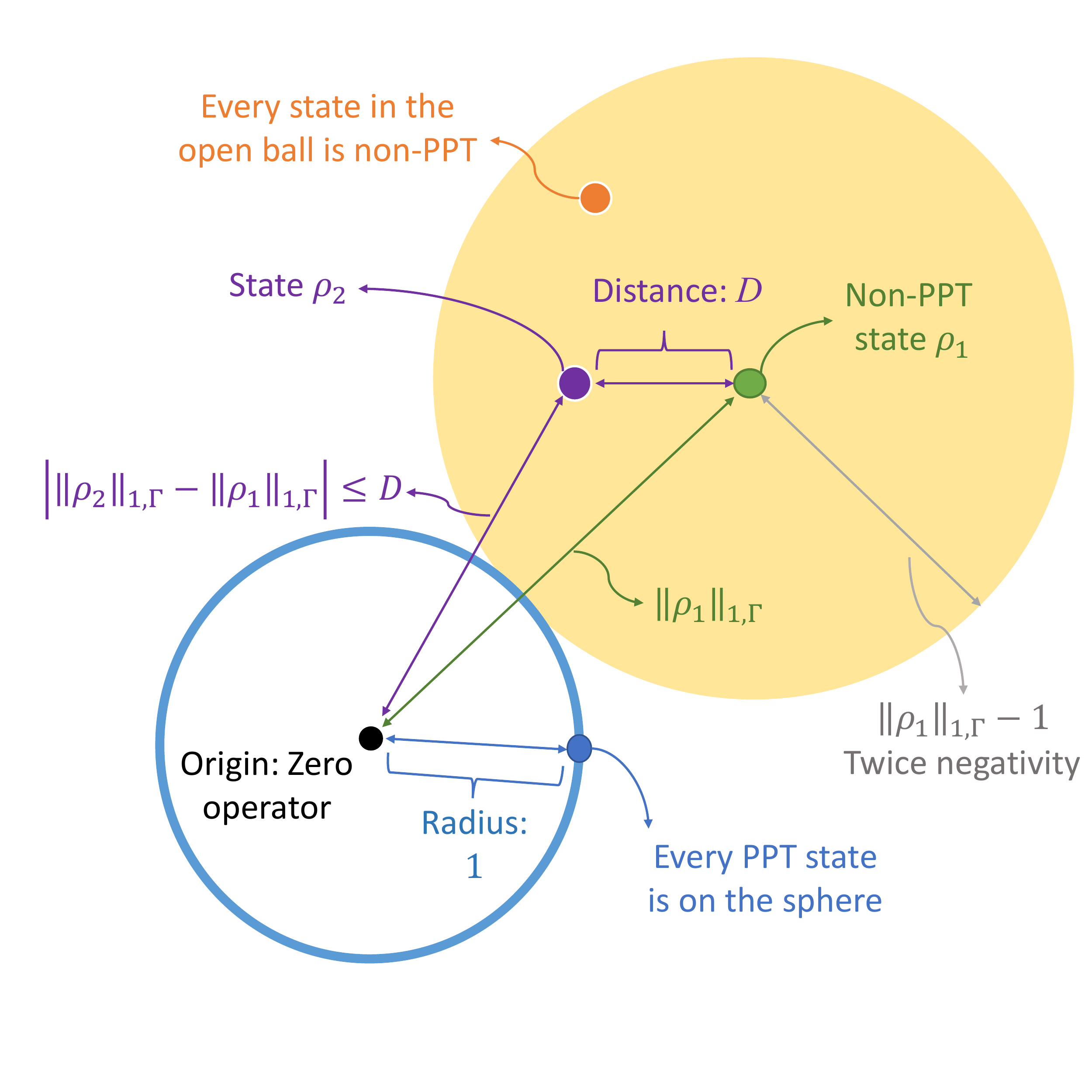}
\caption{PPT and non-PPT states in the space of  operators.}
\end{subfigure}
\caption[Geometry of operations and states]{Geometry of operations and states in the spaces of linear mappings and  operators, with respect to $||\cdots||_{1,\Gamma}$. The entanglement/entangling capacity in terms of negativity is related to the length, and nearby operations/states are similarly entangling/entangled.}
\label{fig:norm}
\end{figure}

\subsubsection{Equivalence of norms}\label{sec:eq}
For any Hermitian operator $H$ on $\mathcal{H}_A\otimes \mathcal{H}_B$ and any HP mapping $L:\mathcal{B}(\mathcal{H}_1^A\otimes \mathcal{H}_1^B)\rightarrow \mathcal{B}(\mathcal{H}_2^A\otimes \mathcal{H}_2^B)$:
\begin{equation*}
\frac{||H||_1}{\min(\mathrm{dim}\mathcal{H}_A,\mathrm{dim}\mathcal{H}_B)}\leq ||H||_{1,\Gamma}\leq \min(\mathrm{dim}\mathcal{H}_A,\mathrm{dim}\mathcal{H}_B)||H||_1.
\end{equation*}
and
\begin{align}
\frac{||L||_1}{\min(d_1^A d_2^A,d_1^B d_2^B)}\leq ||L||_{1,\Gamma}
\leq \min(d_1^A d_2^A,d_1^B d_2^B)||L||_1,
\end{align}
where $d_i^j:=\mathrm{dim}\mathcal{H}_i^j$. Therefore two states or operations that are close in one norm should not be far apart in another, and vice versa. Using whichever norm does not change the topology, by continuity (or linearity) of partial transposition or equivalence of norms \cite{Loomis}. Please refer to Appendix~\ref{app:en} for more details.

\subsection{Distance and Difference in Negativity after Operations}
There is more to the distance from Definition~\ref{def:normga}:\\
\begin{pro}\label{pro:2}
For any density operator $\rho$ and $\rho_i$ on $\mathcal{H}_1^A\otimes \mathcal{H}_1^B$ and deterministic operation $S$, $S_1$ and $S_2$ from $\mathcal{B}(\mathcal{H}_1^A\otimes \mathcal{H}_1^B)$ to $\mathcal{B}(\mathcal{H}_2^A\otimes \mathcal{H}_2^B)$:
\begin{align}
D_{1,\Gamma}\left(S_1(\rho),S_2(\rho)\right)&\leq 2||(S_2^\Gamma-S_1^\Gamma)_\pm^\dagger(I)||\, ||\rho||_{1,\Gamma}\label{eq:pro211}\\
&\leq D_{1,\Gamma}(S_1,S_2) ||\rho||_{1,\Gamma}.\label{eq:pro212}
\end{align}
and
\begin{equation}
D_{1,\Gamma}\left(S(\rho_1),S(\rho_2)\right)\leq \left(1+2||{S^\Gamma_-}^\dagger(I)||\right) D_{1,\Gamma}(\rho_1,\rho_2).\label{eq:pro22}
\end{equation}
Also $(S_2^\Gamma-S_1^\Gamma)_+^\dagger(I)=(S_2^\Gamma-S_1^\Gamma)_-^\dagger(I)$.

\end{pro}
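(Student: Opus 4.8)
The plan is to set $L := S_2 - S_1$ and to reduce all four statements to the behaviour of the HP mapping $L^\Gamma = S_2^\Gamma - S_1^\Gamma$ under Lemma~\ref{lem:lh}. The first thing I would establish is the closing equality $(S_2^\Gamma-S_1^\Gamma)_+^\dagger(I)=(S_2^\Gamma-S_1^\Gamma)_-^\dagger(I)$, because it is the linchpin for the $\pm$ appearing in the first bound. Since $S_1$ and $S_2$ are both TP, $L$ annihilates traces: $\trace L(O) = \trace O - \trace O = 0$ for every $O$, so $L^\dagger(I)=0$; and because $\Gamma$ is TP, $L^\Gamma$ is trace-annihilating as well, whence $(L^\Gamma)^\dagger(I)=0$. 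Writing the spectral decomposition $L^\Gamma = (S_2^\Gamma-S_1^\Gamma)_+ - (S_2^\Gamma-S_1^\Gamma)_-$ into two CP pieces (Section~\ref{sec:dehp}) and applying $\dagger(I)$ gives $(S_2^\Gamma-S_1^\Gamma)_+^\dagger(I) - (S_2^\Gamma-S_1^\Gamma)_-^\dagger(I) = 0$, the claimed equality. I would also record the involution identity $(L(\rho))^\Gamma = L^\Gamma(\rho^\Gamma)$, which follows from $\Gamma\circ\Gamma=\mathcal{I}$.

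For \eqref{eq:pro211} I would write $D_{1,\Gamma}(S_1(\rho),S_2(\rho)) = ||(L(\rho))^\Gamma||_1 = ||L^\Gamma(\rho^\Gamma)||_1$ and apply Lemma~\ref{lem:lh}, inequality \eqref{eq:lhq1}, with $p=\infty$, $q=1$ to the HP mapping $L^\Gamma$ and the Hermitian operator $\rho^\Gamma$, taking the spectral decomposition for $\widetilde{L^\Gamma}_\pm$. This yields the bound $||(S_2^\Gamma-S_1^\Gamma)_+^\dagger(I) + (S_2^\Gamma-S_1^\Gamma)_-^\dagger(I)||\,||\rho^\Gamma||_1$. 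Invoking the equality just proved, the sum inside the norm equals $2(S_2^\Gamma-S_1^\Gamma)_\pm^\dagger(I)$, and $||\rho^\Gamma||_1 = ||\rho||_{1,\Gamma}$, giving precisely \eqref{eq:pro211}.

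For the second inequality \eqref{eq:pro212} I would bound the operator norm by the trace norm, $||(S_2^\Gamma-S_1^\Gamma)_-^\dagger(I)|| \le ||(S_2^\Gamma-S_1^\Gamma)_-^\dagger(I)||_1 = \trace (S_2^\Gamma-S_1^\Gamma)_-^\dagger(I)$. Then Corollary~\ref{cor:1} applied to the CP pieces gives $\trace (S_2^\Gamma-S_1^\Gamma)_\pm^\dagger(I) = \trace \mathscr{T}((S_2-S_1)^\Gamma)^\pm$, so that $D_{1,\Gamma}(S_1,S_2) = ||\mathscr{T}((S_2-S_1)^\Gamma)||_1 = \trace(S_2^\Gamma-S_1^\Gamma)_+^\dagger(I) + \trace(S_2^\Gamma-S_1^\Gamma)_-^\dagger(I) = 2\trace(S_2^\Gamma-S_1^\Gamma)_-^\dagger(I)$, the last step again using the equality. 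Combining the two displays gives $2||(S_2^\Gamma-S_1^\Gamma)_\pm^\dagger(I)|| \le D_{1,\Gamma}(S_1,S_2)$, and multiplying through by $||\rho||_{1,\Gamma}$ completes \eqref{eq:pro212}.

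For \eqref{eq:pro22} I would write $D_{1,\Gamma}(S(\rho_1),S(\rho_2)) = ||S^\Gamma((\rho_1-\rho_2)^\Gamma)||_1$ and apply Lemma~\ref{lem:lh}, inequality \eqref{eq:lhq2}, this time using that $S^\Gamma$ is HPTP (as $S$ is CPTP), with $H = (\rho_1-\rho_2)^\Gamma$ and the spectral choice $\widetilde{S^\Gamma}_- = S^\Gamma_-$. This returns $(1 + 2||{S^\Gamma_-}^\dagger(I)||)\,||(\rho_1-\rho_2)^\Gamma||_1 = (1 + 2||{S^\Gamma_-}^\dagger(I)||)\,D_{1,\Gamma}(\rho_1,\rho_2)$, as required. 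The only real care needed throughout is bookkeeping: verifying the involution identities such as $(L(\rho))^\Gamma = L^\Gamma(\rho^\Gamma)$, checking that $S_2 - S_1$ is trace-\emph{annihilating} rather than trace-preserving, and confirming that the spectral decomposition is an admissible instance of $\widetilde{L}_\pm$ in Lemma~\ref{lem:lh}. None of these presents a genuine obstacle once the trace-annihilation observation, which drives both the $\pm$ symmetry and the collapse of $D_{1,\Gamma}(S_1,S_2)$ to $2\trace(S_2^\Gamma-S_1^\Gamma)_-^\dagger(I)$, is in place.
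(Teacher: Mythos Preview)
Your proposal is correct and follows essentially the same route as the paper: reduce to the HP mapping $L^\Gamma=S_2^\Gamma-S_1^\Gamma$, use trace-annihilation to get $(L^\Gamma)_+^\dagger(I)=(L^\Gamma)_-^\dagger(I)$, apply Lemma~\ref{lem:lh} \eqref{eq:lhq1} for \eqref{eq:pro211}, then $||\cdot||\le||\cdot||_1$ together with Corollary~\ref{cor:1} for \eqref{eq:pro212}, and Lemma~\ref{lem:lh} \eqref{eq:lhq2} with $S^\Gamma$ HPTP for \eqref{eq:pro22}. The only cosmetic difference is that the paper derives $(S_2^\Gamma-S_1^\Gamma)^\dagger(I)=0$ directly from $(S_i^\Gamma)^\dagger(I)=I$ rather than via your trace-annihilation phrasing, which is equivalent.
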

Note $||\rho||_{1,\Gamma}=||\rho^\Gamma||_1$, and $D_{1,\Gamma}\left(X,Y\right)=||X^\Gamma-Y^\Gamma||_1$, so the inequalities from Proposition~\ref{pro:2} bound the trace distances between $S_1(\rho)^\Gamma$ and $S_2(\rho)^\Gamma$ and between $S(\rho_1)^\Gamma$ and $S(\rho_2)^\Gamma$. Because Proposition~\ref{pro} and \ref{pro:2} are proved in similar ways, they look alike. By the triangle inequality, this proposition provides bounds for differences in negativity between different operations on the same state, and between the same operation on different states, in terms of the metric $D_{1,\Gamma}$; see Figure~\ref{fig:pro2}. 

The inequalities from Proposition~\ref{pro:2} also exhibit the continuity of linear mappings, and $||\cdots||_{1,\Gamma}$ as a legitimate norm or $D_{1,\Gamma}$ as a legitimate metric---Since a linear mapping is continuous when only finite-dimensional spaces are considered (c.f. Theorem~\ref{thm:boco}), there should exist relations as in Proposition~\ref{pro:2} for $||\cdots||_{1,\Gamma}$ or $D_{1,\Gamma}$.

\begin{figure}[hbtp!]
\includegraphics[width=0.5\linewidth]{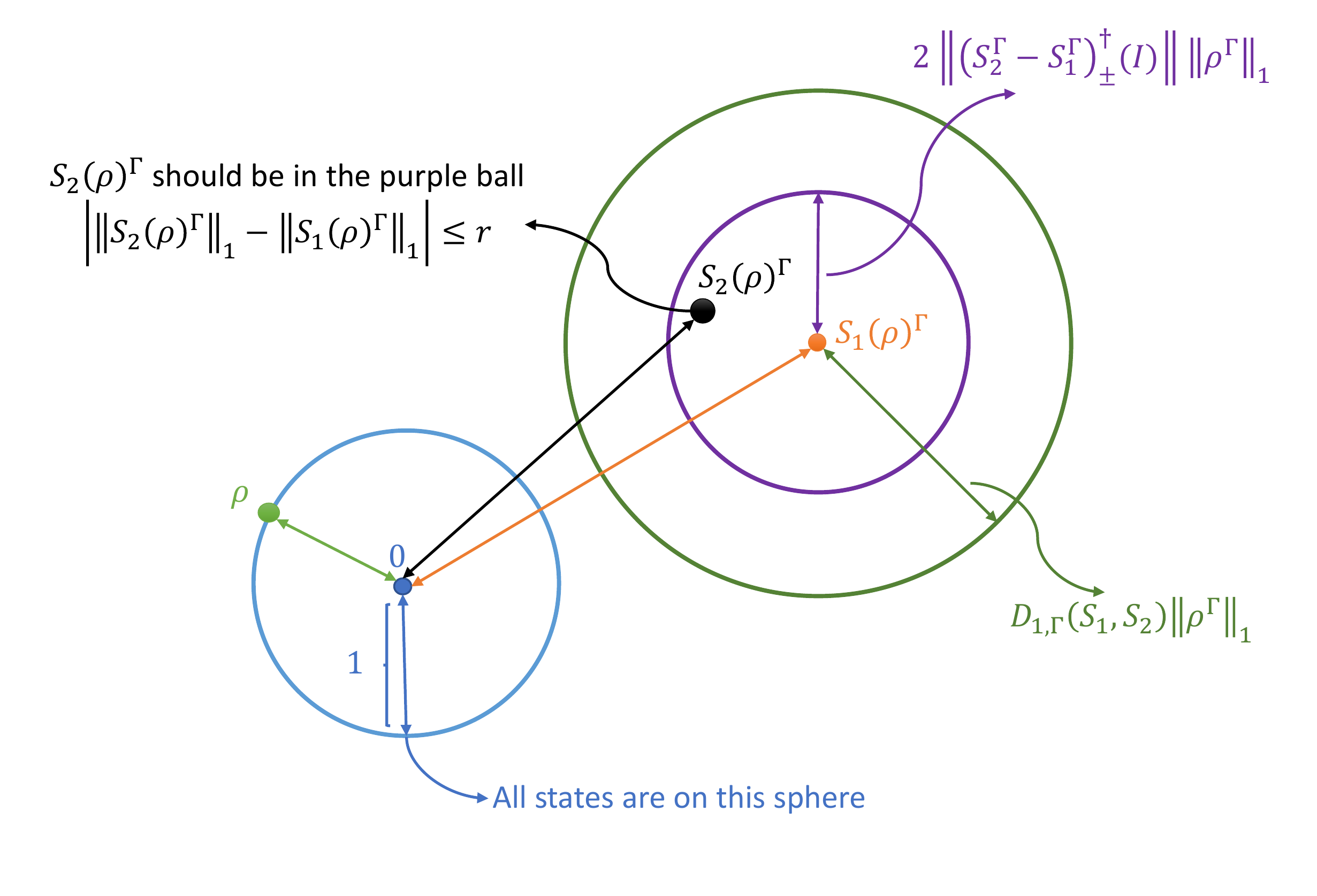}
\includegraphics[width=0.5\linewidth]{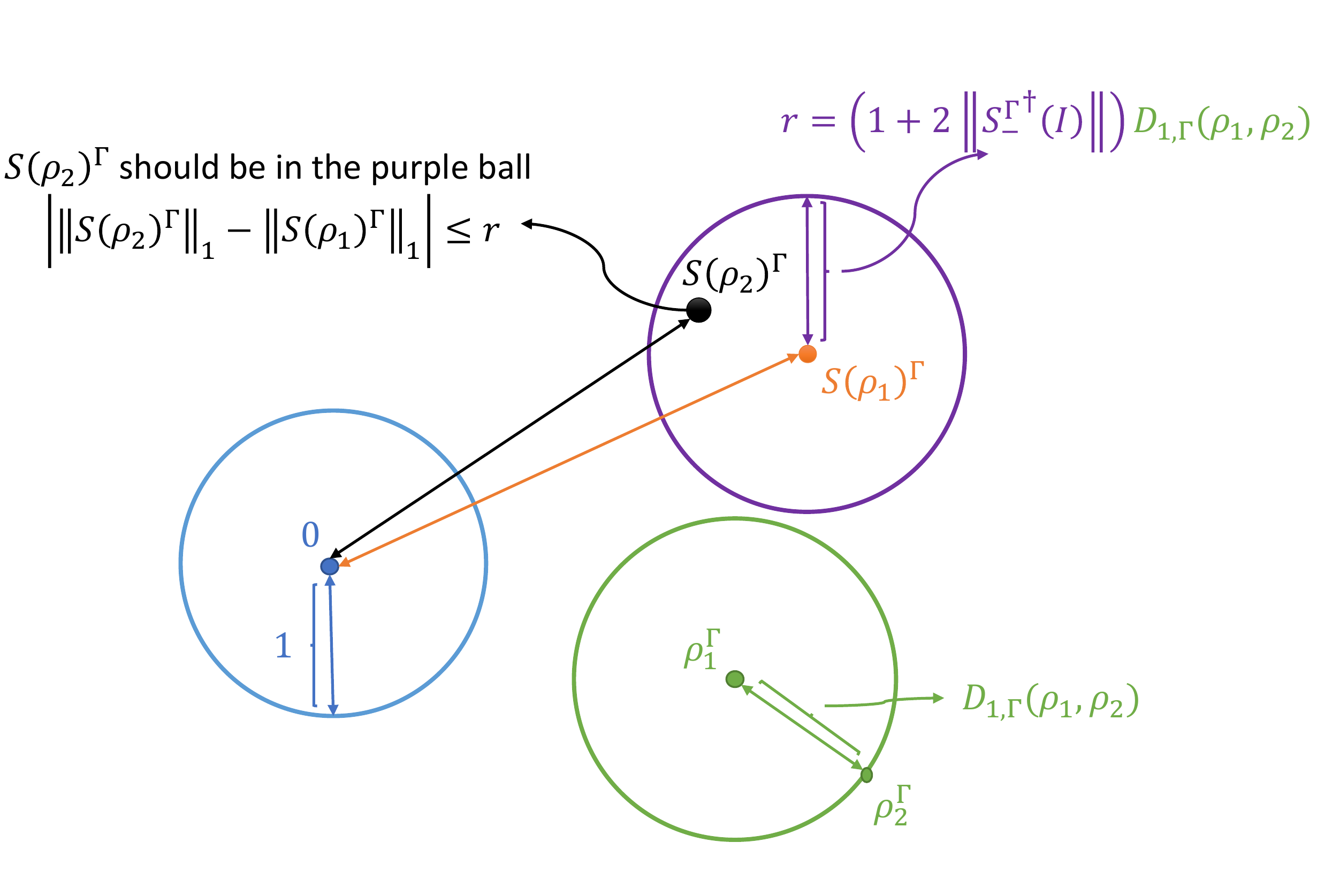}
\caption[Proposition~\ref{pro:2} and its implication]{Proposition~\ref{pro:2} and its implication in the space of operators with respect to trace norm $||\cdots||_1$; similar diagrams can be drawn in terms of $||\cdots||_{1,\Gamma}$. By the triangle inequality, $||S_2(\rho)^\Gamma||_1$ and $||S(\rho_2)^\Gamma||_1$ are bounded, so are the negativities of $S_2(\rho)$ and $S(\rho_2)$. This demonstrates the physical importance of the norm $||\cdots||_{1,\Gamma}$.}
\label{fig:pro2}
\end{figure}

\begin{proof}
Because $S_1$ and $S_2$ are deterministic, ${S_1^\Gamma}^\dagger(I)={S_2^\Gamma}^\dagger(I)=I$, and
\begin{equation*}
({S_2^\Gamma-S_1^\Gamma})^\dagger(I)=0.
\end{equation*}
Hence we can phrase our problem this way: If an HP mapping $L$ has $L^\dagger(I)=0$, given a Hermitian operator $H$ with a fixed trace norm, how large can $||L(H)||_1$ be? Let's begin with
\begin{equation}
L^\dagger(I)=L_+^\dagger(I)-L_-^\dagger(I)=0\Rightarrow L_+^\dagger(I)=L_-^\dagger(I),\label{eq:eqq}
\end{equation}
confirming $(S_2^\Gamma-S_1^\Gamma)_+^\dagger(I)=(S_2^\Gamma-S_1^\Gamma)_-^\dagger(I)$. Then by Lemma~\ref{lem:lh}, 
\begin{equation*}
||L(H)||_1\leq 2||L_\pm^\dagger(I)|| \,||H||_1.\label{eq:srho2}
\end{equation*}
\eqref{eq:pro211} is proved. To show \eqref{eq:pro212}, we need the inequality $||O||\leq||O||_1$ for any (trace-class) operator $O\in\mathcal{B}(\mathcal{H})$ \cite{Lidar08,Rastegin12,Blackadar}, so $||(S_2^\Gamma-S_1^\Gamma)_\pm^\dagger(I)||\leq ||(S_2^\Gamma-S_1^\Gamma)_\pm^\dagger(I)||_1$. By Corollary~\ref{cor:1}, $\trace (S_2^\Gamma-S_1^\Gamma)_\pm^\dagger(I)=\trace \mathscr{T}(S_2^\Gamma-S_1^\Gamma)^\pm$, so by \eqref{eq:eqq}
\begin{equation*}
2\trace \mathscr{T}(S_2^\Gamma-S_1^\Gamma)^\pm=\trace \mathscr{T}(S_2^\Gamma-S_1^\Gamma)^++\trace \mathscr{T}(S_2^\Gamma-S_1^\Gamma)^-=||\mathscr{T}(S_2^\Gamma-S_1^\Gamma)||_1=||S_2-S_1||_{1,\Gamma}.
\end{equation*}

To show \eqref{eq:pro22} is pretty similar, as
\begin{equation*}
S(\rho_1)^\Gamma-S(\rho_2)^\Gamma=S^\Gamma[(\rho_1-\rho_2)^\Gamma].
\end{equation*}
By Lemma~\ref{lem:lh} and $S^\Gamma$ being TP we can easily find
\begin{equation*}
||S^\Gamma((\rho_1-\rho_2)^\Gamma)||_1\leq\left(1+2||{S^\Gamma_-}^\dagger(I)||\right)||\rho_1-\rho_2||_{1,\Gamma}.
\end{equation*}
\end{proof}

\section{Upper Bounds Using Schatten $p$-norm}
\label{sec:pnorm}
Up until now when I utilized Lemma~\ref{lem:lh} to find the upper bounds, I chose $p=\infty$ and $q=1$. If using the most general form of this lemma, the upper bound for negativity from part 1 of Proposition~\ref{pro} becomes
\begin{equation*}
EC_N(S)\leq ||\widetilde{S^\Gamma}^\dagger_-(I)||_p||\rho^\Gamma||_q,
\end{equation*}
and the upper bound for logarithmic negativity will be state-dependent when $q\neq 1.$ The same can be done for the rest of Proposition~\ref{pro} and \ref{pro:2} by following the proofs. However since density operators have unit trace, choosing the trace norm for states and therefore the operator norm for operations seems most natural.

\section{Comparison between Approaches}

\label{sec:up}
From part 2 of Proposition~\ref{pro}, for a deterministic operation $S=\sum_i S_i$, $\text{EC}_L(S)\leq \log(1+2||\sum_i{S^\Gamma_i}^\dagger_-(I)||)$;\footnote{This can be obtained from part 1 of Proposition~\ref{pro} as well; please refer to Section~\ref{sec:dispro1}.} be aware that here we choose ${S^\Gamma_i}_-$ instead of $\widetilde{S^\Gamma_i}_-$. In Ref.~\cite{Campbell10}, it was shown for a deterministic operation $S=\sum_i S_i$ where $S_i(\rho)=V_i \rho V_i^\dagger$ and $V_i$ have Schmidt decompositions \cite{Nielsen03}
\begin{equation}
V_i=\sum_{j(i)} \lambda_{ij} A_{ij}\otimes B_{ij},
\end{equation}
the entangling capacity has an upper bound: 
\begin{equation*}
\text{EC}_L(S)\leq \log\big(\sum_i ||\sum_{j(i)} \lambda_{ij} A_{ij}^\dagger A_{ij}||\,||\sum_{k(i)} \lambda_{ik} B_{ik}^\dagger B_{ik}||\big).
\end{equation*}
Furthermore, it was shown by Campbell for a unitary operator $U=\sum_i \lambda_i A_i\otimes B_i$ the lower bound is \cite{Campbell10}
\begin{equation}
\log \frac{(\sum_i\lambda_i)^2}{d_A d_B},\label{eq:loC}
\end{equation}
while our lower bound is 
\begin{equation}
\log\left(1+2\frac{||{S^\Gamma_-}^\dagger(I)||_1}{d_A d_B}\right).\label{eq:lop}
\end{equation}
Are the bounds of his and ours equivalent in these cases? To answer this question, we can prove:
\begin{lem}\label{lem:com}
For deterministic $S=\sum_i S_i$, where $S_i=V_i \rho V_i^\dagger$ and the Schmidt decompositions of $V_i$ are $V_i=\sum_{j(i)} \lambda_{ij} A_{ij}\otimes B_{ij},$
\begin{align}
1+2||\sum_i {S^\Gamma_i}_-^\dagger(I)||&=||\sum_i\sum_{j(i)} \lambda_{ij}{A_{ij}^*}^\dagger A_{ij}^*\otimes \sum_{k(i)} \lambda_{ik} B_{ik}^\dagger B_{ik}||\label{eq:lem61}\\
&\leq\sum_i ||\sum_{j(i)} \lambda_{ij} A_{ij}^\dagger A_{ij}||\,||\sum_{k(i)} \lambda_{ik} B_{ik}^\dagger B_{ik}||.\label{eq:lem62}
\end{align} 

For a unitary operation $S(\rho)=U\rho U^\dagger$, where $U$ has domain $\mathcal{H}_A\otimes \mathcal{H}_B$ and Schmidt decomposition $U=\sum_i \lambda_i A_i\otimes B_i$,
\begin{equation*}
1+2\frac{||{S^\Gamma_-}^\dagger(I)||_1}{d_A d_B}=\frac{(\sum_i\lambda_i)^2}{d_A d_B},
\end{equation*}
with $d_A:=\mathrm{dim}\mathcal{H}_A$ and $d_B:=\mathrm{dim}\mathcal{H}_B$; in addition, \eqref{eq:lem61} and \eqref{eq:lem62} become identical.
\end{lem}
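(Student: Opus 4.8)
The plan is to prove a single \emph{operator identity} from which both equalities and the inequality fall out by taking a norm. First I would put $S_i^\Gamma$ into operator-sum form. Writing the operator Schmidt decomposition $V_i=\sum_j\lambda_{ij}A_{ij}\otimes B_{ij}$ (so $\{A_{ij}\}_j$ and $\{B_{ij}\}_j$ are each Hilbert-Schmidt orthonormal and $\lambda_{ij}\ge 0$), and using the elementary identity $\Gamma[(A\otimes B)X(A'\otimes B')]=(A'\tr\otimes B)\,X^\Gamma\,(A\tr\otimes B')$ — checked directly on the matrix elements of the $A$-transposition — together with $S_i^\Gamma(O)=\Gamma[V_iO^\Gamma V_i^\dagger]$, I obtain
\begin{equation*}
S_i^\Gamma(O)=\sum_{j,k}\lambda_{ij}\lambda_{ik}\,(A_{ik}^*\otimes B_{ij})\,O\,\big((A_{ij}^*)^\dagger\otimes B_{ik}^\dagger\big).
\end{equation*}
This is a Hermitian (but ``twisted'') sandwich map, not yet a clean $VOV^\dagger$ form, so its positive/negative parts must be extracted through the Choi isomorphism.

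Next I would compute $\mathscr{T}(S_i^\Gamma)$ and diagonalize it. Reorganizing the doubled space as $(\mathcal{H}_A\otimes\mathcal{H}_A)\otimes(\mathcal{H}_B\otimes\mathcal{H}_B)$ and vectorizing, the Choi operator becomes $\sum_{j,k}\lambda_{ij}\lambda_{ik}\,|a_k\rangle\langle a_j|\otimes|b_j\rangle\langle b_k|$, where $|a_m\rangle$ and $|b_m\rangle$ are the vectorizations of $A_{im}^*$ and $B_{im}$. The crucial input is that complex conjugation preserves the Hilbert-Schmidt inner product up to conjugation, so $\{|a_m\rangle\}$ and $\{|b_m\rangle\}$ remain orthonormal; this lets me split $\mathscr{T}(S_i^\Gamma)$ into $1\times1$ diagonal blocks ($j=k$) with eigenvalue $+\lambda_{ij}^2$ and $2\times2$ blocks on $\{|e_{jk}\rangle,|e_{kj}\rangle\}$ ($j<k$) of the form $\lambda_{ij}\lambda_{ik}\left(\begin{smallmatrix}0&1\\1&0\end{smallmatrix}\right)$, with eigenvalues $\pm\lambda_{ij}\lambda_{ik}$ and symmetric/antisymmetric eigenvectors. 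This block structure is the main obstacle: it is where the bookkeeping of the fourfold tensor structure and the vectorization convention of Section~\ref{sec:opsum} has to be handled carefully.

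With the eigendecomposition in hand, I would use the operator-sum recipe of Section~\ref{sec:opsum} (each Choi eigenvector $|v_\alpha\rangle$ with eigenvalue $c_\alpha$ yields a Kraus operator $V_\alpha$ with $L^\dagger(I)=\sum_\alpha c_\alpha V_\alpha^\dagger V_\alpha$) to evaluate ${S_i^\Gamma}_+^\dagger(I)+{S_i^\Gamma}_-^\dagger(I)=\sum_\alpha|c_\alpha|V_\alpha^\dagger V_\alpha$. The symmetric and antisymmetric contributions of each $2\times2$ block combine as $V_{\mathrm{sym}}^\dagger V_{\mathrm{sym}}+V_{\mathrm{anti}}^\dagger V_{\mathrm{anti}}=X^\dagger X+Y^\dagger Y$ (the cross terms cancel), and summing the diagonal and both orderings of the off-diagonal terms telescopes to
\begin{equation*}
{S_i^\Gamma}_+^\dagger(I)+{S_i^\Gamma}_-^\dagger(I)=\Big(\sum_j\lambda_{ij}(A_{ij}^*)^\dagger A_{ij}^*\Big)\otimes\Big(\sum_k\lambda_{ik}B_{ik}^\dagger B_{ik}\Big).
\end{equation*}
Summing over $i$ and invoking trace preservation of $S^\Gamma=\sum_iS_i^\Gamma$, i.e. $\sum_i S_i^{\Gamma\dagger}(I)=I$, turns the left side into $I+2\sum_i{S_i^\Gamma}_-^\dagger(I)$. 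Since both sides are positive and $\sum_i{S_i^\Gamma}_-^\dagger(I)\ge 0$ (each ${S_i^\Gamma}_-$ is CP), applying $\|I+2P\|=1+2\|P\|$ (c.f. \eqref{eq:n1}) yields \eqref{eq:lem61}; the triangle inequality and submultiplicativity of the operator norm, with $\|(A_{ij}^*)^\dagger A_{ij}^*\|$-sums equal to their transposes $\|\sum_j\lambda_{ij}A_{ij}^\dagger A_{ij}\|$, give \eqref{eq:lem62}.

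Finally, for the unitary case I would take the single-term specialization. By Corollary~\ref{cor:1}, $\trace{S^\Gamma_-}^\dagger(I)=\trace\mathscr{T}(S^\Gamma)^-=\sum_{j<k}\lambda_j\lambda_k=\tfrac12\big[(\sum_j\lambda_j)^2-\sum_j\lambda_j^2\big]$, and since $U^\dagger U=I$ forces $\sum_j\lambda_j^2=\trace U^\dagger U=d_Ad_B$, I get $1+2\,\|{S^\Gamma_-}^\dagger(I)\|_1/(d_Ad_B)=(\sum_j\lambda_j)^2/(d_Ad_B)$ (this also settles the deferred \eqref{eq:loe}). With only one index $i$, the sum in \eqref{eq:lem62} is a single tensor product, for which $\|X\otimes Y\|=\|X\|\,\|Y\|$ makes \eqref{eq:lem61} and \eqref{eq:lem62} coincide, completing the lemma.
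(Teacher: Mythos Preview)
Your proposal is correct and follows essentially the same route as the paper. The paper first defines the symmetric/antisymmetric Kraus operators $V_{ij}^\pm=(A_i^*\otimes B_j\pm A_j^*\otimes B_i)/\sqrt2$, verifies their mutual Hilbert--Schmidt orthogonality, and then invokes Lemma~\ref{lem:oen} on the Choi side to conclude $\widetilde{S^\Gamma}_\pm=S^\Gamma_\pm$; you instead write down $\mathscr{T}(S_i^\Gamma)$ directly and diagonalize it via the $1\times1$ and $2\times2$ block structure in the orthonormal frame $\{|a_m\rangle\otimes|b_n\rangle\}$, which is the same spectral decomposition viewed from the Choi side first. Both then collapse the symmetric/antisymmetric contributions by the identity $V_{\mathrm{sym}}^\dagger V_{\mathrm{sym}}+V_{\mathrm{anti}}^\dagger V_{\mathrm{anti}}=X^\dagger X+Y^\dagger Y$ to obtain the factored form \eqref{eq:s+-}, sum over $i$, and apply \eqref{eq:n1}; the unitary lower-bound part is likewise identical, with your direct count $\trace\mathscr{T}(S^\Gamma)^-=\sum_{j<k}\lambda_j\lambda_k$ matching the paper's computation via \eqref{eq:s+-} and Corollary~\ref{cor:1}.
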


By this lemma, $\text{EC}_L(S)\leq \log(1+2||\sum_i {S^\Gamma_i}_-^\dagger(I)||)\leq \big(\sum_i ||\sum_{j(i)} \lambda_{ij} A_{ij}^\dagger A_{ij}||\,||\sum_{k(i)} \lambda_{ik} B_{ik}^\dagger B_{ik}||\big)$. and the lower bounds for unitary operations are indeed identical.

Please compare the following proof with Appendix E of Ref.~\cite{Campbell10}, as it is an expansion thereof.
\subsection{Proof for the Upper Bounds}
\subsubsection{Single sub-operation}\label{sec:sisuo}
Let's first consider just one sub-operation $S_i:\mathcal{B}(\mathcal{H}_1^A\otimes \mathcal{H}_1^B)\rightarrow \mathcal{B}(\mathcal{H}_2^A\otimes \mathcal{H}_2^B)$, and drop the subscript $i$ for this moment. Suppose $S(O)=V O V^\dagger$ for any $O\in\mathcal{B}(\mathcal{H}_1^A\otimes \mathcal{H}_1^B)$, with $V=\sum_i \lambda_i A_i\otimes B_i$ being the Schmidt decomposition, where $A_i\in\mathcal{B}(\mathcal{H}_1^A,\mathcal{H}_2^A)$ and $B_i\in\mathcal{B}(\mathcal{H}_1^B,\mathcal{H}_2^B)$. Then
\begin{equation}
S^\Gamma(O)=(V O^\Gamma V^\dagger)^\Gamma
=\sum_{i,j} \lambda_{i}\lambda_{j}A_{j}^*\otimes B_{i} O (A_{i}^*\otimes B_{j})^\dagger.\label{eq:sgf}
\end{equation}
Define
\begin{align}
V_{ij}^\pm&:=\left(A_i^*\otimes B_j\pm A_j^*\otimes B_i \right)/\sqrt{2}\text{ for }i\neq j,\,\nonumber\\
V_{ii}^+&:=A_i^*\otimes B_i.\label{eq:VV}
\end{align}
We can see for $i\neq j$
\begin{equation}
A_{j}^*\otimes B_{i} O (A_{i}^* \otimes B_{j})^\dagger+A_{i}^*\otimes B_{j} O (A_{j}^* \otimes B_{i})^\dagger
=V_{ij}^+ O {V_{ij}^+}^\dagger-V_{ij}^- O {V_{ij}^-}^\dagger;
\end{equation}
Note $V_{ij}^\pm=\pm V_{ji}^\pm$ and therefore $V_{ij}^- O V_{ij}^-=V_{ji}^- O V_{ji}^-$. We now have
\begin{equation}
S^\Gamma(O)=\sum_i \lambda_i^2 V_{ii}^+ O {V_{ii}^+}^\dagger+\frac{1}{2}\sum_{i\neq j} \lambda_i \lambda_j V_{ij}^+ O {V_{ij}^+}^\dagger-\frac{1}{2}\sum_{i\neq j} \lambda_i \lambda_j V_{ij}^- O {V_{ij}^-}^\dagger.\label{eq:SG}
\end{equation}
As each summation on the right side of \eqref{eq:SG} is an operator-sum representation, each is a CP mapping, so we have
\begin{align}
\widetilde{S^\Gamma}_+(O)&=\sum_i \lambda_i^2 V_{ii}^+ O {V_{ii}^+}^\dagger+\frac{1}{2}\sum_{i\neq j} \lambda_i \lambda_j V_{ij}^+ O {V_{ij}^+}^\dagger\nonumber\\
\widetilde{S^\Gamma}_-(O)&=\frac{1}{2}\sum_{i\neq j} \lambda_i \lambda_j V_{ij}^- O {V_{ij}^-}^\dagger.\label{eq:SGa}
\end{align}

Now the problems is, is $\widetilde{S^\Gamma}_\pm=S^\Gamma_\pm$? This is equivalent to $\mathrm{ran}\mathscr{T}(\widetilde{S^\Gamma}_+)\perp \mathrm{ran}\mathscr{T}(\widetilde{S^\Gamma}_-)$, by Lemma~\ref{lem:or} or the spectral theorem (Theorem~\ref{thm:spe}). We can find $\{V_{ij}^+\}$ and $\{V_{kl}^-\}$ are mutually orthogonal:
\begin{equation}
(V_{ij}^+|V_{ij}^-)=\left(A_i^*\otimes B_j+ A_j^*\otimes B_j|A_i^*\otimes B_j- A_j^*\otimes B_i\right)/2=0\;\forall i,j;\label{eq:o1}
\end{equation}
the orthogonality of the rest can be shown by the orthonormality of $\{A_i\}$ and $\{B_i\}$. Choose an orthonormal basis $\{\ket{a_i}\}$ for the composite system $\mathcal{H}_1^A\otimes \mathcal{H}_1^B$. Like \eqref{eq:Psi}, define
\begin{equation}
\ket{\Psi}:=\sum_i \ket{a_i}\ket{a_i}.\label{eq:Psid}
\end{equation}
Then the Choi isomorphism of an operation by a single $V_{kl}^\pm$ is
\begin{equation}
\sum_{ij}\ket{a_i}\bra{a_j}\otimes V_{kl}^\pm \ket{a_i}\bra{a_j}{V_{kl}^\pm}^\dagger=(I\otimes V_{kl}^\pm) \ket{\Psi}\bra{\Psi}(I\otimes {V_{kl}^\pm})^\dagger;\label{eq:o3}
\end{equation}
Therefore $\mathscr{T}(\widetilde{S^\Gamma}_\pm)$ are ensembles of pure states. It was proved in Appendix F of Ref.~\cite{Campbell10} that if operators $O_1$ and $O_2$ are orthogonal, then $I\otimes O_1\ket{\Psi}$ and $I\otimes O_2\ket{\Psi}$ are orthogonal:
\begin{equation}
\inner{(I\otimes O_2)\Psi}{(I\otimes O_1)\Psi}=\sum_{i,j}\inner{a_i}{a_j}\bracket{a_i}{O_2^\dagger O_1}{a_j}
=(O_2|O_1).\label{eq:o4}
\end{equation}
By Lemma~\ref{lem:oen} and the orthogonality of $\{V_{ij}^+\}$ and $\{V_{kl}^-\}$, $\mathrm{ran}\mathscr{T}(\widetilde{S^\Gamma}_+)\perp\mathrm{ran}\mathscr{T}(\widetilde{S^\Gamma}_-)$, so $\widetilde{S^\Gamma}_\pm$ as defined in \eqref{eq:SGa} are equal to $S^\Gamma_\pm.$ Therefore,
\begin{align}
{S^\Gamma_-}^\dagger(I)&=\frac{1}{2}\sum_{i\neq j}\lambda_i\lambda_j {V_{ij}^-}^\dagger V_{ij}^-,\nonumber\\
{S^\Gamma_+}^\dagger(I)&=\sum_i \lambda_i^2 {V_{ii}^+}^\dagger V_{ii}^+  +\frac{1}{2}\sum_{i\neq j} \lambda_i \lambda_j {V_{ij}^+}^\dagger V_{ij}^+,\label{eq:SGA}
\end{align}
and
\begin{align}
{S^\Gamma_+}^\dagger(I)+{S^\Gamma_-}^\dagger(I)&=\sum_i \lambda_i^2 {A_i^*}^\dagger A_i^*\otimes B_i^\dagger B_i+\frac{1}{2}\sum_{i\neq j}\lambda_i\lambda_j\left( {A_i^*}^\dagger A_i^*\otimes B_j^\dagger B_j+{A_j^*}^\dagger A_j^*\otimes B_i^\dagger B_i\right)\nonumber\\
&=\sum_{i=j} \lambda_i\lambda_j {A_i^*}^\dagger A_i^*\otimes B_j^\dagger B_j+\sum_{i\neq j} \lambda_i\lambda_j{A_i^*}^\dagger A_i^*\otimes B_j^\dagger B_j\nonumber\\
&=\sum_{i}\lambda_i{A_i^*}^\dagger A_i^*\otimes \sum_j \lambda_j B_j^\dagger B_j.\label{eq:s+-}
\end{align}

\subsubsection{In the entirety}
Let's come back to the operation $S=\sum_i S_i,$ where $S_i(A)=V_i A V_i^\dagger$ and $V_i=\sum_j \lambda_j A_{ij}\otimes B_{ij}$ is the Schmidt decomposition. By \eqref{eq:n1}, for a TP $L=\sum_i L_i$ with each $L_i$ being HP, we have
\begin{equation}
1+2\Big|\Big|\sum_i {L_i}_-^\dagger(I)\Big|\Big|=\Big|\Big|\sum_i\left({L_i}_+^\dagger(I)+{L_i}_-^\dagger(I)\right)\Big|\Big|;
\end{equation}
along with \eqref{eq:s+-} we obtain,
\begin{align*}
1+2||\sum_i {S^\Gamma_i}_-^\dagger(I)||&=||\sum_i \left({S^\Gamma_i}_+^\dagger(I)+ {S^\Gamma_i}_-^\dagger(I)\right)||\nonumber\\
&=||\sum_i\sum_{j(i)} \lambda_{ij}{A_{ij}^*}^\dagger A_{ij}^*\otimes \sum_{k(i)} \lambda_{ik} B_{ik}^\dagger B_{ik} ||\\
&\leq\sum_i ||\sum_{j(i)} \lambda_{ij}{A_{ij}^*}^\dagger A_{ij}^*||\,|| \sum_{k(i)} \lambda_{ik} B_{ik}^\dagger B_{ik}||\\
&=\sum_i ||\sum_{j(i)} \lambda_{ij}{A_{ij}}^\dagger A_{ij}||\,|| \sum_{k(i)} \lambda_{ik} B_{ik}^\dagger B_{ik}||,\\
\end{align*}
by $||O_1\otimes O_2||=||O_1||\;||O_2||$ and $||O\tr||=||O||$.\hfill~$\square$

Following the method developed in this proof, in Appendix~\ref{app:Sch} I'll discuss more about Schmidt decomposition and spectral decomposition.

\subsection{Proof for the Lower Bounds}

By \eqref{eq:l1},
\begin{equation}
1+2\frac{||{S^\Gamma_-}^\dagger(I)||_1}{d_A d_B}=\frac{\trace\left[{S^\Gamma_+}^\dagger(I)+{S^\Gamma_-}^\dagger(I)\right]}{d_A d_B},\label{eq:eq}
\end{equation}
and by \eqref{eq:s+-},
\begin{equation}
\trace\left[{S^\Gamma_+}^\dagger(I)+{S^\Gamma_-}^\dagger(I)\right]=\sum_i \lambda_i \trace {A_i^*}^\dagger A_i^* \sum_j \lambda_j \trace \lambda_j B_j^\dagger B_j=(\sum_i \lambda_i)^2, \label{eq:sq}
\end{equation}
because $\{A_i\}$ and $\{B_i\}$ are orthonormal sets of operators. \eqref{eq:eq} and \eqref{eq:sq} together demonstrate the equivalence. \hfill~$\square$

\section{PPT-ness and Separability of Unitary Operations and of Pure States}

Here let's show
\begin{pro}\label{pro:sppt}
For a finite-dimensional $\mathcal{H}=\mathcal{H}_A\otimes \mathcal{H}_B$, the following statements for either a unitary operation $S(\rho)=U\rho U^\dagger$ with $U\in\mathcal{U}(\mathcal{H})$ or a pure state $\ket{\psi}\bra{\psi}$ on $\mathcal{H}$ are equivalent:
\begin{enumerate}
\item The rank\footnote{The rank of a tensor (Section~\ref{sec:rank}) is equal to the Schmidt number or Schmidt rank, the number of Schmidt coefficients of a tensor \cite{Breuer,Stormer,Bengtsson}. Note that to define the rank of a tensor inner product isn't required, whereas the Schmidt decomposition is based on inner product.} of $U$ or $\ket{\psi}$ is 1; in other words, $U$ or $\ket{\psi}$ is decomposable (Definition~\ref{def:tensor}).
\item The operation/state is separable.
\item The operation/state is PPT.	
\end{enumerate}
\end{pro}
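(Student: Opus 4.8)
The plan is to establish the cycle $1\Rightarrow 2\Rightarrow 3\Rightarrow 1$, treating the pure state $\ket{\psi}\bra{\psi}$ and the unitary operation $S$ in parallel, since the two arguments are structurally the same. The two forward implications are the routine directions. For $1\Rightarrow 2$: if $\ket{\psi}=\ket{a}\otimes\ket{b}$ is decomposable then $\ket{\psi}\bra{\psi}=\ket{a}\bra{a}\otimes\ket{b}\bra{b}$ is a single product term and hence separable; likewise if $U=A\otimes B$ then $S=S_A\otimes S_B$ with $S_A(\cdot)=A\cdot A^\dagger$ and $S_B(\cdot)=B\cdot B^\dagger$, a product of CP maps, which is separable. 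For $2\Rightarrow 3$: separability of a state implies PPT by the Peres--Horodecki criterion (Theorem~\ref{thm:peres}); for an operation, a separable $S=\sum_i\mathcal{A}_i\otimes\mathcal{B}_i$ has $S^\Gamma=\sum_i(\mathrm{T}_A\circ\mathcal{A}_i\circ\mathrm{T}_A)\otimes\mathcal{B}_i$, and since $\mathrm{T}_A\circ\mathcal{A}_i\circ\mathrm{T}_A$ is CP whenever $\mathcal{A}_i$ is (its Kraus operators are the complex conjugates of those of $\mathcal{A}_i$), $S^\Gamma$ is CP, i.e. $S$ is PPT.

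The crux is $3\Rightarrow 1$. For the pure state I would write the Schmidt decomposition $\ket{\psi}=\sum_i\lambda_i\ket{a_i}\ket{b_i}$ with $\lambda_i>0$, so that $\ket{\psi}\bra{\psi}=\sum_{i,j}\lambda_i\lambda_j\,\ket{a_i}\bra{a_j}\otimes\ket{b_i}\bra{b_j}$ and hence $(\ket{\psi}\bra{\psi})^\Gamma=\sum_{i,j}\lambda_i\lambda_j\,\ket{a_j}\bra{a_i}\otimes\ket{b_i}\bra{b_j}$. If the Schmidt rank were at least two, I would pick distinct $i\neq j$ with $\lambda_i,\lambda_j>0$ and test the antisymmetric vector $\ket{\chi}=(\ket{a_i}\ket{b_j}-\ket{a_j}\ket{b_i})/\sqrt{2}$; applying the formula above gives $(\ket{\psi}\bra{\psi})^\Gamma\ket{\chi}=-\lambda_i\lambda_j\ket{\chi}$, exhibiting a strictly negative eigenvalue and contradicting PPT. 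The contrapositive yields rank $1$.

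For the unitary operation the same conclusion is already within reach from the computation in Section~\ref{sec:sisuo}. Writing the operator Schmidt decomposition $U=\sum_k\lambda_k A_k\otimes B_k$ with $\{A_k\}$, $\{B_k\}$ orthonormal, that section establishes $\widetilde{S^\Gamma}_-=S^\Gamma_-$ and, by \eqref{eq:SGa}, $S^\Gamma_-(O)=\tfrac12\sum_{i\neq j}\lambda_i\lambda_j\,V_{ij}^- O\,{V_{ij}^-}^\dagger$ with $V_{ij}^-=(A_i^*\otimes B_j-A_j^*\otimes B_i)/\sqrt{2}$. Now $S$ is PPT exactly when $S^\Gamma$ is CP, i.e. when $\mathscr{T}(S^\Gamma)=\mathscr{T}(S)^\Gamma\geq 0$ (Theorem~\ref{thm:Choi} together with \eqref{eq:Sgamma}), which holds precisely when $S^\Gamma_-=0$. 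Since $\{A_k\}$ and $\{B_k\}$ are linearly independent, each $V_{ij}^-$ with $i\neq j$ is nonzero, so $S^\Gamma_-=0$ forces exactly one nonzero $\lambda_k$, i.e. $U$ has rank $1$. This closes the cycle for both objects.

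The main obstacle is bookkeeping rather than any deep step: in the operation case one must keep the $A/B$ cut straight (the ancilla-$A$ plus system-$A$ factors against their $B$ counterparts) so that the operator Schmidt decomposition of $U$ and the partial transposition $\Gamma$ act on matching tensor legs, and one must justify that the decomposition $\widetilde{S^\Gamma}_\pm$ built in Section~\ref{sec:sisuo} is genuinely the spectral one --- this is exactly the orthogonality $\mathrm{ran}\,\mathscr{T}(\widetilde{S^\Gamma}_+)\perp\mathrm{ran}\,\mathscr{T}(\widetilde{S^\Gamma}_-)$ verified there via Lemma~\ref{lem:or} --- so that the vanishing of $S^\Gamma_-$ really is equivalent to positivity of $\mathscr{T}(S)^\Gamma$. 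With that identification in hand, the equivalence is a short read-off.
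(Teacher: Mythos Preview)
Your proof is correct and, for the step $3\Rightarrow 1$, essentially matches the paper: for pure states the paper writes down the full spectral decomposition of $(\ket{\psi}\bra{\psi})^\Gamma$ using the symmetric and antisymmetric vectors $(\ket{a_i}\ket{b_j}\pm\ket{a_j}\ket{b_i})/\sqrt{2}$, while you simply test one antisymmetric vector to produce a negative eigenvalue---the same mechanism, slightly streamlined. For the unitary case your $3\Rightarrow 1$ is exactly the paper's argument via $V_{ij}^-$ and \eqref{eq:SGa}.

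The organization differs, though. You run a single cycle $1\Rightarrow 2\Rightarrow 3\Rightarrow 1$ and handle $2\Rightarrow 3$ for operations by the general fact that separable operations are PPT (conjugating the $A$-side Kraus operators). The paper instead proves $1\Leftrightarrow 2$ and $1\Leftrightarrow 3$ separately: for $2\Rightarrow 1$ it introduces an ancilla and applies $I_a\otimes U$ to the product of maximally entangled states $\ket{\Psi_A}\ket{\Psi_B}$, observing via \eqref{eq:o4} that the resulting vector has Schmidt rank (across $A|B$) equal to the operator-Schmidt rank of $U$, so separability of $S$ forces rank $1$ directly. Your route is more economical---it avoids the ancilla construction entirely---while the paper's route yields an independent proof that a non-product unitary is non-separable without invoking the spectral decomposition of $\mathscr{T}(S^\Gamma)$ at all. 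Both are valid; the paper also notes a second confirmation of $3\Leftrightarrow 1$ via the identity $||\mathscr{T}(S^\Gamma)||_1=(\sum_i\lambda_i)^2$ from \eqref{eq:eq}--\eqref{eq:sq}, which equals $d_Ad_B$ iff there is a single Schmidt coefficient.
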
 
Hence, for unitary operations and pure states, separability and PPT-ness are no different. If a unitary operation is found to be PPT, then after this operation any separable state will remain separable. That PPT pure states are necessarily separable has already been shown in, e.g. \cite{Vidal02,Zyczkowski02}, where the negativity of a pure state $\ket{\psi}$ was explicitly calculated, given its Schmidt decomposition. Below we will provide an alternative proof for pure states.

\subsection{Proof for Pure States}\label{sec:pps}
By definition a pure state is separable if and only if its (Schmidt) rank is 1. Suppose the Schmidt decomposition of a pure state is
\begin{equation}
\ket{\psi}=\sum_i \lambda_i \ket{a_i}\ket{b_i}.
\end{equation}
Since partial transposition with respect to different bases are related unitarily, we can do this with respect to any basis, and let's choose $\{\ket{a_i}\}$. We obtain
\begin{equation}
\rho^\Gamma=(\ket{\psi}\bra{\psi})^\Gamma=\sum_{i,j}\lambda_i\lambda_j \ket{a_j}\bra{a_i}\otimes \ket{b_i}\bra{b_j}.
\end{equation}
There's a decomposition of $\rho^\Gamma$ as $\rho^\Gamma=\widetilde{\rho^\Gamma}^+-\widetilde{\rho^\Gamma}^-$, where
\begin{subequations}
\begin{align}
\widetilde{\rho^\Gamma}^+&=\sum_{i}\lambda_i^2(\ket{a_i}\ket{b_i})(\bra{a_i}\bra{b_i})+\sum_{i<j}\lambda_i \lambda_j\frac{\ket{a_i}\ket{b_j}+\ket{a_j}\ket{b_i}}{\sqrt{2}}\frac{\bra{a_i}\bra{b_j}+\bra{a_j}\bra{b_i}}{\sqrt{2}},\label{eq:rho+}\\
\widetilde{\rho^\Gamma}^-&=\sum_{i<j}\lambda_i \lambda_j\frac{\ket{a_i}\ket{b_j}-\ket{a_j}\ket{b_i}}{\sqrt{2}}\frac{\bra{a_i}\bra{b_j}-\bra{a_j}\bra{b_i}}{\sqrt{2}}.\label{eq:rho-}
\end{align}
\end{subequations}
\eqref{eq:rho+} and \eqref{eq:rho-} are ensembles of $\widetilde{\rho^\Gamma}^\pm$. As the vectors are orthonormal, \eqref{eq:rho+} and \eqref{eq:rho-} together are spectral decomposition of $\rho^\Gamma$ by Lemma~\ref{lem:or} or the spectral theorem, i.e. $\widetilde{\rho^\Gamma}^\pm={\rho^\Gamma}^\pm$, c.f. the proof in Section~\ref{sec:sisuo}.

Therefore, because $\lambda_i>0$, ${\rho^\Gamma}^-=0$, i.e. $\rho$ is PPT if and only if the rank of $\ket{\psi}$ is 1.

\subsection{Proof for Unitary Operations}
Let $U=\sum_i \lambda_i A_i\otimes B_i$, with $A_i\in \mathcal{B}(\mathcal{H}_A)$ and $B_i\in \mathcal{B}(\mathcal{H}_B)$, $\{a_i\}$ and $\{b_i\}$ be orthonormal bases of $\mathcal{H}_A$ and $\mathcal{H}_B$ respectively. 

It's easy to see a unitary operation $S(O)=U O U^\dagger$ is separable if $U=\sum_i \lambda_i A_i\otimes B_i$ has rank 1. To show it's separable only if the Schmidt rank is 1, as in Section~\ref{sec:lo} we define a space $\mathcal{H}_a:=\mathcal{H}_a^A\otimes \mathcal{H}_a^B$ with $\mathcal{H}_a^A=\mathcal{H}_A$ and $\mathcal{H}_a^B=\mathcal{H}_B$, and define
\begin{equation}
\ket{\Psi}=\ket{\Psi_A}\ket{\Psi_B},
\end{equation}
where
\begin{subequations}
\begin{align}
\ket{\Psi_A}&:=\sum_i \ket{a_i}\ket{a_i},\\
\ket{\Psi_B}&:=\sum_i \ket{b_i}\ket{b_i}.
\end{align}
\end{subequations}
Then we apply $I_{a}\otimes U=I_a^A\otimes I_a^B\otimes U$ on $\ket{\Psi}$:
\begin{equation}
I_{a}\otimes U\ket{\Psi_{AB}}=\sum_i\lambda_i (I_A\otimes A_i\ket{\Psi_A})\otimes (I_B\otimes B_i\ket{\Psi_B}).\label{eq:IU}
\end{equation}
If $U$ is separable, then the state above is separable with respect to A and B. Because of \eqref{eq:o4}, $I_A\otimes A_i\ket{\Psi_A}$ are orthogonal with $I_A\otimes A_j\ket{\Psi_A}$ for $i\neq j$, likewise for B. Therefore, \eqref{eq:IU} is a Schmidt decomposition of $I_{AB}\otimes U\ket{\Psi_{AB}}$ (after normalization), and this vector is separable if and only if the Schmidt rank of $U$ is 1. Therefore, no unitary operators with Schmidt rank higher than 1 are separable, so a unitary operator is separable if and only if its Schmidt rank is 1.

From \eqref{eq:VV} and \eqref{eq:SGa}, it's clear that a unitary operation is PPT if and only if there exist no $V_{ij}^-$, which happens if and only if the Schmidt rank of the unitary operator is 1. Thus a unitary operation is PPT if and only if it's separable. 

\eqref{eq:eq} and \eqref{eq:sq} also confirm this: A unitary operation $S$ is PPT if and only if $\trace\mathscr{T}(S^\Gamma)^-=||{S^\Gamma_-}^\dagger(I)||_1=0$, i.e. $(\sum_i \lambda_i)^2=d_A d_B.$ Because each $\lambda_i> 0$, $(\sum_i\lambda_i)^2\geq \sum_i \lambda_i^2$, and they are equal if and only if there's only one $\lambda_i$. Since 
\begin{equation}
\trace U^\dagger U=\trace I=(U|U)=(\sum_i \lambda_i A_i\otimes B_i|\sum_j \lambda_j A_j \otimes B_j)=\sum_i \lambda_i^2=d_A d_B,
\end{equation}
$(\sum_i \lambda_i)^2=d_A d_B$ to if and only if the rank of $U$ is 1, i.e. separable. \hfill~$\square$

\section{Exact Entangling Capacity}\label{sec:bae}
Here we consider deterministic operations from $\mathcal{B}(\mathcal{H}_1^A\otimes \mathcal{H}_1^B)$ to $\mathcal{B}(\mathcal{H}_2^A\otimes \mathcal{H}_2^B)$ and density operators in $\mathcal{B}(\mathcal{H}_1^A\otimes \mathcal{H}_1^B)$. We'd like to answer the question: Under what conditions do the upper and lower bounds from Proposition~\ref{pro} become identical, so that the bounds are the exact entangling capacity? And if so, with which states can we attain the entangling capacity?  
\subsection{Basic Unitary Operators}
Let's start off with what's already known. Among operators on $\mathcal{H}_A\otimes \mathcal{H}_B$, basic unitary operators are those whose Schmidt decompositions $U=\sum_i \lambda_{i} A_{i}\otimes B_{i}$ have all the $A_i$ and $B_i$ proportional to unitary operators \cite{Campbell10}. It was shown all $2\otimes 2$ unitary operators are basic \cite{Campbell10}. The upper and lower bounds for the entangling capacity of a basic unitary operator are identical (with $\widetilde{S^\Gamma}_\pm=S^\Gamma_\pm$), so they are the exact entangling capacity \cite{Campbell10}.

\subsection{Reaching the Entangling Capacity}
To understand the proposition below, it's recommended to read Section~\ref{sec:fispe}, \ref{sec:en}, \ref{sec:dehp} and \ref{sec:opsum} first:
\begin{pro}\label{pro:ecs}
Suppose $\widetilde{S^\Gamma}_\pm=S^\Gamma_\pm,$ that the operator-sum representations of $S^\Gamma_\pm$ are\footnote{The index $i$ depends on $\pm.$}
\begin{equation}
S^\Gamma_\pm(O)=\sum_i c_i^\pm V_i^\pm O {V_i^\pm}^\dagger,\,c_i^\pm>0,\label{eq:sg+-}
\end{equation}
and that for the input state $\rho$
\begin{equation}
{\rho^\Gamma}^\pm=\sum_{i} \ket{\psi^\pm_i}\bra{\psi^\pm_i}\label{eq:rg+-}
\end{equation}
are ensembles\footnote{The index $i$ depends on $\pm$ too. The eigensemble is an obvious choice, but any ensemble will do.} of ${\rho^\Gamma}^\pm$. The upper bound of the entangling capacity given by Proposition~\ref{pro} is reached if and only if
\begin{enumerate}
\item $\mathrm{ran}\left[S^\Gamma_-({\rho^\Gamma}^+)+S^\Gamma_+({\rho^\Gamma}^-)\right]$ and $ \mathrm{ran}\left[S^\Gamma_+({\rho^\Gamma}^+)+S^\Gamma_-({\rho^\Gamma}^-)\right]$ are orthogonal, which is equivalent to the orthogonality of the following vectors:
\begin{equation}
\inner{V_i^\pm \psi_j^\mp}{V_k^+ \psi_l^+}=0 \text{ and } 
\inner{V_i^\pm \psi_j^\mp}{V_k^- \psi_l^-}=0 \;\forall i,j,k,l.\label{eq:oeq}
\end{equation}

\item $\mathrm{ran}\rho^\Gamma$ is a subspace of the eigenspace corresponding to the largest eigenvalue of ${S^\Gamma_\pm}^\dagger(I)$. 
\end{enumerate}

The upper and lower bounds of entangling capacities from Proposition~\ref{pro} are the same if and only if ${S^\Gamma_\pm}^\dagger(I)\propto I$, and the second condition above is satisfied by any state $\rho$ when ${S^\Gamma_\pm}^\dagger(I)\propto I$.
\end{pro}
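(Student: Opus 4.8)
The plan is to reduce the equality of the two bounds to a single scalar identity relating the operator norm and the normalized trace norm of the positive operator $P_-:={S^\Gamma_-}^\dagger(I)$, and then settle it with an elementary eigenvalue estimate. First I would make the comparison concrete. Since we are in the case $\widetilde{S^\Gamma}_\pm=S^\Gamma_\pm$, the state-independent bounds of Proposition~\ref{pro} for logarithmic negativity read $\log(1+2||{S^\Gamma_-}^\dagger(I)||)$ (upper) and $\log\bigl(1+2\,||{S^\Gamma_-}^\dagger(I)||_1/(d_A d_B)\bigr)$ (lower), and the negativity bounds are controlled by the very same two quantities. Because $x\mapsto\log(1+2x)$ is strictly increasing, the two bounds coincide if and only if
\[
||{S^\Gamma_-}^\dagger(I)||=\frac{||{S^\Gamma_-}^\dagger(I)||_1}{d_A d_B}.
\]

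Next I would record the spectral facts. The mapping $S^\Gamma_-$ is CP, so by \eqref{eq:Kr} its adjoint sends $I$ to a positive operator; hence $P_-={S^\Gamma_-}^\dagger(I)\geq 0$ acts on $\mathcal{H}_1^A\otimes\mathcal{H}_1^B$, a space of dimension $d_A d_B$. Consequently $||P_-||$ equals the largest eigenvalue of $P_-$, while $||P_-||_1=\trace P_-$ equals the sum $\sum_i\mu_i$ of its eigenvalues $\mu_1,\dots,\mu_{d_A d_B}$ counted with multiplicity over the full space. The elementary inequality $\sum_i\mu_i\leq (d_A d_B)\max_i\mu_i$ shows $||P_-||_1/(d_A d_B)\leq||P_-||$ always (consistent with lower $\leq$ upper), with equality precisely when every $\mu_i$ equals the maximum, i.e. when $P_-=\mu I$. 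Thus the bounds coincide if and only if ${S^\Gamma_-}^\dagger(I)\propto I$.

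Then I would promote this to the symmetric statement. As $S$ is deterministic, $S^\Gamma$ is HPTP, so the HPTP identity of Section~\ref{sec:dehp} gives ${S^\Gamma_+}^\dagger(I)=I+{S^\Gamma_-}^\dagger(I)$. Therefore ${S^\Gamma_-}^\dagger(I)=c\,I$ holds if and only if ${S^\Gamma_+}^\dagger(I)=(1+c)\,I$, so ${S^\Gamma_-}^\dagger(I)\propto I$ is equivalent to ${S^\Gamma_+}^\dagger(I)\propto I$; writing both as ${S^\Gamma_\pm}^\dagger(I)\propto I$ completes the first assertion. For the second assertion, suppose ${S^\Gamma_\pm}^\dagger(I)=c\,I$ with $c\geq 0$. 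Then every nonzero vector is an eigenvector with the unique (hence largest) eigenvalue $c$, so the eigenspace of the largest eigenvalue is the whole space $\mathcal{H}_1^A\otimes\mathcal{H}_1^B$. Condition~2 of the proposition, namely that $\mathrm{ran}\,\rho^\Gamma$ lie in this eigenspace, is then satisfied by every density operator $\rho$.

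The only point demanding genuine care is dividing the trace by the \emph{full} input dimension $d_A d_B$ rather than by the rank of $P_-$: the average is taken over all $d_A d_B$ eigenvalues, including the zero ones, and it is this that forces $P_-\propto I$ rather than merely forcing the nonzero eigenvalues to be equal. Everything else is a direct application of the spectral theorem (Theorem~\ref{thm:spe}) together with the HPTP identity, so I do not anticipate any deeper obstacle.
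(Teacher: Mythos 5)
Your argument for the final paragraph of the proposition is correct and follows essentially the same route as the paper: the lower bound is the normalized trace $||{S^\Gamma_-}^\dagger(I)||_1/(d_A d_B)$, i.e.\ the average of all $d_A d_B$ eigenvalues of a positive operator (zeros included), the upper bound is the largest eigenvalue, and these agree precisely when ${S^\Gamma_-}^\dagger(I)\propto I$; the HPTP identity ${S^\Gamma_+}^\dagger(I)=I+{S^\Gamma_-}^\dagger(I)$ then transfers the statement between the $+$ and $-$ parts, and when ${S^\Gamma_\pm}^\dagger(I)\propto I$ the top eigenspace is all of $\mathcal{H}_1^A\otimes\mathcal{H}_1^B$, so condition~2 holds vacuously for every $\rho$. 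Your remark about dividing by the full dimension rather than the rank is exactly the right subtlety.

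However, this settles only the last sentence of the proposition. The central claim --- that the upper bound is \emph{reached} by a given state $\rho$ if and only if conditions~1 and~2 hold --- is entirely absent from your proposal, and it is the part requiring the machinery set up in \eqref{eq:sg+-} and \eqref{eq:rg+-}. The paper obtains it in two steps. First, the saturation conditions in Lemma~\ref{lem:lh} (with $p=\infty$, $q=1$, applied to $L=S^\Gamma$ and $H=\rho^\Gamma$): the inequality $\trace L(H)^\pm\leq\trace[\widetilde{L}_+(H^\pm)+\widetilde{L}_-(H^\mp)]$ is an equality if and only if $\mathrm{ran}\left[S^\Gamma_+({\rho^\Gamma}^+)+S^\Gamma_-({\rho^\Gamma}^-)\right]\perp\mathrm{ran}\left[S^\Gamma_-({\rho^\Gamma}^+)+S^\Gamma_+({\rho^\Gamma}^-)\right]$ (this descends from Lemma~\ref{lem:or}, uniqueness of the trace-minimizing decomposition), while the subsequent H\"older-type step \eqref{eq:P1P2} is tight if and only if $\mathrm{ran}(H^++H^-)=\mathrm{ran}\,\rho^\Gamma$ sits inside the top eigenspace of $\widetilde{L}_+^\dagger(I)+\widetilde{L}_-^\dagger(I)$, which for an HPTP mapping is the top eigenspace of ${S^\Gamma_\pm}^\dagger(I)$ --- these are exactly conditions~1 and~2. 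Second, the equivalence of the range orthogonality with the explicit vector equations \eqref{eq:oeq}: from the operator-sum representations and the ensembles one writes $S^\Gamma_\pm({\rho^\Gamma}^+)+S^\Gamma_\mp({\rho^\Gamma}^-)$ as ensembles of the (unnormalized) pure states $V_i^\pm\ket{\psi_j^+}$ and $V_k^\mp\ket{\psi_l^-}$, and Lemma~\ref{lem:oen} says two positive operators have orthogonal ranges if and only if any one ensemble of each is pairwise orthogonal, yielding \eqref{eq:oeq}. Without these two steps your proposal proves a corollary of the proposition, not the proposition itself.
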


\begin{proof}

First, for $P\geq 0$, $||P||_1/d=\trace P/d$, where $d$ is the dimension of the space, is the average of its eigenvalues (zero included), whereas $||P||$ is its largest eigenvalue. Hence they're the same if and only if $P\propto I,$ and $||{S^\Gamma_\pm}^\dagger(I)||_1/(d_A d_B)=||{S^\Gamma_\pm}^\dagger(I)||$ if and only if ${S^\Gamma_\pm}^\dagger(I)\propto I.$

The second condition above is a direct application of Lemma~\ref{lem:lh}, and it's satisfied when ${S^\Gamma_-}^\dagger(I)\propto I$, c.f. the discussion below \eqref{eq:P1P2}. The first condition is also due to Lemma~\ref{lem:lh}. From \eqref{eq:sg+-} and \eqref{eq:rg+-} we acquire ensembles of $S^\Gamma_\pm({\rho^\Gamma}^+)+S^\Gamma_\mp({\rho^\Gamma}^-)$:
\begin{equation*}
S^\Gamma_\pm({\rho^\Gamma}^+)+S^\Gamma_\mp({\rho^\Gamma}^-)=\sum_{i,j} c_i^\pm V_i^\pm \ket{ \psi_j^+}\bra{ \psi_j^+}{V_i^\pm}^\dagger+\sum_{k,l} c_k^\mp V_k^\mp \ket{ \psi_l^-}\bra{ \psi_l^-}{V_k^\mp}^\dagger.  
\end{equation*}
By Lemma~\ref{lem:oen}, $\mathrm{ran}\big[S^\Gamma_-({\rho^\Gamma}^+)+S^\Gamma_+({\rho^\Gamma}^-)\big]\perp \mathrm{ran}\big[S^\Gamma_+({\rho^\Gamma}^+)+S^\Gamma_-({\rho^\Gamma}^-)\big]$ if and only if
\begin{align*}
\inner{V_i^\pm \psi_j^\mp}{V_k^+ \psi_l^+}&=0 \,\forall i,j,k,l\text{ and}\\ 
\inner{V_i^\pm \psi_j^\mp}{V_k^- \psi_l^-}&=0 \,\forall i,j,k,l,
\end{align*}
which is \eqref{eq:oeq}.
\end{proof}

The parts related to the upper bounds only are also applicable to any $\widetilde{S^\Gamma}_\pm$, as a result of Lemma~\ref{lem:lh}. The core of this proposition doesn't rely on operator-sum representations and ensembles of positive operators, but in practical calculation they come in handy. The proposition gives the necessary and sufficient condition for the upper and lower bounds to be identical---it does not matter whether it is unitary or even basic. When it's satisfied, the exact entangling capacity is acquired, and $||S||_{1,\Gamma}$ from Definition~\ref{def:normga} reflects its true entangling capacity.

\subsubsection{PPT states}
If the state $\rho$ is PPT, ${\rho^\Gamma}^-=0$, so $\psi_i^-=0$ and \eqref{eq:rg+-} becomes:
\begin{equation}
\inner{V_i^- \psi_j^+}{V_k^+ \psi_l^+}=0 \;\forall i,j,k,l.\label{eq:oeqppt}
\end{equation}
In particular, \eqref{eq:oeqppt} is the equation that needs to be solved if $\rho$ is separable.
\subsubsection{Pure separable states} 
Given an orthonormal basis $\{\ket{a_i}\}$ of $\mathcal{H}_1^A$, with the pure separable state $\ket{\psi}=\ket{\psi_1}\ket{\psi_2}$, where $\ket{\psi_1}=\sum c_i\ket{a_i}\in\mathcal{H}_1^A$ and $\ket{\psi_2}\in\mathcal{H}_1^B$, the partial transpose of it is still pure and separable:
\begin{equation*}
(\ket{\psi}\bra{\psi})^\Gamma=\ket{\psi_1^*}\bra{\psi_1^*}\otimes\ket{\psi_2}\bra{\psi_2},
\end{equation*}
where $\ket{\psi_1^*}=\sum c_i^*\ket{a_i}.$ Hence, ${\rho^\Gamma}^+=\ket{\psi_1^*}\bra{\psi_1^*}\otimes\ket{\psi_2}\bra{\psi_2}$ and ${\rho^\Gamma}^-=0,$ and \eqref{eq:oeq} (or \eqref{eq:oeqppt}) becomes
\begin{equation}
(\bra{\psi_1^*}\bra{\psi_2}){V_i^-}^\dagger V_j^+(\ket{\psi_1^*}\ket{\psi_2})=0\;\forall i,j.\label{eq:vv}
\end{equation}

\subsection{Examples}\label{sec:ex}

Here I'll provide several examples whereby to demonstrate the method outlined above. The initial states are pure and separable for simplicity. For Proposition~\ref{pro:ecs} to be applicable, $\widetilde{S^\Gamma}_\pm=S^\Gamma_\pm$.
\subsubsection{A $2\otimes 2$ unitary operation}\label{sec:22}

Consider a $2 \otimes 2$\footnote{It means the space $\mathcal{H}_A\otimes \mathcal{H}_B$ has $\mathrm{dim}H_A=\mathrm{dim}H_B=2$.} unitary operation $S(\rho)=U\rho U^\dagger$, where
\begin{equation}
U=\begin{pmatrix}
\cos\alpha & \sin\alpha &0 &0\\
-\sin\alpha & \cos\alpha &0 &0\\
0&0&\cos\beta&\sin\beta\\
0&0&-\sin\beta&\cos\beta
\end{pmatrix}.
\end{equation}
Since all $2\otimes 2$ unitary operators are basic \cite{Campbell10}, we can find:
\begin{equation}
{S^\Gamma_-}^\dagger(I)=\frac{\left|\sin(\beta-\alpha)\right|}{2} I\propto I.\label{eq:SI}
\end{equation}

With the input state $\ket{\psi}=\ket{\psi_1}\otimes \ket{\psi_2}$ for which
\begin{equation}
\ket{\psi_i}=\cos\theta_i\ket{\uparrow}+e^{i\phi_i}\sin\theta_i\ket{\downarrow},\label{eq:state}
\end{equation}
\eqref{eq:vv}, depending on the sign of $\sin(\beta-\alpha)$, becomes
\begin{align*}
\{\bra{\psi_1^*}\bra{\psi_2}{V^-}^\dagger V_i^+\ket{\psi_1^*}\ket{\psi_2} \}=\{&\mp 2i\sec(\beta+\alpha)\cos(2\theta_1)\sin(2\theta_2)\sin\phi_2,2\frac{\cos(2\theta_1)\pm i\sin(2\theta_2)\sin\phi_2}{-1+\cos(\alpha+\beta)\mp\sin(\alpha+\beta)},\nonumber\\
&\mp\sec^2\frac{\beta+\alpha}{2}\frac{\cos(2\theta_1)\mp i\sin(2\theta_2)\sin\phi_2}{\pm 1+\tan((\beta+\alpha)/2)}\},
\end{align*}
taking the top ones when $\sin(\beta-\alpha)\geq 0.$ For all of them to vanish,
\begin{equation}
\theta_1=\pi/4+n\pi/2, \text{ and } \theta_2=m\pi/2 \text{ or }\phi_2=p\pi,\,m,n,p\in \mathbb{Z},\label{eq:22sol}
\end{equation}
e.g. $(1,0,1,0)/\sqrt{2}$. Interestingly, the solution has nothing to do with the values of $\alpha$ and $\beta$.

From \eqref{eq:SI}, this unitary operation is a perfect entangler with respect to negativities (Section~\ref{sec:pee}) when $\beta-\alpha=\pi/2+n\pi$, $n\in\mathbb{Z}$, and a pure separable state will have maximal negativity (for a state in/on $2\times 2$) after this operation if and only if it obeys \eqref{eq:22sol}. 

On the other hand, when $\beta-\alpha=n\pi$, $n\in\mathbb{Z}$, this operation is PPT and therefore entirely non-entangling, because a state $\rho$ on $2\times 2$ is separable if and only if it's PPT by Peres-Horodecki criterion (Theorem~\ref{thm:peres}). However, as Proposition~\ref{pro:sppt} suggests, because this operation is unitary, it's separable if and only if it's PPT, and the exact dimensions don't really matter.

\subsubsection{Generalized CNOT gate}
Here's another $2\otimes 2$ unitary operator:
\begin{equation}
U=\begin{pmatrix}
\cos\alpha & \sin\alpha &0 &0\\
-\sin\alpha & \cos\alpha &0 &0\\
0&0&\cos\beta&\sin\beta\\
0&0&\sin\beta&-\cos\beta
\end{pmatrix}.
\end{equation}
When $\alpha=0$ and $\beta=\pi/2$, this is a controlled-NOT (CNOT) gate \cite{Barenco95,Steane98,Deutsch99}:
\begin{equation*}
\begin{pmatrix}
1 & 0 &0 &0\\
0 & 1 &0 &0\\
0&0&0&1\\
0&0&1&0
\end{pmatrix}.
\end{equation*}
Regardless of $\alpha$ and $\beta$,
\begin{equation}
(S^\Gamma_-)^\dagger(I)=I/2.
\end{equation}
Hence it's a perfect entangler with respect to negativities.

To find the optimal separable pure state \eqref{eq:state}, we need to solve \eqref{eq:vv}:
\begin{subequations}
\begin{align}
0=&-4\cos^2\theta_1 \sec \alpha\sin\beta\left[\csc(\alpha+\beta)+\cos\beta\sec\alpha\left(\cos 2\theta_2\cot(\alpha+\beta)+\cos\phi_2\sin 2\theta_2\right) \right]\nonumber\\
&+4\sin^2\theta_1\left[\cos^2\theta_2\left(-1+2\csc(\alpha+\beta)\sec\alpha\sin\beta \right)+\sin^2\theta_2-\cos\phi_2\sin 2\theta_2\tan\alpha\right],\label{eq:23a}\\
0=&-\cos\beta\cos^2 \theta_2\csc(\alpha+\beta)\sec\alpha\left[3\cos 2\theta_1+\cos(\alpha+2\beta)\cos^2\theta_1\sec\alpha\right]\nonumber\\
&+\cos(2\alpha+\beta)\cos^2\theta_2\csc(\alpha+\beta)\sec\alpha\sin^2\theta_1-2\cos^2\beta\cos^2\theta_1\cos\phi_2\sec^2\alpha\sin 2\theta_2\nonumber\\
&+2\cos\phi_2\sin^2\theta_1\sin 2\theta_2+\sin^2\theta_2\left(-\cos^2\theta_1\sec^2\alpha\sin 2\beta+2\sin^2\theta_1\tan\alpha\right),\label{eq:23b}\\
0=&\cos^2\theta_1\sec^2\alpha\left[\cos(\alpha+\beta)\cos2\theta_2+\cos\phi_2\sin(\alpha+\beta)\sin 2\theta_2\right].\label{eq:23c}
\end{align}
\end{subequations}
From \eqref{eq:23c}, $\cos\theta_1=0$ or $\cos(\alpha+\beta)\cos2\theta_2+\cos\phi_2\sin(\alpha+\beta)\sin 2\theta_2=0.$ When $\cos\theta_1=0$, it can happen that there's no solution, and we are not going to address it.\footnote{For example, with $\cos\theta_1=0$, if $\tan\alpha=0$ and $\beta=\pi/2$, the right side of \eqref{eq:23a} becomes $1$ and \eqref{eq:23b} becomes $\cos\phi_2\sin 2\theta_2=0$. By continuity there are more than one $\alpha$ and $\beta$ such that there's no solution.} When $\cos(\alpha+\beta)\cos2\theta_2+\cos\phi_2\sin(\alpha+\beta)\sin 2\theta_2=0$, 
\begin{equation}
\cos\phi_2=-\cot(\alpha+\beta)\cot 2\theta_2.\label{eq:sol1}
\end{equation}
Replacing $\cos\phi_2$ in \eqref{eq:23a} and \eqref{eq:23b} results in
\begin{subequations}
\begin{align}
0&=\cos 2\theta_1 \csc (\alpha+\beta)\sec \alpha\sin\beta,\\
0&=\cos 2\theta_1 \csc(\alpha+\beta)\sec\alpha\cos\beta.
\end{align}\label{eq:sol2}
\end{subequations}
Therefore from \eqref{eq:sol1} and \eqref{eq:sol2} a solution is\footnote{Since the range of $\cos$ is bounded, it's better to choose $\phi_2$ first and solve for $\theta_2$ than the other way around.}
\begin{equation}
\theta_2=\frac{1}{2}\arctan\left(\frac{\tan(\alpha+\beta)}{\cos\phi_2}\right)\text{ and }\theta_1=\pi/4+n\pi/2,\,n\in\mathbb{Z}.
\end{equation}

For example, for the CNOT gate, i.e. $\alpha=0$ and $\beta=\pi/2$, one of the solutions is $\theta_1=\pi/4$, $\phi_1=0$, $\theta_2=0$, $\phi_2=0$, that is $(\ket{\uparrow}+\ket{\downarrow})\ket{\uparrow}/\sqrt{2}$, which after the CNOT gate will be transformed into the maximally entangled state $(\ket{\uparrow}\ket{\uparrow}+\ket{\downarrow}\ket{\downarrow})/\sqrt{2}.$

\subsubsection{A $2\otimes 3$ unitary operation}
Now consider a $2\otimes 3$ unitary operation $S(\rho)=U\rho U^\dagger$, with:
\begin{equation}
U=\left(\begin{array}{c|c}
I_{3}&0_{3}\\\hline
0_{3}&\begin{matrix}
\cos\beta&\sin\beta&0\\
-\sin\beta&\cos\beta&0\\
0&0&1
\end{matrix}
\end{array}\right)\cdot
\left(\begin{array}{c|c}
I_{4}&0_{4}\\\hline
0_{4}&\begin{matrix}
\cos\alpha&\sin\alpha\\
-\sin\alpha&\cos\alpha&
\end{matrix}
\end{array}\right),
\end{equation}
where $I_n$ and $0_n$ denote $n\times n$ identity and zero matrices respectively.

As $2\otimes 3$ unitary operations in general are not basic, the upper and lower bounds normally do not coincide. For them to be the same, we need ${S^\Gamma_-}^\dagger(I)\propto I,$ i.e. the maximum and minimum eigenvalues of ${S^\Gamma_-}^\dagger(I)\propto I$ should be identical. From Fig.~\ref{fig:2x3} they seem to be the same at $1/2$ when $\alpha=2\pi/3$ and $\beta=0$, which can indeed be verified analytically, so
\begin{equation*}
\text{at }\alpha=2\pi/3\text{ and }\beta=0,\,{S^\Gamma_-}^\dagger(I)=\frac{1}{2}I.
\end{equation*}
At this particular point, the operation is a perfect entangler, and obeying \eqref{eq:oeq} alone is necessary and sufficient to reach the entangling capacity.

\begin{figure}[hbtp!]
\centering
\includegraphics[width=0.6\linewidth]{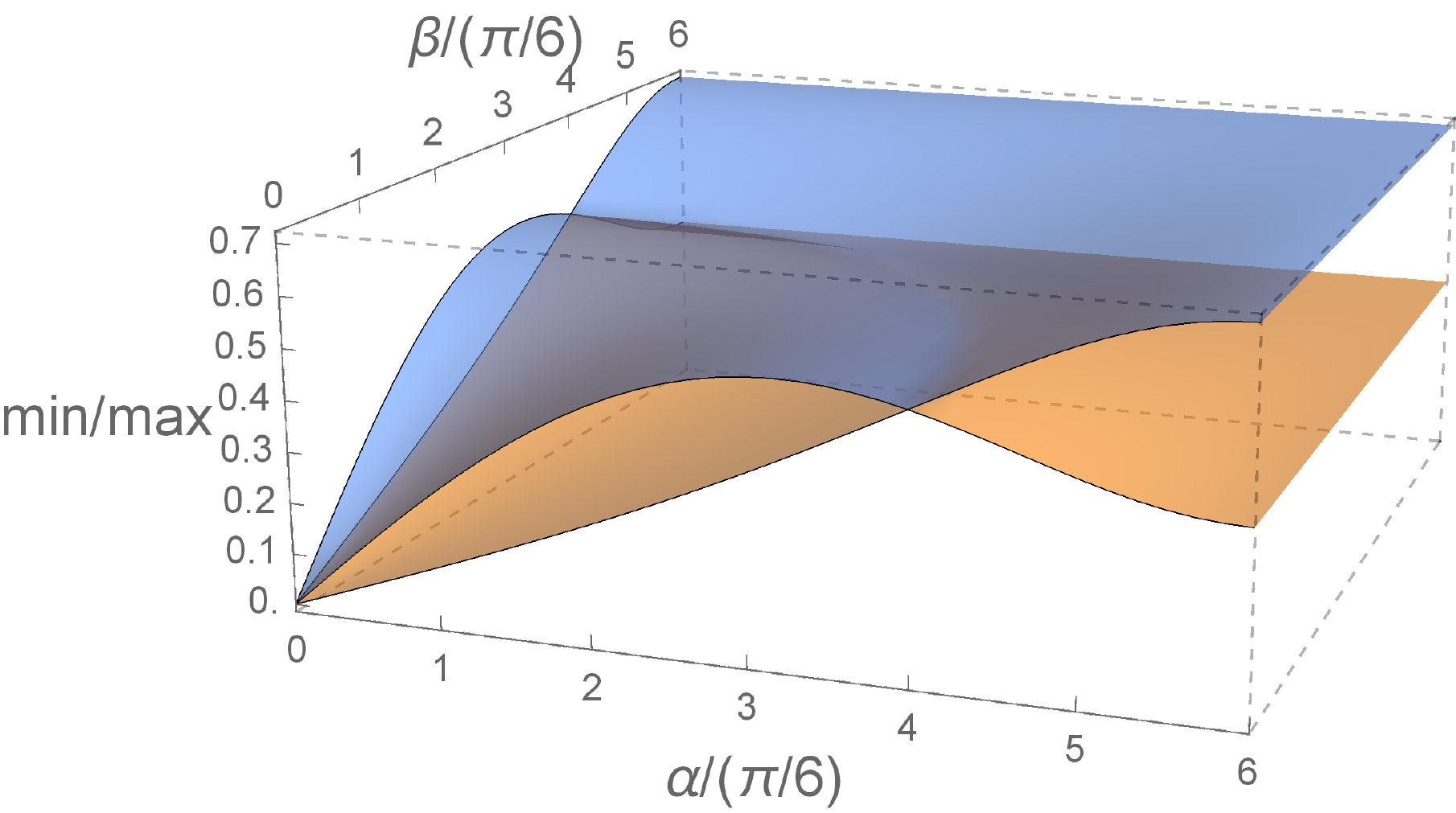}
\caption[The max/min eigenvalues of ${S^\Gamma_-}^\dagger(I)$]{The max/min eigenvalues of ${S^\Gamma_-}^\dagger(I)$, with eigenvalue 0 taken into account.}
\label{fig:2x3}
\end{figure}

Assume again the state is pure and separable
\begin{equation}
\ket{\psi}=(\cos\theta_1\ket{\uparrow}+\sin\theta_1\ket{\downarrow})\otimes(\cos\theta_2\sin\phi\ket{0}+\sin\theta_2\sin\phi\ket{1}+\cos\phi\ket{2}).
\end{equation}
At $\alpha=2\pi/3$ and $\beta=0,$ \eqref{eq:vv} becomes
\begin{subequations}
\begin{align}
0=&4\sin^2\theta_1-3f,\label{eq:e1}\\
0=&4g\sin^2\theta_1 -f,\label{eq:e2}\\
0=&\cos^2\theta_1\left(1-3g\right),\label{eq:e3}
\end{align}
\end{subequations}
where
\begin{equation*}
f=\cos^2\theta_1\left(1+g\right),\,g=\sin^2\phi\cos^2\theta_2.\label{eq:fg}
\end{equation*}
From \eqref{eq:e3}, either $\cos^2\theta_1= 0$ or $1-3 g=0$. However, $\cos^2\theta_1\neq 0$; otherwise $f$ would vanish, and from \eqref{eq:e1}, $\sin^2\theta_1$ would be zero, a contradiction. Hence, $g=1/3,$ so $f=4\cos^2\theta_1/3$, and \eqref{eq:e2} is satisfied. \eqref{eq:e1} now becomes
\begin{equation*}
4\sin^2\theta_1-4\cos^2\theta_1=0=-4\cos 2\theta_1.
\end{equation*}
Thus the solution is 
\begin{equation*}
\sin^2\phi\cos^2\theta_2=1/3 \text{ and }\cos2\theta_1=0,
\end{equation*}
e.g. $1/\sqrt{6}(\ket{\uparrow}+\ket{\downarrow})(\ket{0}+\sqrt{2}\ket{2})$.

\section{Conclusion}

We found upper and lower bounds for entangling capacities, generalizing the work by Campbell \cite{Campbell10}, and explained their physical  and geometric significance, as illustrated by Figure~\ref{fig:norm} and \ref{fig:pro2}. Giving them norms and metrics, Definition~\ref{def:normga}, may help up better comprehend entanglement and PPT. Let's  reemphasize that $||\cdots||_{1,\Gamma}$ is a valid norm, like trace norm, on  linear mappings or operators, and the norms are bounded from above and below by one another (Section~\ref{sec:eq}).

Our result reflects that operations and states are tight-knit: The entanglement or negativity of operations is correlated with that of states. In Ref.~\cite{Zanardi01} a similar phenomenon was  observed for entangling power in terms of linear entropy. We expect this to happen in other facets of states and operations: For example, a more non-separable operation could make states more non-separable, if a suitable measure of non-separability is used \cite{Chitambar12}.

Entanglement of a dynamical system can be studied with this method in a state-independent way. For example, if the $2\otimes 2$ unitary operator in Section~\ref{sec:22} has time-dependent $\alpha(t)$ and $\beta(t)$, we can investigate how entangling the system is as it evolves, and in the case where the optimal states don't depend on the parameters of the operation, like in Section~\ref{sec:22}, we can track the exact entanglement at any time $t$ for those states, and the entanglement rate can be obtained \cite{Dur01,Bennett03}.

We also found for unitary operations and pure states separability and PPT-ness are identical. Hence, the deviation between PPT-ness and separability starts with mixed states \cite{Horodecki97}, and non-unitary operations \cite{Horodecki97,Chitambar14}.

Because PPT operations are defined to be CP after partial transposition, ancillas come in naturally: As we decompose an HP mapping into CP rather than positive parts, the upper bounds are unaltered with the addition of an ancilla. Furthermore, a non-PPT operation may be positive after partial transposition, and not be able to create any negativity due to positivity, but after adding an ancilla, it will fail to become positive after partial transposition. Studying operations through Choi isomorphism makes ancillas a natural fit.

\chapter{Density Operators on $L^2(\mathbb{R}^N)$ and Gaussian States}\label{ch:gaf}

\section{Basics}

\subsection{Covariance Matrix}\label{sec:cov}
\begin{defi}[\textbf{Covariances and variances}]
The covariance between two random variables $X$ and $Y$ is defined as \cite{deGosson,Gut}
\begin{equation}
\sigma(X,Y):=\langle (X-\langle X\rangle)(Y-\langle Y\rangle) \rangle=\langle XY\rangle-\langle X\rangle\langle Y\rangle,
\end{equation}
where $\langle \cdot \rangle $ is the expectation value of a random variable over a probability distribution. The variance of a random variable $X$ is
\begin{equation}
\sigma(X,X):=\langle (X-\langle X\rangle)(Y-\langle Y\rangle) \rangle=\langle X^2\rangle-\langle X\rangle^2.
\end{equation}
$\sqrt{\sigma (X,X)}$ is the \textbf{standard deviation} of $X$. 
\end{defi}

Let $(X_1,\cdots,X_n)$ be a vector of random variables. The matrix formed by $\sigma (X_i,X_j)_{1\leq i\leq n,1\leq j\leq n}$ is called the \textbf{covariance matrix} of this random vector, which is a $2n\times 2n$ real symmetric matrix. The off-diagonal elements are covariances, whereas diagonal ones are variances. By Jensen's inequality \cite{PapaRudin,Cover}, the variance of a random variable is non-negative, since $x\mapsto x^2$ is a convex function; more generally, a covariance matrix is always \emph{positive semi-definite} \cite{Gut}. Here's a theorem regarding vectors of random variables \cite{Gut}:
\begin{thm}\label{thm:rv}
Let $\mathbf{X}$ and $\mathbf{Y}$ be vectors of random variables. If $\mathbf{X}$ has a mean vector $\langle \mathbf{X} \rangle$ and covariance matrix $\sigma$, and $\mathbf{Y}=\mathbf{a}+T\mathbf{X}$, where $T$ is an operator/matrix, then $\mathbf{Y}$ has a mean vector $\mathbf{a}+T\langle \mathbf{X} \rangle$ and a covariance matrix $T\sigma T\tr$.
\end{thm}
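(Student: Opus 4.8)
The goal is Theorem~\ref{thm:rv}: if $\mathbf{Y}=\mathbf{a}+T\mathbf{X}$, then $\langle\mathbf{Y}\rangle = \mathbf{a}+T\langle\mathbf{X}\rangle$ and the covariance matrix of $\mathbf{Y}$ is $T\sigma T\tr$. The entire argument rests on linearity of the expectation value $\langle\cdot\rangle$, which I take for granted as an elementary property of integration against a probability distribution.

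The plan is to proceed in two short stages. First I would compute the mean vector. Writing the $i$-th component $Y_i = a_i + \sum_k T_{ik} X_k$ and applying linearity of the expectation, $\langle Y_i\rangle = a_i + \sum_k T_{ik}\langle X_k\rangle$, which is exactly the $i$-th component of $\mathbf{a}+T\langle\mathbf{X}\rangle$. Second, and this is the substantive part, I would compute the covariance matrix of $\mathbf{Y}$ directly from the definition $\sigma(Y_i,Y_j)=\langle(Y_i-\langle Y_i\rangle)(Y_j-\langle Y_j\rangle)\rangle$. The key observation is that the constant shift $\mathbf{a}$ cancels:
\begin{equation}
Y_i-\langle Y_i\rangle = \sum_k T_{ik}(X_k-\langle X_k\rangle),
\end{equation}
so the $a_i$ terms never enter the covariance at all.

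Substituting this into the definition and again invoking linearity of $\langle\cdot\rangle$ gives
\begin{equation}
\sigma(Y_i,Y_j)=\Big\langle \sum_k T_{ik}(X_k-\langle X_k\rangle)\sum_l T_{jl}(X_l-\langle X_l\rangle)\Big\rangle
=\sum_{k,l} T_{ik}\,\sigma(X_k,X_l)\,T_{jl}.
\end{equation}
Recognizing $T_{jl}=(T\tr)_{lj}$, the right-hand side is $\sum_{k,l} T_{ik}\,\sigma_{kl}\,(T\tr)_{lj}=(T\sigma T\tr)_{ij}$, which is precisely the claimed matrix identity. I would phrase the computation in index form since it makes the cancellation of $\mathbf{a}$ and the emergence of the transpose completely transparent; a basis-free version would require writing the covariance as $\langle (\mathbf{X}-\langle\mathbf{X}\rangle)(\mathbf{X}-\langle\mathbf{X}\rangle)\tr\rangle$ and using $(T\mathbf{Z})(T\mathbf{Z})\tr = T\mathbf{Z}\mathbf{Z}\tr T\tr$, which is equivalent but hides less of the detail.

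I do not anticipate a genuine obstacle here, as the result is a routine consequence of bilinearity of covariance together with linearity of expectation. The only point requiring a modicum of care is the bookkeeping that produces $T\tr$ rather than $T$ on the right factor: one must keep the index on the second copy of $T$ in its natural (row-index) position and then reinterpret it as a transpose. I would state this cancellation of $\mathbf{a}$ and the index reshuffling explicitly so the appearance of the transpose is not left as a leap. If desired, I would also remark that this theorem is exactly the abstract mechanism behind the transformation law of the covariance matrix $\sigma\mapsto S\sigma S\tr$ under a symplectic (or more general linear) map on phase space, connecting it to the covariance matrix formalism introduced for Gaussian states in Section~\ref{sec:cov}.
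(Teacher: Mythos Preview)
Your proof is correct and is the standard componentwise argument via linearity of expectation and bilinearity of covariance. Note that the paper does not actually prove this theorem: it is stated as a well-known result with a citation to Gut's probability text, in keeping with the paper's convention that results labeled ``Theorem'' are established facts quoted without proof.
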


For a quantum observable $O$, given a state $\rho$ the variance is then \cite{Simon94}
\begin{equation}
\langle O^2\rangle- \langle O\rangle^2=\trace (O^2\rho)-\trace (O\rho)^2.
\end{equation}
assuming $O\rho$ and $O^2\rho$ are still trace-class.\footnote{If $O$ is a bounded operator, they will be trace-class; if not, they may not be; see Section~\ref{sec:HSinner}.} The covariance of two observables $O_1$ and $O_2$ over a state $\rho$ is
\begin{equation}
\trace ( \frac{O_1O_2+O_2 O_1}{2} \rho)-\langle O_1\rangle\langle O_2\rangle,
\end{equation}
where we make similar assumptions. That we use the operator $(O_1 O_2+O_2 O_1)/2$ can be regarded as quantization of the classical variable $o_1 o_2$, if $O_1$ and $O_2$ have classical counterparts $o_1$ and $o_2$.

Let $\boldsymbol{\xi}$ be a vector of quadrature observables $(x_1,p_1,\cdots,x_n,p_n)$. The covariance matrix of $\boldsymbol{\xi}$, as defined above, is
\begin{equation}
\sigma_{ij}=\sigma(\xi_i,\xi_j).
\end{equation}
Note in previous chapters we grouped the position coordinates and momentum coordinates into two bunches, but here it's more convenient to put the position and momentum coordinates of each mode together. Hence some matrices will look slightly different, but other than that it makes no difference. From now on whenever we mention a covariance matrix it's always that of $\boldsymbol{\xi}$.

\subsection{Symplectic Spectrum}\label{sec:syms}
\begin{thm}[\textbf{Williamson's theorem}]
For a positive-definite symmetric real $2n\times 2n$ matrix $M$, there exists $S\in Sp(n)$ such that
\begin{equation}
S M S\tr=\mathrm{diag}(\nu_1,\nu_1,\nu_2,\nu_2,\cdots,\nu_n,\nu_n),\label{eq:Wi}
\end{equation}
which diagonal matrix is unique up to reordering. $\nu_i$ are called the \textbf{symplectic eigenvalues} of $M$, and $\pm i\nu_j$ are the eigenvalues of $\Omega M^{-1}$, where
\begin{equation}
\Omega=\mathrm{diag}(\begin{pmatrix}
0 & 1\\ -1 &0
\end{pmatrix},\begin{pmatrix}
0 & 1\\ -1 &0
\end{pmatrix},\dotsc).
\end{equation}
\end{thm}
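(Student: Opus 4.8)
The plan is to reduce Williamson's theorem to the orthogonal normal form of a real antisymmetric matrix, exploiting that $M$ is positive-definite. Since $M$ is symmetric and positive-definite, it admits a unique symmetric positive-definite square root $M^{1/2}$. First I would form the matrix $A := M^{1/2}\Omega M^{1/2}$ and observe that it is real and antisymmetric, because $A\tr = M^{1/2}\Omega\tr M^{1/2} = -A$ (using $\Omega\tr = -\Omega$ and $M^{1/2}$ symmetric), and that it is nonsingular, as $M$ and $\Omega$ are both invertible.

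Next I would put $A$ into canonical form. Being antisymmetric, $A$ is normal, since $A A\tr = A\tr A = -A^2$; hence by the spectral theorem (Theorem~\ref{thm:spe}) applied to its complexification, its eigenvalues are purely imaginary and occur in conjugate pairs $\pm i\nu_j$, with each $\nu_j>0$ because $A$ is nonsingular. Translating this back to the reals gives an orthogonal $O$ with
\begin{equation*}
O\tr A O = \mathrm{diag}(\nu_1 J,\dots,\nu_n J), \qquad J := \begin{pmatrix} 0 & 1\\ -1 & 0 \end{pmatrix},
\end{equation*}
and, writing $\Lambda := \mathrm{diag}(\nu_1,\nu_1,\dots,\nu_n,\nu_n)$, the right-hand side is precisely $\Lambda^{1/2}\Omega\Lambda^{1/2}$, since each $2\times 2$ block reads $\sqrt{\nu_j}\,J\,\sqrt{\nu_j} = \nu_j J$.

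Then I would set $S := \Lambda^{1/2} O\tr M^{-1/2}$ and verify the two required identities by direct substitution. The diagonalization is immediate: $S M S\tr = \Lambda^{1/2} O\tr M^{-1/2} M M^{-1/2} O \Lambda^{1/2} = \Lambda^{1/2} O\tr O \Lambda^{1/2} = \Lambda$, which is the claimed form \eqref{eq:Wi}. For symplecticity I would combine the inverse relation $(M^{1/2}\Omega M^{1/2})^{-1} = -M^{-1/2}\Omega M^{-1/2}$ with the canonical form $O\tr A O = \Lambda^{1/2}\Omega\Lambda^{1/2}$ to deduce $O\tr M^{-1/2}\Omega M^{-1/2} O = \Lambda^{-1/2}\Omega\Lambda^{-1/2}$; this yields $S\Omega S\tr = \Lambda^{1/2}(\Lambda^{-1/2}\Omega\Lambda^{-1/2})\Lambda^{1/2} = \Omega$, so $S\in Sp(n)$.

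Finally, for the spectral characterization and uniqueness, I would note that $\Omega M$ is similar, via conjugation by $M^{1/2}$, to $A = M^{1/2}\Omega M^{1/2}$, and therefore shares its spectrum $\{\pm i\nu_j\}$; the same similarity applied to $M^{-1}$ identifies the $\nu_j$ through the purely imaginary spectrum of $\Omega M^{-1}$ as well. Since a spectrum is invariant under similarity and unchanged by reordering, the multiset $\{\nu_j\}$ depends only on $M$, which gives uniqueness of the diagonal form up to permutation of the pairs. The main obstacle I anticipate is the bookkeeping in this last interplay: arranging the orthogonal block-diagonalization of $A$ so that the block structure is compatible with $\Omega$ and then confirming that the \emph{single} matrix $S$ simultaneously achieves both $S M S\tr = \Lambda$ and $S\Omega S\tr = \Omega$; the positive-definiteness of $M$ is what makes $M^{1/2}$ (and the reciprocal identity for $A^{-1}$) available and forces $\nu_j>0$, so it must be used at exactly these points.
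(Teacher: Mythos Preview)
Your proof is correct and follows the standard route (via the antisymmetric matrix $M^{1/2}\Omega M^{1/2}$ and its orthogonal block-diagonalization), but there is nothing to compare it against: the paper states Williamson's theorem as a classical result without proof, in keeping with its convention that items labeled ``Theorem'' are well-established facts cited from the literature (here, de~Gosson). So your argument supplies what the paper deliberately omits.

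One small caution on the final spectral claim. You correctly show that $\Omega M$ is similar to $A=M^{1/2}\Omega M^{1/2}$ and hence has eigenvalues $\pm i\nu_j$. However, the statement as written in the paper asserts that $\pm i\nu_j$ are the eigenvalues of $\Omega M^{-1}$; a quick check with $M=\Lambda$ shows $\Omega\Lambda^{-1}$ has eigenvalues $\pm i/\nu_j$, not $\pm i\nu_j$. Your sentence ``the same similarity applied to $M^{-1}$ identifies the $\nu_j$ through the purely imaginary spectrum of $\Omega M^{-1}$ as well'' glosses over this: the spectrum of $\Omega M^{-1}$ does determine the $\nu_j$, but as reciprocals. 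This appears to be a typo in the theorem statement itself (it should read $\Omega M$), so your handling is understandable, but you should flag the discrepancy rather than assert both versions hold identically.
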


\eqref{eq:Wi} is refereed to as the \textbf{Williamson normal/diagonal form} of $M$. Given the symplectic eigenvalues, $\mathrm{Spec}_\omega(M):=\{\nu_1,\nu_2,\cdots,\nu_n\}$ is called the \textbf{symplectic spectrum} of $M$.\footnote{The subscript $\omega$ refers to the symplectic form $\omega$.} Remark that the theorem doesn't say that the symplectic matrix that diagonalizes $M$ is unique, but the symplectic spectrum is the same; more generally, the symplectic spectrum is a symplectic invariant: $\mathrm{Spec}_\omega(M)=\mathrm{Spec}_\omega(SMS\tr)$ for any $S\in Sp(n)$.

\subsection{Gaussian State}
\begin{defi}[\textbf{Gaussian state}]
A quantum density operator $\rho$ on $L^2(\mathbb{R}^n)$ is said to be a Gaussian state if its Wigner quasi-probability distribution $W_\rho$ is Gaussian.
\end{defi}

The density matrix of a Gaussian state, namely the (integral) kernel of $\rho$, is consequently also of a Gaussian form, but the quadrature polynomial of the exponent may contain non-real coefficients. When the state is pure, studying a Gaussian state via the density matrix isn't a hindrance, but calculation can get complicated when dealing with general (mixed) Gaussian states. In Appendix~\ref{app:entbos} the reader can have a look at the calculation involved in this approach, where useful results could be obtained because of extra assumptions.

Given a Gaussian state, we can remove its first moments by local unitary operations, which are exactly translations of positions and momenta and which don't affect the covariances and variances \cite{Simon94}.\footnote{And a Gaussian state keeps Gaussian after translations.} Then a Gaussian state $\rho$ can be totally characterized by its covariance matrix $\sigma$ \cite{Simon00,Adesso14}:
\begin{equation}
W_\rho(\boldsymbol{\xi})=\mathcal{N}\exp\left(-\frac{1}{2}\boldsymbol{\xi}\tr\sigma^{-1}\boldsymbol{\xi}\right),\label{eq:Gau}
\end{equation} 
where the normalization factor $\mathcal{N}$ depends on $\sigma$ and the number of modes. Because the distribution is Gaussian, the covariance matrix is \emph{positive definite}. The covariance matrix alone can represent a Gaussian state, and sometimes we'll just call the covariance matrix $\sigma$ the ``state.''

On the contrary, given any positive-definite covariance matrix, the Weyl quantization (times $(2\pi \hbar)^n$) of a phase-space Gaussian like \eqref{eq:Gau} isn't necessarily a Gaussian state---being normalized and Schwartz implies it is trace-class and has unit trace; however, the operator may not be positive, so it may fail to be a density operator. This problem will be answered in the next subsection.

\subsection{Valid Quantum States}

For a density operator on $L^2(\mathbb{R}^n)$, because of the commutation relation $[x_i,p_j]=i\hbar\delta_{ij}$, there is a condition on its covariance matrix of $\boldsymbol{\xi}=(x_1,p_1,\cdots,x_n,p_n)$---Violating this condition implies the operator isn't a density operator. Simon et al. proved that \cite{Simon94,deGosson,Arvind95} 
\begin{thm}\label{thm:valid}
A trace-class self-adjoint $\rho$ on $L^2(\mathbb{R}^n)$ with $\trace \rho=1$ is a valid quantum state, i.e. $\rho\geq 0$ only if its covariance matrix $\sigma$ satisfies
\begin{equation}
\sigma+i\hbar\Omega/2\geq 0,\label{eq:goodstate}
\end{equation}
which is equivalent to each symplectic eigenvalue of $\sigma$ being no smaller than $\hbar/2$. This condition becomes necessary and sufficient for a phase-space Gaussian, namely an operator whose Weyl symbol (or Wigner quasi-probability distribution) is Gaussian.
\end{thm}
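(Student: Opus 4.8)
The plan is to break Theorem~\ref{thm:valid} into three independent pieces: (i) the \emph{necessity} of $\sigma+i\hbar\Omega/2\geq 0$ for an arbitrary valid state, (ii) the purely linear-algebraic \emph{equivalence} of this matrix inequality with the symplectic-eigenvalue bound $\nu_j\geq\hbar/2$, and (iii) the \emph{sufficiency} of the condition once $\rho$ is known to be a phase-space Gaussian. Piece (i) gives the ``only if'' for general states, and combined with (iii) it upgrades to ``if and only if'' in the Gaussian case.

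For necessity I would use only that $\rho\geq 0$ forces $\trace(\rho\,A^\dagger A)=\langle A^\dagger A\rangle\geq 0$ for every operator $A$. Writing $\delta\xi_j:=\xi_j-\langle\xi_j\rangle$ (self-adjoint) and choosing $A=\sum_k c_k\,\delta\xi_k$ for an arbitrary $\mathbf{c}\in\mathbb{C}^{2n}$, I would expand
\begin{equation*}
\langle A^\dagger A\rangle=\sum_{j,k}c_j^* c_k\,\langle\delta\xi_j\,\delta\xi_k\rangle
=\sum_{j,k}c_j^* c_k\Big(\sigma_{jk}+\tfrac{i\hbar}{2}\Omega_{jk}\Big)
=\mathbf{c}^\dagger\big(\sigma+\tfrac{i\hbar}{2}\Omega\big)\mathbf{c},
\end{equation*}
where I split $\langle\delta\xi_j\,\delta\xi_k\rangle$ into its symmetric part $\sigma_{jk}$ and antisymmetric part $\tfrac12\langle[\delta\xi_j,\delta\xi_k]\rangle=\tfrac{i\hbar}{2}\Omega_{jk}$ via the canonical commutation relations $[\xi_j,\xi_k]=i\hbar\Omega_{jk}$. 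Since $\mathbf{c}$ is arbitrary and the left side is non-negative, $\sigma+i\hbar\Omega/2\geq 0$. The care point here is existence of the second moments, which is guaranteed by the trace-class hypothesis together with the assumption that $\sigma$ is well defined.

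For the equivalence in (ii) I would invoke Williamson's theorem (Section~\ref{sec:syms}) to obtain $S\in Sp(n)$ with $S\sigma S\tr=D=\mathrm{diag}(\nu_1,\nu_1,\dots,\nu_n,\nu_n)$, and use that symplectic matrices satisfy $S\Omega S\tr=\Omega$. Congruence by the invertible real matrix $S$ preserves the inertia of a Hermitian matrix (Sylvester's law), so $\sigma+\tfrac{i\hbar}{2}\Omega\geq 0$ holds iff $S(\sigma+\tfrac{i\hbar}{2}\Omega)S\tr=D+\tfrac{i\hbar}{2}\Omega\geq 0$. The latter is block-diagonal in $2\times2$ blocks $\left(\begin{smallmatrix}\nu_j & i\hbar/2\\ -i\hbar/2 & \nu_j\end{smallmatrix}\right)$, each positive semidefinite precisely when $\nu_j\geq 0$ and $\nu_j^2-\hbar^2/4\geq 0$; since $\nu_j>0$ for a positive-definite $\sigma$, this reduces to $\nu_j\geq\hbar/2$.

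The sufficiency in (iii) is the part I expect to be the main obstacle, and I would handle it through the metaplectic correspondence (Theorem~\ref{thm:meta}). Taking the symplectic $S$ from Williamson's theorem, there is a unitary $\hat S\in Mp(n)$ with $\pi_{Mp}(\hat S)=S$, and since unitary conjugation preserves positivity, $\rho\geq 0$ iff the diagonalized Gaussian $\rho'$ (with covariance $D$) is positive. The Gaussian $\rho'$ factorizes into single-mode Gaussians with covariance $\mathrm{diag}(\nu_j,\nu_j)$, which I would identify as thermal states: parametrizing $\nu_j=\tfrac{\hbar}{2}\coth(\beta_j/2)$, the bound $\nu_j\geq\hbar/2$ corresponds to $\beta_j\geq 0$, and the thermal state $\rho_{\mathrm{th}}\propto\sum_k e^{-\beta_j k}\ket{k}\bra{k}$ is then manifestly positive (with $\nu_j=\hbar/2$ giving the pure ground state). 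A tensor product of positive operators is positive, so $\rho'\geq 0$ and hence $\rho\geq 0$. The delicate steps are verifying that the Williamson normal form genuinely corresponds to a product of thermal states at the level of the Wigner function (so that the parametrization $\nu_j=\tfrac{\hbar}{2}\coth(\beta_j/2)$ is the correct dictionary) and confirming that the metaplectic lift realizes the symplectic diagonalization as an honest unitary conjugation of operators on $L^2(\mathbb{R}^n)$.
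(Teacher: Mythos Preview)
Your proposal is a correct and standard proof of this result, but note that the paper does not actually prove Theorem~\ref{thm:valid}: it is stated as a well-established result attributed to Simon et~al.\ with citations \cite{Simon94,deGosson,Arvind95}, consistent with the paper's convention that items labeled ``Theorem'' are documented results from the literature rather than original contributions. There is therefore no in-paper proof to compare against; your three-part argument (positivity of $\langle A^\dagger A\rangle$ for necessity, Williamson plus Sylvester for the symplectic-eigenvalue equivalence, and metaplectic reduction to thermal states for Gaussian sufficiency) is exactly the approach one finds in those references.
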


This theorem is actually the uncertainty relation in another guise. Unlike some papers $\hbar$ is kept in this thesis, so whichever convention is adopted or whichever system is considered the result can be easily adjusted.

\subsection{Entanglement of a Density Operator on $L^2(\mathbb{R}^n)$}

\subsubsection{Transposition, partial transposition and separability}
The transposition of a density operator on $L^2(\mathbb{R}^n)$ corresponds to a reversal of momentum of the phase space: The Weyl symbol of $\rho\tr$ is the phase-space function $(\mathbf{x},\mathbf{p})\mapsto W_\rho(\mathbf{x},-\mathbf{p})$ \cite{Simon00}. Hence for a multipartite state a reversal of momentum of one party is essentially a partial transposition. Caution that since the reversal of momenta isn't a symplectic transform, Theorem~\ref{thm:meta} can't be applied, and $\rho\tr$ and $\rho$ aren't unitarily equivalent.

By Peres criterion, all separable states are PPT, so by Theorem~\ref{thm:valid} all separable states satisfy the following condition:
\begin{equation}
\widetilde{\sigma}+i\hbar\Omega/2\geq 0,\label{eq:Simon}
\end{equation}
where, henceforth, a tilde on top indicates the quantity is of the partially transposed operator, or equivalently of the phase-space function with some of the momentum coordinates reversed. \eqref{eq:Simon} is known as the \textbf{Peres-Horodecki-Simon criterion} \cite{Peres96,Horodecki96,Simon00}. This is a necessary condition for separability, but not sufficient in general.

\subsubsection{Entanglement of a Gaussian state}\label{sec:eG}
It was proved by Simon \cite{Simon00,Werner01} that for a two-mode Gaussian state \eqref{eq:Simon} is a necessary and sufficient condition for it to be separable; namely it is separable if and only if its partial transpose is positive, or if and only if its negativity vanishes. 

For a Gaussian state $\rho$, the logarithmic negativity between party A and B, which may contain more than one mode each, with the covariance matrix of $\rho^\Gamma$ being $\widetilde{\sigma}$ is
\begin{equation}
E_L=\sum_i \max \left(\log \frac{\hbar}{2\widetilde{\nu}_i},0\right),\label{eq:LGau}
\end{equation}
where $\widetilde{\nu}_i$ are the symplectic eigenvalues of $\widetilde{\sigma}$ \cite{Vidal02}. 
\subsubsection{Negativity of a two-mode Gaussian}
For a two-mode Gaussian state, let's suppose its covariance matrix is
\begin{equation}
\sigma=\begin{pmatrix}
A & C \\ C\tr & B 
\end{pmatrix},
\end{equation}
$A$, $B$ and $C$ being $2\times 2$ submatrices. There's an invariant under local symplectic transformation \cite{Serafini04}:
\begin{equation}
\Delta:= \det A+\det B+2\det C.\label{eq:delta}
\end{equation}

The symplectic eigenvalues of $\sigma$ can be determined by \cite{Serafini04}
\begin{equation}
\nu^\pm=\sqrt{\frac{\Delta\pm\sqrt{\Delta^2-4\det\sigma}}{2}}.
\end{equation}
Similarly, that of the partially transposed state $\widetilde{\sigma}$ are
\begin{equation}
\widetilde{\nu}^\pm=\sqrt{\frac{\widetilde{\Delta}\pm\sqrt{\widetilde{\Delta}^2-4\det\sigma}}{2}};\label{eq:num}
\end{equation}
note $\det\sigma=\det\widetilde{\sigma}$. Since \eqref{eq:goodstate} basically states that the symplectic eigenvalues are no smaller than $\hbar/2$, $\widetilde{\nu}_+\geq \hbar/2$ by $\det\sigma=\det\widetilde{\sigma}$, and \eqref{eq:LGau} thus becomes
\begin{equation}
E_{L}=\max\left(\log\frac{\hbar}{2\widetilde{\nu}^-},0\right);\label{eq:LG}
\end{equation}
namely the smaller symplectic value of $\widetilde{\sigma}$ alone decides the negativity.

\section{Multisymmetric Gaussian states}
\subsection{Symmetric Gaussian States, and the Standard Form}
Consider the space $L^2(\mathbb{R}^N)$. Let's assume the Gaussian state has symmetric modes, which can be characterized by its $2N\times 2N$ covariance matrix \cite{Adesso04,Serafini05}: 
\begin{equation}
\sigma_N'=\begin{pmatrix}
\alpha & \beta  & \cdots & \beta \\
\beta  & \alpha& \beta & \vdots \\
\vdots & \beta & \ddots  & \beta \\
\beta  & \cdots & \beta & \alpha
\end{pmatrix},\label{eq:symmat}
\end{equation}
where $\alpha$ and $\beta$ are $2\times 2$ real and symmetric submatrices, and the subscript $N$ is for the total number of modes. By performing $N$ appropriate local symplectic transformations on each of the $N$ modes, corresponding to $N$ local unitary transformations on the density operator,\footnote{Such transforms are actually metaplectic. We will have a proper discussion about them in Section~\ref{sec:or}.} the covariance matrix \eqref{eq:symmat} can always be cast into the ``standard form'' \cite{Duan00,Simon00,Serafini04,Adesso04_2,Serafini05}:
\begin{equation}
\sigma_N'\mapsto \sigma_N:\,\alpha\mapsto\text{diag}(a,a),\, \beta\mapsto\text{diag}(b,c). \label{eq:abc}
\end{equation}
Because local unitary operations don't change entanglement, as far as only entanglement and other properties that are invariant under local unitary operations are concerned, regardless of the original $\sigma_N'$, the standard form $\sigma_N$ or the parameters thereof, $a$, $b$, and $c$ will determine the block entanglement. To put in another way, equivalent classes of symmetric Gaussian states are considered, with states in the same standard form regarded as identical. 

\subsection{Multisymmetric Gaussian state}\label{sec:mul}
A natural extension of symmetric Gaussian states is bisymmetric, and multisymmetric Gaussian states, which contains multiple families of modes, and the state is invariant under local permutation between modes within a family \cite{Adesso08,Serafini05,Kao16}. For an $N$-symmetric Gaussian states, its covariance matrix will take this form:
\begin{equation}
\sigma=\begin{pmatrix}
\sigma_{11}&\sigma_{12}&\cdots&\sigma_{1N}\\
\sigma_{21}&\sigma_{22}&\cdots&\sigma_{2N}\\
\vdots     &\cdots     &\ddots&\vdots     \\
\sigma_{N1}&\sigma_{N2}&\cdots&\sigma_{NN}
\end{pmatrix},
\end{equation}
where each $\sigma_{ii}$ is a submatrix like \eqref{eq:symmat} : 
\begin{equation}
\sigma_{ii}=\begin{pmatrix}
\alpha_i & \beta_i  & \cdots & \beta_i \\
\beta_i  & \alpha_i& \beta_i & \vdots \\
\vdots & \beta_i & \ddots  & \beta_i \\
\beta_i  & \cdots & \beta_i & \alpha_i
\end{pmatrix},\, i=1,2,\cdots,N,
\end{equation}
and 
\begin{equation}
\sigma_{ij}=\begin{pmatrix}
\beta_{ij} & \cdots  & \beta_{ij} \\
\vdots     & \ddots  & \vdots \\
\beta_{ij}  & \cdots & \beta_{ij}
\end{pmatrix},i,j=1,2,\cdots,N,\,i\neq j.
\end{equation}
$\alpha_i$, $\beta_i$ and $\beta_{ij}$ are $2\times 2$ symmetric real matrices. Note $\sigma_{ij}=\sigma_{ji}$. As discussed previously, we can assume each $\sigma_{ii}$ is in the standard form \eqref{eq:abc}:
\begin{equation}
\alpha_{ii}=\text{diag}(a_i,a_i),\,\beta_i=\text{diag}(b_i,c_i).
\end{equation}

\section{Metaplectic Transform on States}\label{sec:or}
\subsection{Motive}
\subsubsection{Symplectic transform on the phase space}
By Theorem~\ref{thm:meta}, a symplectic transform $S$ on the phase space corresponds to a (or two) unitary operator $\hat{S}$, and given a Gaussian density operator $\rho$, we have $W_\rho \circ S^{-1}\weyl \hat{S}\rho \hat{S}^{-1}$, where $\hat{S}\rho \hat{S}^{-1}$ is still a density operator. \eqref{eq:Gau} becomes
\begin{align}
W_{\hat{S}\rho \hat{S}^{-1}}(\xi)=W_\rho(S^{-1}\xi)&=\mathcal{N}\exp\left(-\frac{1}{2}(S^{-1}\xi)\tr \sigma^{-1} S^{-1}\xi \right)\nonumber\\
&=\mathcal{N}\exp\left(-\frac{1}{2}\xi\tr (S\sigma^{-1}S\tr)^{-1} \xi \right).\label{eq:Ws}
\end{align}
Hence, after a metaplectic transform $\hat{S}\in Mp(n)$, the state is still Gaussian, with a covariance matrix $S\sigma S\tr$. One thing to note here is $\det S=1$ for any $S\in Sp(n)$, so it's measure/volume-preserving, and $\int W_\rho \,d\boldsymbol{\xi}=\int W_{\hat{S}\rho \hat{S}^{-1}}\,d\boldsymbol{\xi}=1.$ The ``standard form'' of a symmetric Gaussian state, \eqref{eq:abc}, is an example of this technique. 

According to \eqref{eq:Ws}, we can alter the phase-space Gaussian to our liking, as long as the transform is symplectic. So the problem is, what symplectic transform can ease our job? The Williamson normal form is clearly one of the candidates, but if there are extra structures on the Gaussian states, e.g. symmetry, then there may exist desirable symplectic transforms to simplify the task.

In the argument above we've assumed $\rho$ is a density operator, but this holds true for operators whose Weyl symbols are Gaussian. In particular, after the reversal of momentum coordinates/partial transposition, the operator $\rho^\Gamma$ may become non-positive but the phase space function would still be Gaussian. We can perform symplectic transforms on the phase space, and the corresponding operator will be unitarily equivalent to $\rho^\Gamma$.  

Here's a remark: By Theorem~\ref{thm:rv}, the covariance matrix $\sigma$ always becomes $S\sigma^{-1}S\tr$ under the symplectic transform $S$, whether the phase-space function is Gaussian or not, but being Gaussian ensures that the state will remain Gaussian after such a transform.

\subsubsection{Local transform}
If the metaplectic transform $\hat{S}$ is limited to $m$ modes, collectively called A with the rest of the $n$ modes B, i.e. $\hat{S}=\hat{S}_A\otimes I_B$, then the corresponding symplectic transform is of the form $S=S_A\oplus I_B$ (Section~\ref{sec:dip}). By the same token, if $\hat{S}=\hat{S}_A\otimes \hat{S}_B$, then $\pi_{Mp}(\hat{S})=S_A\oplus S_B$, where $\pi_{Mp}(\hat{S}_A)=S_A$ and  $\pi_{Mp}(\hat{S}_B)=S_B$.

Here's an informal argument: Let $f\in L^2(\mathbb{R}^m)$ and $g\in L^2(\mathbb{R}^n)$, and choose a unitary operator $U=\hat{S}_A\otimes I_B$. The Wigner transform of $f\otimes g$, which is exactly the Weyl symbol of $\ket{f\otimes g}\bra{f\otimes g}$, is $W_f\otimes W_g$. Note $W_f\otimes W_g(z_1,z_2)$ is identified as $W_f(z_1)W_g(z_2)$; see Section~\ref{sec:tf}. On the other hand we can obtain 
\begin{equation}
U \ket{f\otimes g}\bra{f\otimes g}U^{-1}=\ket{\hat{S}_A f\otimes g}\bra{\hat{S}_A f\otimes g}\weyl (W_f\circ S_A^{-1})\otimes W_g.
\end{equation}
This suggests on the operator $\ket{f\otimes g}\bra{f\otimes g}$, $\pi_{Mp}(\hat{S}_A\otimes I_B)$ acts like $\hat{S}_A\otimes I_B$; the same can be applied to operators of the form $\ket{f_1\otimes g_1}\bra{f_2\otimes g_2}$ (Moyal transform). As $L^2(\mathbb{R}^{m+n})=L^2(\mathbb{R}^m)\hat{\otimes}L^2(\mathbb{R}^n)$, we can expect this to be true for  $\ket{f}\bra{g}$, where $f,g\in L^2(\mathbb{R}^{m+n})$. Since a trace-class operator has a discrete spectrum, c.f. \eqref{eq:rhoij}, this should holds true for all trace-class operators on $L^2(\mathbb{R}^{m+n})$. This of course needs a more concrete justification by looking at the structure of the metaplectic group, but we'll leave it as it is.

This permits us to perform a local symplectic transform on the phase space, with the corresponding metaplectic transform(s) also being local. As entanglement (measure) is invariant under local unitary transforms, we're able to adjust the Gaussian state without changing the entanglement.

\subsection{Symmetric State}
The transform $S$ we're going to perform on the phase space is one where we apply orthogonal transforms to positions and momenta separately, i.e. $S=O_{\mathbf{x}}\oplus O_{\mathbf{p}}$, and which is symplectic by picking suitable $O_{\mathbf{x}}$ and $O_{\mathbf{p}}$.

Let's consider an $n$-symmetric state whose covariance matrix is of the form \eqref{eq:symmat}, $\sigma_n$. First, the covariances (and variances) of $\mathbf{x}=(x_1,x_2,\dotsc,x_n)$ constitute an ``$\mathbf{x}$-covariance'' matrix, defined as $\sigma_{ij}^\mathbf{x}:=\sigma(x_i,x_j)$, which has the same diagonal elements and the same off-diagonal ones due to symmetry. If we want to diagonalize it, we can utilize the fact that for an $n\times n$ matrix $M_{ij}=\epsilon+\delta_{ij}(\lambda-\epsilon)$, its eigenvalues are $\lambda+(n-1)\epsilon$ and $\lambda-\epsilon$, the latter having $(n-1)$-fold degeneracy.\footnote{Please read Appendix~\ref{app:spe} for a proof.} The eigenvector corresponding to the non-degenerate eigenvalue is $(1,1,\dotsc,1)$. That is, we can introduce these coordinates:
\begin{equation}
\mathbf{X}:=(X_n=\frac{1}{\sqrt{n}}\sum_{i=1}^{n}x_i,u_2,u_3,\dotsc,u_n)=O_\mathbf{x}\mathbf{x},\label{eq:Xn}
\end{equation}
where $O_\mathbf{x}$ is an orthogonal matrix that diagonalizes the $\mathbf{x}$-covariance matrix.

For $S=O_{\mathbf{x}}\oplus O_{\mathbf{p}}$ to be symplectic, we require the transformation for $\mathbf{p}=(p_1,p_2,\dotsc,p_n)$ to be
\begin{equation}
\mathbf{P}=(P_n=\frac{1}{\sqrt{n}}\sum_{i=1}^{n}p_i,\Pi_2,\Pi_3,\dotsc,\Pi_n)=O_{\mathbf{p}}\mathbf{p},\label{eq:Pn}
\end{equation}
where 
\begin{equation}
O_{\mathbf{x}}=O_{\mathbf{p}}:=O
\end{equation}
This transformation can also diagonalize the $\mathbf{p}$-covariance matrix $\sigma_{ij}^\mathbf{p}:=\sigma(p_i,p_j)$ for the same reason. $O_{\mathbf{x}}\oplus O_{\mathbf{p}}$ is indeed a symplectic transform; see e.g. Chapter 2.1 of Ref.~ \cite{deGosson}.

These transforms imply
\begin{equation}
\sigma(X_n,\Pi_i)=\sigma(u_i,P_n)=0,\,i\neq j,
\end{equation}
and \cite{Kao16}:
\begin{align}
\sigma(X_n,X_n)&=\sigma(x_i,x_i)+(n-1)\sigma(x_i,x_j),\nonumber\\
\sigma(P_n,P_n)&=\sigma(p_i,p_i)+(n-1)\sigma(p_i,p_j),\nonumber\\
\sigma(u_i,u_i)&=\sigma(x_i,x_i)-\sigma(x_i,x_j),\nonumber\\
\sigma(\Pi,\Pi)&=\sigma(p_i,p_i)-\sigma(p_i,p_j),\,i\neq j,\label{eq:sig}
\end{align}
the details of which is shown in Appendix~\ref{app:covaiance}. \emph{$\sigma(u_i,u_i)$ and $\sigma(\Pi_i,\Pi_i)$ doesn't depend on $n$ and the index $i$ at all}, so from now on we will ignore the subscripts of them, labeling them as $u$ and $\Pi$ for a symmetric state.  Note $\sigma_n$ doesn't need to assume the standard form \eqref{eq:abc} for the results above to apply.  We have effectively found a symplectic transformation that diagonalizes the covariance matrix $\sigma_n$.

\subsection{Multisymmetric State}
With a covariance matrix given in Section~\ref{sec:mul}, we can perform symplectic transforms like in the previous subsection on each family of symmetric modes. If the $i$-th family has $N_i$ modes,
\begin{equation}
\mathbf{X}_i=O_i\mathbf{x}_i \text{ and } \mathbf{P}_i=O_i\mathbf{p}_i,
\end{equation}
where $O_i$ are orthogonal matrices with the first rows being $(1,1,\cdots,1)/\sqrt{N_i}$ and
\begin{equation}
\mathbf{X}_i=(X_i, u_{i,2},\cdots,u_{i,N_i}) \text{ and }\mathbf{P}_i=(P_i, \Pi_{i,2},\cdots,\Pi_{i,N_i}).
\end{equation}

After such a transform, the correlations between two families $i$ and $j$ will be ``concentrated'' to $(X_i,P_i,X_j,P_j)$, with the covariances of other modes/coordinates between the families totally disappearing. For the proof please again refer to Appendix~\ref{app:covaiance}. With this transform, the covariance matrix of a multisymmetric state is greatly simplified. This concentration of correlation into two modes by applying local metaplectic transforms is sometimes called ``unitary localization'' \cite{Adesso04_2,Serafini05,Adesso08}.

\subsection{Blocks of Modes from a Symmetric Gaussian State}
Now let's consider two blocks from a symmetric Gaussian state that has $N$ modes in total, and the two blocks contain $n_1$ and $n_2$ modes. As the state of each block is still symmetric, we can employ the same method as in the last subsection, by treating each block as a family of symmetric modes.  Because of symmetry, the only quantities to differentiate the blocks are the numbers of modes, and we'll mark the blocks with $n_1$ and $n_2$.

Because $\sigma(u,u)$ and $\sigma(\Pi,\Pi)$ has no dependency on $n$, as shown in \eqref{eq:sig} or Appendix~\ref{app:covaiance}, whether a particular $\sigma(u,u)$ or $\sigma(\Pi,\Pi)$ ``comes from'' either block doesn't matter. After unitary localization, the correlation between the two blocks is accumulated to the two modes $(X_{n_1},P_{n_1})$ and $(X_{n_2},P_{n_2})$. Because the block entanglement is now equivalent to that between these two modes, with other modes being irrelevant, by Simon criterion (Section~\ref{sec:eG}) having a positive partial transpose implies separability and vice versa, which gives weight to using negativities as the entanglement measure \cite{Adesso04_2,Serafini05}. 

To obtain the covariance matrix of the unitarily localized modes, while taking into account these two blocks are part of $N$ symmetric modes, we can make use of \eqref{eq:sig}:
\begin{equation}
\sigma_{N:n_1|n_2}=\begin{pmatrix}
\alpha_{N:n_i} & \beta_{N:n_1|n_2}\\
\beta_{N:n_1|n_2} & \alpha_{N:n_i}
\end{pmatrix},\label{eq:sigma}
\end{equation}
where $N$ and $n_i$ refer to the numbers of modes, and  
\begin{align}
\alpha_{N:n_i}=&\text{diag}\left[\sigma(X_{n_i},X_{n_i}),\sigma(P_{n_i},P_{n_i})\right]\nonumber\\
=&\frac{1}{N} \text{diag}\big[n_i \sigma(X_N,X_N)+(N-n_i)\sigma(u,u),n_i \sigma(P_N,P_N)+(N-n_i)\sigma(\Pi,\Pi)\big],\label{eq:sigman}\\
\beta_{N:n_1|n_2}=&\text{diag}\left[\sigma(X_{n_1},X_{n_2}),\sigma(P_{n_1},P_{n_2})\right]\nonumber\\
=&\frac{\sqrt{n_1 n_2}}{N}\text{diag}\big[\sigma(X_N,X_N)-\sigma(u,u),\sigma(P_N,P_N)-\sigma(\Pi,\Pi)\big],\label{eq:sigmade}
\end{align}
by replacing $\sigma(x_i,x_i)$ and $\sigma(x_i,x_j)$ with $\sigma(X_N,X_N)$ and $\sigma(u,u)$, and replacing $\sigma(p_i,p_i)$ and $\sigma(p_i,p_j)$ with $\sigma(P_N,P_N)$ and $\sigma(\Pi,\Pi)$. $(X_N,P_N)$ are the modes after we unitarily localize all the $N$ modes of the state, and that's part of the scheme of how we take the whole state, rather than the two blocks alone, into account. \eqref{eq:sigman} and \eqref{eq:sigmade} will diagonal if we choose the standard form \eqref{eq:abc}.

\chapter{Supremum of Block Entanglement for Symmetric Gaussian States}
\label{ch:SGS}

\section{Block Entanglement of a Symmetric Gaussian State}
\subsection{Parameters for the Entanglement}\label{sec:par}
We are interested in the block entanglement of an $N$-mode symmetric Gaussian state. Clearly, the block entanglement of an $N$-mode symmetric Gaussian state can be (along with numbers of modes) determined by $(a,b,c)$ of \eqref{eq:abc}, which three parameters characterize the state (in the standard form). Yet according to \eqref{eq:sigman} and \eqref{eq:sigmade}, it can be determined also by the following four variances: $\sigma(X_N,X_N)$, $\sigma(P_N,P_N)$, $\sigma(u,u)$ and $\sigma(\Pi,\Pi)$, but now with one more variable. To eliminate it, we introduce the ratio:
\begin{equation}
r:= \sigma(X_N,X_N)/\sigma(u,u)\label{eq:r}
\end{equation}
and
\begin{equation}
\nu_D:= \sqrt{\sigma(u,u)\sigma(\Pi,\Pi)},\,\nu_N:=\sqrt{\sigma(X_N,X_N)\sigma(P_N,P_N)}.\label{eq:ci}
\end{equation}
$\nu_D$ and $\nu_N$ are actually the symplectic eigenvalues for the symmetric covariance matrix $\sigma_N$ \eqref{eq:abc}, with $\nu_D$ being degenerate, and the $N$ of $\nu_N$ for its $N$-dependence, c.f. \eqref{eq:sig}. Hence, the formulation from \eqref{eq:Xn} to \eqref{eq:ci} is just the other side of the coin of the work done by Adesso, Serafini and Illuminati \cite{Adesso04_2,Serafini05}, who found the symplectic eigenvalues \eqref{eq:ci}, but here we have one additional parameter \eqref{eq:r} that allows us to describe a symmetric Gaussian state with these ``global'' parameters. Using \eqref{eq:sig}, $(a,b,c)$ of \eqref{eq:abc} are related to $(\nu_D,\nu_N,r)$ by, c.f. \eqref{eq:symmat} and \eqref{eq:abc}:
\begin{align}
\det\alpha&=a^2=\frac{\nu_N^2(N+r-1)+\nu_D^2 r(N-1)(N+r-1)}{N^2 r}=\left(\frac{\hbar}{2\mu_1}\right)^2,\nonumber\\
\det\beta&=bc=\frac{(1-r)(\nu_D^2r-\nu_N^2)}{N^2 r},\nonumber\\
\det\sigma_2&=\det\begin{pmatrix}
\alpha & \beta\\
\beta & \alpha
\end{pmatrix}=(a^2-b^2)(a^2-c^2)=\frac{\nu_D^2(N+2r-2)\left[2\nu_N^2+\nu_D^2 r(N-2)\right]}{N^2 r}=\left(\frac{\hbar}{2}\right)^4\frac{1}{\mu_2^2},\label{eq:local}
\end{align}
where $\mu_1$ and $\mu_2$ are the one- and two-mode purities \cite{Serafini04}.

\emph{The advantage of applying the parameters $(\nu_D,\nu_N,r)$ over $(a,b,c)$ is that $\nu_D\geq \hbar/2$, $\nu_N\geq \hbar/2$ and $r>0$ are necessary conditions for the the state to be valid} by Theorem~\ref{thm:valid}, therefore giving us a simpler constraint. To see that this indeed the case, after we execute the symplectic/orthogonal transform \eqref{eq:abc} becomes\footnote{The covariances $\sigma(\mathbf{X}_j,\mathbf{P}_k)$ completely vanish in the standard form; see Appendix~\ref{app:covaiance}.}
\begin{equation}
\text{diag}\left[
\sigma(X_N,X_N),\sigma(P_N,P_N),\sigma(u,u),\sigma(\Pi,\Pi),\sigma(u,u),\sigma(\Pi,\Pi),\cdots,\sigma(u,u),\sigma(\Pi,\Pi)\right].
\end{equation}
For each mode has to obey the uncertainty relation or Theorem~\ref{thm:valid}, $\sqrt{\sigma(q,q)\sigma(p,p)}\geq \hbar/2$, leading to $\nu_D\geq\hbar/2$ and $\nu_N\geq\hbar/2.$ $r>0$ simply reflects the fact that each variance should be positive, or has the same sign. Were we to define $r$ as $\sigma(P_N,P_N)/\sigma(\Pi,\Pi)$, we could obtain the same final result (in the next chapter). 

Since the ratio $\nu_N/\nu_D$ will prove important later, we replace $\nu_N$ with
\begin{equation}
\gamma:= \nu_N/\nu_D,\label{eq:gamma}
\end{equation}
which satisfies $\gamma\geq\hbar/(2\nu_D)$ because $\nu_N=\nu_D\gamma\geq\hbar/2,$ and we'll use the three parameters $(\nu_D,\gamma,r)$ from now on, replacing $(a,b,c)$ of \eqref{eq:abc}.

\subsection{Block Entanglement for a Symmetric Gaussian State}
Because the (indexes of the) blocks are interchangeable, the entanglement will depend on the sum and difference of $n_1$ and $n_2$:
\begin{equation}
n_s:= n_1+n_2\geq 2, \, n_d:= |n_1-n_2|\geq 0,\label{eq:nsnd}
\end{equation}
satisfying $N\geq n_s>n_d$. It's worth pointing out that only the absolute value of $n_1-n_2$ matters, not its sign, as implied by the interchangeability. By \eqref{eq:num} the smaller symplectic eigenvalue of $\widetilde{\sigma}_{N:n_1|n_2}$, partial transpose of \eqref{eq:sigma}, is a function of $(\nu_D,\gamma,r)$:
\begin{equation}
(\widetilde{\nu}^-)^2= f_{N:n_s,n_d}(\nu_D,\gamma,r):=\left(\widetilde{\Delta}_{N:n_1|n_2}-\sqrt{(\widetilde{\Delta}_{N:n_1|n_2})^2-4\det\sigma_{N:n_1|n_2}}\right)/2,\label{eq:f}
\end{equation}
where $\Delta$ is defined in \eqref{eq:delta} and $n_s,n_d$ after the colon refers to the mode numbers of the unitarily localized blocks \eqref{eq:nsnd}. That $f$ in \eqref{eq:f} can indeed be a function of $(\nu_D,\gamma,r)$ validates $\nu_D,\gamma,r$ as the variables of block entanglement.

The domain of $f_{N:n_s,n_d}$ is
\begin{equation}
\text{dom}(f_{N:n_s,n_d})=\{(\nu_D,\gamma,r):\nu_D\geq\hbar/2,\nu_D\gamma\geq\hbar/2,r>0\},\label{eq:domain}
\end{equation}
which is defined by the set of all valid Gaussian symmetric states,\footnote{To be more accurate, Gaussian symmetric states which are considered equivalent by us and which have the same entanglement.} as discussed in the last subsection. By \eqref{eq:LG} the logarithmic negativity between two blocks is
\begin{equation}
E_{L}^{N:n_1|n_2}=E_{L}^{N:n_s,n_d}=\log \max\left(\frac{\hbar}{2\widetilde{\nu}^-},1\right)=\frac{1}{2}\log\max\left(\frac{\hbar^2}{4f_{N:n_s,n_d}},1\right).\label{eq:L}
\end{equation}

Let's summarize what we have done so far: Any symmetric Gaussian state of the form \eqref{eq:symmat} can be turned into the "standard form" by local unitary operations, with parameters given in \eqref{eq:abc}. After orthogonal/symplectic transforms and with a proper choice of parameters, the logarithmic negativity between two blocks \eqref{eq:L} becomes a function with a clear domain \eqref{eq:domain}. We will show how this parametrization will help us find the suprema of negativities. 

\section{Search for the Supremum/infimum}\label{sec:sear}
By \eqref{eq:L} the supremum (Definition~\ref{def:sup}) of $E_{L}$, if it exists, corresponds to the infimum (Definition~\ref{def:sup}) of $f_{N:n_s,n_d}$. Hence our goal is to find this infimum. Let's start with
\begin{align}
f_{N:n_s,n_d}(\nu_D,&\gamma,r)=\frac{\nu_D^2}{2N^2 r}\Big\{\left(Nn_s-n_d^2\right)r^2+\left[2N^2-2Nn_s+n_d^2(1+\gamma^2)\right]r+\left(Nn_s-n_d^2\right)\gamma^2\nonumber\\
&-\Big\{\left(Nn_s-n_d\right)^2r^4-2n_d^2\left[2N^2+n_d^2(1+\gamma^2)-Nn_s\left(3+\gamma^2\right)\right]r^3+\big[N^2\left(4n_d^2(1+\gamma^2)-2n_s^2\gamma^2\right)\nonumber\\
&+n_d^4\left(\gamma^4+4\gamma^2+1\right)-4Nn_sn_d^2\left(1+2\gamma^2\right)\big]r^2-2n_d^2\left[2N^2+n_d^2\left(1+\gamma^2\right)-Nn_s\left(3+\gamma^2\right)\right]\gamma^2r\nonumber\\
&+\left(Nn_s-n_d^2\right)^2\gamma^4\Big\}^{1/2}\Big\},\label{eq:ff}
\end{align}
continuous everywhere in the domain \eqref{eq:domain}. With three variables, first let's vary $r$ only while keeping $\nu_D$ and $\gamma$ fixed, by defining
\begin{equation}
F_{\nu_D,\gamma}^{N:n_s,n_d}(r):= f_{N:n_s,n_d}(\nu_D,\gamma,r)
\end{equation}
as a function of $r$ only. Henceforth we'll ignore the superscripts and subscripts of $f$ and $F$. $F$ is of this form:
\begin{equation}
F(r)=\frac{\nu_D^2}{N^2}\frac{p(r)-\sqrt{h(r)}}{r},\label{eq:F}
\end{equation}
where $p$ and $h$ are polynomials of second and forth orders respectively, whose coefficients are made up of $(\gamma,N,n_s,n_d)$, \emph{without} $\nu_D.$ Because $F(r)$ is a differentiable function of $r\in(0,\infty)$, the global minimum or infimum lies on the boundary or where the derivative vanishes. Let us first examine the behavior of $F(r)$ at the boundary. 
\subsection{Boundary}
Using L'H\^{o}pital's rule, the limits of $F$ as $r$ approaches the boundary $0$ and $\infty$ are
\begin{equation}
\lim_{r \rightarrow 0}F(r)=\lim_{r\rightarrow \infty}F(r)=\nu_D^2\frac{Nn_s-n_s^2}{N n_s-n_d^2},\label{eq:f0}
\end{equation}
\emph{independent of} $\gamma$, or $\nu_N$. Note it's non-negative and smaller than $\nu_D^2$ because $N\geq n_s>n_d$. 
\subsection{Critical Point}
Next, we shall investigate where $F'(r)=0$. To find it we need
\begin{lem}
For a function 
\begin{equation}
y(x)=\frac{p(x)-\sqrt{h(x)}}{x},\label{eq:y}
\end{equation}
where $p$ and $h$ are second- and fourth-order polynomials:
\begin{equation}
p(x)=\sum_{i=0}^{2}a_i x^i,\,h(x)=\sum_{i=0}^{4}b_i x^i,
\end{equation}
if not only 
$\lim_{x \rightarrow 0} y(x)$ and $\lim_{x \rightarrow \infty}y(x)$ exist but also 
\begin{equation}
\lim_{x \rightarrow 0}y(x)=\lim_{x \rightarrow \infty}y(x),\label{eq:lim}
\end{equation}
then the solutions of $y'(x)=0$ are $\pm\sqrt{a_0/a_2}.$ 
\end{lem}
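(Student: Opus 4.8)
The plan is to read off algebraic constraints on the coefficients $a_i,b_i$ from the two limit hypotheses, and then to show those constraints force the critical-point equation to factor, with one factor vanishing precisely at $\pm\sqrt{a_0/a_2}$. First I would exploit the behaviour at $x\to 0$: since the denominator of \eqref{eq:y} vanishes there, the numerator must too, and expanding $\sqrt{h(x)}=\sqrt{b_0}\bigl(1+\tfrac{b_1}{2b_0}x+O(x^2)\bigr)$ against $p(x)$ forces $b_0=a_0^2$ (with $a_0\ge 0$ from the principal root) and gives $\lim_{x\to 0}y=a_1-\tfrac{b_1}{2a_0}$. Analogously, finiteness of the limit at infinity requires the leading $a_2x^2$ and $\sqrt{b_4}\,x^2$ terms to cancel, forcing $b_4=a_2^2$ and giving $\lim_{x\to\infty}y=a_1-\tfrac{b_3}{2a_2}$. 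Equating the two limits, which is hypothesis \eqref{eq:lim}, yields the third relation $b_1a_2=b_3a_0$; writing $k:=b_1/a_0=b_3/a_2$ packages all three identities.

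Next I would set up the critical-point equation. Differentiating \eqref{eq:y}, the condition $y'(x)=0$ is equivalent, wherever $h(x)>0$, to $2\sqrt{h(x)}\,\bigl[xp'(x)-p(x)\bigr]=xh'(x)-2h(x)$. Two elementary simplifications exploiting the fixed degrees give $xp'(x)-p(x)=a_2x^2-a_0$ and $xh'(x)-2h(x)=2b_4x^4+b_3x^3-b_1x-2b_0$. Substituting the three coefficient relations $b_0=a_0^2$, $b_4=a_2^2$, $b_1=ka_0$, $b_3=ka_2$ into the right-hand side lets it factor as $(a_2x^2-a_0)(2a_2x^2+kx+2a_0)$, so the critical equation collapses to $(a_2x^2-a_0)\bigl[2\sqrt{h(x)}-(2a_2x^2+kx+2a_0)\bigr]=0$. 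The first factor vanishes exactly at $x=\pm\sqrt{a_0/a_2}$, which is the asserted answer, and it manifestly needs $a_2\neq 0$.

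Finally I would dispose of the second factor. Setting $2\sqrt{h(x)}=2a_2x^2+kx+2a_0$ and squaring, the three coefficient relations make every coefficient of $h(x)$ agree with those of $(a_2x^2+\tfrac{k}{2}x+a_0)^2$ except possibly the $x^2$ coefficient, so the difference reduces to the single term $(b_2-2a_0a_2-\tfrac{k^2}{4})x^2$; this has no root in the relevant open set $x>0$ unless $b_2=2a_0a_2+\tfrac{k^2}{4}$, the degenerate case in which $h$ is a perfect square and $y$ is locally constant, making the statement vacuous. Hence, generically, the solutions of $y'(x)=0$ are exactly $\pm\sqrt{a_0/a_2}$. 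I expect the main obstacle to be spotting the factorization $(a_2x^2-a_0)(2a_2x^2+kx+2a_0)$ of the right-hand side: once the three coefficient relations are available this is the one nontrivial step that renders $\pm\sqrt{a_0/a_2}$ obvious and simultaneously isolates the only exceptional case, whereas the limit expansions and the derivative manipulation are routine.
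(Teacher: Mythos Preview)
Your proof is correct and follows essentially the same route as the paper: extract the three coefficient relations $b_0=a_0^2$, $b_4=a_2^2$, $b_1/a_0=b_3/a_2$ from the limit hypotheses, then reduce the critical-point equation to one governed by the factor $a_2x^2-a_0$. The only real difference is order of operations---the paper squares first, obtaining the polynomial identity $4h(p'x-p)^2-(h'x-2h)^2=0$ and then observes it collapses to $x^2(a_2x^2-a_0)^2=0$, whereas you factor $(a_2x^2-a_0)$ out before squaring; your version has the minor bonus of isolating the degenerate perfect-square case explicitly, which the paper glosses over.
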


\begin{proof}
The existence of $\lim_{x \rightarrow 0} y(x)$ and $\lim_{x \rightarrow \infty}y(x)$ implies
\begin{equation}
a_0=\sqrt{b_0},\, a_2=\sqrt{b_4}.\label{eq:ai}
\end{equation}
Applying L'H\^{o}pital's rule again, the equivalence of these limits implies
\begin{equation}
a_1-\frac{b_1}{2\sqrt{b_0}}=a_1-\frac{b_3}{2\sqrt{b_4}}\Rightarrow \frac{b_1}{\sqrt{b_0}}=\frac{b_3}{\sqrt{b_4}}.\label{eq:bi}
\end{equation}
$y'(x)=0$ induces
\begin{equation}
4h(p'x-p)^2-(h'x-2h)^2=0.
\end{equation}
With \eqref{eq:ai} and \eqref{eq:bi}, we obtain
\begin{equation}
x^2(a_2x^2-a_0)^2=0,
\end{equation}
whose roots are $\pm\sqrt{a_0/a_2}$.
\end{proof}

The function $F$, \eqref{eq:F}, conforms to the form \eqref{eq:y}. Because $\lim_{r \rightarrow 0}F(r)=\lim_{r \rightarrow \infty}F(r)$, \eqref{eq:f0}, it satisfies the condition \eqref{eq:lim}. The square root of the ratio is exactly $\gamma$, so the critical point is at $r=\gamma.$ In addition, $F(0)=F(\infty)$ (which will denote $\lim_{r \rightarrow 0}F(r)$ and $\lim_{r \rightarrow \infty}F(r)$ hereafter), differentiability of $F(r)$, and the existence of only one critical point imply that $F(\gamma)$ is either a global maximum or minimum, and that $F(0)=F(\infty)$ is either the supremum or infimum; that is,
\begin{equation}
\inf\mathrm{ran}F=\min\left(F(\gamma),F(0)=F\infty\right),
\end{equation}
where $\mathrm{ran}$ denotes the range (Section~\ref{sec:map}).
\subsection{Value at the Critical Point}
\label{sec:Vcri}
However, instead of directly comparing $F(0)=F(\infty)$ and $F(\gamma)$, let's see how small $F(\gamma)$ can be, because if $F(\gamma)\geq\hbar^2/4$, then the entanglement measure would be zero by \eqref{eq:L}; namely, this point is of no interest to us at all.  From \eqref{eq:ff},
\begin{equation}
F(\gamma)=\frac{\nu_D^2}{2N^2}\Bigg[2N^2+2Nn_s(\gamma-1)+n_d^2(\gamma-1)^2
-n_d|\gamma-1|\sqrt{4N^2+4Nn_s(\gamma-1)+n_d^2(\gamma-1)^2}\Bigg].\label{eq:Fgamma}
\end{equation}
Let's define
\begin{equation}
g(\gamma):= F(r=\gamma)\label{eq:g}
\end{equation}
\emph{as a continuous function of} $\gamma$, and
\begin{equation}
q:= \hbar/(2\nu_D)\in(0,1].
\end{equation}
Because $\nu_N=\nu_D\gamma\geq \hbar/2$ by \eqref{eq:gamma}, the domain of $g$ is
\begin{equation}
\text{dom}(g)=[q,\infty).
\end{equation}
Because there's a term $|\gamma-1|$ in \eqref{eq:Fgamma}, we consider two cases separately for $g(\gamma)$, i.e.$\gamma> 1$ and $1\geq\gamma\geq q$:

\subsubsection{$\gamma> 1$} 
Defining $x:= \gamma-1> 0$, let's solve 
\begin{equation}
g(\gamma=x+1)< \hbar^2/4,
\end{equation}
which becomes
\begin{equation}
\frac{2N^2+2x N n_s+x n_d\left(x n_d-\sqrt{n_d^2x^2+4N(N+n_s x)}\right)}{2N^2}<\frac{\hbar^2}{4\nu_D}=q^2.
\end{equation}
This is equivalent to
\begin{equation}
2N^2(1-q^2)+n_d^2 x^2+2N n_s x<n_d x\sqrt{n_d^2 x^2+4N(N+n_s x)}.
\end{equation}
Since both sides are positive for $q\in (0,1]$, this leads to
\begin{equation*}
\left(2N^2+n_d^2 x^2+2N n_s x-2N^2 q^2\right)^2<(n_d x)^2\left[n_d^2 x^2+4N(N+n_s x)\right],
\end{equation*}
\begin{equation}
\Rightarrow N^2(n_s^2-n_d^2q^2)x^2+2N^3n_s(1-q^2)x+N^4(1-q^2)^2=c_2 x^2+c_1 x+c_0<0,
\end{equation}
a quadratic inequality. By $n_s>n_d$ (see \eqref{eq:nsnd}) and $q\in (0,1]$, we obtain $c_2>0$, $c_1\geq 0$ and $c_0\geq 0.$ 
Hence, $x$ has no positive solutions.
\subsubsection{$1\geq\gamma\geq q$} 
Now consider $\gamma$ in the interval $[q,1]$. Define $x:= 1-\gamma\in [0,1-q]$. Following pretty much the same procedure as above, the solution of $g(\gamma=1-x)< \hbar^2/4$ is the interval
\begin{equation}
x\in(\frac{N(1-q^2)}{n_s+n_d q},\frac{N(1-q^2)}{n_s-n_d q}).
\end{equation}
The greatest lower bound of the interval above is however no smaller than $1-q$:
\begin{equation}
\frac{N(1-q^2)}{n_s+n_d q}=\frac{N(1+q)}{n_s+n_d q}(1-q)\geq 1-q
\end{equation}
by $N\geq n_s> n_d\geq 0$ and $q\in (0,1]$. As we've required $x\in [0,1-q]$, no solution exists.

Therefore, $g(\gamma)$ \eqref{eq:g} is no smaller than $h^2/4$ on its whole domain $\gamma\in[q,\infty)$, so if $\inf \text{ran}F=\min (F(\gamma),F(0)=F(\infty))$ is smaller than $\hbar^2/4,$ $\inf F$ is definitely $F(0)=F(\infty).$

\section{The Suprema}
Here's the main result of this chapter:
\begin{pro}\label{pro:sup}
If there are $N$ modes at a symmetric Gaussian state, for two blocks containing $n_1$ and $n_2$ modes from the $N$ modes the supremum of logarithmic negativity between them is\footnote{Here $\sup$ refers to the supremum of the image, i.e. $\sup\mathrm{ran}$, or the supremum of the entanglement measure over all symmetric Gaussian states given the constraint.}
\begin{equation}
\sup E_{L}^{N:n_s,n_d}=\frac{1}{2}\log\left(1+\frac{n_s^2-n_d^2}{n_s(N-n_s)}\right),\label{eq:sup2}
\end{equation}
and that of negativity is
\begin{equation}
\sup E_{N}^{N:n_s,n_d}=\frac{1}{2}\left[\left(1+\frac{n_s^2-n_d^2}{n_s(N-n_s)}\right)^{1/2}-1\right].\label{eq:sup3}
\end{equation}

If instead the degenerate symplectic eigenvalue $\nu_D$ \eqref{eq:ci} is given, 
\begin{equation}
\sup E_{L:\nu_D}^{N:n_s,n_d}=\max\left[\frac{1}{2}\log\frac{\hbar^2}{4\nu_D^2}\left(1+\frac{n_s^2-n_d^2}{n_s(N-n_s)}\right),0\right],\label{eq:sup1}
\end{equation}
and
\begin{equation}
\sup E_{N:\nu_D}^{N:n_s,n_d}=\max\left\{\frac{1}{2}\left[\frac{\hbar}{2\nu_D}\left(1+\frac{n_s^2-n_d^2}{n_s(N-n_s)}\right)^{1/2}-1\right],0\right\}.\label{eq:sup4}
\end{equation} 
\end{pro}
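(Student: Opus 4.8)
The plan is to observe that the entire proposition collapses to a single one-dimensional minimization problem. By \eqref{eq:L} the logarithmic negativity $E_L^{N:n_s,n_d}=\tfrac12\log\max(\hbar^2/(4f),1)$ is a weakly decreasing function of $f_{N:n_s,n_d}$, and by Section~\ref{sec:neg} the negativity is likewise decreasing through $\|\rho^\Gamma\|_1=\max(\hbar/(2\sqrt{f}),1)$ together with $E_N=(\|\rho^\Gamma\|_1-1)/2$. Hence every supremum requested is produced by the \emph{infimum} of $f$ over the domain \eqref{eq:domain} (with or without $\nu_D$ held fixed), followed by substitution. So the core task is to compute $\inf f$, which I would assemble directly from the three-variable analysis already completed in Section~\ref{sec:sear}.

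First I would fix $\nu_D$ and $\gamma$ and minimize over $r$. Section~\ref{sec:sear} establishes that $F(r)=f(\nu_D,\gamma,r)$ is differentiable on $(0,\infty)$ with a single critical point at $r=\gamma$ and equal boundary limits $F(0)=F(\infty)=\nu_D^2\,n_s(N-n_s)/(Nn_s-n_d^2)$ by \eqref{eq:f0}, so $\inf_r F=\min\bigl(F(\gamma),F(0)\bigr)$. The decisive input is Section~\ref{sec:Vcri}: there $g(\gamma)=F(\gamma)\ge\hbar^2/4$ throughout the domain, so the interior critical value always produces zero entanglement and may be discarded. I would then note that the surviving boundary value is independent of $\gamma$ and strictly increasing in $\nu_D^2$, and that, since $Nn_s-n_d^2>Nn_s-n_s^2=n_s(N-n_s)$ whenever $n_s>n_d$, it already lies strictly below $\hbar^2/4$ for every admissible $\nu_D$. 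Minimizing the remaining variable over $\nu_D\ge\hbar/2$ therefore forces $\nu_D=\hbar/2$ and yields the global infimum $\inf f=(\hbar/2)^2\,n_s(N-n_s)/(Nn_s-n_d^2)$. Because this value is only approached as $r\to0^+$ (equivalently $r\to\infty$) and is never attained, the result is genuinely a supremum and not a maximum.

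With $\inf f$ in hand the four formulas follow by substitution. Inserting it into \eqref{eq:L} gives $\sup E_L=\tfrac12\log\bigl((Nn_s-n_d^2)/(n_s(N-n_s))\bigr)=\tfrac12\log(1+(n_s^2-n_d^2)/(n_s(N-n_s)))$, which is \eqref{eq:sup2}; the accompanying $\sup\|\rho^\Gamma\|_1=(1+(n_s^2-n_d^2)/(n_s(N-n_s)))^{1/2}$ fed through $E_N=(\|\rho^\Gamma\|_1-1)/2$ produces \eqref{eq:sup3}, with no truncation needed since that argument always exceeds $1$. For the fixed-$\nu_D$ statements I would run the same argument but omit the last minimization over $\nu_D$, retaining the factor $\hbar^2/(4\nu_D^2)$; here a sufficiently large $\nu_D$ can push the optimal state into the PPT (separable) region, so the $\max(\cdot,0)$ in \eqref{eq:sup1} and \eqref{eq:sup4} records the vanishing of the negativities there.

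Since the single-critical-point lemma, the boundary limit \eqref{eq:f0}, and the inequality $g(\gamma)\ge\hbar^2/4$ are all already proved, the main obstacle is not a fresh computation but the careful bookkeeping of a nested three-variable optimization. I must justify that the global infimum of $f$ sits on the $r$-boundary with $\nu_D=\hbar/2$ rather than at the interior critical point, confirm that this configuration is a genuine limit point of valid states (so the constraint $\nu_D\gamma\ge\hbar/2$ remains satisfiable, which it is for $\gamma\ge1$) and that the supremum is approached yet unattained, and then translate accurately between the $f$-infimum, the trace norm $\|\rho^\Gamma\|_1$, and the two measures, inserting the $\max(\cdot,0)$ only in the fixed-$\nu_D$ cases where separability can actually set in.
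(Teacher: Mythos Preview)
Your proposal is correct and follows essentially the same route as the paper: reduce to $\inf f$, use the single-critical-point lemma and \eqref{eq:f0} to identify the boundary value $F(0)=F(\infty)$ as the relevant infimum over $r$ (since $g(\gamma)\ge\hbar^2/4$), note its $\gamma$-independence, then minimize over $\nu_D$. One small slip: the boundary value $\nu_D^2\,n_s(N-n_s)/(Nn_s-n_d^2)$ is \emph{not} below $\hbar^2/4$ for every admissible $\nu_D$, only at $\nu_D=\hbar/2$; you implicitly correct this yourself when you invoke the $\max(\cdot,0)$ in the fixed-$\nu_D$ case, and it does not affect the argument since minimizing over $\nu_D$ lands you at $\hbar/2$ anyway.
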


Remark that the former two are positive, while the latter two are nonnegative. Furthermore, by intermediate value theorem\cite{DuistermaatI,Kantorovitz}, with two blocks at given $(N,n_s,n_d)$ and perhaps $\nu_D$, for all elements in the following interval:
\begin{equation}
[0,\sup E),\label{eq:int}
\end{equation}
where $\sup E$ may be \eqref{eq:sup2}$\sim$\eqref{eq:sup4} depending on whether $\nu_D$ is fixed and which measure is adopted, we can always find a corresponding symmetric Gaussian state, unless $\sup E=0$ (which can only happen with fixed $\nu_D$, \eqref{eq:sup1} and \eqref{eq:sup4}), in which case the entanglement measure can only be zero. In other words, \eqref{eq:int} is the range of the mapping that is measuring the block entanglement. Note these suprema aren't maxima, so the interval \eqref{eq:int} is not closed.

\begin{proof}
Recall the entanglement (measure) is a function of $f$ or $F$, \eqref{eq:L}. In this proof we will assume $\inf \mathrm{ran}F< \hbar^2/4$, so $F(0)=F(\infty)$ is always the infimum of $\mathrm{ran}F$. If not, the entanglement measure will be zero, so either way it doesn't matter much, but it will save a few words in the discussion below. 

First, let's fix the variable $\nu_D$. Since at a given $\gamma$, $f$ \eqref{eq:ff} or $F$ approaches its smallest when $r\rightarrow 0$ and $r\rightarrow \infty$, to find the infimum of $\mathrm{ran}f$ as we vary $\gamma$ we can first look for $\lim_{r\rightarrow\infty}F_{\nu_D,\gamma}(r)=\lim_{r\rightarrow 0}F_{\nu_D,\gamma}(r)$, and see how these two limits vary with $\gamma$. As \eqref{eq:f0} turns out, the limits are independent of $\gamma$ (or $\nu_N$), so $\lim_{r\rightarrow\infty}F_{\nu_D,\gamma}(r)=\lim_{r\rightarrow 0}F_{\nu_D,\gamma}(r)$ is the infimum of $\mathrm{ran}f_{\nu_D}(\gamma,r)$, where we define $f_{\nu_D}(\gamma,r):=f(\nu_D,\gamma,r)$ as a function of $(\gamma,r)$. From \eqref{eq:L} and \eqref{eq:f0}, we can obtain \eqref{eq:sup1}.

Likewise, because the infimum of $f$ over all $(\nu_D,\gamma,r)$ is essentially the infimum of $\lim_{r\rightarrow\infty} F_{\nu_D,\gamma}(r)=\lim_{r\rightarrow 0}F_{\nu_D,\gamma}(r)$ over all $(\nu_D,\gamma)$ and because $\nu_D\geq\hbar/2$, we acquire \eqref{eq:sup2}.

By the relation $ E_{N}=(t^{ E_{L}}-1)/2$, assuming the base is $t$, the suprema for negativity $ E_{N}$ are easily derived.
\end{proof}

\section{Decrease of Entanglement with Increasing $n_d$}
Here's another interesting result:
\begin{pro}\label{pro:de}
Given $N$ modes at a specific symmetric Gaussian state, after fixing $n_s$, $E_{L}^{N:n_s,n_d}$ and $E_{N}^{N:n_s,n_d}$ decrease monotonically with $n_d$ unless $E_{L}^{N:n_s,n_d}$ or $E_{N}^{N:n_s,n_d}$ is zero already. 
\end{pro}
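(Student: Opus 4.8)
The plan is to reduce the whole statement to the monotone behaviour of the single scalar $f_{N:n_s,n_d}$ of \eqref{eq:f}. By \eqref{eq:L}, $E_L^{N:n_s,n_d}=\frac12\log\frac{\hbar^2}{4f}$ on the entangled region $f<\hbar^2/4$ and is zero once $f\ge\hbar^2/4$; and via $E_N=(t^{E_L}-1)/2$ the negativity is a strictly increasing function of $E_L$. Hence both measures are strictly decreasing functions of $f$ wherever they are positive, and are pinned at zero otherwise. So it suffices to prove that, with $N$, $n_s$ and the state held fixed (equivalently the parameters $\nu_D,\gamma,r$, or the four variances $\sigma(X_N,X_N),\sigma(P_N,P_N),\sigma(u,u),\sigma(\Pi,\Pi)$), the quantity $f_{N:n_s,n_d}$ is nondecreasing in $n_d$. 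Monotonicity of the smooth function of the continuous variable $n_d$ then restricts to the admissible integer values (those of the parity of $n_s$ with $n_d<n_s$, by \eqref{eq:nsnd}), and the ``unless already zero'' clause is precisely the statement that once $f$ crosses $\hbar^2/4$ the measures remain zero.

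First I would recall from \eqref{eq:f} and \eqref{eq:num} that $f=(\widetilde\nu^-)^2$ is the smaller root of $t^2-\widetilde\Delta t+\det\sigma=0$, so that $f$ and $(\widetilde\nu^+)^2$ have product $\det\sigma$ and sum $\widetilde\Delta$. The key structural fact is that $\det\sigma$ is independent of $n_d$. Writing $n_1=(n_s+n_d)/2$, $n_2=(n_s-n_d)/2$ and $\delta_X:=\sigma(X_N,X_N)-\sigma(u,u)$, $\delta_P:=\sigma(P_N,P_N)-\sigma(\Pi,\Pi)$, the matrix $\sigma_{N:n_1|n_2}$ of \eqref{eq:sigma}--\eqref{eq:sigmade} splits, in the ordering (positions, momenta), into a position block and a momentum block, each with off-diagonal entry $\propto\sqrt{n_1n_2}\,\delta$ and diagonal entries linear in $n_1,n_2$. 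Computing each $2\times2$ determinant, the $n_d$-dependent contributions (which enter only through $n_1n_2=(n_s^2-n_d^2)/4$ and $n_1^2+n_2^2$) cancel, leaving $\det\sigma_{\mathbf x}=\frac{\sigma(u,u)}{N}(n_s\delta_X+N\sigma(u,u))$ and the analogous expression for the momentum block, both free of $n_d$; hence $\det\sigma$ is constant in $n_d$. With the product of the squared symplectic eigenvalues of the partial transpose fixed, $f$ is a strictly decreasing function of $\widetilde\Delta$, since $df/d\widetilde\Delta=\frac12(1-\widetilde\Delta/\sqrt{\widetilde\Delta^2-4\det\sigma})<0$. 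The problem thus collapses to the sign of $\partial\widetilde\Delta/\partial n_d$.

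Next I would differentiate $\widetilde\Delta=\det A+\det B-2\det C$, where the minus sign is the momentum reversal defining partial transposition, which flips the sign of $\det C$ (here $\det C=(n_1n_2/N^2)\delta_X\delta_P$ by \eqref{eq:sigmade}). A short computation from \eqref{eq:sigman}--\eqref{eq:sigmade} gives $\partial\widetilde\Delta/\partial(n_d^2)=\delta_X\delta_P/N^2$, so $\widetilde\Delta$ decreases in $n_d$ exactly when $\delta_X\delta_P<0$, that is when $\det C<0$. The decisive step is to show $\det C<0$ throughout the entangled region, which I would obtain from the validity condition (Theorem~\ref{thm:valid}): if $\det C\ge0$ then $\widetilde\Delta=\det A+\det B-2\det C\le\det A+\det B+2\det C=\Delta$, and since $\det\sigma=\det\widetilde\sigma$, the smaller symplectic eigenvalue of the partial transpose obeys $\widetilde\nu^-\ge\nu^-\ge\hbar/2$; hence the state is PPT and so, by the Simon criterion (Section~\ref{sec:eG}), separable, giving $E_L=E_N=0$. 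Contrapositively, wherever the entanglement is nonzero we have $\det C<0$, so $\partial\widetilde\Delta/\partial n_d<0$, $f$ strictly increases, and $E_L,E_N$ strictly decrease in $n_d$.

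I expect the main obstacle to be the bookkeeping of the $n_d$-cancellation in $\det\sigma$ together with the sign tracking in $\partial\widetilde\Delta/\partial n_d$: the block decomposition must be arranged so that the would-be $n_d^2\delta^2$ terms visibly cancel in $\det\sigma$ yet survive, with a definite sign, in $\widetilde\Delta$. The conceptual crux is by contrast light, namely the one-line implication ``$\det C\ge0\Rightarrow$ separable'' via $\Delta\ge\widetilde\Delta$ and Theorem~\ref{thm:valid}, which is exactly what excludes the wrong sign of $\delta_X\delta_P$ on the region of interest. A minor point to state explicitly is that $n_d$ is discrete, so the derivative computation serves only to establish monotonicity of the underlying smooth function, which then restricts to the admissible lattice of $n_d$ values.
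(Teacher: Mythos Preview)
Your argument is correct and takes a genuinely different route from the paper's proof. The paper attacks the explicit polynomial \eqref{eq:ff}, writes $f$ in the form $\frac{\nu_D^2}{2N^2r}\bigl(a_2 n_d^2+a_0-\sqrt{b_4 n_d^4+b_2 n_d^2+b_0}\bigr)$ with $b_4=a_2^2$, $b_2=a_2a_0$, and then shows by direct inequality manipulation that $\partial f/\partial n_d>0$ on the set $\mathbb{S}=\{r:(r-1)(r-\gamma^2)>0\}$; the set $\mathbb{S}$ is identified by solving $F(r)=\nu_D^2$ and invoking the earlier fact $F(\gamma)\ge\hbar^2/4$. You bypass all of this by exploiting two structural observations: (i) $\det\sigma_{N:n_1|n_2}$ is independent of $n_d$, so $f=(\widetilde\nu^-)^2$ is a strictly decreasing function of $\widetilde\Delta$ alone; (ii) $\widetilde\Delta$ is affine in $n_d^2$ with slope $\delta_X\delta_P/N^2$, whose sign is fixed by the one-line separability argument $\det C\ge 0\Rightarrow\widetilde\Delta\le\Delta\Rightarrow\widetilde\nu^-\ge\nu^-\ge\hbar/2$. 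It is worth noting that your condition $\delta_X\delta_P<0$ coincides with the paper's $r\in\mathbb{S}$: since $\delta_X=\sigma(u,u)(r-1)$ and $\delta_P=\frac{\nu_D^2}{r\,\sigma(u,u)}(\gamma^2-r)$, one has $\delta_X\delta_P=-\frac{\nu_D^2}{r}(r-1)(r-\gamma^2)$, so the two criteria for the ``entangled region'' agree. What your approach buys is conceptual clarity and a near-trivial derivative computation (slope of an affine function) at the cost of first establishing the $n_d$-independence of $\det\sigma$; what the paper's approach buys is self-containment within the $(\nu_D,\gamma,r)$ parameterisation already set up in Section~\ref{sec:sear}, without invoking the $\det C$ separability shortcut, at the cost of heavier algebra.
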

Note this is a general result, not only limited to the suprema. Hence, with everything else being equal, the more even the partition is (less $n_d$), the greater the negativities are, i.e. stronger entanglement. 

\begin{proof}
As shown in Section~\ref{sec:Vcri}, for the function $f$ or $F$, \eqref{eq:ff} or \eqref{eq:F}, because $F(\gamma)\geq\hbar^2/4$ there's a region around $\gamma$ where $F$ is always larger than or equal to $\hbar^2/4$, regardless of the values of other parameters, so in this region the entanglement vanishes, by \eqref{eq:f}. Since $\nu_D\geq \hbar/2,$ if $F(r)\geq \nu_D^2,$ then $F(r)\geq\hbar^2/4.$ Let's first try to solve $F(r)=\nu_D^2.$ It can be easily found out the roots are $r=1$ or $r=\gamma^2.$ Because $F(\gamma)\geq \hbar^2/4$ is the global maximum,\footnote{If it's the global minimum, then $F(r)\geq\hbar^2/4$ at all $r$, which isn't of our concern.} $F(r)\geq\hbar^2/4$ between\footnote{When $\gamma\geq 1$, it's the interval $[1,\gamma^2]$, else it's $[\gamma^2,1]$. Note it's $r$ in this interval implying $F(r)\geq \hbar^2/4$, not the other way around.} $r=1$ and $r=\gamma^2$, that is, we're only interested in this subset of positive $r$:\footnote{$r$ is required to be positive for the state to be valid; see \eqref{eq:domain}.}
\begin{equation}
\mathbb{S}:=\{r:(r-1)(r-\gamma^2)>0\}.\label{eq:S}
\end{equation}
Any other positive $r$ result in zero negativities.

Treating $f$ as a function of $n_d,$ it takes this form
\begin{equation}
\frac{\nu_D^2}{2N^2r}\left(a_2 n_d^2+a_0-\sqrt{b_4n_d^4+b_2n_d^2+b_0}\right),\,\text{with }b_4=a_2^2,\,\text{and }b_2=a_2 a_0.
\end{equation}
Partially differentiate it with respect to $n_d$:
\begin{equation}
\partiald{f}{n_d}=\frac{\nu_D^2n_d}{2N^2r}\left(2a_2-\frac{2b_4n_d^2+b_2}{\sqrt{b_4n_d^4+b_2n_d^2+b_0}}\right).\label{eq:partial}
\end{equation}
We'd like to see whether $\partial f/\partial n_d$ is always larger than zero when $f<\hbar^2/4$, which would imply
\begin{equation}
\partiald{E^{N:n_s,n_d}}{n_d}<0\label{eq:E}
\end{equation}
for negativities. Since 
\begin{equation}
a_2=-(r-1)(r-\gamma^2),
\end{equation}
$a_2$ is always negative for $r\in \mathbb{S}$. On the other hand, we can find
\begin{equation}
2b_4n_d^2+b_2=-2(r-1)(r-\gamma^2)\left[2rN(N-n_s)+(N n_s-n_d^2)(r^2+\gamma^2)+n_d^2r(1+\gamma^2)\right]<0,
\end{equation}
because $N\geq n_s>n_d$ and $r\in \mathbb{S}$. Therefore by \eqref{eq:partial}, if 
\begin{equation}
|2a_2|<\left|\frac{2b_4n_d^2+b_2}{\sqrt{b_4n_d^4+b_2n_d^2+b_0}}\right|,
\end{equation}
then $\partial f/\partial n_d>0.$ We can verify this by squaring and subtraction (also c.f. \eqref{eq:f})
\begin{align}
\left(2b_4n_d^2+b_2\right)^2-(2a_2)^2\left(b_4n_d^4+b_2n_d^2+b_0\right)&=4a_2^2(a_0^2-b_0)\nonumber\\&=16a_2^2N^2r\left[(N-n_s)+n_sr\right]\left[r(N-n_s)+n_s\gamma^2\right]>0,
\end{align} 
because $N\geq n_s.$ Hence with everything else fixed, as $n_d$ increases, the entanglement measure always decreases, unless the entanglement measure is zero already.
\end{proof}

\section{Discussions}
\subsection{Boundedness and Block Sizes}
Suppose we have two blocks, with $n_1$ and $n_2$ modes, from $N$ symmetric modes. As $N$ increases, the supremum of bipartite entanglement between them decreases, unless $n_1+n_2=N.$ What is the reason behind? By increasing the total number of interchangeable modes, we actually impose a stronger constraint on individual modes, i.e. $(a,b,c)$ of \eqref{eq:abc}. This should be clearer with the help of \eqref{eq:sig}: 
In terms of the parameters $(\nu_D,\nu_N,r)$ as defined by \eqref{eq:r} and \eqref{eq:ci}, which  \emph{are interchangeable with} $(a,b,c)$, we can find as $r\rightarrow 0$ or $r\rightarrow \infty$,
\begin{equation}
\sigma(X_{N'},X_{N'})\sigma(P_{N'},P_{N'})\rightarrow\begin{cases}
-\infty & \text{if } N'>N\\
\nu_N & \text{if } N'=N\\
\infty &\text{if } N'<N
\end{cases}.
\end{equation}
This implies that at both limits the symmetric state remains legitimate with $N$ or less total modes, but becomes invalid if there are $N'>N$ symmetric modes in total, violating \eqref{eq:goodstate}; hence we're left with ``less'' choices (still infinite) of $(a,b,c)$ as $N$ grows. Please heed that with the same ``local state,'' or $(a,b,c)$, the entanglement between two blocks of given sizes should be the same, irregardless of how large $N$ is: What $N$ does is restricting the choice of ``local states,'' which in turn restricts entanglement. A valid local state for $N=N_1$ may be invalid for $N=N_2>N_1.$

By Proposition~\ref{pro:de}, for all symmetric Gaussian states, at fixed $n_s$ and $N$ a larger $n_d$ decreases the negativities of block entanglement, which was observed by Serafini et al. \cite{Serafini05}. The suprema from Proposition~\ref{pro:sup} comply with this phenomenon as well: As the blocks become more equal, they become more entangled. This suggests an optimal strategy of ``gathering'' entanglement \cite{Serafini05}.

At the same state, the block entanglement between subsystems should be smaller than or equal to that between the parent systems. This is reflected by the fact that a larger $n_s$ implies larger suprema, which can be checked by partially differentiating them. Furthermore, we can replace $n_d$ and $n_s$ with $n_1$ and $n_2$, and find that these suprema do increase by adding more modes to either block.
\subsection{Monogamy of Entanglement and Multipartite Entanglement}
Monogamy of entanglement \cite{Coffman00,Terhal04}, describing the phenomenon that entanglement between multiple parties can't be shared freely, can be embodied by an inequality first introduced by Coffman, Kundu, and Wooters (CKW) \cite{Coffman00,Adesso06Int}:
\begin{equation}
E^{p_1|(p_2,\dotsc,p_N)}\geq \sum_{i=1}^{N}E^{p_1|p_N},\label{eq:ckw}
\end{equation}
where $p_i$ denotes the $i$-th party, and $E^{p_i|p_j}$ is the bipartite entanglement between party $i$ and party $j$. In the tripartite case, the difference between the left and right hand sides of the inequality above is the residual entanglement, characterizing the tripartite entanglement \cite{Adesso06}. However, not all genuine entanglement measures obey this inequality; instead, a class of entanglement measures has been chosen to satisfy this condition, such as tangle \cite{Coffman00} and cotangle \cite{Adesso06,Hiroshima07}. A stronger form of entanglement monogamy was also proposed \cite{Adesso07PRL,Adesso08}:
\begin{equation}
E^{p_1|(p_2,\dotsc,p_N)}=\sum_{j=2}^{N}E^{p_1|p_j}+\sum_{k>j=2}^{N}E^{p_1|p_j|p_k}+\dots+E^{\underline{p_1}|p_2|\dots|p_N},\label{eq:strmo}
\end{equation}
where all terms except for the first are multipartite entanglements with more than two parties, called the residual entanglements, but the last one isn't ``genuine.'' The genuine residual $N$-partite entanglement $E^{p_1|p_2|\dots|p_N}$ is found by minimizing $E^{\underline{p_{i_1}}|p_{i_2}|\dots|p_{i_N}}$\footnote{$i_2<i_3<\cdots<i_N$, so there's only one symbol for a particular residual entanglement.} over all probing parties $\underline{p_{i_1}}$, $i_1=1,2,\cdots ,N$. In principle with the information of bipartite entanglements any multipartite entanglement can be attained by iteration; please read Ref.~\cite{Adesso07PRL,Adesso08} for details.

Even though negativity and logarithmic negativity, the focus of this work, in general don't obey CKW inequality \cite{Adesso06}, our results show some interesting aspects regarding Gaussian multipartite entanglement: The bipartite block entanglement, with respect to negativities, becomes unbounded when $n_s=N$, bounded otherwise. If we have $m$ blocks from $N$ symmetric modes, with $n_1+\dots+n_m<N$, then since all bipartite entanglement between them is bounded, \eqref{eq:strmo} implies that the $m$-partite entanglement among these $m$ blocks is also bounded. On the other hand, if $n_1+\dots+n_m=N$, because the bipartite entanglement between one and the remaining parties is unbounded, we expect the multipartite entanglement involving all of them to be unbounded as well. In short, if the blocks fill all the symmetric modes, the multipartite entanglement is expected to be unbounded; bounded if not.

A few more words can be also said on the boundedness: Because when $n_1+n_2=N$ the entanglement is unbounded, that the maximal entanglement between blocks of given sizes decreases with increasing $N$ isn't due to a limited amount of entanglement distributed to more parties (by \eqref{eq:ckw}, the monogamy of entanglement), leading to every party gaining less: There's (potentially) infinite entanglement to begin with.

Rigorously speaking, this problem should be treated with a suitable measure that obeys \eqref{eq:ckw}, like cotangle. However, this is beyond our current scope, and we hope this primitive discussion can inspire future works.
\subsection{Available Entanglement}\label{sec:ava}
Because logarithm is monotonic, from Proposition~\ref{pro:sup} how $\sup E$ behaves depends on
\begin{equation}
K(N,n_s,n_d):=\frac{n_s^2-n_d^2}{n_s(N-n_s)}.\label{eq:K}
\end{equation}

For $n_s<N$, to maximize $K$ clearly we need $n_s=N-1$, and\footnote{Here $\max K$ refers to the maximum of the range of $K$ with a fixed $N$.}
\begin{equation}
\max K=\begin{cases}
N-1 & \text{for odd } N \text{ and by choosing }n_d=0\\
N-1-(N-1)^{-1} & \text{for even } N \text{ and by choosing }n_d=1
\end{cases},\label{eq:n-1}
\end{equation}
which clearly becomes larger as $N$ increases. Therefore, excluding the case $n_s=N$, as the total number of symmetric modes $N$ increases, we can actually have more entanglement at out disposal.

By \eqref{eq:sup1} and \eqref{eq:sup4} at a given $\nu_D$ no entanglement can exist between any two blocks if
\begin{equation}
K(N,n_s,n_d)+1\leq \frac{4\nu_D^2}{\hbar^2},
\end{equation}
so by \eqref{eq:n-1} it means when
\begin{equation}
\frac{4\nu_D^2}{\hbar^2}\geq\begin{cases}
N & \text{for odd } N\\
N-(N-1)^{-1} & \text{for even } N
\end{cases},
\end{equation}
two blocks with $n_s<N$ can only be separable; in other words for symmetric modes with $\nu_D$ satisfying the condition above any two blocks are separable unless $n_s=N.$
\subsection{Purities}
If we can gain information of $\nu_D$, then it's possible to lower the supremum, i.e. using \eqref{eq:sup1} instead of \eqref{eq:sup2}. A symmetric Gaussian \emph{pure} state has $\nu_D=\nu_N=\hbar/2$, and for bisymmetric \cite{Serafini05} or multisymmetric Gaussian \emph{pure} states \cite{Adesso08} composed of several families of symmetric modes $\nu_D=\hbar/2$, so for those states the upper bound cannot be lowered. Because the global purity $\mu$ of symmetric Gaussian modes is \cite{Serafini04,Adesso04_2}
\begin{equation}
\mu= \frac{1}{\nu_N \nu_D^{N-1}}\left(\frac{\hbar}{2}\right)^N=\frac{1}{\gamma}(\frac{\hbar}{2\nu_D})^N,\label{eq:mu}
\end{equation}
which fixes the parameter $\gamma.$ However, since the the suprema lie on the boundary $r\rightarrow 0$ and $r\rightarrow \infty$, $\gamma$ doesn't really play a role in determining the values of the suprema. Hence if the global purity is known, which even though can ``shrink'' the domain, the suprema stay the same.

As a comparison, Adesso et al. \cite{Adesso04} found that the entanglement between two modes of a Gaussian state is bounded from below and above if the global (two-mode) and marginal (one-mode) purities are known. However, were we to ascertain the bounds imposed by purities of multiple modes (e.g. global purity) and of one mode for symmetric Gaussian states, it either would constitute a complicated constraint of the domain for the current parameters in use, i.e. $(\nu_D,\gamma,r)$ or $(a,b,c)$ of \eqref{eq:abc}, or while reducing the variables by one, would require a suitable parameter and technique to find the bounds, and hence not the subject of our work at this stage.
\subsection{Approximation and Two Modes}\label{sec:app}
Assuming the base of logarithm is $e$, when $N\gg n_s$ (and thus $N\gg n_d$) \eqref{eq:sup2} approximates
\begin{equation}
\sup E_{L}^{N:n_s,n_d}\simeq \frac{n_s^2-n_d^2}{2n_s N}\simeq \frac{n_s}{2N},\label{eq:approx}
\end{equation}
where the second approximation can be made if $n_d\ll n_s$ or $n_d=0$. By the same token, we acquire $\sup E_{N}^{N:n_1|n_2}\simeq \left(n_s^2-n_d^2\right)/\left(4n_s N\right)$, and $\sup E_{N}^{N:n_1|n_2}\simeq n_s/(4N)$ when $n_d\ll n_s.$ Here scaling with $N$ and $n_s$ is demonstrated in an explicit manner: proportional to $n_s$ and inversely proportional to $N$ when $N\gg n_s$.

Now suppose $n_1=n_2=1$. \eqref{eq:sup2} becomes
\begin{equation}
\sup E_{L}^{N:1|1}=\frac{1}{2}\log\left(1+\frac{2}{N-2}\right),
\end{equation}
as obtained Ref.~\cite{Kao16}, which was proved for a bisymmetric \cite{Serafini05} Gaussian pure state under the assumption that the coefficients of the wave function are real.\footnote{The proof is also provided in Appendix~\ref{app:entbos}, with notation updated.} Taking the base to be $e$,
\begin{equation}
\sup E_{L}^{N:1|1}\simeq \frac{1}{N-2}\simeq\frac{1}{N}.
\end{equation}
$1/N$ usually is a good approximation for large $N$. But if we want an approximation that also serves as an upper bound, then $1/(N-2)$ should be chosen. The reason is
\begin{equation}
\frac{1}{N}<\sup E_{L}^{N:1|1}=\frac{1}{2}\log\left(1+\frac{2}{N-2}\right)<\frac{1}{N-2}.
\end{equation}
The proof of the above inequality is simple. First,
\begin{equation}
\frac{d(1/N)}{dN}=-\frac{1}{N^2},\,\frac{d(\sup E_{L}^{N:1|1})}{dN}=-\frac{1}{N(N-2)},\, \frac{d}{dN}(\frac{1}{N-2})=-\frac{1}{(N-2)^2},
\end{equation}
and thus
\begin{equation}
\frac{d}{dN}(\frac{1}{N}-\sup E_{L}^{N:1|1})>0 \text{ and } \frac{d}{dN}(\sup E_{L}^{N:1|1}-\frac{1}{N-2})>0 \text{ for }N>2,
\end{equation}
meaning that both $\frac{1}{N}-\sup E_{L}^{N:1|1}$ and $\sup E_{L}^{N:1|1}-\frac{1}{N-2}$ monotonically increase for $N>2$. Because
\begin{equation}
\lim_{N\rightarrow\infty}\frac{1}{N}=\lim_{N\rightarrow\infty}\sup E_{L}^{N:1|1}=\lim_{N\rightarrow\infty}\frac{1}{N-2}=0,
\end{equation}
$1/N-\sup E_{L}^{N:1|1}$ and $\sup E_{L}^{N:1|1}-1/(N-2)$ should approaches 0 as $N\rightarrow\infty.$ By their monotonicity, we have
\begin{equation}
\frac{1}{N}-\sup E_{L}^{N:1|1}<0 \text{ and } \sup E_{L}^{N:1|1}-\frac{1}{N-2}<0.
\end{equation}
This completes the proof. Therefore, $E_{L}^{N:1|1}$ may be larger than $1/N$ but will always be smaller than $1/(N-2).$

\section{Conclusion}
Proposition~\ref{pro:sup} shows the least upper bounds of block entanglement for Gaussian state with $N$-symmetric modes in terms of logarithmic negativity and negativity, and if $\nu_D$ of \eqref{eq:ci}, the degenerate symplectic eigenvalue can be known, we have tighter bounds. Such upper bounds originates from the symmetry of the state, and the basic requirement that the state obey the uncertainty relation, satisfying \eqref{eq:goodstate}. These two conditions together impose a constraint on the modes, which becomes more stringent for each individual mode as the total number of modes $N$ increases; that is, with ``local'' parameters like $(a,b,c)$ of \eqref{eq:abc} to describe a symmetric Gaussian state, the domain shrinks as a result of increasing $N$. 

Despite the dwindling domain of local parameters, the possibility for the blocks to swell even more, i.e. larger $n_s=n_1+n_2$ can not only compensate for that, but the achievable maximal block entanglement, or unitarily localizable entanglement can increases as a result of increasing total modes \eqref{eq:n-1}, at the expense of entanglement between blocks of fixed sizes, say $n_1=n_2=1$, c.f. some previous works \cite{Adesso04_2,Serafini05}. Yet at whichever $N$, for two blocks with $n_1+n_2=N$, there exists no upper bound for their block entanglement. In other words, the unitarily localizable entanglement will be wasted if it's not gathered across all the symmetric modes, but if the localization can be done in full, then the total number of symmetric modes makes little difference, in terms of the amount of available resource that is entanglement. Moreover, by the ``strong'' monogamy of entanglement, the same can be anticipated for $m$-partite entanglement, i.e. whether it's bounded depends on if $n_1+n_2+\dots+n_m=N.$

Gaussian states, as a subset of states of an infinite-dimensional space, have an unbounded entanglement (with respect to measures like negativities) if there's no constraint put on them. In this regard, symmetry under mode swapping is a very strong condition, which renders the entanglement bounded. Ascertaining the exact bound was then made possible by choosing global parameters with a clear boundary, \eqref{eq:domain}. 

\chapter{Summary}\label{ch:con}

\section{Entangling Capacity}
In Section~\ref{sec:HSinner}, \ref{sec:HPF} and Chapter~\ref{ch:li}, I introduced the necessary concepts for deriving the results in Chapter~\ref{ch:EC}, in particular, 
\begin{itemize}
\item Hilbert-Schmidt inner product as an inner product on operators,
\item and the associated adjoint of a linear mapping on operators;
\item Choi isomorphism as an isomorphism between linear mappings on operators and tensor product of operators.
\item How to decompose a Hermitian operator as the difference between two positive operators,
\item and similarly, how to decompose an HP mapping as the difference between two CP mappings.
\item Several kinds of linear mappings on operators: CP, TP, HP mappings and quantum operations (CPTP).
\end{itemize}

In Chapter~\ref{ch:EC}, I demonstrated there exist upper bounds of entangling capacity with respect to negativities (Proposition~\ref{pro}), where I showed bounds for entangling capacity of a deterministic operation, average entangling capacity of a probabilistic operation, and lower bounds for a sub-operation, and since the bounds for deterministic operations are less complex, emphasis is put on them. The bounds are related to PPT-ness of an operation, which to a degree this is hardly surprising, because negativity is by definition how a state violates PPT-ness, so we essentially found that non-PPT-ness of an operation has to do with non-PPT-ness of the resultant state.  These bounds, in a different and more limited form, were discovered by Campbell in Ref.~\cite{Campbell10}, and in Section~\ref{sec:up} I explained how these two seemingly distinct bounds are related. 

A norm (or two norms) was defined in Section~\ref{sec:geo}, which quantifies the PPT-ness of an operation or state. It's essentially the ``negativity,'' but by showing it's indeed a norm or metric we're able to give it a geometrical significance. Since the bounds for entangling capacity in Proposition~\ref{pro} have to do with PPT-ness, it implies that the length of an operation bounds how much longer the state can get at most.  Proposition~\ref{pro:2} furthers this concept, by showing the operational importance of the distance between two operations or between two states.

Proposition~\ref{pro:sppt} was actually a by-product of this project: During the proof for Lemma~\ref{lem:com}, I found a unitary operation is PPT if and only if it's separable, and it's very natural to extend this result to pure states.

In Proposition~\ref{pro:ecs} we showed that the upper and lower bounds (taking $\widetilde{S^\Gamma}_\pm=S^\Gamma_\pm$) in Proposition~\ref{pro} become identical if and only if ${S^\Gamma_\pm}^\dagger(I)\propto I$, and under what condition the upper bounds shown in Proposition~\ref{pro} can be saturated. This is particularly useful when ${S^\Gamma_\pm}^\dagger(I)\propto I$, which is obeyed by all unitary operations on $2\otimes 2$ space, so theoretically we can obtain the entangling capacities with respect to negativity of all unitary operations on $2\otimes 2$ space---of course practically this may be very hard to solve. Some examples were prepared in Section~\ref{sec:ex} for showcasing.

\section{Supremum of Block Entanglement for Symmetric Gaussian States}

In Chapter~\ref{ch:gaf} we discussed the techniques used for treating so-called continuous-variable (CV) systems, or density operators on $L^2(\mathbb{R}^n)$ in a more mathematical language . By Weyl quantization (Section~\ref{sec:Weyl}), we transform a density operator on $L^2(\mathbb{R}^n)$ to a real function on the phase space $\mathbb{R}^{2n}$ (Section~\ref{sec:psv}), and the metaplectic transform, a unitary operator on $L^2(\mathbb{R}^n)$, becomes a symplectic transform on the phase space. Therefore we can change the coordinate of a phase-space function symplectically, knowing that the density operators will be unitarily equivalent. A symplectic/orthogonal transform for (multi)symmetric states was then introduced that can greatly simplify the covariance matrix, and it was employed on blocks of modes from a symmetric Gaussian state.

In Chapter~\ref{ch:SGS}, first I brought in three ``global'' parameters for quantifying the entanglement of a symmetric Gaussian state: As the total number of modes $N$ plays an important role, this parameterization are the reasonable choice and would facilitate the derivation. 

We then showed how to derive Proposition~\ref{pro:sup} in Section~\ref{sec:sear}. The reader could observe that it's basically repeated applications of basic concepts from fundamental calculus---Not that it was a particularly easy task, as it wasn't always obvious where to use what, and the equations, if not knowing how to tide them up, could be very messy, and to get anything useful out of them would be difficult, if not nigh impossible, without the aid of computers. What is laid out in Section~\ref{sec:sear} is an optimized procedure.

Proposition~\ref{pro:sup} is one the primary results of this thesis. It shows there exists an upper bound of block entanglement for symmetric Gaussian states, and the lowest upper bound is determined by the number of modes in the two blocks, and how many modes the symmetric Gaussian state has in total. Specifically, it shows being Gaussian and symmetric limits the choice of ``local'' states, and this limit gets stronger as the total number of modes $N$ increases.

Proposition~\ref{pro:sup} also shows that there's no bound if $n_1+n_2=N$; namely, the ``global'' state as a whole doesn't have its block entanglement bounded. In addition, in Section~\ref{sec:ava} it was proved that a higher $N$, i.e. total number of modes, actually heightens the supremum of block entanglement if the block sizes are allowed to increase as well. With fixed block sizes, a higher $N$ results in a lower supremum. 

Finally, in Section~\ref{sec:app} some simple approximations are made, so that how the supremum varies as $N$ and $n_s=n_1+n_2$ can be easily observed. The case $n_1=n_2=1$ is one I observed in my master's thesis \cite{Kao} and proved under extra assumptions in Ref.~\cite{Kao16}. 
\section{Outlook}
Whether the entangling capacity in terms of another measure is attainable depends on how complex the measure is, which other techniques shall be employed or developed to examine. The method developed for finding the upper bounds of entangling capacities in this thesis has potential for studying other related topics of quantum operations. Since the technique is very linear in nature, it has some limitations, but there should be some subjects where it can be applied, for example where trace norm is used for quantifying. As entanglement isn't the only area of interest in quantum information, other important quantities or topics can also be scrutinized. 

The current method for attaining the bound of entangling capacity doesn't always yield an optimal result, in that the upper bound obtained may be far higher than the operation is really capable of. Is it possible to tighten the bounds by using other methods?

Whether a similar approach can be adopted to obtain the entangling capacity/capability in terms of negativity for Gaussian states can be investigated. In Ref.~\cite{Wolf03} metaplectic transforms were already studied, so the next step should be the more general Gaussian operations \cite{Gideke02}.

Given any symmetric Gaussian state, Proposition~\ref{pro:de} shows that evenly-sized blocks have higher negativity. Is this a general phenomenon regardless of entanglement measures, and does it hold true for other types of symmetric systems? This may be an interesting question to study.

The method employed in this work may be helpful in other related issues, where (multi)symmetric Gaussian states are the subject. Some lessons learned, or properties found may be applicable or should be checked when handling other symmetric systems: For example, to characterize the system sometimes it may be better to choose ``global'' parameters. 

When carrying out the study for Ref.~\cite{Kao18}, i.e. Chapter~\ref{ch:gaf} and \ref{ch:SGS}, it was found that there seems to exist an upper bound for block entanglement from different families of symmetric modes too. This may deserve further research.

\appendix

\chapter{Equivalence of Norms}
\label{app:en}
Here we'd like to show for any Hermitian operator $H$ on $\mathcal{H}_A\otimes\mathcal{H}_B$,
\begin{equation}
\frac{||H||_1}{\min(d_A,d_B)}\leq ||H||_{1,\Gamma}\leq \min(d_A,d_B)||H||_1,
\end{equation}
where $d_A:=\mathrm{dim}\mathcal{H}_A$ and $d_B:=\mathrm{dim}\mathcal{H}_B$. By Ref.~\cite{Vidal02}, for a state with Schmidt decomposition $\ket{\psi}=\sum_i \lambda_i \ket{a_i}\ket{b_i}$
\begin{equation}
||(\ket{\psi}\bra{\psi})^\Gamma||_{1}=||(\ket{\psi}\bra{\psi})||_{1,\Gamma}=\Big(\sum_i \lambda_i\Big)^2.
\end{equation}
This can be proved with the aid of a unitary operator as in Ref.~\cite{Vidal02}, or
\begin{equation}
[(\ket{\psi}\bra{\psi})^\Gamma]^2=\sum_i \lambda_i^2 \ket{a_i}\bra{a_i}\otimes\sum_j\lambda_j^2 \ket{b_j}\bra{b_j},
\end{equation}
so $|(\ket{\psi}\bra{\psi})^\Gamma|=\sum_{i,j} \lambda_i\lambda_j \ket{a_i}\bra{a_i}\otimes \ket{b_j}\bra{b_j}$ and $||(\ket{\psi}\bra{\psi})^\Gamma||_1=(\sum_i\lambda_i)^2.$

Because the Schmidt rank of a vector in $\mathcal{H}_A\otimes\mathcal{H}_B$ is at most $\min(d_A,d_B),$ by Schwarz inequality
\begin{equation}
\Big(\sum_i \lambda_i\Big)^2\leq \sum_{i}\lambda_i^2\sum_{j}1^2=\min(d_A,d_B).
\end{equation} 
Hence, for any Hermitian operator $H$ with spectral decomposition $H=\sum_i h_i \ket{\psi_i}\bra{\psi_i}$,
\begin{equation}
||H^\Gamma||_1=||H||_{1,\Gamma}\leq \sum_i |h_i|\,||(\ket{\psi_i}\bra{\psi_i})||_{1,\Gamma}\leq \min(d_A,d_B)\sum_i |h_i|=\min(d_A,d_B)||H||_1.
\end{equation}
Since $(H^\Gamma)^\Gamma=H$ (and because $\Gamma$ is HP), for the same reason $||H||_1\leq \min(d_A,d_B)||H^\Gamma||_{1}=\min(d_A,d_B)||H||_{1,\Gamma}$. 

For any HP mapping $L:\mathcal{B}(\mathcal{H}_1^A\otimes\mathcal{H}_1^B)\rightarrow\mathcal{B}(\mathcal{H}_2^A\otimes\mathcal{H}_2^B)$, $\mathscr{T}(L)\in \mathcal{B}(\mathcal{H}_1^A\otimes\mathcal{H}_1^B\otimes\mathcal{H}_2^A\otimes\mathcal{H}_2^B)$, $\mathscr{T}(L^\Gamma)=\mathscr{T}(L)^\Gamma$, c.f. \eqref{eq:Sgamma}. The Schmidt rank of a vector in $(\mathcal{H}_1^A\otimes \mathcal{H}_2^A)\otimes (\mathcal{H}_1^B\otimes \mathcal{H}_2^B)$ is at most $\min(d_1^A d_2^A,d_1^B d_2^B),$ where $d_i^j:=\mathrm{dim}\mathcal{H}_i^j$, so from the discussion above we have
\begin{equation}
\frac{||L||_1}{\min(d_1^A d_2^A,d_1^B d_2^B)}\leq ||L||_{1,\Gamma}\leq \min(d_1^A d_2^A,d_1^B d_2^B)||L||_1.
\end{equation}
For non-Hermitian operators and non-HP mappings, similar relations should exist.
\chapter{Isometry with Respect to Hilbert-Schmidt Inner Product}
\label{app:isHS}
Here we'll show isometry (see Section~\ref{sec:isometry}) of Choi isomorphism, transposition and partial transposition with respect to Hilbert-Schmidt inner product. I will consider linear mappings $\mathcal{B}(\mathcal{H}_1)\rightarrow \mathcal{B}(\mathcal{H}_2),$ and $\{E_{ij}\}$ is an orthonormal basis of $\mathcal{H}_1$, where $\mathcal{H}_i$ are finite-dimensional. Because of isometry and bijectivity, these linear mappings are actually unitary mappings, so they are Hilbert space isomorphisms (Section~\ref{sec:isometry}). 
\section{Choi Isomorphism}
By \eqref{eq:in2}, for two linear mappings $L_1,\,L_2:\mathcal{B}(\mathcal{H}_1)\rightarrow \mathcal{B}(\mathcal{H}_2)$ their Hilbert-Schmidt inner product is:\footnote{Assume the inner product adopted on $\mathcal{B}(\mathcal{H}_2)$ is also Hilbert-Schmidt inner product.}
\begin{align}
(\mathscr{T} (L_1)|\mathscr{T} (L_2))&=\Big(\sum_{i,j}E_{ij}\otimes L_1(E_{ij})\Big|\sum_{k,l}E_{kl}\otimes L_1(E_{kl})\Big)\nonumber\\
&=\sum_{i,j,k,l}(E_{ij}|E_{kl})(L_1(E_{ij})|L_2(E_{kl}))\nonumber\\
&=\sum_{i,j}(L_1(E_{ij})|L_2(E_{ij}))\nonumber\\
&=(L_1|L_2).
\end{align}
\section{Transposition and Partial Transposition}
\subsection{Operators}
First, it's easy to see
\begin{equation}
(E_{ij}\tr|E_{kl}\tr)=(E_{ij}|E_{kl}).
\end{equation}
Therefore, transposition on operators is isometric, by linearity (and conjugate linearity) of inner product and transposition. To show isometry of partial transposition on operators, in addition to (conjugate) linearity, the other property we need is \eqref{eq:OQ}:
\begin{equation}
((O_1\otimes Q_1)^\Gamma|(O_2\otimes Q_2)^\Gamma)=(O_1\tr|O_2\tr)(Q_1|Q_2)=(O_1\otimes Q_1|O_2\otimes Q_2).
\end{equation}
\subsection{Linear Mappings from Operators to Operators}
For any linear mapping $L:\mathcal{B}(\mathcal{H}_1^A\otimes\mathcal{H}_1^A)\rightarrow \mathcal{B}(\mathcal{H}_2^A\otimes\mathcal{H}_2^A)$, with $\{E_{ij}\}$ and  $\{F_{ij}\}$ being orthonormal bases of $\mathcal{B}(\mathcal{H}_1^A)$ and $\mathcal{B}(\mathcal{H}_1^B)$, we can find
\begin{align}
(L^\Gamma|L^\Gamma)&=\sum_{i,j,k,l} (L^\Gamma(E_{ij}\otimes F_{kl})|L^\Gamma(E_{ij}\otimes F_{kl}))\nonumber\\
&=\sum_{i,j,k,l} (L(E_{ji}\otimes F_{kl})^\Gamma|L(E_{ji}\otimes F_{kl})^\Gamma)\nonumber\\
&=\sum_{i,j,k,l} (L(E_{ji}\otimes F_{kl})|L(E_{ji}\otimes F_{kl}))\nonumber\\
&=||L||^2_{HS}.
\end{align}
Isometry of the transposition on such mappings can be shown in the same way. Or we can instead prove by $\mathscr{T}(L^\Gamma)=\mathscr{T}(L)^\Gamma$, the isometry of Choi isomorphism and that of (partial) transposition on operators.

Because of the isometry of partial transposition, Hilbert-Schmidt norm is insensitive to PPT-ness of operations and states: The norm doesn't change after partial transposition whether the operations/states are PPT or not, unlike trace norm. Additionally, as operations and states are normalized according to trace norm,\footnote{By \eqref{eq:d}, $||\mathscr{T}(S)||_1$ is a fixed number.} it's difficult to compare different operations or states with Hilbert-Schmidt norm. These are the reasons why Hilbert-Schmidt norm doesn't see much use in this thesis.

\chapter{${S^\Gamma_i}_-$ does not Necessarily Have the Smallest Operator Norm}
\label{app:smanorm}
For a Hermitian operator $H$, because there exist $\widetilde{H}^\pm$ such that $\widetilde{H}^\pm-H^\pm\ngeq 0$ (see Section~\ref{sec:eig}), for an operation $S$, $\widetilde{S^\Gamma_i}_--S{^\Gamma_i}_-$ is not always CP and $\widetilde{S^\Gamma_i}_-^\dagger(I)-{S^\Gamma_i}_-^\dagger(I)$ can be non-positive. Therefore it can happen that $||{S^\Gamma_i}_-^\dagger(I)||> ||\widetilde{S^\Gamma_i}_-^\dagger(I)||$; nevertheless, ${S^\Gamma_i}_-^\dagger(I)$ has the smallest trace (norm), because of Lemma~\ref{lem:or} and Corollary~\ref{cor:1}. How to minimize the operator norm under such decompositions may be an interesting mathematical problem. 

To show $||{S^\Gamma_i}_-^\dagger(I)||\nleq ||\widetilde{S^\Gamma_i}_-^\dagger(I)||$ in general, we consider a deterministic operation $S=pS_1+(1-p)S_2$, where $0\leq p\leq 1$ and $S_i$ are unitary operations. We choose $2\otimes 3$ and $3\otimes 3$ unitary operations, the former being 
\begin{equation}
\left(\begin{array}{c|c}
I_{3}&0_{3}\\\hline
0_{3}&\begin{matrix}
\cos\beta&\sin\beta&0\\
-\sin\beta&\cos\beta&0\\
0&0&1
\end{matrix}
\end{array}\right)\cdot
\left(\begin{array}{c|c}
I_{4}&0_{4}\\\hline
0_{4}&\begin{matrix}
\cos\alpha&\sin\alpha\\
-\sin\alpha&\cos\alpha&
\end{matrix}
\end{array}\right),
\end{equation}
and the latter being
\begin{equation}\left(\begin{array}{c|c}
I_{6}&0_{6\times 3}\\\hline
0_{3\times 6}&\begin{matrix}
\cos\beta&\sin\beta&0\\
-\sin\beta&\cos\beta&0\\
0&0&1
\end{matrix}
\end{array}\right)\cdot
\left(\begin{array}{c|c}
I_{7}&0_{7\times 2}\\\hline
0_{2\times 7}&\begin{matrix}
\cos\alpha&\sin\alpha\\
-\sin\alpha&\cos\alpha&
\end{matrix}
\end{array}\right),\label{eq:3x3}
\end{equation}
where $I_n$ are $n\times n$ identity matrices and $0_{m\times n}$ are $m\times n$ zero matrices. The result is illustrated by Fig.~\ref{fig:U}, where we can see $||{S^\Gamma}_-^\dagger(I)||> ||\widetilde{S^\Gamma}_-^\dagger(I)||$ with  $\widetilde{S^\Gamma}_-=p{S_1^\Gamma}_-+(1-p){S_2^\Gamma}_-$ in some cases.

\begin{figure}[hbtp!]
	\begin{subfigure}[b]{0.5\linewidth}
		\centering
		\includegraphics[width=\linewidth]{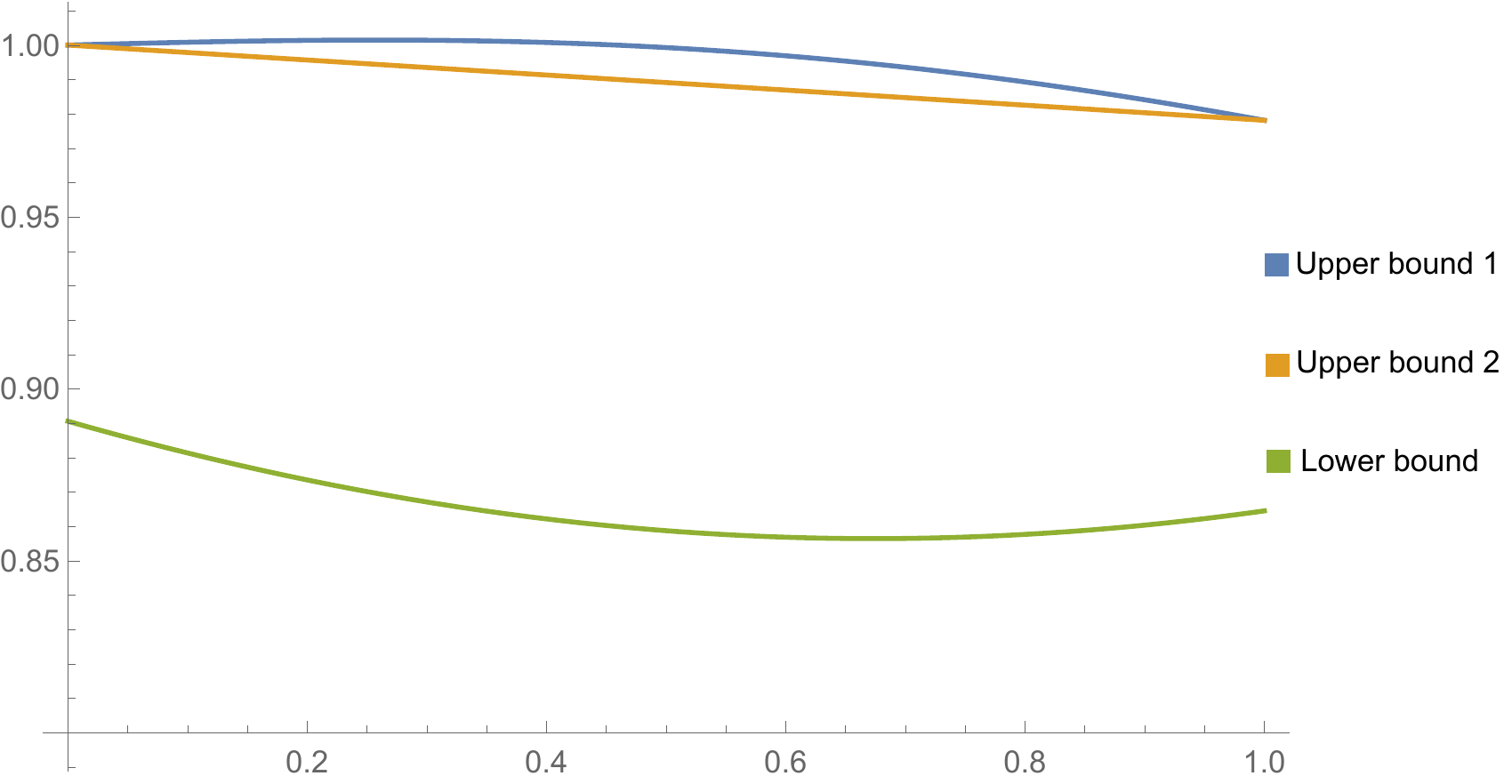}
		\caption{$2\otimes 3$ operations. Here upper bound 1 is larger than 2 when $0<p<1.$}\label{fig:23}
	\end{subfigure}
	\begin{subfigure}[b]{0.5\linewidth}
		\centering
		\includegraphics[width=\linewidth]{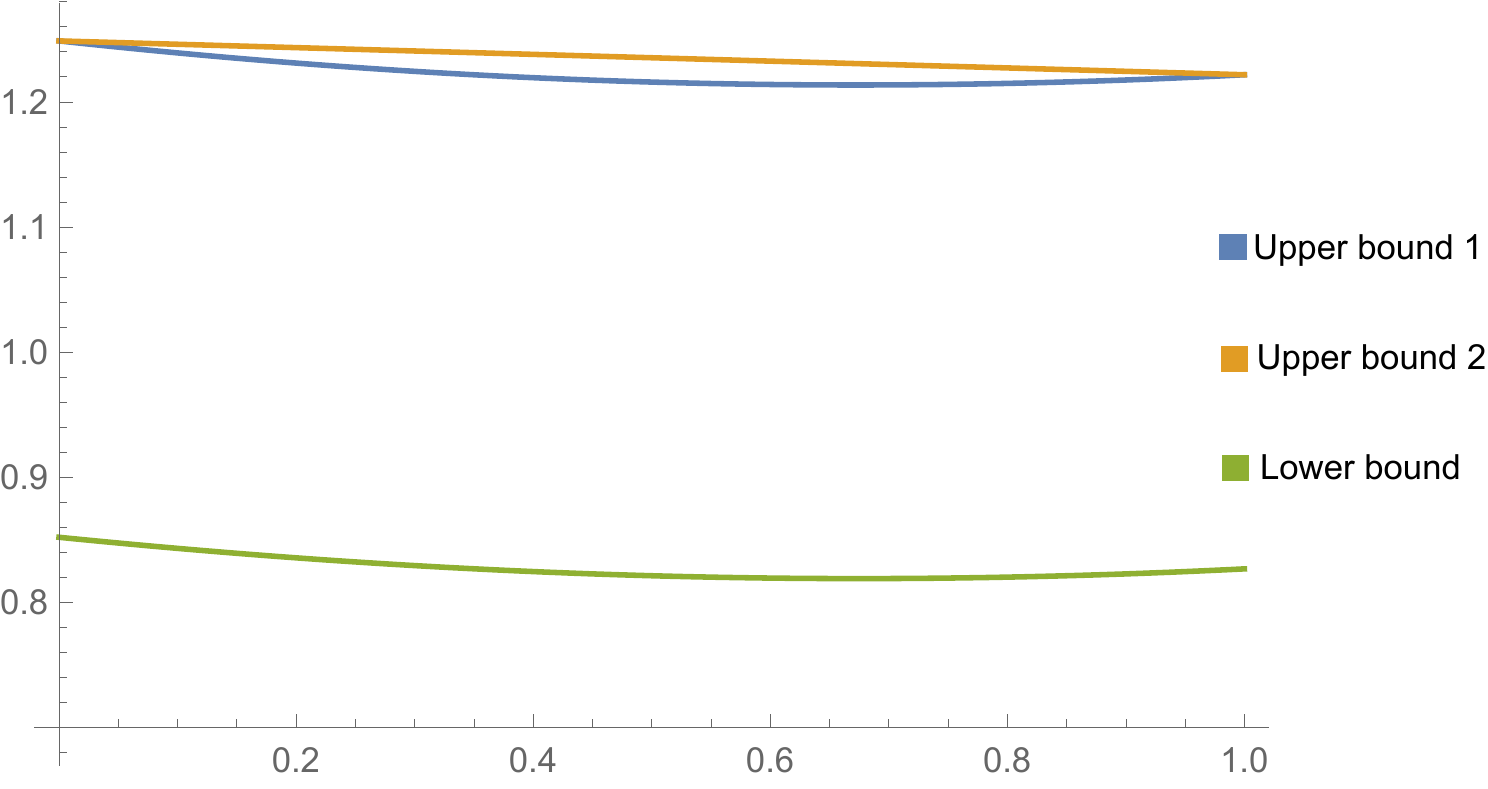}
		\caption{$3\otimes 3$ operations. Here upper bound 2 is larger than 1 when $0<p<1.$}\label{fig:33}
	\end{subfigure}
	\caption[Upper and lower bounds for $\text{EC}_L$]{Upper and lower bounds for $\text{EC}_L$ (base 2) of operations of the form $S(\rho)=p S_1+(1-p)S_2$ with unitary $S_i$, the lower bound accompanied as a reference. For both the $2\otimes 3$ and $3\otimes 3$ operations, $S_1$ has $\alpha=\pi/3$ and $\beta=\pi/5,$ and $S_2$ has for $S_2$, $\alpha=\pi/4$ and $\beta=\pi/3.$ Upper bound 1 refers to that obtained with $S^\Gamma_-$, and upper bound 2 with $p{S_1^\Gamma}_-+(1-p){S_2^\Gamma}_-$. Fig.~\ref{fig:23} clearly demonstrate that ${S^\Gamma_-}^\dagger(I)$ in general doesn't have the smallest operator norm among all $\widetilde{S^\Gamma}^\dagger_-(I)$.}
	\label{fig:U}
\end{figure}

\chapter{Schmidt Decomposition and Spectral decomposition}
\label{app:Sch}
The method from Ref.~\cite{Campbell10} used to show Lemma~\ref{lem:com} and Proposition~\ref{pro:sppt} can be further generalized as follows. Please read the proof in Section~\ref{sec:up} first, and refer to it as necessary.

For a (sub-)operation $S(A)=VAV^\dagger$ where the Schmidt decomposition of $V$ is $\sum_i \lambda_i A_i\otimes B_i,$ $S^\Gamma_\pm$ can be determined from the Schmidt decomposition of $V$, in particular \eqref{eq:VV}. Let's express \eqref{eq:SGa} as
\begin{align}
S^\Gamma_+(O)&:=\sum_i \lambda_i^2 V_{ii}^+ O {V_{ii}^+}^\dagger+\sum_{i< j} \lambda_i \lambda_j V_{ij}^+ O {V_{ij}^+}^\dagger\nonumber\\
S^\Gamma_-(O)&:=\sum_{i< j} \lambda_i \lambda_j V_{ij}^- O {V_{ij}^-}^\dagger.\label{eq:s+-o}
\end{align}
Because both $\{A_i\}$ and $\{B_i\}$ are composed of orthonormal operators, $(V_{ij}^\pm|V_{kl}^\pm)=\delta_{ik}\delta_{jl}$, and thus $\{V_{ij}^+: i\leq j\}$ and $\{V_{ij}^-:i<j\}$ are sets of orthonormal operators, and $V_{ij}^+\perp V_{kl}^-$. Therefore after Choi isomorphism, $\mathscr{T}(S^\Gamma)$ becomes an ensemble of orthogonal pure states, c.f. \eqref{eq:o3} and \eqref{eq:o4}, giving us a spectral decomposition of it. 

To find the eigenvalues and eigenvectors of $\mathscr{T}(S^\Gamma)$, because $\{V_{ij}^+:i\leq j\}$ and $\{V_{ij}^-:i<j\}$ are normalized and because of \eqref{eq:o4}, we can easily see $\mathscr{T}(S^\Gamma)$ has:
\begin{equation*}
\text{eigenvectors } I\otimes V_{ij}^\pm\ket{\Psi} \text{ with corresponding eigenvalues } \pm\lambda_i \lambda_j,\; i\leq j\text{ for }V_{ij}^+ \text{ and }i< j\text{ for }V_{ij}^-.\label{eq:Schei}
\end{equation*}
For example, for a unitary operation $S(A)=UAU^\dagger$ where $U=\sum_{i=1}^2 \lambda_i A_i\otimes B_i$ with Schmidt rank 2, $\mathscr{T}(S^\Gamma)$ has three positive eigenvalues and a negative one: $\lambda_1^2$, $\lambda_2^2$, and $\pm\lambda_1\lambda_2.$

In general, given a linear mapping $L$ of this form (c.f. \eqref{eq:sgf}):
\begin{equation}
L(O)=\sum_{i,j} A_i\otimes B_j O (A_j\otimes B_i)^\dagger,\; A_i\in\mathcal{B}(\mathcal{H}_1^A,\mathcal{H}_2^A),\;B_i\in\mathcal{B}(\mathcal{H}_1^B,\mathcal{H}_2^B),
\end{equation}
we can find a decomposition of $L$: $L=\widetilde{L}_+-\widetilde{L}_-$. If $\{A_i\}$ and $\{B_i\}$ are sets of orthogonal operators, then $\widetilde{L}_\pm=L_\pm,$ and we can determine the spectrum of $\mathscr{T}(L)$ by $\{A_i\}$ and $\{B_i\}$ alone.

Similarly, with an operator of the form
\begin{equation}
H=\sum_{i,j}\ket{v_j}\bra{v_i}\otimes\ket{w_i}\bra{w_j},
\end{equation}
where $\ket{v_i}$ and $\ket{w_i}$ aren't necessarily orthonormal, we can obtain
\begin{align}
\widetilde{H}^+&=\sum_{i} (\ket{v_i}\ket{w_i})(\bra{v_i}\bra{w_i})+\frac{1}{2}\sum_{i<j} (\ket{v_i}\ket{w_j}+\ket{v_j}\ket{w_i})(\bra{v_i}\bra{w_j}+\bra{v_j}\bra{w_i})\nonumber\\
\widetilde{H}^-&=\frac{1}{2}\sum_{i<j} (\ket{v_i}\ket{w_j}-\ket{v_j}\ket{w_i})(\bra{v_i}\bra{w_j}-\bra{v_j}\bra{w_i}).
\end{align}
If $\{\ket{v_i}\}$ and $\{\ket{v_i}\}$ are orthogonal sets of vectors, this gives us the spectral decomposition of $H$. This was utilized in Section~\ref{sec:pps}.
\chapter{Spectrum of a Symmetric Matrix with the Same Diagonal Entries and the Same Off-diagonal Ones}
\label{app:spe}
Consider a $n\times n$ matrix $M(n)$:
\begin{equation}
M(n)_{ij}=b+\delta_{ij}(a-b).
\end{equation}
It can be easily verified
\begin{equation}
\det M(2)=(a-b)(a+b)=(a-b)^{2-1}[a+(2-1)b].
\end{equation}
Therefore by mathematical induction, let's suppose 
\begin{equation}
\det M(n)=(a-b)^{n-1}[a+(n-1)b]\text{ for }n=2,3,\cdots, N-1.
\end{equation}
And
\begin{align}
\det M(N)&=\det\begin{pmatrix}
a     & b     & b    &\cdots& \cdots &\cdots & b\\
b     & a     & b    & b    & \cdots &\cdots & b\\
0     & b-a   & a-b  & 0    & 0      &\cdots & 0\\
0     & b-a   & 0    & a-b  & 0      &\cdots & 0\\
\vdots& \vdots& 0    & 0    & \ddots &\ddots &\vdots\\
\vdots&\vdots &\vdots&\vdots&\ddots  &\ddots &\vdots\\
0     & b-a   & 0    &0     &\cdots  &\cdots & a-b 
\end{pmatrix}\label{eq:det1}\\
&=\det\begin{pmatrix}
a-b   & b-a   & 0    & 0    & \cdots &\cdots & 0\\
b     & a     & b    & b    & \cdots &\cdots & b\\
0     & b-a   & a-b  & 0    & 0      &\cdots & 0\\
0     & b-a   & 0    & a-b  & 0      &\cdots & 0\\
\vdots& \vdots& 0    & 0    & \ddots &\ddots &\vdots\\
\vdots&\vdots &\vdots&\vdots&\ddots  &\ddots &\vdots\\
0     & b-a   & 0    &0     &\cdots  &\cdots & a-b 
\end{pmatrix}\label{eq:det2}
\end{align}
Note the determinants of the $(N-1)\times (N-1)$ submatrices in the bottom right corner of both \eqref{eq:det1} and \eqref{eq:det2} are $(a-b)^{N-2}[a+(N-2)b]$, because I didn't subtract the lower $N-1$ rows by the first row.\footnote{Same can be said of the $(N-1)\times (N-1)$ submatrices in the bottom left corner.} Hence
\begin{align}
\det M(N)&=(a-b)(a-b)^{N-2}[a+(N-2)b]-b \det \begin{pmatrix}
b-a   & 0    & 0    &\cdots &0\\
b-a   &a-b   & 0    &\cdots &0\\
\vdots& 0    &a-b   &\ddots &0\\
\vdots&\vdots&\ddots&\ddots &\vdots\\
b-a   &0     &0     &\cdots &a-b
\end{pmatrix}\nonumber\\
&=(a-b)(a-b)^{N-2}[a+(N-2)b]-b(b-a)(a-b)^{N-2}\nonumber\\
&=(a-b)^{N-1}[a+(N-1)b]
\end{align}
Therefore $\det M(n)=(a-b)^{n-1}[a+(n-1)b]$ for all integral $n\geq 2.$

To find the spectrum of $M(n)$ is to solve
\begin{equation}
\det(M(n)-\lambda I_n)=0
\end{equation}
$M(n)-\lambda I_n$ also has identical diagonal elements and identical off-diagonal ones, so 
\begin{equation}
\det(M(n)-\lambda I_n)=(a-\lambda-b)^{n-1}[a-\lambda+(n-1)b],
\end{equation}
and the eigenvalues are $a-b$ and $a+(n-1)b$, the degeneracy of the former one is $(n-1)$-fold. It's apparent that $(1,1,\cdots,1)$ is an eigenvector of $M(n)$, corresponding to the eigenvalue $a+(n-1)b$. 

\chapter{Covariances of a Multisymmetric System}
\label{app:covaiance}

Here we consider a multisymmetric system, consisting of families of symmetric modes. As a reminder, 
\begin{equation}
\mathbf{X}_i=O_i\mathbf{x}_i \text{ and } \mathbf{P}_i=O_i\mathbf{p}_i,
\end{equation}
where $O_i$ are orthogonal matrices with the first rows being $(1,1,\cdots,1)/\sqrt{N_i}$ and
\begin{equation}
\mathbf{X}_i=(X_i, u_{i,2},\cdots,u_{i,N_i}) \text{ and }\mathbf{P}_i=(P_i, \Pi_{i,2},\cdots,\Pi_{i,N_i}).
\end{equation}
In Ref.~\cite{Kao,Kao16} I used an operator approach to obtain a similar result, but here I instead will deal with the covariances directly, and treat $x$ and $p$ more as random variables than operators.

\section{First Moment}
The first moments, or means, don't show up often in our discussion in Chapter~\ref{ch:gaf} and \ref{ch:SGS}, as they can be removed by local unitary transforms, but they are included here for completeness. 

Since modes from the same family are identical, we immediately recognize
\begin{align}
\langle x_{i,j}\rangle&=\langle x_{i,1}\rangle=\dotsb=\langle x_{i,N_i}\rangle=\frac{\sqrt{N_i}}{N_i}\langle X_i\rangle=\frac{1}{\sqrt{N_i}}\langle X_i\rangle,\\
\langle p_{i,j}\rangle&=\langle p_{i,1}\rangle=\dotsb=\langle p_{i,N_i}\rangle=\frac{1}{\sqrt{N_i}}\langle P_{i,1}\rangle.
\end{align}
Since $O_i$ is an orthogonal matrix whose first row is $(1,1,\cdots,1)/\sqrt{N_i}$, by orthogonality
\begin{align}
\mean{u_{i,j}}&=\sum_{k=1}^{N_i} (O_i)_{jk}\mean{x_{i,k}}=\mean{x_{i,1}}\sum_{k=1}^{N_i} (O_i)_{jk}=0,\\
\mean{\Pi_{i,j}}&=\sum_{k=1}^{N_i} (O_i)_{jk}\mean{p_{i,k}}=0.
\end{align}

\section{Variances and Covariances}
\subsection{The Same Family}

From Appendix~\ref{app:spe}, we can see
\begin{align}
\sigma(X_i,X_i)&=\sigma(x_{i,j},x_{i,j})+(N_i-1)\sigma(x_{i,j},x_{i,k}),\,j\neq k,\\
\sigma(P_i,P_i)&=\sigma(p_{i,j},p_{i,j})+(N_i-1)\sigma(p_{i,j},p_{j,j}),\,j\neq k,
\end{align}
and
\begin{align}
\sigma(u_{i,j},u_{i,j})&=\sigma(x_{i,j},x_{i,j})-\sigma(x_{i,j},x_{i,k}),\,j\neq k,\\
\sigma(P_i,P_i)&=\sigma(p_{i,j},p_{i,j})-\sigma(p_{i,j},p_{i,k}),\,j\neq k,
\end{align}
from which we can obtain, for example
\begin{align}
N_i\sigma(x_{i,j},x_{i,j})&=\sigma(X_i,X_i)+(N_i-1)\sigma(u_{i,j},u_{i,j}),\\
N_i\sigma(x_{i,j},x_{i,k})&=\sigma(X_i,X_i)-\sigma(u_{i,j},u_{i,j}).
\end{align}

Next, let's investigate the covariances between position and momentum variables. Let $y$ denote $\sigma(x_{i,m},p_{i,m})$ and $z$ denote $\sigma(x_{i,m},p_{i,n})$ for $m\neq n$:
\begin{align}
\sigma\left((\mathbf{X}_i)_j,(\mathbf{P}_i)_k\right)&=\sum_{m,n}(O_i)_{jm}(O_i)_{kn}\sigma(x_{i,m},p_{i,n})\nonumber\\
&=y\sum_{m}(O_i)_{jm}(O_i)_{km}+z\sum_{\underset{m\neq n}{m,n}}(O_i)_{jm}(O_i)_{kn}\nonumber\\
&=y\delta_{jk}+z\sum_{\underset{m\neq n}{m,n}}(O_i)_{jm}(O_i)_{kn}.\label{eq:yz}
\end{align}
Because the rows of $O_i$ except for the first are orthogonal to $(1,1,\dotsc,1)$ when $j\neq 1$ or $k\neq 1$, the second summation of \eqref{eq:yz} vanishes. When $j=k=1$, $(O_i)_{1,m}=1/\sqrt{N_i}$ for all $m$, and because there are $N_i(N_i-1)$ terms, we have
\begin{equation}
\sigma\left((\mathbf{X}_i)_j,(\mathbf{P_i})_k\right)=
\begin{cases}
\sigma(x_{i,m},p_{i,m})+N_i(N_i-1) \sigma(x_{i,m},p_{i,n}),\,m\neq n & \text{when }j=k=1,\\
\delta_{jk} \sigma(x_{i,m},p_{i,m}) & \text{otherwise}
\end{cases}.
\end{equation} 
If using the standard form \eqref{eq:abc}, $\sigma\left((\mathbf{X}_i)_j,(\mathbf{P_i})_k\right)=0$, since $\sigma(x_{i,m},p_{i,n})=0$ for all $m,n$.

\subsection{Different Families}
In this subsection we will always assume $i\neq j$. We will just calculate the example below as a showcase:
\begin{align}
\sigma ((\mathbf{X}_i)_k,(\mathbf{X}_j)_l)&=\sum_{m,n} (O_i)_{km}(O_j)_{ln}\sigma(x_{i,m},x_{j,n})\nonumber\\
&=\sigma(x_{i,m},x_{j,n})\sum_{m} (O_i)_{km}\sum_n (O_j)_{ln}.
\end{align}
Following the discussion below \eqref{eq:yz}, we obtain
\begin{equation}
\sigma ((\mathbf{X}_i)_k,(\mathbf{X}_j)_l)=\begin{cases}
\sqrt{N_i N_j}\sigma(x_{i,m},x_{j,n}) & \text{when }i=j=1\\
0 & \text{otherwise}
\end{cases}.
\end{equation}
Similarly,
\begin{align}
\sigma ((\mathbf{X}_i)_k,(\mathbf{P}_j)_l)&=\begin{cases}
\sqrt{N_i N_j}\sigma(x_{i,m},p_{j,n}) & \text{when }i=j=1\\
0 & \text{otherwise}
\end{cases},\\
\sigma ((\mathbf{P}_i)_k,(\mathbf{P}_j)_l)&=\begin{cases}
\sqrt{N_i N_j}\sigma(p_{i,m},p_{j,n}) & \text{when }i=j=1\\
0 & \text{otherwise}
\end{cases}.
\end{align}

\chapter{Supremum for Symmetric Pure Gaussian States}
\label{app:entbos}
Here's an early effort of mine to find the supremum of negativity for symmetric Gaussian pure states, with some additional assumptions \cite{Kao16}. This was rendered redundant due to the more general result of Chapter~\ref{ch:SGS}, but the reader may find the approach here interesting, and maybe easier to comprehend.
\section{One Family of Symmetric Modes}
\label{subsec:entn1}
The symmetric Gaussian state is assumed to be pure:
\begin{equation}
\psi=\mathcal{N}\exp[-a\sum_{i=1}^{N}x_{i}^2+2b\sum_{i,j>i}^{N}x_{i}x_{j}].\label{eq:statei}
\end{equation}
If we assume that $a$ and $b$ in \eqref{eq:statei} are real, then with basic but lengthy algebras we can reduce it to
\begin{equation}
E_{L}^{N:1|1}=\begin{cases}
\frac{1}{2}\ln d & \text{if } d>1\\
0 & \text{otherswise}
\end{cases}, \text{ where }
d=\begin{cases}
\frac{a+b}{a-b} & \text{if }b\geq 0\\
\frac{a+b-b N}{a+3b-b N} & \text{otherwise}
\end{cases}.\label{eq:Ldab}
\end{equation}
The above equation clearly shows that when $b=0$ the entanglement measure vanishes, which corresponds to the fact that at $b=0$ the state is separable. Hence, we're more interested in circumstances where $b\neq 0$, so defining $r:= a/b$ we have
\begin{equation}
d=\begin{cases}
1+\frac{2}{r-1} & \text{if }b\geq 0\\
1+\frac{2}{N-r-3} & \text{otherwise}
\end{cases}.\label{eq:dabr}
\end{equation}
There are also constraints for the wave function \eqref{eq:statei} such that it's square-integrable:
\begin{equation}
a-(N-1)b>0, \;a+b>0, \;a>0;\text{ i.e.}\begin{cases}
r>N-1 & \text{if }b\geq 0\\
r<-1 & \text{otherwise}
\end{cases},\label{eq:con}
\end{equation}
taking into account $N\geq 2.$ Now let's discuss the value of $d$ and thus $E_{L}^{N:1|1}$ by the signs of $b$.
\subsection{$b<0$}
$d$ in this case monotonically increases with $r$, except for the discontinuity at $N-r-3=0,$ namely at $r=N-3\geq -1$. Under the constraint \eqref{eq:con}, $r$ never crosses this discontinuity, and thus the suprema of $d$ and $E_{L}^{N:1|1}$ are given by
\begin{equation}
\lim_{r\rightarrow -1^-} d=1+\frac{2}{N-2} \text{ so } \lim_{r\rightarrow -1^-}E_{L}^{N:1|1}=\frac{1}{2}\ln(1+\frac{2}{N-2}),\label{eq:supbl}
\end{equation}
because of the monotonicity of the logarithm. Now it's obvious that the maximum value of $E_{L}^{N:1|1}$ decreases as $N$ increases. In addition, at $N=2$ both $d$ and $E_{L}^{N:1|1}$ aren't bounded from above.

At large $N$, because $r<-1$, $1/(N-r-3)\ll 1$ for all possible $r$ and therefore we have
\begin{equation}
E_{L}^{N:1|1}\simeq \frac{1}{2}\frac{2}{N-r-3}\simeq \frac{1}{N-r}.\label{eq:nmr}
\end{equation}
Whenever $|r|\ll N$, $E_{L}^{N:1|1}\simeq 1/N.$ Hence, the larger $N$ is, the larger the state space where $E_{L}^{N:1|1}\simeq 1/N$ is.

\subsection{$b>0$}
$d$ now monotonically decreases as $r$ increases. Since the discontinuity happens at $r=1$, while the constraint \eqref{eq:con} requires $r>N-1\geq 1$, the suprema of $d$ and $E_{L}^{N:1|1}$ are
\begin{equation}
\lim_{r\rightarrow (N-1)^+} d=1+\frac{2}{N-2} \text{ and } \lim_{r\rightarrow (N-1)^+}E_{L}^{N:1|1}=\frac{1}{2}\ln(1+\frac{2}{N-2}),
\end{equation}
same as the suprema for $b<0,$ \eqref{eq:supbl}.

As \eqref{eq:Ldab} or \eqref{eq:dabr} shows, $d$ adopts different forms according to the sign of $b$. More interestingly while $d$ for $b>0$ doesn't depend on $N$, $d$ for $b<0$ does. At first glance there seems to be discontinuity, but as \eqref{eq:Ldab} shows $d$ under both circumstances approaches to $0$ as $b$ approaching 0, so the entanglement measure changes continuously as $b$ switches the sign.
\section{Two Families of Symmetric Modes}
\label{subsec:entn2}
Now the state is
\begin{equation}
\psi(\mathbf{x}_1,\mathbf{x}_2)=\mathcal{N}\exp[-\sum_{i=1}^{N_1}b_{11}x_{1,i}^2+2b_1\sum_{i>j}^{N_1}x_{1,i} x_{1,j}-\sum_{i=1}^{N_2}b_{22}x_{2,i}^2+2b_2\sum_{i>j}^{N_2}x_{2,i} x_{2,j}+c_{12}\sum_{i=1}^{N_1}\sum_{j=2}^{N_2}x_{1,i}x_{2,j}].
\end{equation}
We'll consider the negativity between two modes in the first family. After the transformation on the coordinate:
\begin{equation}
\psi(\mathbf{X}_1,\mathbf{X}_2)=\mathcal{N}\exp[-\epsilon_1 X_1^2-\epsilon_2 X_2^2+c_{12}X_1 X_2]\exp[-\zeta_1\sum_{i=1}^{N_1-1} u_{1,i}^2-\zeta_2\sum_{j=1}^{N_2-1}u_{2,j}^2],
\end{equation}
where $X_1$, $X_2$, $u_{1,i}$ and $u_{2,j}$ are as defined in Section~\ref{sec:or} and $\epsilon_i:= b_{ii}-(N_i-1)b_i$ and $u_{i,j}$ $\zeta_i:= b_{ii}+b_i.$ Again assuming $b_{ii}$, $b_i$, $c_{12}$ are all real (so are $\epsilon_i$ and $\zeta_i$), the logarithmic negativity $E_{L}^{N_1:1|1}$ of this state can be reduced to
\begin{equation}
E_{L}^{N:1|1}=\begin{cases}
\frac{1}{2}\ln d & \text{if } d>1\\
0 & \text{otherswise}
\end{cases}, \text{ where }
\frac{1}{d}=1+\frac{1}{N_1}(-2+\frac{\epsilon_1}{\zeta_1}-\frac{4\epsilon_2\zeta_1}{c_{12}^2-4\epsilon_1\epsilon_2})+\frac{\sqrt{(\epsilon_1 c_{12}^2-4\epsilon_1^2\epsilon_2+4\epsilon_2\zeta_1^2)^2}}{N_1\zeta_1(c_{12}^2-4\epsilon_1\epsilon_2)}.
\end{equation}
Therefore, the exact form of $d$ depends on the sign of
\begin{equation}
\kappa:=\epsilon_1 c_{12}^2-4\epsilon_1^2\epsilon_2+4\epsilon_2\zeta_1^2.
\end{equation}
\subsection{$\kappa>0$}
If it's larger than 0, then
\begin{equation}
d=\frac{\zeta_1 N_1}{2\epsilon_1+(N_1-2)\zeta_1}=1+\frac{2b_1}{b_{11}-b_1}.
\end{equation}
As before, there are constraint on these state parameters to make the wave function square integrable. Here, the constraints on $b_1$ and $b_{11}$ are
\begin{equation}
b_{11}>0, \,b_{11}+b_1>0 \text{ and }  b_{11}-(N_1-1)b_1>0.
\end{equation}
If $b_1\leq0,$ then $d\leq1$ because $b_{11}>0$ and there's no entanglement. Since here we want to find the upper bound of $E_{L}^{N:1|1},$ we don't have to take this into account. When $b_1>0$, we only have to consider $b_{11}-b_1>0$, because otherwise $d<1$.

Now assuming $b_{11}-b_1>0$ and $b_1>0$ for the reasons stated above, the last constraint $b_{11}-(N_1-1)b_1>0$ leads to
\begin{equation}
b_{11}-b_1>(N_1-2)b_1\Rightarrow \frac{b_1}{(N_1-2)b_1}=\frac{1}{N_1-2}>\frac{b_1}{b_{11}-b_1},
\end{equation}
and thus
\begin{equation}
d<1+\frac{2}{N_1-2}
\end{equation}
\subsection{$\kappa<0$}
If it's smaller than 0, then
\begin{equation}
d=(1-\frac{2}{N_1}+\frac{8\epsilon_2\zeta_1}{N_1(4\epsilon_1\epsilon_2-c_{12}^2)})^{-1}.
\end{equation}
To ensure the wave function is square integrable, we need the constraints
\begin{equation}
4\epsilon_1\epsilon_2-c_{12}^2>0,\,\epsilon_2>0\text{ and } \zeta_1>0,
\end{equation}
which implies
\begin{equation}
d<(1-\frac{2}{N_1})^{-1}=1+\frac{2}{N_1-2}.
\end{equation}

Because $d$ has the same upper bound whether $\kappa>0$ or $\kappa<0$, we get
\begin{equation}
d<1+\frac{2}{N_1-2} \text{ or } \sup E_{L}^{N:1|1}=\frac{1}{2}\ln(1+\frac{2}{N_1-2}).
\end{equation}

\end{document}